\documentclass[USenglish]{article}
\usepackage[a4paper]{geometry}

\pdfoutput=1
\makeatletter
\def\Hy@Warning#1{} 
\makeatother

\date{}




\overfullrule=2mm
\emergencystretch=1.4em

\let\originalleft\left
\let\originalright\right
\renewcommand{\left}{\mathopen{}\mathclose\bgroup\originalleft}
\renewcommand{\right}{\aftergroup\egroup\originalright}

\usepackage[dvipsnames]{xcolor}
\usepackage{mathtools}
\renewcommand\MoveEqLeft[1][1.5]{\kern #1em  &   \kern -#1em}
\usepackage{amsmath}
\usepackage{amsthm}
\usepackage{amssymb}
\usepackage[hidelinks]{hyperref}
\usepackage{thmtools}
\usepackage{thm-restate}
\usepackage[capitalize]{cleveref}

\usepackage{float}

\theoremstyle{plain}
\newtheorem{theorem}{Theorem}[section]
\newtheorem{lemma}[theorem]{Lemma}
\newtheorem{observation}[theorem]{Observation}
\newtheorem{corollary}[theorem]{Corollary}

\theoremstyle{definition}
\newtheorem{definition}[theorem]{Theorem}

\theoremstyle{remark}
\newtheorem{claim}[theorem]{Claim}
\newtheorem*{claim*}{Claim}

\newenvironment{claimproof}{\begin{proof}}{\end{proof}}

\def\dense{\medmuskip=2.0mu plus 2.0mu minus 2.0mu
\thinmuskip=2.0mu
\thickmuskip=2.0mu plus 5.0mu}

\usepackage{booktabs}
\usepackage{tabularx}

\bibliographystyle{plainurl}

\newcommand{\Oh}{{\operatorname{O}}}

\DeclareMathOperator{\E}{\mathbb{E}}
\DeclareMathOperator{\Prob}{\mathbb{P}}
\DeclareMathOperator{\Var}{\mathrm{Var}}
\DeclareMathOperator{\Cov}{\mathrm{Cov}}

\DeclareMathOperator{\R}{\mathbb{R}}
\DeclareMathOperator{\N}{\mathbb{N}}

\renewcommand*{\Pr}{{\Prob}}

\DeclareMathOperator{\poly}{\mathrm{poly}} 

\DeclareMathOperator{\1}{\mathbf{1}} 

\newcommand{\GlobalDivergence}{\Upsilon}

\newcommand{\SyncProcName}{\textsc{SBal}}
\newcommand{\AsyncProcName}{\textsc{ABal}}
\newcommand{\SyncProc}[3]{\SyncProcName(#1, #2,#3)}
\def\dcmhackI{}
\newcommand{\AsyncProc}[2]{\AsyncProcName\dcmhackI(#1, #2)}
\newcommand{\BalProcName}{\textsc{Bal}}
\newcommand{\BalProc}[2]{\BalProcName(#1, #2)}

\newcommand{\MatchDistr}{\mathcal{D}}
\newcommand{\randModel}{\MatchDistr_{\textsc{RM}}}
\newcommand{\RMDistr}{\randModel}
\newcommand{\singdModel}{\MatchDistr_{\textsc{A}}}
\newcommand{\ADistr}{\singdModel}
\newcommand{\baldModel}{\MatchDistr_{\textsc{BC}}}
\newcommand{\BCDistr}{\baldModel}

\newcommand{\AutoExp}[1]{{\E[#1]}}
\newcommand{\AutoExpCond}[2]{{\E[#1\mid #2]}}
\newcommand{\AutoVar}[1]{{\Var[#1]}}
\newcommand{\AutoVarCond}[2]{{\Var[#1\mid #2]}}
\newcommand{\AutoCov}[1]{{\Cov[#1]}}
\newcommand{\AutoProb}[1]{{\Prob[#1]}}

\newcommand{\BigAutoExp}[1]{{\E\left[#1\right]}}
\newcommand{\BigAutoExpCond}[2]{{\E\left[#1\,\middle\vert\,#2\right]}}
\newcommand{\BigAutoVar}[1]{{\Var\left[#1\right]}}
\newcommand{\BigAutoVarCond}[2]{{\Var\left[#1\,\middle\vert\,#2\right]}}

\newcommand{\BigAutoProb}[1]{{\Prob\left[#1\right]}}

\DeclarePairedDelimiter\abs{\lvert}{\rvert}
\DeclarePairedDelimiter\norm{\lVert}{\rVert}
\DeclarePairedDelimiterX{\inp}[2]{\langle}{\rangle}{#1, #2}

\DeclareMathOperator{\discr}{\mathrm{disc}}
\DeclareMathOperator{\disc}{\mathrm{disc}}
\DeclareMathOperator{\NodePotential}{\Phi}
\DeclareMathOperator{\EdgePotential}{\Psi}
\DeclareMathOperator{\HittingTime}{t_{\mathrm{hit}}(G)}
\DeclareMathOperator{\EdgeHittingTime}{t^*_{\mathrm{hit}}(G)}
\DeclareMathOperator{\SpectralGap}{\lambda}
\DeclareMathOperator{\Laplacian}{\mathbf{L}}
\DeclareMathOperator{\AdjacencyMat}{\mathbf{A}}
\DeclareMathOperator{\IdentityMat}{\mathbf{I}}

\newcommand{\ResistiveDistance}[2]{\mathrm{Res}(#1, #2)}
\newcommand{\ResistiveDiameter}{\mathrm{Res}(G)}
\newcommand{\MaxEdgeResistance}{\mathrm{Res}^*(G)}

\newcommand{\Adaptive}{\textsc{Adaptive}}
\newcommand{\AdaptiveParam}{\beta}

\newcommand{\BalancingSpeed}{\beta}

\newcommand{\pmax}{p_{\mathrm{max}}}

\newcommand{\CircuitSize}{\zeta}

\newcommand{\LoadSymb}{X}

\newcommand{\NodeLoadT}[2]{\LoadSymb_{#1}(#2)}
\newcommand{\LoadVec}{\vec{\LoadSymb}}
\newcommand{\LoadVecT}[1]{\LoadVec(#1)}

\newcommand{\AllocSymb}{\ell}

\newcommand{\NodeAllocT}[2]{\AllocSymb_{#1}(#2)}
\newcommand{\AllocVec}{\vec{\AllocSymb}}
\newcommand{\AllocVecT}[1]{\AllocVec(#1)}

\newcommand{\InitialContribSymb}{I}
\newcommand{\DynamicContribSymb}{D}
\newcommand{\RoundingContribSymb}{R}

\newcommand{\NodeInitialContribT}[2]{\InitialContribSymb_{#1}(#2)}
\newcommand{\InitialContribVec}{\vec{\InitialContribSymb}}
\newcommand{\InitialContribVecT}[1]{\InitialContribVec(#1)}

\newcommand{\NodeDynamicContribT}[2]{\DynamicContribSymb_{#1}(#2)}
\newcommand{\DynamicContribVec}{\vec{\DynamicContribSymb}}
\newcommand{\DynamicContribVecT}[1]{\DynamicContribVec(#1)}

\newcommand{\NodeRoundingContribT}[2]{\RoundingContribSymb_{#1}(#2)}
\newcommand{\RoundingContribVec}{\vec{\RoundingContribSymb}}
\newcommand{\RoundingContribVecT}[1]{\RoundingContribVec(#1)}

\newcommand{\RoundingErrSymb}{\varepsilon}

\newcommand{\RoundingErrT}[2]{\RoundingErrSymb_{#1}(#2)}
\newcommand{\RoundingErrVec}{\vec{\RoundingErrSymb}}
\newcommand{\RoundingErrVecT}[1]{\RoundingErrVec(#1)}

\newcommand{\FractionalSymb}{F}

\newcommand{\FractionalT}[2]{\FractionalSymb_{#1}(#2)}

\newcommand{\Filter}{\mathcal{F}}
\newcommand{\FilterT}[1]{\Filter(#1)}

\newcommand{\MixMat}{\mathbf{M}}
\newcommand{\MixMatBeta}[1]{\mathbf{M}^{#1}}
\newcommand{\MixMatT}[1]{\MixMat(#1)}
\newcommand{\MixMatTT}[2]{\MixMat^{[#1, #2]}}
\newcommand{\MixMatSeq}[1]{\MixMat^{[#1]}}

\newcommand{\mixMat}{\mathbf{m}}
\newcommand{\mixMatBeta}[1]{\mathbf{m}^{#1}}
\newcommand{\mixMatT}[1]{\mixMat(#1)}
\newcommand{\mixMatTT}[2]{\mixMat^{[#1, #2]}}
\newcommand{\mixMatSeq}[1]{\mixMat^{[#1]}}

\newcommand{\cMixMat}{\mathbf{R}}

\newcommand{\cMixMatTT}[2]{\cMixMat^{[#1, #2]}}

\newcommand{\RoundMat}{\mathbf{R}}

\newcommand{\dx}{{\mathrm{d}x}}

\newcommand{\dt}{{\mathrm{d}t}}

\newcommand{\dvarphi}{{\mathrm{d}\varphi}}

\newcommand{\BallsRoundNode}[3]{{B}({#1,#2,#3})}
\newcommand{\BallRoundContr}[3]{{C}_{#1}{(#2,#3)}}

\def\paragraph#1{\subparagraph*{#1.}}

\allowdisplaybreaks

\title{Dynamic Averaging Load Balancing on Arbitrary Graphs}

\makeatletter
\let\@fnsymbol\@arabic
\makeatother

\crefformat{footnote}{#2\footnotemark[#1]#3} 
\author{Petra Berenbrink\footnotemark[1], Lukas Hintze\cref{uhh}, Hamed Hosseinpour\cref{uhh},\\ Dominik Kaaser\footnotemark[2], Malin Rau\cref{uhh}}

\begin{document}

\maketitle

\footnotetext[1]{\label{uhh}Universität Hamburg, Germany}
\footnotetext[2]{TU Hamburg, Germany}
\footnotetext[3]{Petra Berenbrink, Hamed Hosseinpour, Malin Rau: Supported by DFG Research Group ADYN (FOR 2975) under grant DFG 411362735}
\footnotetext[4]{Petra Berenbrink, Hamed Hosseinpour: Supported by the DFG under grant 427756233}



\begin{abstract}
In this paper we study dynamic averaging load balancing on general graphs.
We consider infinite time and dynamic processes, where in every step new load items are assigned to randomly chosen nodes.
A matching is chosen, and the load is averaged over the edges of that matching.
We analyze the discrete case where load items are indivisible, moreover our results also carry over to the continuous case where load items can be split arbitrarily.
For the choice of the matchings we consider three different models, random matchings of linear size, random matchings containing only single edges, and deterministic sequences of matchings covering the whole graph.
We bound the discrepancy, which is defined as the difference between the maximum and the minimum load.
Our results cover a broad range of graph classes and, to the best of our knowledge, our analysis is the first result for discrete and dynamic averaging load balancing processes.
As our main technical contribution we develop a drift result that allows us to apply techniques based on the effective resistance in an electrical network to the setting of dynamic load balancing.
\end{abstract}

\section{Introduction}

Parallel and distributed computing is ubiquitous in science, technology, and beyond. 
Key to the performance of a distributed system is the efficient utilization of resources: in order to obtain a substantial speed-up it is of utmost importance that all processors have to handle the same amount of work.
Unfortunately, many practical applications such as finite element simulations are highly ``irregular'', and the amount of load generated on some processors is much larger than the amount of load generated on others.
We therefore investigate \emph{load balancing} to redistribute the load.
Efficient load balancing schemes have a plenitude of applications, including high performance computing \cite{DBLP:journals/ijhpca/ZhengBMK11}, cloud computing \cite{DBLP:journals/access/MohammadianNHD22}, numerical simulations \cite{DBLP:conf/dimacs/Meyerhenke12}, and finite element simulations \cite{DBLP:journals/aes/PatzakR12}.

In this paper we consider \emph{neighborhood} load balancing on arbitrary graphs with $n$ nodes, where the nodes balance their load in each step only with their direct neighbors.
We assume \emph{discrete} load items as opposed to \emph{continuous} (or \emph{idealized}) load items which can be broken into arbitrarily small pieces.
We study \emph{infinite} and \emph{dynamic} processes where new load items are generated in every step.
We consider two different settings.
In the \emph{synchronous} setting $m$ load items are generated on randomly chosen nodes. 
Then a matching is chosen and the load of the nodes is balanced (via weighted averaging) over the edges of that matching.
Here we further distinguish between two matching models.
We consider the \emph{random matching model} where linear-size  matchings are randomly chosen, and the \emph{balancing circuit model} where the graph is divided deterministically into $d_{\max}$ many matchings.
Here $d_{\max}$ is the maximum degree of any node.
In the \emph{asynchronous model} exactly one load item is generated on a randomly chosen node.
In turn, the node chooses one of its edges at random and balances its load with the corresponding neighbor.
This model can be regarded as a variant of the synchronous model where the randomly chosen matching has size one.
It was introduced by \cite{DBLP:conf/icalp/AlistarhNS20} where the authors show results for cycles assuming continuous load. 
Our goal is to bound the so-called \emph{discrepancy}, which is defined as the maximal load of any node minus the minimal load of any node.

\paragraph{Results in a Nutshell}

In this paper we present, for the three models introduced above, bounds on the expected discrepancy and bounds that hold with high probability. 
Our bounds for the synchronous model with balancing circuits hold for arbitrary graphs $G$, the bounds for the asynchronous model and the synchronous model with random matchings hold for regular graphs $G$ only.
For the asynchronous model and the model with random matchings our bounds on the discrepancy are expressed in terms of hitting times of a standard random walk on $G$, as well as in terms of the spectral gap of the Laplacian of $G$. For the synchronous model with balancing circuits
we express our bounds in terms of the \emph{global divergence}. This can be thought of as a measure of the convergence speed of the Markov chains modeling a random walk on $G$. However, it does not directly measure the speed of convergence of the chain. It accounts for the time period in which the chain keeps a given distance from the stationary (and uniform) distribution. In physics terminology, it is a measure of total \emph{absement}, which is the time-integral of displacement.

For all three infinite processes our bounds on the discrepancy hold at an arbitrary point of time as long as the system is initially empty. 
Otherwise, the bounds hold after an initial time period, its length is a function of the initial discrepancy.    
In the following we give some exemplary results assuming that the system is initially empty and $m=n$. 
For the synchronous model with random matchings and the asynchronous model
we can bound the discrepancy by $\Oh(\sqrt{n}\log(n))$ for \emph{any} regular graph $G$. Our results show 
a polylogarithmic bound on the discrepancy for all regular graphs with a hitting time at most $\Oh(n \poly\log(n))$ (e.g., the two-dimensional torus or the hypercube).
In all models we can bound the discrepancy by $\Oh(\sqrt{n\log(n)})$ for arbitrary  constant-degree regular graphs.
For the full results we refer the reader to \cref{thm:main_sync_random}, \cref{thm:main_sync_circuit}, and \cref{thm:main_async}.
We give a detailed overview on the results on specific graph classes in \cref{table:disc:upperbound} in \cref{sec:conclusions}.

All bounds presented in this paper also hold for the corresponding continuous processes without rounding. 
The authors of \cite{DBLP:conf/icalp/AlistarhNS20} consider the asynchronous process on cycles in the continuous setting where the load items can be divided into arbitrary small pieces. 
They bound the expected discrepancy and show that $\disc(G)=O(\sqrt{n} \log(n))$ for a cycle $G$ with $n$ nodes. 
In contrast, we improve that bound for the cycle to  $\disc(G)= O(\sqrt{n \log(n)})$. 
Note that our result not only bounds the expected discrepancy but it also holds with high probability.

Our main analytical vehicle is a drift theorem that bounds the tail of the sum of a non-increasing sequence of random variables.
Our drift theorem adapts known drift results from the literature, similarly to the Variable Drift Theorem in \cite{DBLP:series/ncs/Lengler20}.

\subsection{Related Work}
There is a vast body of literature on iterative load balancing schemes on graphs where nodes are allowed to balance (or average) their load with neighbors only.
One distinguishes between \emph{diffusion} load balancing where the nodes balance their load with all neighbors at the same time and the \emph{matching model} (or \emph{dimension exchange}) model where the edges which are used for the balancing form a matching. 
In the latter model every resource is only involved in one balancing action per step, which greatly facilitates the analysis. 

In this overview we only consider theoretical results and, as it is beyond the scope of this work to provide a complete survey, we focus on results for discrete load balancing. 
For results about continuous load balancing see, for example,~\cite{DBLP:journals/pc/DiekmannFM99, DBLP:conf/focs/KempeDG03}. 
There are also many results in the context of balancing schemes where not the resources try to balance their load but the tokens (acting as selfish players) try to find a resource with minimum load. See~\cite{DBLP:journals/siamcomp/FischerRV10} for a comprehensive survey about selfish load balancing and~\cite{DBLP:journals/dc/AckermannFHS11,DBLP:journals/corr/HoeferS13,DBLP:conf/ipps/BerenbrinkKLM17} for some recent results.
Another related topic is token distribution where nodes do not balance their entire load with neighbors but send only single tokens over to neighboring nodes with a smaller load.
See \cite{DBLP:journals/siamcomp/GhoshLMMPRRTZ99, DBLP:journals/algorithmica/HeideOW96, DBLP:journals/siamcomp/PelegU89} for the static setting and 
\cite{DBLP:journals/siamcomp/AnshelevichKK08} for the dynamic setting.

\paragraph{Discrete Models}
The authors of~\cite{DBLP:journals/mst/MuthukrishnanGS98} give the first rigorous result for discrete load balancing in the diffusion model. 
They assume that the number of tokens sent along each edge is obtained by rounding down the amount of load that would be sent in the continuous case.
Using this approach they establish that the discrepancy 
is at most $O(n^2)$ after $O(\log(Kn))$ steps, where $K$ is the initial discrepancy.
Similar results for the matching model are shown in~\cite{DBLP:journals/jcss/GhoshM96}. 
While always rounding down may lead to quick stabilization, the discrepancy tends to be quite large, a function of the diameter of the graph. 
Therefore, the authors of~\cite{DBLP:conf/focs/RabaniSW98} suggest to use randomized rounding in order to get a better approximation of the continuous case. 
They show results for a wide class of diffusion and matching load balancing protocols and introduce the so-called \emph{local divergence}, which aggregates the sum of load differences over all edges in all rounds.  
The authors prove that the local divergence gives an upper bound on the maximum deviation between the continuous and discrete case of a protocol. 
In~\cite{DBLP:conf/stoc/FriedrichS09} the authors show several results for a randomized protocol with rounding in the matching model. 
For complete graphs their results show a discrepancy of $O(n\sqrt{\log n})$ after $\Theta(\log(Kn))$ steps. 
Later, \cite{DBLP:journals/jcss/BerenbrinkCFFS15} extended some of these results to the diffusion model. 
In~\cite{DBLP:conf/focs/SauerwaldS12} the authors show that the number of rounds needed to reach constant discrepancy is w.h.p.\ bounded by a function of the spectral gap of the relevant mixing matrix and the initial discrepancy. 
In~\cite{DBLP:journals/jpdc/BerenbrinkFH09} the authors propose a very simple potential function technique to analyze discrete diffusion load balancing schemes, both for discrete and continuous settings. 
In \cite{DBLP:conf/ipps/BerenbrinkFKK19} the authors investigate a load balancing process on complete graphs.
In each round a pair of nodes is selected uniformly at random and completely balance their loads up to a rounding error of $\pm 1$.

The authors of \cite{DBLP:conf/icalp/CaiS17} study load balancing via matchings assuming random placement of the load items.  
The initial load distribution is sampled from exponentially concentrated distributions (including the uniform, binomial, geometric, and Poisson distributions).
The authors show that in this setting the convergence time is smaller than in the worst case setting. 
Regardless of the graph's topology, the discrepancy decreases by a factor of $\sqrt[4]{t}$ within $t$ synchronous rounds. 
Their approach of using concentration inequalities to bound the discrepancy (in terms of the squared $2$-norm of the columns of the matrices underlying the mixing process) strongly influenced our approach.

\paragraph{Dynamic Models}
There are far less results for the dynamic setting where new load enters the system over time. 
In \cite{DBLP:conf/icalp/AlistarhNS20} the authors study a model similar to our asynchronous model.
In each step one load item is allocated to a chosen node.
In the same step the chosen node picks a random neighbor, and the two nodes balance their loads by averaging them (continuous model). The authors show that the expected discrepancy is bounded by $O( n\sqrt{n} \log n)$,  as well as a lower bound on the square of the discrepancy of $\Omega(n)$. 
The authors of \cite{DBLP:journals/siamcomp/AnagnostopoulosKU05} consider load balancing via matchings in a dynamic model where the load is, in every step, distributed by an adversary.
They show the system is stable for sufficiently limited adversaries. 
They also give some upper bounds on the maximum load for the somewhat more restricted adversary. 
The authors of \cite{DBLP:journals/algorithmica/BerenbrinkFM08} consider discrete dynamic diffusion load balancing on arbitrary graphs.  
In each step up to $n$ load items are generated on arbitrary nodes (the allocation is determined by an adversary).
Then the nodes balance their load with each neighbor and finally one load item is deleted from every non-empty node.
The authors show that the system is stable, which means that the total load remains bounded over time (as a function of $n$ alone and independently of the time $t$).

\section{Balancing Models and Notation} \label{sec:model}

We consider the following class of dynamic load balancing processes on $d$-regular graphs~$G$ with $n$ nodes $V(G) = [n]$.
Each process is modeled by a Markov chain $(\LoadVecT{t})_{t\in\N_0}$, where the \emph{load vector} $\LoadVecT{t} = (\NodeLoadT{i}{t})_{i \in [n]} \in \R^n$ is the \emph{state} of the process at the end of step $t$, and $\NodeLoadT{i}{t}$ is the load of node~$i$ at time~$t$.
We measure a load vector's imbalance by the discrepancy $\discr(\vec{x})$, which is the difference between the maximum load and the minimum load  $\discr(\vec{x}) \coloneqq \max_{i \in [n]} x_i - \min_{j \in [n]} x_j$.

We consider two balancing processes, the synchronous process $\SyncProcName$ and the asynchronous process $\AsyncProcName$.
Both processes are parameterized by a \emph{balancing parameter} $\BalancingSpeed$ determining the balancing speed and a matching distribution $\MatchDistr(G)$.
For $\SyncProcName$, $\MatchDistr(G)$ is a distribution over linear-sized matchings of $G$.
For $\AsyncProcName$, $\MatchDistr(G)$ is a distribution over edges of $G$.
$\SyncProcName$ is additionally parameterized by the number of load items $m \in \N^+$ allocated in each round.
$\AsyncProcName$ allocates only one new load item per step.

\paragraph{Synchronous Processes}
The synchronous process $\SyncProc{\MatchDistr(G)}{\BalancingSpeed}{m}$ works as follows.
The process first allocates $m$ items to randomly chosen nodes.
Then it uses the matching distribution $\MatchDistr(G)$ to determine the matching which is applied.
Finally it balances the load over the edges of the matching (see Process $\BalProc{\mixMat}{\BalancingSpeed}$ described below). 
The parameter $\BalancingSpeed \in (0, 1]$ controls the fraction of the load difference that is sent over an edge in a step.

For the synchronous process $\SyncProcName$ we consider two families of matching distributions, random matchings ($\RMDistr(G)$) and balancing circuits ($\BCDistr(G)$). 
$\RMDistr(G)$ is generated according to the following method described in~\cite{DBLP:journals/jcss/GhoshM96}.
First an edge set $S$ is formed by including each edge with probability $1/(4d) - 1/(16d^2) = \Theta(1/d)$, independently from all other edges. 
Then a linear-sized matching $\MixMatT{t} \subseteq S$ is computed locally.
We will use capital $\MixMat$ for randomly chosen matchings. The analysis for 
the random matching model can be found in \cref{sec:analysis:random:matching}.
In the \emph{balancing circuit model} we assume $G$ is covered by $\CircuitSize$ fixed matchings $\mixMatT{1},\ldots, \mixMatT{\CircuitSize}$.
 $\BCDistr(G)$ deterministically chooses matchings in periodic manner such that in step $t$ the matching $\mixMatT{t}=\mixMatT{t \bmod \CircuitSize}$ is chosen.
We will use small $\mixMat$ for deterministically chosen matchings.
The analysis for the balancing circuit model can be found in \cref{sec:analysis_balancing_circuit}.

\paragraph{Asynchronous Process}
The asynchronous process $\AsyncProc{\MatchDistr(G)}{\BalancingSpeed}$ works as follows.
The process first uses $\MatchDistr(G)$ to generate a matching, this time containing one edge only.
The distribution we consider, $\ADistr(G)$, first chooses a node $i$ uniformly at random and then it chooses one of the nodes' edges $(i,j)$ uniformly at random.
Finally one new token is assigned to either node $i$ or $j$ and then the edge $(i,j)$ is used for balancing (see $\BalProc{\mixMat}{\BalancingSpeed}$).
Note that for $\AsyncProc{\ADistr(G)}{\BalancingSpeed}$ the load allocation heavily depends on the edges which are used for balancing.
This makes the analysis for this model quite challenging.
In contrast, in $\SyncProc{\ADistr(G)}{\BalancingSpeed}{m}$ the load allocation and the balancing are independent.
Note that in the case of $d$-regular graphs $\ADistr(G)$ is equivalent to the uniform distribution over all edges or to choosing a random matching of size one.
We analyze the asynchronous model in \cref{sec:asynchronous}.

\medskip

\noindent\colorbox{black!10}{\begin{minipage}{\textwidth-2\fboxsep}
$\SyncProc{\MatchDistr(G)}{\BalancingSpeed}{m}$:
In each round $t \in \N^+$:
\begin{enumerate}
    \item Allocate $m$ discrete, unit-sized load items to the nodes uniformly and independently at random.
    Define $\ell_i(t)$ as the number of tokens assigned to node $i$.
    \item Sample a matching $\MixMatT{t}$ according to $\MatchDistr(G)$. 
   \item Balance with $\BalProc{\MixMatT{t}}{\BalancingSpeed}$ applied to $X_i(t):=X_i(t)+\NodeAllocT{i}{t}$, $i\in \{1,\ldots n\}$.
   \end{enumerate}%
\end{minipage}}

\medskip

\noindent\colorbox{black!10}{\begin{minipage}{\textwidth-2\fboxsep}
$\AsyncProc{\MatchDistr(G)}{\BalancingSpeed}$:
In each round $t \in \N^+$:
\begin{enumerate}
    \item Select an edge $\{i, j\}$ according to $\MatchDistr(G)$.
    \item Allocate a single unit-size load item to either node $i$ or $j$
    with a probability of $1/2$. 
    
    I.e., with prob. $1/2$ set $\ell_i(t)=1$ and $\ell_k=0$ for all $k\neq i$, otherwise 
   set $\ell_j(t)=1$ and $\ell_k=0$ for all $k\neq j$.
    \item Balance with $\BalProc{\MixMatT{t}}{\BalancingSpeed}$ applied to $X_i(t):=X_i(t)+\ell_i(t)$,
        where $\MixMatT{t}$ includes just the edge $\{i,j\}$.
\end{enumerate}
\end{minipage}}

\medskip

\noindent\colorbox{black!10}{\begin{minipage}{\textwidth-2\fboxsep}
$\BalProc{\mixMat}{\BalancingSpeed}$: For each edge $\{i,j\}$ in the matching $\mixMat$ balance loads of $i$ and $j$:
\begin{enumerate}
\item Assume w.l.o.g.\ that $\NodeLoadT{i}{t}\ge \NodeLoadT{j}{t}$. 
\item Let $p=\frac{\BalancingSpeed \cdot (\NodeLoadT{i}{t}-\NodeLoadT{j}{t})}{2}-
    \left\lfloor\frac{\BalancingSpeed \cdot (\NodeLoadT{i}{t}-\NodeLoadT{j}{t})}{2}\right\rfloor$.
\item Then, node $i$ sends $L_{i,j}$ load items to node $j$ where
\vspace{-1ex}
\[L_{i,j} \coloneqq \begin{cases}
    \left\lceil\frac{\BalancingSpeed \cdot(\NodeLoadT{i}{t}-\NodeLoadT{j}{t})}{2}\right\rceil,  & \text{with probability } p, \\[5pt]
    \left\lfloor\frac{\BalancingSpeed \cdot (\NodeLoadT{i}{t}-\NodeLoadT{j}{t})}{2}\right\rfloor,  & \text{with probability } 1-p.
\end{cases}\]
\end{enumerate}
\end{minipage}}

\medskip

\noindent In the idealized setting, where the load is continuously divisible, a load of 
${\BalancingSpeed (\NodeLoadT{i}{t}-\NodeLoadT{j}{t})}/{2}$ 
is sent from node $i$ to node $j$.

\subsection{Notation}

We are given an arbitrary graph $G=(V,E)$ with $n$ nodes. 
We mainly assume that $G$ is regular and write $d$ for the node degree. 
Recall that the process is modeled by a Markov chain $(\LoadVecT{t})_{t\in\N}$, where $\LoadVecT{t} = (\NodeLoadT{i}{t})_{i \in [n]} \in \R^n$ is the \emph{load vector} at the end of step $t$, and $\NodeLoadT{i}{t}$ is the load of node $i$ at time $t$.
We write $\NodeAllocT{i}{t}$ for the number of load items allocated to node $i$ in step $t$ and define $\AllocVecT{t} = (\NodeAllocT{i}{t})_{i \in [n]}$.
We will use upper case letters such as $\NodeLoadT{i}{t}$ and $\MixMatT{t}$ to denote random variables and random matrices and lower case letters (like $x_i(t)$, $\mixMatT{t}$) for fixed outcomes. 
If clear from the context we will omit $t$ from a random variable.

We model the idealized balancing step in round $t$ by multiplication with a matrix $\MixMatBeta{\BalancingSpeed}(t) \in \R^{n \times n}$ given by
\[\MixMatBeta{\BalancingSpeed}_{i,j}(t) \coloneqq \begin{cases}
    1,\quad&\textup{if $i=j$ and $i$ is not matched at time $t$,} \\
    1-\BalancingSpeed/2,\quad&\textup{if $i=j$ and $i$ is matched at time $t$,} \\
    \BalancingSpeed/2,\quad&\textup{if $i$ and $j$ are matched at time $t$,} \\
    0,\quad&\textup{otherwise.}
\end{cases}\]
We will omit the parameter $\BalancingSpeed$ if it is clear from context.
With slight abuse of notation we use the same symbol $\MixMatT{t}$ for the matching itself and the associated balancing matrix
    and refer to both as just ``matchings''. Furthermore, we write $E(\MixMatT{t})$ for their edges.
For the product of all matching matrices from time $t_1$ to time $t_2$ we write
    \[\MixMatTT{t_1}{t_2} \coloneqq \MixMatT{t_2} \cdot \MixMatT{t_2 - 1} \cdot \cdots \cdot \MixMatT{t_1 + 1} \cdot \MixMatT{t_1},\]
    where for $t_1 > t_2$ we consider this to be the identity matrix. 
We generally refer to these matrices as \emph{mixing matrices}. Moreover, we write $\MixMatSeq{t}$ for the sequence of matching matrices~$(\MixMatT{\tau})_{\tau \in [t]}$ and analogously $\mixMatSeq{t}$ for a fixed sequence of matching matrices ~$(\mixMatT{\tau})_{\tau \in [t]}$.
We will write $\MixMat_{k,\cdot}$ for the vector forming the $k$th row of the matrix $\MixMat$ (which we often treat as a column vector despite it being a row).

In the balancing circuit model we define the \emph{round matrix} $\RoundMat\coloneqq \mixMatTT{1}{\CircuitSize}$ as the product of the matching matrices forming a complete period of the balancing circuit.
Note that $\CircuitSize$ has no relation to the minimum or maximum degree,
    although we may assume w.l.o.g.\ that each edge is covered by at least one of the matchings.
We write $\SpectralGap(\RoundMat)$ for the spectral gap of the round matrix $\RoundMat$,
    i.e., for the difference between the largest two eigenvalues of $\RoundMat$.

We write $\RoundingErrVecT{t} \in \R^n$ for the vector of additive rounding errors in round $t$.
Then $\RoundingErrT{k}{t}$ is the difference between the load at node $k$ after step $t$ and the load at node $k$ after step $t$ in an idealized scheme where loads are arbitrarily divisible.

Putting all of this together we can express the load vector at the end of step $t \in \N^+$ as
\begin{equation}\label{eq:load_recurrence}
    \LoadVecT{t} = \MixMatT{t} \cdot\left(\LoadVecT{t-1} + \AllocVecT{t}\right) + \RoundingErrVecT{t}.
\end{equation}

We write $\HittingTime$ for the \emph{hitting time} of $G$,
    which is the maximum expected time it takes for a standard random walk on $G$ (i.e., the walk moves to a neighbor chosen uniformly at random in each step)
        to reach a given node $i$ from a given node $j$,
        with the maximum taken over all such pairs of nodes.
We write $\EdgeHittingTime$ for the \emph{edge hitting time} of $G$,
    which is defined like the hitting time,
        except that the maximum is taken over adjacent nodes only.
We write $\Laplacian(G)$ for the normalized Laplacian matrix of a graph~$G$.
For regular graphs it may be defined as
    $\Laplacian(G) \coloneqq \IdentityMat - \AdjacencyMat(G) /d$,
    where $\AdjacencyMat(G)$ is the adjacency matrix of $G$.
Writing $\lambda_0 \leq \lambda_1 \leq \ldots \leq \lambda_{n-1}$ for the real eigenvalues of $\Laplacian(G)$,
    we let $\SpectralGap(\Laplacian(G)) \coloneqq \lambda_1 - \lambda_0$ be the spectral gap of the Laplacian of $G$.

\section{Random Matching Model}
\label{sec:analysis:random:matching}

In this section we analyze the process $\SyncProc{\RMDistr(G)}{\BalancingSpeed}{m}$ for $d$-regular graphs $G$, where the matching distribution $\RMDistr(G)$ is generated by the algorithm given in~\cite{DBLP:journals/jcss/GhoshM96}.
Note that the result (as well as the results for the two other models) holds at any point of time $t$ if the system is initially empty.
Furthermore, we can show the same results in the idealized setting where load items can be divided into arbitrarily small pieces (see \cite{DBLP:conf/icalp/AlistarhNS20}). 
For more details we refer the reader to the paragraph directly after \cref{eq:disc_component_sum}.

\begin{theorem}
    \label{thm:main_sync_random}
Let $G$ be a $d$-regular graph and define
\(T(G) \coloneqq \min \Big\{\frac{\HittingTime }{n} \cdot \log(n),  \sqrt{\frac{d}{\SpectralGap(\Laplacian(G))}},\discretionary{}{}{} \frac{1}{ \SpectralGap(\Laplacian(G))} \Big\}\).
Let $\LoadVecT{t}$ be the state of process  $\SyncProc{\RMDistr(G)}{\BalancingSpeed}{m}$
at time $t$ with $\discr(\LoadVecT{0}) \eqqcolon K \ge 1$.
There exists a constant $c>0$ such that for all
$t \geq c \cdot
\log(K\cdot n)\discretionary{}{}{} /({\SpectralGap(\Laplacian(G)) \cdot \beta})$
it holds w.h.p.\footnote{The expression \emph{with high probability (w.h.p.)} denotes a probability of at least $1-n^{-\Omega(1)}$.} and in expectation
\[ {\discr(\LoadVecT{t}) = \Oh\left(\log(n) \cdot \left(1 + \sqrt{\frac{m}{n} \cdot \frac{\EdgeHittingTime}{n}}\right) + \sqrt{\frac{\log(n)}{\BalancingSpeed} \cdot \frac{m}{n} \cdot T(G)}\right). } \]
\end{theorem}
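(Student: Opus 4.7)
The approach is to unroll the recurrence~\eqref{eq:load_recurrence} and decompose the load vector into three contributions,
\[\LoadVecT{t} \;=\; \underbrace{\MixMatTT{1}{t}\LoadVecT{0}}_{\InitialContribVecT{t}} \;+\; \underbrace{\sum_{\tau=1}^{t} \MixMatTT{\tau}{t}\,\AllocVecT{\tau}}_{\DynamicContribVecT{t}} \;+\; \underbrace{\sum_{\tau=1}^{t} \MixMatTT{\tau+1}{t}\,\RoundingErrVecT{\tau}}_{\RoundingContribVecT{t}},\]
stemming respectively from the initial load, the dynamic allocations and the rounding errors. Since the discrepancy is subadditive, $\disc(\LoadVecT{t}) \leq \disc(\InitialContribVecT{t}) + \disc(\DynamicContribVecT{t}) + \disc(\RoundingContribVecT{t})$, and it suffices to bound each term separately.

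\textbf{Initial contribution.} In the random matching model, the expected mixing matrix can be written as $\E[\MixMatT{\tau}] = \IdentityMat - \Theta(\BalancingSpeed)\,\Laplacian(G)$, so it has spectral gap of order $\BalancingSpeed\cdot\SpectralGap(\Laplacian(G))$. A standard variance-contraction argument, combined with a concentration bound on the product of the independent matchings, shows that $\disc(\InitialContribVecT{t})$ decays geometrically at this rate. Consequently, for $t \geq c\,\log(Kn)/(\BalancingSpeed\cdot\SpectralGap(\Laplacian(G)))$ this contribution is $\Oh(1)$, w.h.p., and is absorbed into the remaining terms.

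\textbf{Dynamic and rounding contributions.} For any pair of nodes $(k,l)$,
\[\NodeDynamicContribT{k}{t} - \NodeDynamicContribT{l}{t} \;=\; \sum_{\tau=1}^{t} \sum_{i} \bigl(\MixMatTT{\tau}{t}_{k,i} - \MixMatTT{\tau}{t}_{l,i}\bigr)\,\NodeAllocT{i}{\tau}.\]
Conditional on the matching sequence $\MixMatSeq{t}$, this is a weighted sum of multinomial variables. Since the weight vector at each time is orthogonal to $\1$ (both rows sum to one), I may center the allocations, and each centered coordinate has variance at most $m/n$. Bernstein's inequality then gives, w.h.p.,
\[\bigl|\NodeDynamicContribT{k}{t}-\NodeDynamicContribT{l}{t}\bigr| \;=\; \Oh\!\left(\log(n) + \sqrt{\log(n)\cdot (m/n)\cdot \sigma^2_{k,l}(t)}\right),\qquad \sigma^2_{k,l}(t) \coloneqq \sum_{\tau=1}^{t}\bigl\|\MixMatTT{\tau}{t}_{k,\cdot} - \MixMatTT{\tau}{t}_{l,\cdot}\bigr\|_2^2.\]
For the rounding contribution, the structural observation $\RoundingErrT{i}{\tau} = -\RoundingErrT{j}{\tau}$ with $|\RoundingErrT{i}{\tau}| \leq 1$ for each matched pair $(i,j)\in E(\MixMatT{\tau})$ turns $\NodeRoundingContribT{k}{t}-\NodeRoundingContribT{l}{t}$ into a bounded-increment martingale whose conditional variance is a sum of squared row-differences \emph{restricted to matching edges}; this is exactly the reason the edge hitting time $\EdgeHittingTime$ appears rather than $\HittingTime$.

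\textbf{Main obstacle.} The crux of the proof is the control of $\sigma^2_{k,l}(t)$ and its edge-restricted analogue. Here I would invoke the drift theorem that is the paper's main technical contribution: reading $\sigma^2_{k,l}(t)$ as the accumulated $\ell_2$-deviation of a pair of (reverse-time) random walks started at $k$ and $l$ from the uniform distribution on $G$, the drift theorem bounds the tail of the resulting non-increasing sum. Combined with classical spectral and effective-resistance estimates for reversible Markov chains it delivers $\sigma^2_{k,l}(t) = \Oh(T(G)/\BalancingSpeed)$, and an analogous bound in terms of $\EdgeHittingTime/n$ for the edge-restricted sum. A union bound over the $\binom{n}{2}$ pairs $(k,l)$ together with the triangle inequality then assembles the three ingredients into the stated discrepancy estimate, both with high probability and in expectation.
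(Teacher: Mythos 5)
Your skeleton matches the paper's: unroll the recurrence, split into $\InitialContribVecT{t}$, $\DynamicContribVecT{t}$, $\RoundingContribVecT{t}$, use subadditivity of $\discr$, bound the dynamic term by a Bernstein-type inequality conditional on the matching sequence, and control the accumulated $\ell_2$-deviation of the rows via the drift theorem. However, there is a genuine gap in how you source the $\EdgeHittingTime$ term. You claim the drift theorem yields $\sigma^2_{k,l}(t) = \Oh(T(G)/\BalancingSpeed)$ for the dynamic part and that the edge hitting time ``appears'' through the edge-restricted variance of the rounding martingale. Both halves of this are off. The drift theorem (\cref{lem:drift}) does not give $\Oh(T(G)/\BalancingSpeed)$ alone: its second statement carries a burn-in term $t_0 = \Theta(\sigma^2(\gamma\log n + \log\sigma^2))$ governed by the variance-to-drift ratio, and verifying that ratio for $\RMDistr(G)$ (\cref{lem:rmdistr_is_good}, via the second part of \cref{prop:node_potential_change_statistics}) is exactly where $\MaxEdgeResistance$ and hence $\EdgeHittingTime/n$ enter. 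The resulting bound is $\left(\GlobalDivergence_k(\MixMatSeq{t})\right)^2 = \Oh\left(\frac{\EdgeHittingTime}{n}\gamma\log n + \frac{T(G)}{\BalancingSpeed}\right)$, and the additive $(\EdgeHittingTime/n)\log n$ part cannot be dropped by the paper's technique (during the reverse-time burn-in the potentials are only bounded by $1$ each). So your stated bound on $\sigma^2_{k,l}(t)$ is unjustified and strictly stronger than what the argument delivers.

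Conversely, the rounding term needs no hitting time at all, and cannot supply the missing piece: by \cref{obs:node_potential_change_exact} the edge-restricted sum of squared row differences telescopes deterministically to at most $2/\BalancingSpeed$, so Azuma--Hoeffding gives $\discr(\RoundingContribVecT{t}) = \Oh(\sqrt{\log(n)/\BalancingSpeed})$ (\cref{lem:rounding:errors:are:small}). Moreover the rounding increments are at most one per matched node per round, independent of $m$, so no analysis of $\RoundingContribVecT{t}$ can ever produce the $m$-dependent term $\log(n)\sqrt{(m/n)\cdot(\EdgeHittingTime/n)}$ in the theorem; that term must come from the dynamic allocations, precisely via the variance condition in the ``goodness'' of the matching distribution that your sketch skips. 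As written, your assembly therefore either proves a bound stronger than the theorem (which the method does not support) or leaves the edge-hitting-time term unaccounted for; to repair it, keep your conditional-Bernstein step, but bound $\GlobalDivergence_k$ exactly as in \cref{lem:glob:div:bound:drift} plus \cref{lem:rmdistr_is_good}, retaining both the integral term $\Oh(T(G))/\BalancingSpeed$ and the $\sigma_G^2\log n = \Oh((\EdgeHittingTime/n)\log n)$ burn-in term.
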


\begin{proof}
We first expand the recurrence of \cref{eq:load_recurrence} (cf.~\cite{DBLP:conf/focs/RabaniSW98}).
After one step we get
\begin{align*}
    \LoadVecT{t} &= \MixMatT{t} \cdot\left(\LoadVecT{t-1} + \AllocVecT{t}\right) + \RoundingErrVecT{t} 
    \\&= \MixMatT{t} \cdot\Big(\underbrace{\left(\MixMatT{t-1} \cdot\left(\LoadVecT{t-2} + \AllocVecT{t-1}\right) + \RoundingErrVecT{t-1}\right)}_{\LoadVecT{t-1}} + \AllocVecT{t}\Big) + \RoundingErrVecT{t} 
    \\& = 
    \MixMatTT{t-1}{t}\cdot \LoadVecT{t-2} + \sum_{\tau=t-1}^{t} \MixMatTT{\tau}{t}\cdot \AllocVecT{\tau} + \sum_{\tau=t-1}^{t} \MixMatTT{\tau+1}{t} \cdot \RoundingErrVecT{\tau}
\end{align*}
We repeatedly expand this form up to the beginning of the process and get
    \begin{equation}\label{eq:load:vector:equality}
        \LoadVecT{t}
        = \underbrace{ \vphantom{\sum_{\tau=1}^{t}} \MixMatTT{1}{t} \cdot \LoadVecT{0}}_{\InitialContribVecT{t}}
        + \underbrace{\sum_{\tau=1}^{t} \MixMatTT{\tau}{t}
                                    \cdot \AllocVecT{\tau}}_{\DynamicContribVecT{t}}
        + \underbrace{\sum_{\tau=1}^{t} \MixMatTT{\tau+1}{t} \cdot \RoundingErrVecT{\tau}}_{\RoundingContribVecT{t}}.
    \end{equation}
We write $\InitialContribVecT{t}$, $\DynamicContribVecT{t}$, and $\RoundingContribVecT{t}$ for the three terms as indicated.
Note that in general these terms are vectors of real numbers.
The sum $\InitialContribVecT{t} + \DynamicContribVecT{t}$ can be regarded as the contribution of an idealized process, where $\InitialContribVecT{t}$ is the contribution of the initial load and $\DynamicContribVecT{t}$ is the contribution of the dynamically allocated load.
Thus, $\RoundingContribVecT{t}$ is the deviation between the idealized process without rounding and the discrete process described in \cref{sec:model}.

To bound the discrepancy $\discr(\LoadVecT{t})$ of the load vector $\LoadVecT{t}$ at time $t$ we use the fact that the discrepancy is sub-additive such that  $\disc(\vec{x} + \vec{y}) \leq \disc(\vec{x}) + \disc(\vec{y})$ (see \cref{obs:disc_subadditive} in \cref{apx:omitted-proofs-3}).
Hence, to bound $\discr(\LoadVecT{t})$ we individually bound the discrepancies of the three terms in \cref{eq:load:vector:equality} 
    and get
    \begin{equation}\label{eq:disc_component_sum}
    \discr(\LoadVecT{t}) \leq \discr(\InitialContribVecT{t}) + \discr(\DynamicContribVecT{t}) + \discr(\RoundingContribVecT{t}) .
    \end{equation}
If the system is initially empty, then $\disc(\InitialContribVecT{t}) = 0$.
Moreover, in the idealized setting without rounding $\disc(\RoundingContribVecT{t}) = 0$.
Techniques to bound the first term $\discr(\InitialContribVecT{t})$ and the last term $\discr(\RoundingContribVecT{t})$ are well-established.
We state the corresponding results in  \cref{lem:initial:load:vanishes} and \cref{lem:rounding:errors:are:small} directly below the proof of our theorem. The main part of the proof is to bound 
$\discr(\DynamicContribVecT{t})$, which will be done in \cref{sec:bound:contribution:dynamcally:allocated:balls}.

Let now $\gamma > 1$. 
First, it follows from 
\cref{lem:initial:load:vanishes}
that for all $t \geq c \cdot
\log(K\cdot n) /({\SpectralGap(\Laplacian(G)) \cdot \beta})$ 
we have
$\discr(\InitialContribVecT{t}) \leq 1$
with probability at least $1-n^{-\gamma}$.
Second, it follows from
\cref{lem:disc:dyn}
that
$\discr(\RoundingContribVecT{t}) \leq 2 \sqrt{\gamma\log(n)/\BalancingSpeed}$ with probability at least $1-3\cdot n^{-\gamma +1}$.
Third, it follows from
\cref{lem:rounding:errors:are:small}
that
\[\discr(\DynamicContribVecT{t}) = \Oh\left(
            \gamma\log(n) \cdot \left(1 +
            \sqrt{\frac{m}{n} \cdot \frac{\EdgeHittingTime}{n}} \right) +
            \sqrt{\frac{\gamma \log(n)}{\BalancingSpeed} \cdot\frac{m}{n} \cdot T(G)}\right)\]
        with probability at least $1-2\cdot n^{-\gamma+1}$.
The statement of the theorem therefore follows from a union bound over the statements of \cref{lem:initial:load:vanishes}, \cref{lem:rounding:errors:are:small}, and \cref{lem:disc:dyn}.
The bound on expectation follows analogously from the linearity of expectation and the bounds on the expected discrepancies in the aforementioned lemmas.
\end{proof}

Intuitively, \cref{lem:initial:load:vanishes} states that the contribution of the initial load to the discrepancy is insignificant if $t$ is large enough.
We generalize the analysis of Theorem 1~\cite{DBLP:conf/focs/RabaniSW98} (or Theorem 2.9 in~\cite{DBLP:conf/focs/SauerwaldS12}) to establish a bound on the discrepancy of the initial load as a function of $\BalancingSpeed$. For the sake of completeness the proof of  \cref{lem:initial:load:vanishes} is given in \cref{proof:lem:initial:load:vanishes}.

\begin{lemma}[name=Memorylessness Property,restate=restateInitialLoadVanishes,label=lem:initial:load:vanishes] 
    Let $G$ be a $d$-regular graph. Let $K=\discr(\LoadVecT{0})$.
    Then there exists a constant $c>0$ 
    such that for all $\gamma > 0$
    and $t \in \N$ with $t \geq t_0(\gamma) \coloneqq c \cdot
        \max\left\{\gamma \log(n), \log(K\cdot n)\right\} \cdot \smash[b]{\frac{1}{\SpectralGap(\Laplacian(G)) \cdot \beta}}$ we get with probability at least $1 - n^{-\gamma}$ and in expectation
     \[ \discr(\InitialContribVecT{t}) \leq 1.\]
\end{lemma}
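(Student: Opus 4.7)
The plan is to study the decay of the quadratic potential $\Phi(t) \coloneqq \|\InitialContribVecT{t} - \bar{x}\vec{1}\|_2^2$, where $\bar{x} = \frac{1}{n}\vec{1}^T\LoadVecT{0}$ is the mean load. Since each matching matrix is doubly stochastic, the mean is preserved and it suffices to bound $\|\vec{v}(t)\|_2$ where $\vec{v}(t) \coloneqq \InitialContribVecT{t} - \bar{x}\vec{1} = \MixMatT{t}\vec{v}(t-1)$. The initial potential satisfies $\Phi(0) \le nK^2$ because $\|\vec{v}(0)\|_\infty \le K$.

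The first step is the standard one-step identity for averaging: for any matching $\mixMat$,
\[
  \|\mixMat\vec{v}\|_2^2 \;=\; \|\vec{v}\|_2^2 \;-\; \tfrac{\BalancingSpeed(2-\BalancingSpeed)}{2}\sum_{\{i,j\}\in E(\mixMat)}(v_i-v_j)^2.
\]
Taking conditional expectation over $\MixMatT{t}$ and using that the Ghosh--Muthukrishnan construction of $\RMDistr(G)$ includes each edge $\{i,j\}\in E(G)$ in the matching with probability at least $c_1/d$ for an absolute constant $c_1 > 0$, I obtain
\[
  \E[\Phi(t)\mid\FilterT{t-1}] \;\le\; \Phi(t-1) - \tfrac{c_1\BalancingSpeed(2-\BalancingSpeed)}{2d}\sum_{\{i,j\}\in E(G)}(v_i(t-1)-v_j(t-1))^2.
\]
For a $d$-regular graph the combinatorial Laplacian satisfies $L_G = d\cdot\Laplacian(G)$, so the edge sum equals $d\cdot\vec{v}(t-1)^T\Laplacian(G)\vec{v}(t-1)$. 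Since $\vec{v}(t-1)\perp\vec{1}$, the Rayleigh quotient yields $\vec{v}(t-1)^T\Laplacian(G)\vec{v}(t-1)\ge\SpectralGap(\Laplacian(G))\,\Phi(t-1)$, and thus
\[
  \E[\Phi(t)\mid\FilterT{t-1}] \;\le\; \bigl(1 - c_2\BalancingSpeed\,\SpectralGap(\Laplacian(G))\bigr)\,\Phi(t-1).
\]

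Iterating this contraction from $\Phi(0)\le nK^2$ gives $\E[\Phi(t)] \le nK^2\exp(-c_2\BalancingSpeed\SpectralGap(\Laplacian(G))\,t)$. I then convert this back to a discrepancy bound via $\discr(\InitialContribVecT{t})\le 2\|\vec{v}(t)\|_\infty\le 2\sqrt{\Phi(t)}$. For the expectation, Jensen's inequality yields $\E[\discr(\InitialContribVecT{t})]\le 2\sqrt{\E[\Phi(t)]}\le 1$ as soon as $t\ge C\log(nK)/(\BalancingSpeed\SpectralGap(\Laplacian(G)))$ for a suitable constant $C$. For the high-probability bound I apply Markov: $\Pr[\Phi(t)\ge 1/4]\le 4\E[\Phi(t)]$, which is at most $n^{-\gamma}$ once $t\ge c\cdot\max\{\gamma\log n,\,\log(nK)\}/(\BalancingSpeed\SpectralGap(\Laplacian(G)))$, matching the stated threshold $t_0(\gamma)$.

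The main obstacle is the edge-inclusion lower bound $\Pr[\{i,j\}\in E(\MixMatT{t})] \ge c_1/d$, which is the technical heart of the argument. It requires arguing that after sampling each edge into $S$ with probability $1/(4d)-1/(16d^2)$, the locally computed matching still retains each edge with constant probability despite conflicts at the endpoints, and is handled exactly as in \cite{DBLP:journals/jcss/GhoshM96, DBLP:conf/focs/RabaniSW98}. Generalizing the Rabani--Sinclair--Wanka analysis from the divisor $\BalancingSpeed=1$ to arbitrary $\BalancingSpeed\in(0,1]$ is otherwise cosmetic: it merely replaces the classical potential drop factor $1/2$ by $\BalancingSpeed(2-\BalancingSpeed)/2 = \Theta(\BalancingSpeed)$, which is where the $1/\BalancingSpeed$ factor in $t_0(\gamma)$ originates.
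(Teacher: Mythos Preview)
Your proof is correct and is actually more elementary than the paper's. The paper proves this lemma by applying its own drift theorem (\cref{lem:drift}) to the sequence $\NodePotential(\InitialContribVecT{t})$: it establishes not only the expected one-step drop you use but also a bound on the conditional variance of the drop (via the second part of \cref{prop:node_potential_change_statistics}, which brings in $\EdgeHittingTime/n$), then invokes the tail bound of \cref{lem:drift} and rearranges the resulting integral inequality into a statement about $\NodePotential(\InitialContribVecT{t})$.

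Your route sidesteps all of this: once you have the multiplicative contraction $\E[\Phi(t)\mid\FilterT{t-1}] \le (1-\Theta(\BalancingSpeed\SpectralGap))\Phi(t-1)$, iterating and applying Markov's inequality to the exponentially decaying expectation already gives the high-probability bound, so no variance control and no drift machinery are needed. The paper presumably uses \cref{lem:drift} here for uniformity, since that theorem is genuinely indispensable elsewhere (bounding the global divergence in \cref{lem:glob:div:bound:drift}, where one needs control over the \emph{sum} of the potentials rather than a single value). For the present lemma, however, your argument is shorter and requires strictly weaker hypotheses.
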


The next lemma bounds $\discr(\RoundingContribVecT{t})$, the discrepancy contribution of cumulative rounding errors.
Note that this result does not just hold for the random matching model, but for all the three models that we consider in this paper.
In the proof of the lemma we extend then results of Theorem 3.6 in~\cite{DBLP:conf/focs/SauerwaldS12} (which is based on work in~\cite{DBLP:journals/jcss/BerenbrinkCFFS15}) to establish a bound as a function of $\BalancingSpeed$. The proof is given in \cref{p:lem:rounding:errors:are:small}.
\begin{lemma}[name=Insignificance of Rounding Errors,restate=restateRoundingErrorsAreSmall,label=lem:rounding:errors:are:small]
    Let $G$ be an arbitrary graph.
    Then for all $\gamma >1$, $t \in \N$, and $k \in [n]$ we get with probability at least $1 - 2n^{-\gamma+1}$ and in expectation
\[ \discr(\RoundingContribVecT{t}) \leq 2\cdot\sqrt{{\gamma \log(n)}/{\beta}}. \]
\end{lemma}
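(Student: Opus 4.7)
The plan is to bound each coordinate $\NodeRoundingContribT{k}{t}$ separately via Hoeffding's inequality, applied conditionally on the matching sequence, and then union-bound over nodes. The starting point is the structure of a single rounding error: whenever $\{i,j\} \in E(\MixMatT{\tau})$ is matched, $\RoundingErrT{i}{\tau} = -\RoundingErrT{j}{\tau}$ (load is conserved), each error lies in an interval of width at most $1$ (the gap between $\lceil\cdot\rceil$ and $\lfloor\cdot\rfloor$) and is centered, and errors on distinct matched edges and in distinct rounds are mutually independent given the load history. Using this antisymmetry to pair matched endpoints, I would rewrite
\begin{equation*}
\NodeRoundingContribT{k}{t}
= \sum_{\tau=1}^{t}\sum_{\{i,j\}\in E(\MixMatT{\tau})}
    \bigl([\MixMatTT{\tau+1}{t}]_{k,i} - [\MixMatTT{\tau+1}{t}]_{k,j}\bigr)\,\RoundingErrT{i}{\tau},
\end{equation*}
so that, conditional on $\mixMatSeq{t}$, the right-hand side is a sum of independent bounded centered random variables whose widths are controlled by the coefficients $[\MixMatTT{\tau+1}{t}]_{k,i} - [\MixMatTT{\tau+1}{t}]_{k,j}$.

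Next I would control the conditional variance by the local divergence at $k$,
\begin{equation*}
\Phi_k \coloneqq \sum_{\tau=1}^{t}\sum_{\{i,j\}\in E(\MixMatT{\tau})} \bigl([\MixMatTT{\tau+1}{t}]_{k,i} - [\MixMatTT{\tau+1}{t}]_{k,j}\bigr)^2.
\end{equation*}
A direct $2\times 2$-block computation on a single matched edge shows that multiplying by one more matching matrix on the right reduces $\|[\MixMatTT{\tau,t}]_{k,\cdot}\|_2^2$ by exactly $\beta(1-\beta/2)$ times the contribution of that matching to $\Phi_k$. Telescoping from $\tau=1$ to $t$, the total reduction is bounded by the initial value $\|[\MixMatTT{t+1}{t}]_{k,\cdot}\|_2^2 = \|\vec{\mathrm{e}}_k\|_2^2 = 1$, which gives $\Phi_k \leq 2/\beta$. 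This telescoping identity (and extracting the precise $\beta(1-\beta/2)$ factor so as to recover the correct $\beta$-dependence in the final bound) is the step I expect to require the most care; the rest is standard.

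Hoeffding's inequality applied conditionally on $\mixMatSeq{t}$ then yields
\begin{equation*}
\Pr\bigl[|\NodeRoundingContribT{k}{t}| \geq s \,\bigm|\, \mixMatSeq{t}\bigr]
\leq 2\exp(-2s^2/\Phi_k)
\leq 2\exp(-\beta s^2),
\end{equation*}
so taking $s = \sqrt{\gamma\log(n)/\beta}$ and averaging over $\mixMatSeq{t}$ gives $\Pr[|\NodeRoundingContribT{k}{t}| \geq \sqrt{\gamma\log(n)/\beta}] \leq 2n^{-\gamma}$. A union bound over $k \in [n]$ controls $\max_k |\NodeRoundingContribT{k}{t}|$. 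Finally, because on each matched edge the rounding errors sum to zero and each $\MixMatTT{\tau+1}{t}$ is doubly stochastic, $\RoundingContribVecT{t}$ has coordinate sum zero, so $\discr(\RoundingContribVecT{t}) \leq 2\max_k|\NodeRoundingContribT{k}{t}|$, which yields the claimed high-probability bound with the stated constants. The bound in expectation follows either by integrating the sub-Gaussian tail or directly from $\BigAutoExp{\NodeRoundingContribT{k}{t}^2} \leq \Phi_k/4 \leq 1/(2\beta)$ combined with a standard maximal inequality for sub-Gaussian variables, giving $\BigAutoExp{\max_k |\NodeRoundingContribT{k}{t}|} = \Oh(\sqrt{\log(n)/\beta})$.
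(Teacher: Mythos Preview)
Your approach is essentially the same as the paper's: the same antisymmetry-based decomposition, the same telescoping identity (the paper packages your $2\times 2$-block computation as an observation that $\NodePotential(\vec{x}) - \NodePotential(\MixMatBeta{\BalancingSpeed}\vec{x}) = \frac{1-(1-\BalancingSpeed)^2}{2}\EdgePotential_{E(\MixMatBeta{\BalancingSpeed})}(\vec{x})$, which gives exactly your factor $\BalancingSpeed(1-\BalancingSpeed/2)$ and hence $\Phi_k \le 2/\BalancingSpeed$), the same union bound over $k$, and the same tail-to-expectation conversion.

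There is one genuine gap: the rounding errors are \emph{not} independent conditionally on $\mixMatSeq{t}$. The fractional part that determines the distribution of $\RoundingErrT{i}{\tau}$ depends on the loads $X_i(\tau),X_j(\tau)$, which in turn depend on the rounding outcomes in rounds $1,\dots,\tau-1$ (and on the allocation randomness). So the $\RoundingErrT{i}{\tau}$ are martingale differences with respect to the natural filtration, not independent random variables. The paper handles this by ordering the rounding actions, showing $\AutoExpCond{\RoundingErrT{i}{\tau}}{\FilterT{l-1}}=0$ and $-b_l\le Y_l\le a_l$ with $a_l+b_l = \bigl|[\mixMatTT{\tau+1}{t}]_{k,i} - [\mixMatTT{\tau+1}{t}]_{k,j}\bigr|$, and then applying the Azuma--Hoeffding inequality; the resulting bound $2\exp(-2s^2/\Phi_k)$ is identical to what you wrote, so once you replace ``Hoeffding'' by ``Azuma--Hoeffding'' and drop the independence claim, your argument goes through unchanged.
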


To bound $\discr(\DynamicContribVecT{t})$, the discrepancy contribution of dynamically allocated load items we apply  the next lemma. It is in fact the core of our work. We prove it in \cref{sec:bound:contribution:dynamcally:allocated:balls}. 
\begin{lemma}[Contribution of Dynamically Allocated Load]\label{lem:disc:dyn}
Let $G$ be a $d$-regular graph.
Define $T(G) \coloneqq \min \left\{\HittingTime\cdot \log n/{n}, \sqrt{d/{\SpectralGap(\Laplacian(G))}}, 1/{\SpectralGap(\Laplacian(G))} \right\}$.
Then for all $\gamma > 1$ and $t \in \N$ we get with probability at least $1 - 3n^{-\gamma+1}$ and in expectation
\[
\discr(\DynamicContribVecT{t}) = \Oh\left(
\gamma\log(n) \cdot \left(1 +
\sqrt{\frac{m}{n} \cdot \frac{\EdgeHittingTime}{n}} \right) +
\sqrt{\frac{\gamma \log(n)}{\BalancingSpeed} \cdot\frac{m}{n} \cdot T(G)}\right).
\]
\end{lemma}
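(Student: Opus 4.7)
The plan is to bound $\discr(\DynamicContribVecT{t})$ coordinate-wise and then union-bound over nodes. Decomposing each round's allocation into $m$ independent single-ball placements, the $k$-th coordinate becomes
\[\NodeDynamicContribT{k}{t} \;=\; \sum_{\tau=1}^{t}\sum_{j=1}^{m} \MixMatTT{\tau}{t}_{k,\sigma(\tau,j)},\]
where the destinations $\sigma(\tau,j)$ are independent and uniform on $[n]$. Since every $\MixMatTT{\tau}{t}$ is doubly stochastic, each summand has conditional mean $1/n$, so $\E[\NodeDynamicContribT{k}{t}]=mt/n$ for every $k$, and $\discr(\DynamicContribVecT{t})$ is controlled by the maximum coordinate deviation from this common mean.

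I would then apply Bernstein's inequality conditional on $\MixMatSeq{t}$. Let
\[V_k \coloneqq \sum_{\tau=1}^{t}\|\MixMatTT{\tau}{t}_{k,\cdot}\|_2^2, \qquad M_k \coloneqq \max_{\tau\le t}\|\MixMatTT{\tau}{t}_{k,\cdot}\|_\infty.\]
The conditional variance of $\NodeDynamicContribT{k}{t}$ is at most $(m/n)\cdot V_k$ and each of the $mt$ summands is bounded by $M_k$, so Bernstein yields a deviation of order $\sqrt{\gamma\log(n)\cdot (m/n)\cdot V_k}+\gamma\log(n)\cdot M_k$ with conditional probability at least $1-n^{-\gamma}$. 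Matching this to the lemma therefore reduces the proof to showing, with high probability over the matchings, that $V_k=\Oh(T(G)/\BalancingSpeed)$ and $M_k=\Oh\bigl(1+\sqrt{(m/n)(\EdgeHittingTime/n)}\bigr)$, after which a union bound over $k\in[n]$ and over the matching events completes the argument.

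The heart of the argument is the bound on $V_k$, where the paper's drift theorem and the effective-resistance viewpoint come in. For any fixed matching sequence the row-norm $\|\MixMatTT{\tau}{t}_{k,\cdot}\|_2^2$ is non-increasing in $\tau$, and a short calculation shows that its expected one-step decrease is proportional to $\BalancingSpeed$ times the Laplacian quadratic form applied to the current row. Three complementary estimates for the aggregate expected decrease — an effective-resistance argument interpreting row evolution as a random walk via $\ResistiveDistance{i}{j}=\Oh(\HittingTime/n)$, a direct Poincaré estimate in terms of $\SpectralGap(\Laplacian(G))$, and an intermediate Cauchy--Schwarz bound involving $d$ and $\SpectralGap(\Laplacian(G))$ — produce exactly the three terms of the minimum defining $T(G)$. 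The drift theorem is the device that promotes these expected per-step decreases into a high-probability bound on the random sum $V_k$, which is precisely why the paper highlights the drift result as the main analytical vehicle. The bound on $M_k$ follows more directly from a random-walk comparison: a row of $\MixMatTT{\tau}{t}$ is the distribution of the reverse-time walk induced by the matchings, and edge-hitting-time estimates control its infinity norm.

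The main obstacle is the drift step. A typical node $k$ participates in the random matching only with probability $\Theta(1/d)$ per round, so the per-step decrease of $\|\MixMatTT{\tau}{t}_{k,\cdot}\|_2^2$ is zero in most rounds; the Laplacian quadratic form must be tied to a global potential that does decrease reliably in expectation, and the decrement must be related to effective resistance by careful accounting of which edges are active in each round. Converting this per-step expected decrement into a high-probability tail bound on the entire sum $V_k$ is exactly the situation the paper's variable-drift-style theorem is designed to handle. Once the bounds on $V_k$ and $M_k$ are in hand, applying Bernstein conditionally, taking the union bound over the resulting events and over $k\in[n]$, and noting that $\discr$ is at most twice the worst coordinate deviation from the mean yields the claimed high-probability and in-expectation statements.
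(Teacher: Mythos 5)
Your overall architecture (fix the matchings, write $\NodeDynamicContribT{k}{t}$ as a sum of $mt$ independent single-item contributions, apply a Bernstein-type bound conditionally, control the accumulated row potentials via the drift theorem, then union-bound over $k$) is the same as the paper's. However, two of your intermediate claims are wrong as stated, and they are exactly the load-bearing steps. First, your variance proxy $V_k=\sum_{\tau\le t}\norm{\MixMatTT{\tau}{t}_{k,\cdot}}_2^2$ is uncentered: since each row is stochastic, $\norm{\MixMatTT{\tau}{t}_{k,\cdot}}_2^2=\norm{\MixMatTT{\tau}{t}_{k,\cdot}-\vec{1}/n}_2^2+1/n$, so $V_k\ge t/n$ and your target bound $V_k=\Oh(T(G)/\BalancingSpeed)$ is false for large $t$, while the lemma must hold for all $t\in\N$. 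The correct quantity is the centered sum, i.e.\ the global divergence $\GlobalDivergence_k(\MixMatSeq{t})^2=\sum_{\tau}\norm{\MixMatTT{\tau}{t}_{k,\cdot}-\vec{1}/n}_2^2$, which equals the exact conditional variance times $n/m$; only the centered rows decay, which is what makes a $t$-independent bound possible.

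Second, you misattribute the $\gamma\log(n)\sqrt{(m/n)\cdot\EdgeHittingTime/n}$ term to the sup-norm $M_k$ of the rows. Since the matrices are doubly stochastic, $M_k\le 1$ trivially, so no edge-hitting-time estimate is needed (or available) there; the Bernstein additive term is simply $\Oh(\gamma\log n)$. In the paper that term instead comes out of the high-probability bound on the divergence itself: the drift theorem's second statement yields $\GlobalDivergence_k(\MixMatSeq{t})^2=\Oh\bigl(\sigma^2(\gamma\log n+\log\sigma^2)+\BalancingSpeed^{-1}\int_0^1 x/g_G(x)\,\dx\bigr)$, where the fluctuation parameter $\sigma^2=\Theta(\EdgeHittingTime/n)$ is established by bounding the \emph{variance} of the one-step potential drop (via $\max_{\{i,j\}\in E}(x_i-x_j)^2\le\MaxEdgeResistance\cdot\EdgePotential_G(\vec{x})$ and $\MaxEdgeResistance\cdot\abs{E}\le\EdgeHittingTime$), and the integral is $\Oh(T(G))$. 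Your plan asserts the strictly stronger statement $\GlobalDivergence$-type bound $\Oh(T(G)/\BalancingSpeed)$ without the $\sigma^2\gamma\log n$ slack, but the drift machinery you invoke does not deliver that: a node is touched only with probability $\Theta(1/d)$ per round, so the potential drop is random and the tail bound necessarily carries the variance-driven additive term. So the proof as proposed does not go through; the fix is to work with the centered rows, prove the two ``goodness'' conditions (expected drop at least $g_G(\Phi)$, variance at most $(\sigma_G^2-1)$ times the squared expected drop with $\sigma_G^2=\Theta(\EdgeHittingTime/n)$), and accept the resulting extra $\gamma\log(n)\sqrt{(m/n)\cdot\EdgeHittingTime/n}$ term, which is precisely why it appears in the lemma's statement.
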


\subsection{Bounding the Contribution of Dynamically Allocated Load} \label{sec:bound:contribution:dynamcally:allocated:balls}

In this section we prove \cref{lem:disc:dyn}.
Some of the proofs are omitted and can be found in \cref{sec:Omitted:Proofs:31}.
As a first step, we bound $\discr(\DynamicContribVecT{t})$
using the \emph{global divergence} $\GlobalDivergence(\MixMatSeq{t})$, which is defined over a sequence of matching matrices $\MixMatSeq{t}$
        as
    \[\GlobalDivergence(\MixMatSeq{t}) \coloneqq \max_{k \in [n]} \GlobalDivergence_k(\MixMatSeq{t}),\quad\textup{where}\quad
    \GlobalDivergence_k(\MixMatSeq{t})
        \coloneqq \sqrt{\sum_{\tau=1}^t \norm*{\MixMatTT{\tau}{t}_{k,\cdot}- \frac{\vec{1}}{n}}_2^2}.
    \]
    The global divergence can be regarded as a measure of the convergence speed of a random walk
    that uses the matching matrices as transition probabilities. 
In~\cite{DBLP:conf/stoc/FriedrichS09,DBLP:conf/focs/SauerwaldS12,DBLP:journals/jcss/BerenbrinkCFFS15} the authors use a related notion which they call the \emph{local $p$-divergence}, also defined on a sequence of matchings $\mixMatSeq{t}$.
The difference lies in the fact that the global divergence, essentially, measures differences between nodes' values and a global average, while the local divergence measures differences between neighboring nodes.
To show \cref{lem:disc:dyn} we first observe the following. 
\begin{observation}\label{obs:disc:in_terms_of_one_viation}
It holds that $\discr(\DynamicContribVecT{t}) \leq 2 \cdot \max_{k\in[n]}\abs{\NodeDynamicContribT{k}{t} - t\cdot m/n} $.
\end{observation}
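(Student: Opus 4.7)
The plan is to exploit the fact that the mixing matrices $\MixMatT{\tau}$ preserve the total load, so the average coordinate of $\DynamicContribVecT{t}$ is exactly $t \cdot m / n$; the discrepancy of any vector is then at most twice its maximum deviation from its average.

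More concretely, I first observe that each matching matrix $\MixMatT{\tau}$ is symmetric and has each row summing to $1$ (it is doubly stochastic by construction, since at a matched pair $\{i,j\}$ the rows and columns read $1-\BalancingSpeed/2$ on the diagonal and $\BalancingSpeed/2$ off-diagonal, and unmatched rows/columns are $\UnitVec{i}$). Products of doubly stochastic matrices are doubly stochastic, so $\MixMatTT{\tau}{t}$ is doubly stochastic as well, and hence $\vec{1}^\top \MixMatTT{\tau}{t} = \vec{1}^\top$. This gives
\[
\sum_{k\in[n]} \NodeDynamicContribT{k}{t}
    = \sum_{\tau=1}^t \vec{1}^\top \MixMatTT{\tau}{t} \AllocVecT{\tau}
    = \sum_{\tau=1}^t \vec{1}^\top \AllocVecT{\tau}
    = t \cdot m,
\]
since exactly $m$ load items are allocated in each round of $\SyncProcName$. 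Consequently the coordinate-wise average of $\DynamicContribVecT{t}$ equals $tm/n$.

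Finally, writing $\bar{y} \coloneqq tm/n$ and $\Delta \coloneqq \max_{k}\abs{\NodeDynamicContribT{k}{t} - \bar{y}}$, every coordinate lies in $[\bar{y}-\Delta, \bar{y}+\Delta]$, so
\[
\discr(\DynamicContribVecT{t})
    = \max_{i} \NodeDynamicContribT{i}{t} - \min_{j} \NodeDynamicContribT{j}{t}
    \leq (\bar{y} + \Delta) - (\bar{y} - \Delta)
    = 2\Delta,
\]
which is exactly the claimed bound. There is no real obstacle here — the only thing to check carefully is that the construction of $\MixMatT{\tau}$ yields doubly stochastic matrices, after which the statement reduces to the elementary fact that the discrepancy of a vector is at most twice the $\ell_\infty$-distance to its mean.
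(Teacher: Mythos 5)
Your proof is correct. The paper does not spell out a proof of this observation (it treats it as immediate), and your argument is the natural one. One small remark: the final inequality
\[
\discr(\vec{y}) = \max_i y_i - \min_j y_j \leq \bigl(c + \max_k\abs{y_k - c}\bigr) - \bigl(c - \max_k\abs{y_k - c}\bigr) = 2\max_k\abs{y_k - c}
\]
holds for \emph{any} reference value $c$, not just the mean, so your first paragraph (verifying that $\MixMatTT{\tau}{t}$ is doubly stochastic and hence that $t m / n$ is exactly the coordinate-wise average of $\DynamicContribVecT{t}$) is not logically required for the stated bound. It is, however, the reason why $tm/n$ is the right quantity to center around for the later concentration argument, so it is reasonable context to include.
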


Next we consider a fixed node $k$ and show a concentration inequality on $\NodeDynamicContribT{k}{t}$ in terms of $\GlobalDivergence_k(\mixMatSeq{t})$, where $\mixMatSeq{t}$ is the sequence of matchings applied by our process (\cref{lem:mixing_well_means_balancing_well}).
Note that in the lemma we assume the matchings are fixed and the randomness is due to the random load placement only. Hence, the lemma directly applies to $\BCDistr(G)$.
Afterwards, we bound the global divergence of the random sequence of matchings, $\GlobalDivergence_k(\MixMatSeq{t})$ in terms of a notion of ``goodness'' of the used matching distribution $\MatchDistr$,
for the random sequence of matchings (\cref{lem:glob:div:bound:drift}),
and then bound the ``goodness'' of the distribution $\RMDistr(G)$ used in the random matching model (\cref{lem:rmdistr_is_good}).
We start with a bound on the deviation of $\NodeDynamicContribT{k}{t}$ from the average load $t \cdot m/n$ in terms of $\GlobalDivergence(\mixMatSeq{t})$.

\begin{lemma}[Load Concentration]\label{lem:mixing_well_means_balancing_well}
    Let $\mixMatSeq{t}$ be an arbitrary sequence of matchings.
    Then for all $\gamma>0$, $t \in \N$, and $k \in [n]$ we get with probability at most $2 \cdot n^{-\gamma}$
   \[  
        \abs*{\NodeDynamicContribT{k}{t} - t \cdot \frac{m}{n}} \geq \frac{4}{3} \cdot \gamma\log(n) + \sqrt{8\gamma\log(n) \cdot \frac{m}{n}} \cdot \GlobalDivergence_k(\mixMatSeq{t}).
        \]
   \end{lemma}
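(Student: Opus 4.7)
The plan is to rewrite $\NodeDynamicContribT{k}{t}$ as a sum of $t \cdot m$ mutually independent random variables taking values in $[0,1]$, compute its mean and variance exactly using the double stochasticity of the mixing matrices, and then apply Bernstein's inequality. Since $\mixMatSeq{t}$ is fixed, all randomness comes from the item allocations. I would introduce $I_{\tau,j}$ for the (random) index of the node receiving the $j$-th item in round $\tau$; these are i.i.d.\ uniform on $[n]$ across all $(\tau,j) \in [t] \times [m]$. Writing $\NodeAllocT{i}{\tau} = \sum_{j=1}^m \1[I_{\tau,j}=i]$ and substituting into $\NodeDynamicContribT{k}{t} = \sum_{\tau=1}^t (\mixMatTT{\tau}{t}\AllocVecT{\tau})_k$ gives
\[
\NodeDynamicContribT{k}{t} \;=\; \sum_{\tau=1}^t \sum_{j=1}^m Y_{\tau,j}, \qquad Y_{\tau,j} \coloneqq \mixMatTT{\tau}{t}_{k,\,I_{\tau,j}}.
\]
Since each $\mixMatTT{\tau}{t}$ is a product of doubly stochastic matrices (and hence itself doubly stochastic), its entries lie in $[0,1]$, and the $Y_{\tau,j}$ inherit independence from the $I_{\tau,j}$.

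Next I would compute the first two moments. Row sums being $1$ give $\E[Y_{\tau,j}] = 1/n$ and thus $\E[\NodeDynamicContribT{k}{t}] = tm/n$. For the variance, the key identity is that for any probability vector $w \in \R^n$,
\[
\norm*{w - \vec{1}/n}_2^2 \;=\; \norm*{w}_2^2 - 1/n,
\]
obtained by expanding the square and using $\inp{w}{\vec{1}} = 1$. Applying it with $w = \mixMatTT{\tau}{t}_{k,\cdot}$ yields
\[
\Var(Y_{\tau,j}) \;=\; \tfrac{1}{n}\norm*{\mixMatTT{\tau}{t}_{k,\cdot}}_2^2 - \tfrac{1}{n^2} \;=\; \tfrac{1}{n}\norm*{\mixMatTT{\tau}{t}_{k,\cdot} - \vec{1}/n}_2^2,
\]
so that summing over all $(\tau,j)$ gives a total variance of exactly $V = (m/n)\cdot \GlobalDivergence_k^2(\mixMatSeq{t})$.

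Finally, I would apply Bernstein's inequality to the centered sum, using $|Y_{\tau,j}| \leq 1$, which yields
\[
\Pr\!\left[\,\abs*{\NodeDynamicContribT{k}{t} - tm/n} \geq x\,\right] \;\leq\; 2\exp\!\left(-\tfrac{x^2}{2(V + x/3)}\right).
\]
Solving the quadratic inequality $x^2 \geq 2\gamma\log(n)(V + x/3)$ via $\sqrt{a+b}\leq\sqrt{a}+\sqrt{b}$ shows that any $x \geq \tfrac{2}{3}\gamma\log(n) + \sqrt{2\gamma\log(n)\cdot V}$ already forces the right-hand side below $2n^{-\gamma}$. The chosen threshold $\tfrac{4}{3}\gamma\log(n) + \sqrt{8\gamma\log(n)\cdot m/n}\,\GlobalDivergence_k(\mixMatSeq{t})$ is exactly twice that sufficient value, so the bound holds with room to spare.

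The only place where one can slip is the variance computation. Without the Pythagorean-type identity above, one is left with the weaker bound $\Var(Y_{\tau,j}) \leq \tfrac{1}{n}\norm*{\mixMatTT{\tau}{t}_{k,\cdot}}_2^2$, which introduces an extra $tm/n^2$ contribution to $V$ that is absent from the statement and would not assemble cleanly into $\GlobalDivergence_k$. Everything else — independence of the $Y_{\tau,j}$, boundedness in $[0,1]$, and the final Bernstein computation — is routine.
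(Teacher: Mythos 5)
Your proposal is correct and follows essentially the same route as the paper: the same decomposition of $\NodeDynamicContribT{k}{t}$ into independent per-item contributions bounded in $[0,1]$, the same mean and variance computation via double stochasticity yielding total variance $(m/n)\cdot\GlobalDivergence_k^2(\mixMatSeq{t})$, and a Bernstein-type tail bound at the end. The only cosmetic difference is that you invoke the standard two-sided Bernstein inequality, whereas the paper applies the one-sided Chung--Lu bounds (its \cref{thm:upper:bound:on:sum} and \cref{thm:lower:on:sum}) separately and combines them by a union bound; both give the stated threshold with slack.
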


\begin{proof}
Our goal is to decompose $\NodeDynamicContribT{k}{t}$ into a sum of independent random variables. Recall that we assume that the matching matrices are fixed and all randomness is due to the random choices of the load items. 
This will enable us to apply a concentration inequality to this sum.
    For the decomposition observe that
\(\DynamicContribVecT{t} = \sum_{\tau=1}^t \mixMatTT{\tau}{t} \cdot \AllocVecT{\tau},\)
where $\AllocVecT{\tau}$ is the random load vector corresponding to the $m$ load items allocated at time $\tau$.
So the $k$th coordinate of $\DynamicContribVecT{t}$ is
\(
\NodeDynamicContribT{k}{t}
=\sum_{\tau=1}^t\sum_{w\in[n]} \mixMatTT{\tau}{t}_{k,w}\cdot \NodeAllocT{w}{\tau}.
\)
We define the indicator random variable $\BallsRoundNode{\tau}{j}{w}$ for $\tau\in[t], j\in[m]$ and $w\in[n]$ as 
\[\BallsRoundNode{\tau}{j}{w} \coloneqq
    \begin{cases}
       1,  & \text{if the $j$-th load item of step $\tau$ is allocated to node $w$, } \\
       0,  &  \mbox{otherwise.}
    \end{cases}
\]
Note that for fixed $\tau$ and $j$ we have $\sum_{w\in [n]} \BallsRoundNode{\tau}{j}{w} =1$, $\Pr\left[\BallsRoundNode{\tau}{j}{w}=1\right]=1/n$ and $\E[\BallsRoundNode{\tau}{j}{w}]=1/n$.
Observe that $\NodeAllocT{w}{\tau}$, the load allocated to node $w$ at step $\tau$, can be expressed as $\sum_{j\in [m]} \BallsRoundNode{\tau}{j}{w}$. 
Merging this with the value of $\NodeDynamicContribT{k}{t}$ gives
\begin{align*}
    \NodeDynamicContribT{k}{t} &= 
    \sum_{\tau=1}^{t}\sum_{w\in [n]} \mixMatTT{\tau}{t}_{k,w} \cdot \left(\sum_{j\in [m]} \BallsRoundNode{\tau}{j}{w} \right)
    =\sum_{\tau=1}^t \sum_{j\in[m] } \underbrace{\left(\sum_{w\in [n]} \left(  \mixMatTT{\tau}{t}_{k,w}\cdot \BallsRoundNode{\tau}{j}{w}\right)\right)}_{\eqqcolon\BallRoundContr{k}{\tau}{j}}.
\end{align*}

For a fixed $\tau\in [t]$ and $j\in[m]$ we define 
$\BallRoundContr{k}{\tau}{j}\coloneqq\sum_{w\in [n]} \mixMatTT{\tau}{t}_{k,w}\cdot \BallsRoundNode{\tau}{j}{w}$.
This random variable measures the contribution of $j$-th load item of round $\tau$ to $\NodeDynamicContribT{k}{t}$. 
Note that the load items are allocated independently from each other. 
Since $\mixMatTT{\tau}{t}$ are fixed matrices, then $\BallRoundContr{k}{\tau}{j}$ and $\BallRoundContr{k}{\tau'}{j'}$ are independent for  all $\tau$ and $\tau'$ and $j\neq j'$. 
To apply the concentration inequality from \cref{thm:upper:bound:on:sum} we need to show that $\BallRoundContr{k}{\tau}{j}\le 1$ and compute an upper bound on $\Var[\BallRoundContr{k}{\tau}{j}]$. Showing the first condition is easy since 
exactly one of the indicator random variables $\BallsRoundNode{\tau}{j}{w}$ is one and $\mixMatTT{\tau}{t}_{k,w}$ has a value between zero and one. 

It remains to consider the variance of $\BallRoundContr{k}{\tau}{j}$. First note that by linearity of expectation
\begin{align*}
\BigAutoExp{\BallRoundContr{k}{\tau}{j}} =\BigAutoExp{\!\sum_{w\in [n]\!\!\!} \left(  \mixMatTT{\tau}{t}_{k,w}\cdot \BallsRoundNode{\tau}{j}{w}\right)} \!\!=\!
\sum_{\!\!\!w\in[n]\!\!\!}\mixMatTT{\tau}{t}_{k,w}\cdot \BigAutoExp{\BallsRoundNode{\tau}{j}{w}}\!=\! \sum_{\!\!\!w\in[n]\!\!\!}\mixMatTT{\tau}{t}_{k,w}\cdot \frac{1}{n} \!=\! \frac{1}{n},
\end{align*}
where the last equality follows form the fact that $\mixMatTT{\tau}{k}$ is doubly stochastic.
Now we get
\begin{align*}
    \Var[\BallRoundContr{k}{\tau}{j}]
    &= \BigAutoExp{\left(\BallRoundContr{k}{\tau}{j} - \AutoExp{\BallRoundContr{k}{\tau}{j}}\right)^2}
= \BigAutoExp{\Big(\Big(\sum_{w\in [n]} \mixMatTT{\tau}{t}_{k,w}\cdot \BallsRoundNode{\tau}{j}{w}\Big) - \frac{1}{n}\Big)^2}
\\ &    =
    \sum_{w' \in [n]} \frac{1}{n} \cdot \left(\mixMatTT{\tau}{t}_{k,w'} - \frac{1}{n}\right)^2
= \frac{1}{n} \cdot \norm*{\mixMatTT{\tau}{t}_{k,\cdot} - \frac{\vec{1}}{n}}_2^2,
\end{align*}
where we used that for each $\tau$ and each $j$ exactly one of the $\BallsRoundNode{\tau}{j}{w}$ is one and all others are zero, and each of the $n$ possible cases has uniform probability.

Recall that $\BallRoundContr{k}{\tau}{j}$ and $\BallRoundContr{k}{\tau'}{j'}$ are independent for all $\tau, \tau'$ and $j\neq j'$. Hence we get
\begin{align*}
\BigAutoVar{\sum_{\tau=1}^{t}\sum_{j\in[m]} \BallRoundContr{k}{\tau}{j}}
   &= \sum_{\tau=1}^{t}\sum_{j\in[m]} \Var[\BallRoundContr{k}{\tau}{j}]
    = \frac{1}{n} \cdot \sum_{\tau=1}^t\sum_{j\in[m]} \norm*{\mixMatTT{\tau}{t}_{k,\cdot} - \frac{\vec{1}}{n}}_2^2
\\&= \frac{m}{n}\cdot \left(\GlobalDivergence_k(\mixMatSeq{t})\right)^2,
\end{align*}
    where the final equality uses the definition of the global divergence  $\GlobalDivergence_k(\mixMatSeq{t})$.
Applying \cref{thm:upper:bound:on:sum} with $M=1$ and $X=\NodeDynamicContribT{k}{t}=\sum_{\tau=1}^{t}\sum_{j\in[m]} \BallRoundContr{k}{\tau}{j}$ with $\lambda=2\gamma\log(n)/3 + \GlobalDivergence_k(\mixMatSeq{t})\cdot \sqrt{2\gamma m/n}$ results in 
\[\Pr\left[{ \NodeDynamicContribT{k}{t}- t\cdot \frac{m}{n}} \geq \frac{2}{3} \cdot \gamma\log(n) + \sqrt{2\gamma\log(n) \cdot \frac{m}{n}} \cdot \GlobalDivergence_k(\mixMatSeq{t})\right]\le n^{-\gamma}.
\]
The lower bound can be established using \cref{thm:lower:on:sum} (with $a_i=0$ and $M=1$) instead of \cref{thm:upper:bound:on:sum}. Via a union bound we get
\[
 \BigAutoProb{\abs*{\NodeDynamicContribT{k}{t} - t \cdot \frac{m}{n}} \geq \frac{4}{3} \cdot \gamma\log(n) + \sqrt{8\gamma\log(n) \cdot \frac{m}{n}} \cdot \GlobalDivergence_k(\mixMatSeq{t})}
            \leq 2 \cdot n^{-\gamma}. \qedhere \]
\end{proof}

To bound the global divergence of the matching sequence used by the process we use two potential functions. 
The \emph{quadratic node potential} $\NodePotential(\vec{x})$ is given by
\[\NodePotential(\vec{x}) \coloneqq \sum_{i \in [n]} \left(x_i - \overline{x}\right)^2,\quad \text{where} \quad \overline{x} \coloneqq \frac{1}{n} \cdot \sum_{j \in [n]} x_j.\]
 For a set of edges $S$ on the nodes $[n]$ and a vector $\vec{x} \in \R^n$, the \emph{quadratic edge potential} is
    \[\EdgePotential_S(\vec{x}) \coloneqq \sum_{\{i, j\} \in S} (x_i - x_j)^2.\]
We may also write $\EdgePotential_G \coloneqq \EdgePotential_{E(G)}$ whenever $G$ is a graph, and $\EdgePotential_\MixMat \coloneqq \EdgePotential_{E(\MixMat)}$ whenever $\MixMat$ is a matching matrix.
The following observation relates the drop of node potential to the edge potential in terms of $\BalancingSpeed$.
\begin{observation}[name=,label=obs:node_potential_change_exact,restate=restateObsPotentialRelation]
    Let $\MixMatBeta{\BalancingSpeed}$ be a matching matrix with parameter $\BalancingSpeed \in (0, 1]$.
    Then for any $\vec{x} \in \R^n$ we have
    $\NodePotential(\vec{x}) - \NodePotential(\MixMatBeta{\BalancingSpeed} \cdot \vec{x})
            = \frac{1 - (1-\BalancingSpeed)^2}{2} \cdot \EdgePotential_{E(\MixMatBeta{\BalancingSpeed})}(\vec{x})$.
\end{observation}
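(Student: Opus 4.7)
The plan is to verify the identity pairwise, exploiting that the matching matrix acts locally and preserves both the global mean and the sum on each matched pair. First I would record that for any matched edge $\{i,j\} \in E(\MixMatBeta{\BalancingSpeed})$, applying $\MixMatBeta{\BalancingSpeed}$ sends $x_i \mapsto x_i - (\BalancingSpeed/2)(x_i - x_j)$ and $x_j \mapsto x_j + (\BalancingSpeed/2)(x_i - x_j)$, while unmatched coordinates are unchanged. In particular $x_i + x_j$ is invariant on each matched pair, so $\sum_k x_k$ and hence $\overline{x}$ are preserved. Consequently only matched pairs contribute to $\NodePotential(\vec{x}) - \NodePotential(\MixMatBeta{\BalancingSpeed}\vec{x})$, and the change decomposes edge by edge.

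Next I would carry out the single-edge computation via the convenient parametrization $a \coloneqq (x_i + x_j)/2 - \overline{x}$ and $d \coloneqq x_i - x_j$, so that
\[
    x_i - \overline{x} = a + d/2, \qquad x_j - \overline{x} = a - d/2.
\]
Applying $\MixMatBeta{\BalancingSpeed}$ leaves $a$ unchanged and rescales $d$ to $(1-\BalancingSpeed)d$. The contribution of the pair to $\NodePotential$ before the step equals $(a + d/2)^2 + (a - d/2)^2 = 2a^2 + d^2/2$, and afterwards equals $2a^2 + (1-\BalancingSpeed)^2 d^2/2$. Subtracting yields a per-edge drop of $\tfrac{1-(1-\BalancingSpeed)^2}{2}\,(x_i-x_j)^2$.

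Finally I would sum this per-edge drop over all $\{i,j\} \in E(\MixMatBeta{\BalancingSpeed})$ to obtain
\[
    \NodePotential(\vec{x}) - \NodePotential(\MixMatBeta{\BalancingSpeed}\vec{x})
    = \frac{1-(1-\BalancingSpeed)^2}{2}\sum_{\{i,j\}\in E(\MixMatBeta{\BalancingSpeed})}(x_i - x_j)^2
    = \frac{1-(1-\BalancingSpeed)^2}{2}\cdot \EdgePotential_{E(\MixMatBeta{\BalancingSpeed})}(\vec{x}),
\]
as claimed. There is essentially no conceptual obstacle here; the only thing to be careful about is the bookkeeping that justifies restricting the sum to matched edges, which follows from the preservation of the global mean and of each pair's sum. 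The identity is exact (not an inequality) precisely because cross terms between $a$ and $d$ cancel when the two endpoints of the pair are combined.
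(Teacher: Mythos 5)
Your argument matches the paper's proof in essence: both decompose the node-potential change edge by edge, parametrize each matched pair by its mean and difference (the paper writes $a=(x_i+x_j)/2$, $b=(x_i-x_j)/2$, $c=(1-\BalancingSpeed)b$; you write $a=(x_i+x_j)/2-\overline{x}$, $d=x_i-x_j$), observe that balancing rescales the difference by $1-\BalancingSpeed$ while preserving the pair sum, and sum the resulting per-edge drop. Your treatment of the global mean is slightly more explicit than the paper's ``assume w.l.o.g.\ $\overline{x}=0$'', but the computation and conclusion are the same.
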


We now define a notion of a matching distribution being \emph{good}.
In \cref{lem:glob:div:bound:drift} below we show that the notion is sufficient for showing that matching sequences generated from such distributions have bounded global divergence.
Note that the ``goodness'' of a distribution  does not depend on $\BalancingSpeed$ but on graph properties and the random choices with which the matchings are chosen.
Hence, we assume $\BalancingSpeed=1$.

\begin{definition}\label{def:goodness}
Assume $G$ is an arbitrary $d$-regular graph. 
    Let $g\colon \R_0^+ \to \R^+$ be an increasing function and let $\sigma^2 > 1$.
    Then a matching distribution $\MatchDistr(G)$ is \emph{$(g,\sigma^2)$-good} if the following conditions hold for $\MixMatBeta{1} \sim \MatchDistr(G)$ and all stochastic vectors $\vec{x} \in \R^n$.
\begin{enumerate}
    \item \(\NodePotential(\vec{x}) -  \AutoExp{\NodePotential(\MixMatBeta{1}  \cdot \vec{x})} \geq g(\NodePotential(\vec{x})).\)
    \item \(\AutoVar{\NodePotential(\MixMatBeta{1} \cdot \vec{x})} \leq (\sigma^2 - 1) \cdot \left(\NodePotential(\vec{x}) -  \AutoExp{\NodePotential(\MixMatBeta{1} \cdot \vec{x})}\right)^2.\)
\end{enumerate}
\end{definition}

It remains to show two results.
First, assuming a matching distribution is $(g,\sigma^2)$-good, the global divergence of a matching sequence  generated by that distribution can be bounded in terms of $g$ and $\sigma$ (\cref{lem:glob:div:bound:drift}). Second, we have to calculate a function $g_G$ and the values of $\sigma_G$ for which the matching distribution $\RMDistr(G)$ is $(g_G,\sigma_G^2)$-good (see \cref{lem:rmdistr_is_good}).

\begin{lemma}[name=Global Divergence,label=lem:glob:div:bound:drift,restate=restateLemGlobalDivergence]
    Assume $G$ is an arbitrary graph. Let $g\colon \R_0^+ \to \R^+$ be an increasing function, $\sigma^2 > 1$, and $\BalancingSpeed \in (0,1]$.
    Let $\MixMatSeq{t} = (\MixMatBeta{\BalancingSpeed}(\tau))_{\tau=1}^t$ be an i.i.d.\ sequence of matching matrices generated by $\MatchDistr(G)$ and 
    assume $\MatchDistr(G)$ is a $(g,\sigma^2)$-good matching distribution.
    Then for all  $\gamma > 0$ and $k \in [n]$ we get with probability at least $1 - n^{-\gamma}$
        \[
        \left(\GlobalDivergence_k(\MixMatSeq{t})\right)^2 \leq 8 \sigma^2 (\gamma \log(n) + \log(8 \sigma^2)) + \frac{2}{\BalancingSpeed} \cdot \int_0^1 \frac{x}{g(x)}\,\dx.
        \]
\end{lemma}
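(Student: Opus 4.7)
The plan is to recast $\left(\GlobalDivergence_k(\MixMatSeq{t})\right)^2$ as the accumulated node-potential along an auxiliary Markov chain and then invoke the paper's drift theorem for sums. Fix a node $k$ and rewrite the $k$-th row of $\MixMatTT{\tau}{t}$: since every matching matrix is symmetric, this row, viewed as a column vector, equals $\MixMatT{\tau}\cdots\MixMatT{t}\cdot\UnitVec{k}$. Setting $\vec{v}^{(0)} := \UnitVec{k}$ and $\vec{v}^{(s)} := \MixMatT{t-s+1}\,\vec{v}^{(s-1)}$ produces stochastic vectors satisfying $\|\MixMatTT{\tau}{t}_{k,\cdot} - \vec{1}/n\|_2^2 = \NodePotential(\vec{v}^{(t-\tau+1)})$, and therefore $\left(\GlobalDivergence_k(\MixMatSeq{t})\right)^2 = \sum_{s=1}^t \Phi_s$ with $\Phi_s := \NodePotential(\vec{v}^{(s)})$. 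Because the matching matrices are i.i.d., the reverse-time sequence has the same distribution as a fresh i.i.d.\ draw from $\MatchDistr(G)$, so $(\Phi_s)_{s\geq 0}$ is a Markov chain driven by $\MatchDistr(G)$ starting at $\Phi_0 = 1 - 1/n \leq 1$.

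Next, I would lift the $(g,\sigma^2)$-goodness hypothesis (stated for $\MixMatBeta{1}$) to the actual chain, which evolves under $\MixMatBeta{\BalancingSpeed}$. \cref{obs:node_potential_change_exact} shows that the per-step node-potential drop scales exactly by $1-(1-\BalancingSpeed)^2 = \BalancingSpeed(2-\BalancingSpeed) \in [\BalancingSpeed,2\BalancingSpeed]$ relative to $\BalancingSpeed=1$. Combined with the two defining conditions of goodness this gives
\[\E[\Phi_{s-1} - \Phi_s \mid \Phi_{s-1}] \geq \BalancingSpeed \cdot g(\Phi_{s-1}), \qquad \Var[\Phi_s \mid \Phi_{s-1}] \leq (\sigma^2-1)\,\bigl(\E[\Phi_{s-1} - \Phi_s \mid \Phi_{s-1}]\bigr)^2,\]
where the variance inequality carries over with the same $\sigma^2$ since both sides gather the common factor $(\BalancingSpeed(2-\BalancingSpeed))^2$ in the transition from $\BalancingSpeed=1$ to general $\BalancingSpeed$.

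Finally, I would feed $(\Phi_s)$ into the paper's drift theorem, the Variable-Drift-style tail bound for sums advertised in the introduction. Heuristically, along the deterministic trajectory $\tilde\Phi_s \approx \tilde\Phi_{s-1} - \BalancingSpeed g(\tilde\Phi_{s-1})$ one has $d\tau \approx -dx/(\BalancingSpeed g(x))$, so $\sum_s \Phi_s \approx \BalancingSpeed^{-1}\int_0^{\Phi_0} x/g(x)\,dx \leq \BalancingSpeed^{-1}\int_0^1 x/g(x)\,dx$, accounting for the integral term. The variance hypothesis then controls the fluctuations around this trajectory and, via a Freedman/Bernstein-type concentration, contributes an additive $O(\sigma^2(\gamma \log n + \log(8\sigma^2)))$ tail, matching the stated bound with probability at least $1 - n^{-\gamma}$.

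The hard part will be this last step, since $(\Phi_s)$ is not pathwise monotone but only in expectation, so naive telescoping fails. A clean route is to introduce the potential $h(x) := \BalancingSpeed^{-1}\int_0^x y/g(y)\,dy$ and, using the $(\sigma^2-1)$ variance bound to control the second-order remainder in a Taylor expansion of $h(\Phi_s)$ around $\Phi_{s-1}$, show that $h(\Phi_{s-1}) - h(\Phi_s) + \Phi_{s-1}$ behaves as a supermartingale up to a variance correction of matched size. Running a concentration argument on the associated exponential martingale then simultaneously absorbs the variance corrections and produces the explicit constants $8\sigma^2$ and $\log(8\sigma^2)$ in the bound.
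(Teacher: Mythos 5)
Your overall approach is the paper's: decompose $\left(\GlobalDivergence_k(\MixMatSeq{t})\right)^2 = \sum_s \NodePotential(\vec v^{(s)})$ along the reverse-time i.i.d.\ chain started from $\UnitVec{k}$, lift the $(g,\sigma^2)$-goodness from $\BalancingSpeed=1$ to general $\BalancingSpeed$ via \cref{obs:node_potential_change_exact}, and feed the result into \cref{lem:drift}. Your goodness-lifting is in fact slightly cleaner than the paper's: you keep the exact common factor $\BalancingSpeed(2-\BalancingSpeed)$ from $\Delta_\BalancingSpeed = \BalancingSpeed(2-\BalancingSpeed)\Delta_1$, so that both $\Var[\NodePotential(\MixMatBeta{\BalancingSpeed}\vec x)]$ and $\bigl(\NodePotential(\vec x)-\E[\NodePotential(\MixMatBeta{\BalancingSpeed}\vec x)]\bigr)^2$ pick up $(\BalancingSpeed(2-\BalancingSpeed))^2$ and the ratio $(\sigma^2-1)$ is preserved. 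The paper instead bounds $\BalancingSpeed \le 1-(1-\BalancingSpeed)^2 \le 2\BalancingSpeed$ on expectation and variance separately, which costs a spurious factor of $4$ in the variance condition (giving $4(\sigma^2-1)$); your derivation is the one that actually matches the constant $8\sigma^2$ in the stated bound.

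However, your final paragraph rests on a misconception. You write that ``$(\Phi_s)$ is not pathwise monotone but only in expectation, so naive telescoping fails,'' and you then sketch a Taylor-expansion/supermartingale/exponential-martingale argument to compensate. In fact $(\Phi_s)$ \emph{is} pathwise non-increasing: by \cref{obs:node_potential_change_exact},
\[
\NodePotential(\vec x) - \NodePotential(\MixMatBeta{\BalancingSpeed}\vec x) \;=\; \frac{1-(1-\BalancingSpeed)^2}{2}\,\EdgePotential_{E(\MixMatBeta{\BalancingSpeed})}(\vec x) \;\ge\; 0,
\]
since the edge potential is a sum of squares; multiplying by a matching matrix never increases $\NodePotential$. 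This is exactly why \cref{lem:drift} — whose hypothesis is a non-increasing sequence with conditional drift $\ge h(x)$ and conditional variance $\le \sigma\cdot(\text{drift})^2$ — applies directly. The second statement of \cref{lem:drift} with $h(x)=\BalancingSpeed g(x)$, $\delta=1/2$, $p=n^{-\gamma}$ already hands you both pieces of the claimed bound, including the missing factor $\frac{1}{1-\delta}=2$ that turns your heuristic $1/\BalancingSpeed$ into the stated $2/\BalancingSpeed$. The exponential-martingale machinery you propose at the end is not wrong per se, but it amounts to re-deriving \cref{lem:drift} from scratch — the ``hard part'' you anticipate is not actually there, because the sequence is genuinely monotone.
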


\begin{lemma}
\label{lem:rmdistr_is_good}
    Assume $G$ is an arbitrary $d$-regular graph. Let
        \[g_G(x) \coloneqq \frac{1}{16 d} \cdot \max\left\{
            d \cdot \SpectralGap(\Laplacian(G)) \cdot x,
            \frac{x^2}{\ResistiveDiameter} ,
            \frac{4}{27} \cdot x^3\right\} \text{ and }
        \sigma_G^2 = 32 \cdot (\EdgeHittingTime / n) + 5.\]
    Then $\RMDistr(G)$ is $(g_G, \sigma_G^2)$-good.    
\end{lemma}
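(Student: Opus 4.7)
The plan is to verify the two conditions of \cref{def:goodness} by analyzing the first two moments of $\EdgePotential_{E(\MixMatBeta{1})}(\vec{x}) = \sum_{e = \{i,j\} \in E(G)} \1[e \in E(\MixMatBeta{1})]\cdot (x_i - x_j)^2$. Applying \cref{obs:node_potential_change_exact} with $\BalancingSpeed = 1$ gives the key identity $\NodePotential(\vec{x}) - \NodePotential(\MixMatBeta{1}\cdot\vec{x}) = \tfrac{1}{2}\EdgePotential_{E(\MixMatBeta{1})}(\vec{x})$, so condition 1 becomes $\E[\EdgePotential_{E(\MixMatBeta{1})}(\vec{x})] \geq 2\,g_G(\NodePotential(\vec{x}))$, and condition 2 becomes $\Var[\EdgePotential_{E(\MixMatBeta{1})}(\vec{x})] \leq (\sigma_G^2 - 1)(\E[\EdgePotential_{E(\MixMatBeta{1})}(\vec{x})])^2$. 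The whole task thus reduces to a first- and second-moment analysis of the random weighted sum $\sum_e \1_e w_e$, with $w_e := (x_i - x_j)^2$.

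\emph{Drift (condition 1).} I will first argue that under $\RMDistr(G)$, each edge $e \in E(G)$ lies in $E(\MixMatBeta{1})$ with probability at least $1/(8d)$: it is selected into the set $S$ with probability $p := 1/(4d) - 1/(16d^2)$, and conditional on being selected it survives the local matching step provided none of its $2(d-1)$ incident edges is also in $S$, which happens with probability at least $(1-p)^{2(d-1)} \geq 1/2$. Hence $\E[\EdgePotential_{E(\MixMatBeta{1})}(\vec{x})] \geq \EdgePotential_G(\vec{x})/(8d)$, and it remains to lower-bound $\EdgePotential_G(\vec{x})$ by each of the three terms inside the $\max$ defining $g_G$. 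The bound $\EdgePotential_G(\vec{x}) \geq d\,\SpectralGap(\Laplacian(G))\NodePotential(\vec{x})$ is immediate from the Rayleigh characterization of $\SpectralGap(\Laplacian(G))$ applied to $\vec{x} - \bar{x}\vec{1}$. The bound $\EdgePotential_G(\vec{x}) \geq \NodePotential(\vec{x})^2/\ResistiveDiameter$ will come from the standard electrical-network inequality $(x_i - x_j)^2 \leq R(i,j)\cdot\EdgePotential_G(\vec{x})$ combined with the decomposition $\NodePotential(\vec{x}) = (1/(2n))\sum_{i,j}(x_i - x_j)^2$ and Cauchy--Schwarz. The bound $\EdgePotential_G(\vec{x}) \geq (4/27)\NodePotential(\vec{x})^3$ is an elementary inequality active in the regime of small $\NodePotential$; it follows by focusing on the vertex realizing the largest $|x_i - \bar{x}|$ together with the optimization $\max_{t\in[0,1]} t(1-t)^2 = 4/27$ at $t=1/3$, exploiting that $\vec{x}$ is a stochastic vector.

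\emph{Variance (condition 2).} I split
$\Var[\EdgePotential_{E(\MixMatBeta{1})}(\vec{x})] = \sum_e p_e(1-p_e)w_e^2 + \sum_{e\neq e'}\Cov(\1_e, \1_{e'})w_e w_{e'}$.
The stage-1 edge selections are mutually independent and the stage-2 correction only \emph{removes} edges, so pairs of edges sharing an endpoint are non-positively correlated; for pairs at distance $\geq 2$ whose ``conflict neighborhoods'' overlap, a local counting argument shows the total positive contribution is at most a constant times $(\E[\EdgePotential_{E(\MixMatBeta{1})}(\vec{x})])^2$. The dominant diagonal term is handled by $\sum_e p_e w_e^2 \leq (\max_e w_e)\cdot \E[\EdgePotential_{E(\MixMatBeta{1})}(\vec{x})]$, and using $\max_e w_e \leq \MaxEdgeResistance\cdot\EdgePotential_G(\vec{x})$ together with the identity $\MaxEdgeResistance \leq \EdgeHittingTime/|E(G)| = 2\EdgeHittingTime/(dn)$ (coming from $R(i,j) = (h(i,j)+h(j,i))/(2|E(G)|)$) and the earlier bound $\EdgePotential_G(\vec{x}) \leq 8d\cdot \E[\EdgePotential_{E(\MixMatBeta{1})}(\vec{x})]$ gives a diagonal contribution of at most $16(\EdgeHittingTime/n)(\E[\EdgePotential_{E(\MixMatBeta{1})}(\vec{x})])^2$. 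Absorbing the constant contribution of the off-diagonal terms then yields $\sigma_G^2 \leq 32\,\EdgeHittingTime/n + 5$.

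\emph{Main obstacle.} The technical hearts of the argument are the two non-spectral lower bounds on $\EdgePotential_G(\vec{x})$ in the drift step, and the matching of $\max_e w_e$ to the edge hitting time in the variance step. The Rayleigh-based spectral bound is routine; it is the resistive-diameter inequality that genuinely uses the full electrical-network interpretation of $\NodePotential$ and $\EdgePotential$ rather than just the variational characterization of $\SpectralGap(\Laplacian(G))$, and the cubic bound is where the stochastic-vector hypothesis does essential work. In the variance step, the crucial observation is the identification $\MaxEdgeResistance = \Theta(\EdgeHittingTime/|E(G)|)$, which is precisely what makes the $\EdgeHittingTime/n$ factor in $\sigma_G^2$ appear naturally rather than through slack.
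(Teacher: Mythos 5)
Your proposal is correct in outline and, for the drift condition, follows essentially the same path as the paper: via \cref{obs:node_potential_change_exact} with $\BalancingSpeed=1$ both goodness conditions reduce to moment bounds on $\EdgePotential_{E(\MixMatBeta{1})}(\vec{x})$, the drift comes from $\Pr[e\in E(\MixMatBeta{1})]\ge 1/(8d)$ (the paper simply cites Lemma~2 of Ghosh--Muthukrishnan where you re-sketch it) combined with the three lower bounds on $\EdgePotential_G$ in terms of $\NodePotential$, and the factors $\tfrac12\cdot\tfrac1{8d}=\tfrac1{16d}$ match. Where you genuinely diverge is the variance step. The paper never analyzes the correlation structure of the matching at all: since $E(\MixMatBeta{1})\subseteq S$ and $\EdgePotential_{E(\MixMatBeta{1})}(\vec{x})\ge 0$, it bounds $\Var[\EdgePotential_{E(\MixMatBeta{1})}(\vec{x})]\le \E[\EdgePotential_S(\vec{x})^2]=\Var[\EdgePotential_S(\vec{x})]+(\E[\EdgePotential_S(\vec{x})])^2$ and computes the moments of $\EdgePotential_S$ exactly, because $S$ is an independent Bernoulli edge set. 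Your diagonal/covariance decomposition of the matching indicators reaches comparable constants, but your off-diagonal control (adjacent pairs non-positively correlated, ``conflict neighborhoods'' and a local counting argument) presupposes a specific local matching rule, whereas $\RMDistr(G)$ is only specified as \emph{some} locally computed matching $M\subseteq S$; a different local rule could create longer-range dependencies than your argument admits. The repair is cheap and turns your route into the paper's trick in disguise: for every $e\neq e'$, $\Cov(\1_{e},\1_{e'})\le\Pr[e,e'\in E(\MixMatBeta{1})]\le\Pr[e,e'\in S]=p^2\le 1/(16d^2)$, and summing $w_e w_{e'}$ over all pairs and using $\EdgePotential_G(\vec{x})\le 8d\cdot\E[\EdgePotential_{E(\MixMatBeta{1})}(\vec{x})]$ bounds the entire off-diagonal contribution by $4\,(\E[\EdgePotential_{E(\MixMatBeta{1})}(\vec{x})])^2$, independently of how the matching is extracted from $S$.

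On the drift lower bounds, your resistive-diameter sketch is completable as stated: for stochastic $\vec{x}$, $|x_i-x_j|\le\sqrt{\ResistiveDiameter\cdot\EdgePotential_G(\vec{x})}$ and $\sum_{i,j}|x_i-x_j|\le 2n$ give $2n\NodePotential(\vec{x})=\sum_{i,j}(x_i-x_j)^2\le 2n\sqrt{\ResistiveDiameter\cdot\EdgePotential_G(\vec{x})}$, which is a legitimate variant of the paper's argument (the paper instead takes the pair realizing $\disc(\vec{x})$, applies Dirichlet's principle, and uses $\disc(\vec{x})\ge\NodePotential(\vec{x})$). The cubic bound is the thin spot of your plan: naming the vertex of largest deviation and the optimization $\max_t t(1-t)^2=4/27$ is not yet a mechanism. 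What makes the bound work in \cref{lem:edge_potential_bounds} is a path argument: pick $j$ with $x_j\le x_i-\tfrac23\disc(\vec{x})$ at minimal graph distance $D$ from the maximum $i$, use stochasticity to force $D\le 3/\disc(\vec{x})$ because the intermediate vertices each carry at least $\disc(\vec{x})/3$, and use $\ResistiveDistance{i}{j}\le D$ before invoking Dirichlet's principle. You should make that step explicit; without it the cubic term of $g_G$ is unsupported.
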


\begin{proof}
First, note that the function $g_G(x)$ is increasing in $x$. 
Applying the first part of \cref{prop:node_potential_change_statistics} (see below) we get that for any vector $\Vec{x}\in \R^n$ it holds that 
\[\NodePotential(\vec{x}) - \BigAutoExp{\NodePotential(\MixMatBeta{1} \cdot \vec{x})} 
\ge \frac{1}{16d} \cdot \EdgePotential_G(\vec{x}).\]
From the first two statements of \cref{lem:edge_potential_bounds}
(stated behind \cref{lem:edge_potential_bounds}) we see that for $\MixMatBeta{1} \sim \RMDistr(G)$ and all stochastic vectors $\vec{x} \in \R^n$
\[
\EdgePotential_G(\vec{x}) \geq \max\left\{d\cdot \SpectralGap(\Laplacian(G))\cdot \NodePotential(\vec{x}), \frac{\NodePotential(\vec{x})^2}{\ResistiveDiameter} , \frac{4}{27} \cdot \NodePotential(\vec{x})^3 \right\}.
\]
Hence,
\[\NodePotential(\vec{x}) - \BigAutoExp{\NodePotential(\MixMatBeta{1} \cdot \vec{x})} 
\ge \frac{1}{16d} \cdot \max\left\{d \cdot \SpectralGap(\Laplacian(G)) \cdot \NodePotential(\vec{x}),
\frac{\NodePotential(\vec{x})^2}{\ResistiveDiameter},
\frac{4}{27} \cdot \NodePotential(\vec{x})^3\right\},\]
and as a consequence, $\NodePotential(\vec{x}) -  \AutoExp{\NodePotential(\MixMatBeta{1}  \cdot \vec{x})} \geq g_G(\NodePotential(\vec{x}))$ by the definition of $g_G$.

It remains  to check the second condition of \cref{def:goodness} with our claimed value~$\sigma_G^2$.
Inserting its value as stated in the lemma, the condition requires that
\[\AutoVar{\NodePotential(\MixMatBeta{1} \cdot \vec{x})} \leq (32 (\EdgeHittingTime / n) + 5 - 1) \cdot \left(\NodePotential(\vec{x}) -  \AutoExp{\NodePotential(\MixMatBeta{1} \cdot \vec{x})}\right)^2,\]
which is given in the second part of \cref{prop:node_potential_change_statistics} (see below).
\end{proof}

In \cref{prop:node_potential_change_statistics} we first relate the drop of $\NodePotential$ to the quadratic edge potential $\EdgePotential$. In the second part we bound the variance of the potential drop as a function of the edge hitting time.

\begin{lemma}[label=prop:node_potential_change_statistics,restate=restateLemNodePotentialChangeStatistics]
Let $G$ be a $d$-regular graph, let $\MixMat^1 \sim \RMDistr(G)$, and let $\vec{x} \in \R^n$,
then
\begin{enumerate}
\item \(\NodePotential(\vec{x}) - \BigAutoExp{\NodePotential(\MixMatBeta{1} \cdot \vec{x})} 
            \ge \frac{1}{16d} \cdot \EdgePotential_G(\vec{x}).\)
\item \(\BigAutoVar{\NodePotential(\MixMatBeta{1} \cdot \vec{x})}
            \leq (32 \cdot (\EdgeHittingTime / n) + 4) \cdot\left(\NodePotential(\vec{x}) - \BigAutoExp{\NodePotential(\MixMatBeta{1} \cdot \vec{x})}\right)^2.\)
\end{enumerate}
\end{lemma}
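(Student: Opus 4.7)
The plan is to reduce both parts of \cref{prop:node_potential_change_statistics} to the analysis of the single random variable
\[ Z \coloneqq \EdgePotential_{E(\MixMatBeta{1})}(\vec{x}) \;=\; \sum_{e \in E(G)} Y_e \cdot w_e, \]
where $Y_e \coloneqq \1[e \in E(\MixMatBeta{1})]$ and $w_e \coloneqq (x_i - x_j)^2$ for $e = \{i,j\}$. Indeed, by \cref{obs:node_potential_change_exact} specialized to $\BalancingSpeed = 1$ we have $\NodePotential(\vec{x}) - \NodePotential(\MixMatBeta{1} \cdot \vec{x}) = Z/2$, so Part~1 reduces to lower bounding $\AutoExp{Z}$ and Part~2 reduces to upper bounding $\AutoVar{Z}/(\AutoExp{Z})^2$.

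For Part~1, I would analyze the construction of $\RMDistr(G)$ directly: an edge $e$ ends up in $\MixMatBeta{1}$ iff $e \in S$ and none of the $2(d-1)$ other edges incident to an endpoint of $e$ is in $S$. Since edges enter $S$ independently with probability $p = 1/(4d) - 1/(16d^2)$, a short calculation gives $\AutoProb{e \in \MixMatBeta{1}} = p(1-p)^{2(d-1)} \geq 1/(8d)$, using $(1-p)^{2(d-1)} \geq 1/2$. Linearity of expectation then yields $\AutoExp{Z} \geq \EdgePotential_G(\vec{x})/(8d)$, which gives Part~1 after dividing by $2$.

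For Part~2, the key structural observation is that $Y_e$ depends only on the $S$-status of $e$ and of its $2(d-1)$ incident edges, so $\Cov(Y_e, Y_{e'}) = 0$ unless $e$ and $e'$ lie at graph-distance at most $2$. Expanding $\AutoVar{Z} = \sum_{e, e'} w_e w_{e'} \Cov(Y_e, Y_{e'})$ and splitting into the three resulting regimes, I would bound the covariances as: (i) $e = e'$: $\Var(Y_e) \leq p = O(1/d)$; (ii) $e \neq e'$ sharing a vertex: $Y_e Y_{e'} = 0$ forces $\Cov(Y_e, Y_{e'}) = -\AutoExp{Y_e}\AutoExp{Y_{e'}}$, giving $|\Cov(Y_e, Y_{e'})| = O(1/d^2)$; (iii) $e, e'$ disjoint but joined by a graph edge of $G$: writing
\[ \Cov(Y_e, Y_{e'}) \;=\; p^2 (1-p)^{|N(e) \cup N(e')|}\!\left(1 - (1-p)^{|N(e) \cap N(e')|}\right) \]
and using $|N(e) \cap N(e')| \leq 4$ (at most four edges join two disjoint pairs of endpoints) gives $|\Cov(Y_e, Y_{e'})| = O(1/d^3)$.

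The remaining ingredient is the classical electrical-network inequality $(x_i - x_j)^2 \leq R_{ij} \cdot \EdgePotential_G(\vec{x})$ valid for any $\vec{x} \in \R^n$, combined with the commute-time identity $R_{ij} = (h_{ij} + h_{ji})/(nd)$ for $d$-regular graphs; together these give $\max_{e \in E(G)} w_e \leq 2\EdgeHittingTime/(nd) \cdot \EdgePotential_G(\vec{x})$ and hence $\sum_e w_e^2 \leq 2\EdgeHittingTime/(nd) \cdot \EdgePotential_G(\vec{x})^2$. Each edge of $G$ has $O(d)$ vertex-sharing neighbors and $O(d^2)$ edges at graph-distance exactly~$2$, so each of the three regimes above contributes $O(\EdgeHittingTime/(nd^2)) \cdot \EdgePotential_G(\vec{x})^2$ to $\AutoVar{Z}$; dividing by $(\AutoExp{Z})^2 \geq \EdgePotential_G(\vec{x})^2/(64 d^2)$ yields the ratio $O(\EdgeHittingTime/n)$, and careful constant tracking produces the stated $32\EdgeHittingTime/n + 4$. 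The main obstacle is Part~2: the small $O(1/d^3)$ covariance for distance-$2$ edges must cancel precisely against the $O(d^2)$ count of such pairs, and it is this delicate balance that forces the edge hitting time to appear in the final constant.
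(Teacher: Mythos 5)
Your Part~1 is essentially the paper's own argument: apply \cref{obs:node_potential_change_exact} with $\BalancingSpeed=1$ and use a per-edge inclusion probability of at least $1/(8d)$; the paper simply cites Lemma~2 of~\cite{DBLP:journals/jcss/GhoshM96} for that probability instead of recomputing it, and your lower bound $p(1-p)^{2(d-1)}$ is safe for any local rule that always retains an $S$-edge none of whose neighbours lies in $S$. Part~2, however, takes a genuinely different route, and this is where there is a real gap. Your covariance expansion rests on the claim that membership of $e$ in $\MixMatBeta{1}$ is determined \emph{only} by the $S$-status of $e$ and of its $2(d-1)$ incident edges, i.e.\ that the rule is exactly ``keep the isolated edges of $S$''. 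The model only states that $S$ has independent edges and that some linear-sized matching $\MixMatBeta{1}\subseteq S$ is ``computed locally''; for other admissible local rules (e.g.\ a greedy or maximal matching computed inside $S$), whether $e$ is matched can depend on edges at arbitrary distance through chains of conflicts, so covariances beyond distance two need not vanish, your formula $\Cov(Y_e,Y_{e'})=p^2(1-p)^{\abs{N(e)\cup N(e')}}\bigl(1-(1-p)^{\abs{N(e)\cap N(e')}}\bigr)$ is no longer valid, and the three-regime bookkeeping collapses. The paper sidesteps exactly this dependence: it never analyses the matching's internal structure, but uses $0\le\EdgePotential_{E(\MixMatBeta{1})}(\vec{x})\le\EdgePotential_S(\vec{x})$ to get $\AutoVar{\EdgePotential_{E(\MixMatBeta{1})}(\vec{x})}\le\AutoExp{(\EdgePotential_S(\vec{x}))^2}=\AutoVar{\EdgePotential_S(\vec{x})}+(\AutoExp{\EdgePotential_S(\vec{x})})^2$, after which only the independent Bernoulli indicators of $S$ enter, combined with the same resistance bound $\max_{\{i,j\}\in E}(x_i-x_j)^2\le\MaxEdgeResistance\cdot\EdgePotential_G(\vec{x})\le(\EdgeHittingTime/\abs{E})\cdot\EdgePotential_G(\vec{x})$ and with Part~1. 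To repair your proof you must either justify the isolated-edge rule from the construction in~\cite{DBLP:journals/jcss/GhoshM96}, or adopt the domination trick, which requires nothing beyond $\MixMatBeta{1}\subseteq S$.

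Two smaller remarks. First, the closing claim that ``careful constant tracking produces the stated $32\cdot(\EdgeHittingTime/n)+4$'' is not substantiated in the sketch; it is in fact achievable under your rule assumption (e.g.\ bounding the distance-two sum $\sum_{e'}w_{e'}$ by $\EdgePotential_G(\vec{x})$ rather than by a count times $\max_e w_e$, and dropping the negative vertex-sharing covariances, gives a ratio of roughly $16\cdot(\EdgeHittingTime/n)+4$), but as written your regime bounds with crude counts do not obviously yield the stated constants, so this step needs to be carried out rather than asserted. Second, note that dropping regime~(ii) is legitimate precisely because those covariances are nonpositive; if you keep them with absolute values you pay an unnecessary extra term.
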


In \cref{lem:edge_potential_bounds} we relate the size of the quadratic edge potential $\EdgePotential_G$ to 
the second-largest eigenvalue of $\Laplacian(G)$, the effective resistance of $G$ and node potential. 
To state it,
    we need some additional definitions.
For any two nodes $i$ and $j$ of the graph $G$
    $\ResistiveDistance{i}{j}$ is the \emph{effective resistance} (or \emph{resistive distance}) between $i$ and $j$ in $G$  (for a detailed definition see \cref{apx:aux}).
Furthermore, we write $\ResistiveDiameter$ for the \emph{resistive diameter} of $G$,
    i.e., the largest resistive distance between any pair of nodes in $G$,
    and write $\MaxEdgeResistance$ for the maximum effective resistance between any pair of nodes adjacent in $G$.
I.e., $\ResistiveDiameter \coloneqq \max_{i,j \in [n]} \ResistiveDistance{i}{j}$ and $\MaxEdgeResistance \coloneqq \max_{\{i,j\} \in E(G)} \ResistiveDistance{i}{j}$.
The first part of the following lemma was previously shown in~\cite{DBLP:journals/jcss/GhoshM96,DBLP:conf/focs/SauerwaldS12}.

\begin{lemma}
[label=lem:edge_potential_bounds,restate=restateEdgePotentialBounds]
    Let $\vec{x} \in \R^n$, and let $G$ be a connected $d$-regular graph.
    \begin{enumerate}
    \item \(\EdgePotential_G(\vec{x}) \geq d \cdot \SpectralGap(\Laplacian(G)) \cdot \NodePotential(\vec{x})\).
    \item If $\vec{x}$ is stochastic, then
        $\EdgePotential_G(\vec{x}) \geq \max\left\{\frac{1}{\ResistiveDiameter} \cdot \NodePotential(\vec{x})^2, \frac{4}{27} \cdot \NodePotential(\vec{x})^3\right\}$
    \item \(\max_{\{i,j\} \in E(G)} (x_i - x_j)^2 \leq \MaxEdgeResistance \cdot \EdgePotential_G(\vec{x}).\)
    \end{enumerate}
\end{lemma}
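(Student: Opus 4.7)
My plan is to handle the three parts with distinct techniques: Part~1 by a Rayleigh-quotient argument, Part~3 by the standard electrical-energy inequality, and Part~2 by combining that same inequality with the stochastic constraint, with the graph-independent cubic sub-bound being the most delicate.

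For Part~1 I would write $\EdgePotential_G(\vec{x}) = \vec{x}^\top L \vec{x}$, where $L = d \IdentityMat - \AdjacencyMat(G) = d \cdot \Laplacian(G)$ is the unnormalized graph Laplacian of $G$. Since $L \vec{1} = \vec{0}$, the quadratic form is invariant under mean-centering: $\vec{x}^\top L \vec{x} = (\vec{x} - \overline{x}\vec{1})^\top L (\vec{x} - \overline{x}\vec{1})$. The centered vector lies in the orthogonal complement of the $0$-eigenspace (spanned by $\vec{1}$, since $G$ is connected), so the Courant--Fischer characterization of the second eigenvalue yields $\vec{x}^\top L \vec{x} \geq d \cdot \SpectralGap(\Laplacian(G)) \cdot \NodePotential(\vec{x})$. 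For Part~3 I would invoke the standard energy-resistance inequality $(x_i - x_j)^2 \leq \ResistiveDistance{i}{j} \cdot \EdgePotential_G(\vec{x})$, which follows from Thomson's principle: writing $x_i - x_j$ as the inner product of $\vec{x}$ with the unit $i$--$j$ flow of minimum energy and applying Cauchy--Schwarz. Restricting to $\{i,j\} \in E(G)$, bounding $\ResistiveDistance{i}{j} \leq \MaxEdgeResistance$, and taking the maximum over edges then finishes the part.

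For Part~2 I would apply the same resistance inequality to the extremal pair $i^\star = \arg\max_i x_i$, $j^\star = \arg\min_j x_j$, obtaining $\discr(\vec{x})^2 \leq \ResistiveDiameter \cdot \EdgePotential_G(\vec{x})$. For stochastic $\vec{x}$ with $\overline{x} = 1/n$, the positive and negative excursions of $\vec{x}$ around the mean each have total mass at most $1$, so $\sum_i |x_i - \overline{x}| \leq 2$ and hence $\NodePotential(\vec{x}) \leq \max_i |x_i - \overline{x}| \cdot \sum_i |x_i - \overline{x}| \leq 2\,\discr(\vec{x})$, which yields the first sub-bound (up to an absorbable constant). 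For the graph-independent cubic sub-bound $(4/27)\NodePotential(\vec{x})^3$, I would sharpen this via the stronger estimate $\NodePotential(\vec{x}) = \sum_i x_i^2 - 1/n \leq \max_i x_i - 1/n \leq \discr(\vec{x})$ (valid for stochastic $\vec{x}$), reducing the task to the purely combinatorial statement $\EdgePotential_G(\vec{x}) \geq (4/27)\,\discr(\vec{x})^3$ for every connected $G$ and stochastic $\vec{x}$. The appearance of $4/27 = \max_{t \in [0,1]} t^2(1-t)$ strongly suggests parameterising by $M = \max_i x_i$: a path argument from the argmax to a node of value at most $1/n$ lower-bounds $\EdgePotential_G(\vec{x})$ by an expression in $M$ and $1-M$, and optimising that one-variable expression should produce the claimed constant.

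The main obstacle I expect is this cubic sub-bound. The other ingredients are essentially textbook spectral/resistance inequalities, but the cubic bound is delicate precisely because the analogous inequality fails for general (non-stochastic) vectors---it genuinely uses both connectivity and the stochastic constraint---and the specific constant $4/27$ has to drop out of a one-variable extremisation matching $\max_t t^2(1-t)$, which pins down the exact extremal configuration.
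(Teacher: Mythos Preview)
Your plan matches the paper's proof: Part~1 via mean-centering and the Rayleigh quotient, Part~3 and the first half of Part~2 via the energy--resistance inequality $(x_i-x_j)^2 \leq \ResistiveDistance{i}{j}\cdot\EdgePotential_G(\vec{x})$ (which the paper derives from Dirichlet's principle rather than Thomson's, but equivalently), combined with the estimate $\NodePotential(\vec{x}) \leq \discr(\vec{x})$ for stochastic $\vec{x}$. The one imprecision is in the cubic sub-bound: your instinct that $4/27 = \max_t t^2(1-t)$ is exactly the key, but the free parameter is a \emph{threshold fraction} $t$ rather than $M$ --- the paper walks from $i=\arg\max$ to the nearest $j$ with $x_j \leq x_i - t\,\discr(\vec{x})$, uses stochasticity (all $D$ nodes on the path before $j$ have value exceeding $(1-t)\discr$, so $D \leq 1/((1-t)\discr)$ and hence $\ResistiveDistance{i}{j}\leq 1/((1-t)\discr)$) to obtain $\EdgePotential_G(\vec{x}) \geq t^2(1-t)\,\discr(\vec{x})^3$, maximized at $t=2/3$.
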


\subsubsection*{Proof of \cref{lem:disc:dyn}}
\begin{proof}
Define $g_G(x) = \frac{1}{16d} \cdot \max\left\{d \cdot \SpectralGap(\Laplacian(G)) \cdot x, x^2 / \ResistiveDiameter,    4 x^3 /27\right\}$ and let $\sigma_G^2 \coloneqq 32 \cdot (\EdgeHittingTime/n) + 5$.
Then by \cref{lem:rmdistr_is_good} the matching distribution $\RMDistr(G)$ is $(g_G, \sigma_G^2)$-good.
By \cref{lem:glob:div:bound:drift} we have for all $t \in \N$, $k \in [n]$
\[\BigAutoProb{\left(\GlobalDivergence_k(\MixMatSeq{t})\right)^2 \leq 8 \sigma_G^2 ((\gamma+1) \log(n) + \log(8 \sigma_G^2)) + \frac{1}{\BalancingSpeed} \cdot \int_0^1 \frac{x}{g_G(x)}\,\dx} \geq 1 - n^{-(\gamma+1)}.\]
To bound $\GlobalDivergence_k(\MixMatSeq{t})$ we use the following two claims (see \cref{apx:proof_discr_dyn} for the proof).
\begin{claim}\label{claim:integral_bound}
    It holds that $\displaystyle \int_0^1 {x}/{g_G(x)}\,\dx = \Oh(T(G))$.
\end{claim}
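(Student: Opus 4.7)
\medskip

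The plan is to first simplify the integrand, and then bound the integral by three separate quantities, each matching one of the three terms inside the minimum in the definition of $T(G)$. Writing $\lambda = \SpectralGap(\Laplacian(G))$, note that since $x/\max\{a,b,c\} = \min\{x/a, x/b, x/c\}$ for positive $a,b,c$, we have
\[
    \frac{x}{g_G(x)} \;=\; \frac{16d\,x}{\max\{d\lambda x,\; x^2/\ResistiveDiameter,\; 4x^3/27\}}
    \;=\; \min\!\left\{\frac{16}{\lambda},\; \frac{16d\,\ResistiveDiameter}{x},\; \frac{108\,d}{x^2}\right\}.
\]
Since the minimum of three nonnegative quantities is bounded above by each of them, any two of the three candidates can be used as a pointwise upper bound to obtain one of the three desired bounds on the integral.

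\medskip

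For the first bound, simply using $x/g_G(x) \leq 16/\lambda$ gives $\int_0^1 x/g_G(x)\,\dx \leq 16/\lambda$, matching the $1/\lambda$ term in $T(G)$. For the second bound, I would use only the first and third candidates, $\min\{16/\lambda,\,108d/x^2\}$. The crossover occurs at $x_0 \coloneqq \sqrt{27 d \lambda / 4}$; splitting the integral at $x_0$ and bounding each piece by the relevant candidate yields
\[
    \int_0^1 \frac{x}{g_G(x)}\,\dx \;\leq\; \int_0^{x_0} \frac{16}{\lambda}\,\dx + \int_{x_0}^{\infty} \frac{108\,d}{x^2}\,\dx \;=\; \frac{16\,x_0}{\lambda} + \frac{108\,d}{x_0} \;=\; \Oh\!\left(\sqrt{d/\lambda}\right),
\]
matching the $\sqrt{d/\lambda}$ term. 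For the third bound, I would use the first and second candidates, $\min\{16/\lambda,\,16d\ResistiveDiameter/x\}$, with crossover at $x_1 \coloneqq d\,\ResistiveDiameter\,\lambda$; splitting at $x_1$ gives
\[
    \int_0^1 \frac{x}{g_G(x)}\,\dx \;\leq\; \frac{16\,x_1}{\lambda} + 16\,d\,\ResistiveDiameter \cdot \ln\!\left(\frac{1}{x_1}\right)
    \;=\; 16\,d\,\ResistiveDiameter\,\bigl(1 + \ln(1/x_1)\bigr).
\]

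\medskip

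The main technical point is converting the last expression into $\Oh(t_{\mathrm{hit}}(G)\log(n)/n)$. The linear factor $d\cdot\ResistiveDiameter$ is handled by the classical Chandra--Raghavan--Ruzzo--Smolensky--Tiwari identity: for every pair of nodes the commute time equals $2|E|$ times their effective resistance, so for a $d$-regular graph $d\cdot\ResistiveDiameter \leq 2\,\HittingTime/n$. The logarithmic factor $\ln(1/x_1)=\ln(1/(d\,\ResistiveDiameter\,\lambda))$ must be shown to be $\Oh(\log n)$; this is where care is needed, but it follows because for any connected graph $\lambda \geq 1/\poly(n)$ (standard lower bound on the Laplacian spectral gap) and $d\,\ResistiveDiameter \geq \Omega(1/n)$ (since $\ResistiveDiameter$ is at least the resistance across any single edge). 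Combining the three bounds, the integral is at most a constant times the minimum of $1/\lambda$, $\sqrt{d/\lambda}$, and $\HittingTime\log(n)/n$, which is $\Oh(T(G))$ by definition. The only place where anything subtle happens is verifying the two $\poly(n)$-lower bounds that make the $\ln$ collapse to $\Oh(\log n)$.
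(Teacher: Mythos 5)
Your proposal follows essentially the same route as the paper: writing $\lambda \coloneqq \SpectralGap(\Laplacian(G))$, you rewrite $x/g_G(x)$ as the minimum of $16/\lambda$, $16d\ResistiveDiameter/x$ and $108d/x^2$, derive the three bounds by splitting at the same crossover points $\sqrt{27d\lambda/4}$ and $d\lambda\ResistiveDiameter$, convert $d\cdot\ResistiveDiameter$ into $\Oh(\HittingTime/n)$ via the commute-time identity, and collapse the logarithm to $\Oh(\log n)$ using polynomial bounds on $1/\lambda$ and on the resistances --- exactly the paper's proof of \cref{claim:integral_bound}. Your treatment of the $\sqrt{d/\lambda}$ bound is in fact marginally cleaner: by extending the integration ranges to $[0,x_0]$ and $[x_0,\infty)$ the inequality holds whether or not $x_0\le 1$, whereas the paper needs a separate case for $x_{1,3}>1$.

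There is, however, a small genuine gap in your third bound. The split $\int_0^{x_1}16/\lambda\,\dx+\int_{x_1}^1 16d\ResistiveDiameter/x\,\dx$ is only an upper bound on $\int_0^1 x/g_G(x)\,\dx$ when $x_1=d\lambda\ResistiveDiameter\le 1$; if $x_1>1$ the second integral runs over a reversed range, $\ln(1/x_1)<0$, and the claimed bound $16d\ResistiveDiameter\bigl(1+\ln(1/x_1)\bigr)$ can even be negative, so the displayed inequality fails, and your ``make the $\ln$ collapse to $\Oh(\log n)$'' step also tacitly assumes $\ln(1/x_1)\ge 0$. The paper closes exactly this case explicitly: when $x_{1,2}>1$ it falls back on the already-established $\Oh(1/\lambda)$ bound and uses $1/\lambda\le x_{1,2}/\lambda=d\cdot\ResistiveDiameter=\Oh\bigl((\HittingTime/n)\log n\bigr)$. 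Alternatively, one line shows the case is vacuous for connected regular graphs: by \cref{thm:spectral_bound_on_commute_time} together with the commute-time identity (\cref{thm:commute_time_resistance_relation}), $dn\cdot\ResistiveDistance{i}{j}=H(i,j)+H(j,i)\le n/\lambda$, so $d\lambda\ResistiveDiameter\le 1$. With either fix your argument is complete; the remaining ingredients you invoke ($d\cdot\ResistiveDiameter\le 2\HittingTime/n$ and the $\poly(n)$ lower bounds on $\lambda$ and $d\cdot\ResistiveDiameter$) do follow from the paper's auxiliary lemmas.
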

\begin{claim}\label{claim:edge_hitting_time_lower_bound}
For any $d$-regular graph $G$ it holds that $\EdgeHittingTime / n \geq 1/2$.
\end{claim}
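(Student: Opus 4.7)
The plan is to exploit the standard identity that, for a simple random walk on a connected graph with stationary distribution $\pi$, the expected return time to a node $i$ equals $1/\pi_i$. Since $G$ is $d$-regular, $\pi$ is uniform and $\pi_i = 1/n$ for every $i$, so the expected return time to $i$ equals $n$.

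First I would fix an arbitrary node $i \in V$. Decomposing the return time by the first step, the random walk started at $i$ moves to a uniformly random neighbor $j \sim i$, and from there must hit $i$. Writing $h(i,j)$ for the expected time for a random walk started at $j$ to reach $i$, this gives
\[
n = \frac{1}{\pi_i} = 1 + \frac{1}{d}\sum_{j \sim i} h(i,j).
\]
Rearranging, the average of $h(i,j)$ over the $d$ neighbors of $i$ equals $n-1$, so there must exist some neighbor $j^\star \sim i$ with $h(i, j^\star) \geq n-1$.

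Since $\{i, j^\star\}$ is an edge of $G$, this pair is counted in the maximum defining the edge hitting time, so $\EdgeHittingTime \geq n - 1$. Because $G$ must contain at least one edge (otherwise the claim is vacuous, as $\EdgeHittingTime$ is a maximum over an empty set and the setting implicitly assumes $n \geq 2$), we conclude $\EdgeHittingTime / n \geq (n-1)/n \geq 1/2$, which is the desired bound.

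There is no real obstacle here: the only substantive ingredient is the return-time identity $\E_i[\tau_i^+] = 1/\pi_i$, which is a textbook fact about reversible Markov chains, and the uniformity of $\pi$ on a regular graph. The argument works uniformly over all $d$-regular graphs $G$ with $n \geq 2$ and does not require any spectral or resistance estimate.
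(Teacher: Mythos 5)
Your proof is correct, and it takes a genuinely different (and slightly sharper) route than the paper. The paper's proof invokes the first inequality of Corollary~3.3 in Lov\'asz's survey, i.e.\ the commute-time lower bound $H(i,j)+H(j,i) \geq \abs{E(G)}\left(\frac{1}{d(i)}+\frac{1}{d(j)}\right)$, applies it to an adjacent pair of a $d$-regular graph to get $2\,\EdgeHittingTime \geq n$, and divides by two. You instead use the return-time identity $\E_i[\tau_i^+]=1/\pi_i=n$ (uniform stationarity on a regular graph) together with a first-step decomposition, concluding that the hitting times to $i$ from its $d$ neighbours average to $n-1$, so some adjacent ordered pair attains at least $n-1$; this gives $\EdgeHittingTime \geq n-1 \geq n/2$ for $n\geq 2$, a slightly stronger conclusion obtained with only elementary Markov-chain facts (no commute-time/effective-resistance input), and it is essentially the time-reversed counterpart of \cref{thm:expected_hitting_time_to_neighbors}, which the paper uses elsewhere. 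The trade-off is that the paper's bound holds pointwise for \emph{every} edge ($\max\{H(i,j),H(j,i)\}\geq n/2$ for each $\{i,j\}\in E$), whereas you only exhibit one witnessing edge --- but since $\EdgeHittingTime$ is a maximum, that suffices. One shared caveat (applying equally to the paper's proof and indeed to the claim as stated): connectivity of $G$ is implicitly assumed, since for a disconnected regular graph the return time is governed by the component size and the bound can fail.
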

Together we get from \cref{claim:integral_bound} and \cref{claim:edge_hitting_time_lower_bound} that with probability at least $1 - n^{-(\gamma+1)}$
\begin{equation}\label{eqn:concrete_global_div_bound}
\left(\GlobalDivergence_k(\MixMatSeq{t})\right)^2 = \Oh\left(\frac{\EdgeHittingTime}{n} \cdot \left(\gamma\log(n) + \log\left(\frac{\EdgeHittingTime}{n}\right)\right) + \frac{T(G)}{\BalancingSpeed}\right).
\end{equation}
Since $\EdgeHittingTime = \Oh(n^3)$ (Proposition 10.16 in \cite{LevinPeresbook}),
 $\log(\EdgeHittingTime / n) = \Oh(\log n)$, and $\gamma  > 1$,
\[
\GlobalDivergence_k(\MixMatSeq{t})
= \Oh\left(\sqrt{\gamma\log(n) \cdot \frac{\EdgeHittingTime}{n} + \frac{T(G)}{\BalancingSpeed}}\right) = \Oh\left(\sqrt{\gamma \log(n) \cdot \frac{\EdgeHittingTime}{n}} + \sqrt{\frac{T(G)}{\BalancingSpeed}}\right).
\]
Now \cref{lem:mixing_well_means_balancing_well} states that for any fixed sequence of matching matrices $\mixMatSeq{t}$, with probability at least $1 - 2n^{-(\gamma+1)}$ it holds that
\begin{equation}\label{eqn:concrete_dynamic_concentration_bound}
    \abs*{\NodeDynamicContribT{k}{t} - t \cdot \frac{m}{n}} = \Oh\left( \gamma\log(n) + \sqrt{\gamma\log(n) \cdot \frac{m}{n}} \cdot \GlobalDivergence_k(\mixMatSeq{t})\right).
\end{equation}
Applying a union bound over all $k \in [n]$, \cref{eqn:concrete_global_div_bound} and \cref{eqn:concrete_dynamic_concentration_bound} hold for all $k$ with probability at least $1 - 3n^{-\gamma}$. Hence, for all $k \in [n]$
\begin{equation*}\begin{aligned}
\abs*{\NodeDynamicContribT{k}{t} - t \cdot \frac{m}{n}}  
&=\Oh\left( \gamma\log(n) + \sqrt{\gamma\log(n) \cdot \frac{m}{n}} \cdot \left(\sqrt{\gamma \log(n) \cdot \frac{\EdgeHittingTime}{n}} + \sqrt{\frac{T(G)}{\BalancingSpeed}}\right)\right)
\\ 
&=\Oh\left(\gamma\log(n)\cdot \left(1 + \sqrt{\frac{m}{n} \cdot \frac{\EdgeHittingTime}{n}}\right) + \sqrt{\frac{(\gamma+1)\log(n)}{\BalancingSpeed} \cdot \frac{m}{n} \cdot T(G)}\right).
\end{aligned}\end{equation*}
The high-probability bound now follows from \cref{obs:disc:in_terms_of_one_viation}.
The corresponding bound on $\AutoExp{\discr(\DynamicContribVecT{t}}$ follows readily; see \cref{lem:tail_bound_to_expectation_bound} in \cref{apx:known-results-probability-theory} for the details.
\end{proof}

\section{Balancing Circuit Model}\label{sec:analysis_balancing_circuit}

Here we assume $\BalancingSpeed=1$.
Recall that we assume $G$ is covered by $\CircuitSize$ fixed matchings $\mixMatT{1},\ldots, \mixMatT{\CircuitSize}$.
The matching distribution $\BCDistr(G)$ then deterministically chooses the matching  $\mixMatT{t}=\mixMatT{t \bmod \CircuitSize}$ in step $t$.
The round matrix is defined as $\RoundMat \coloneqq \mixMatTT{1}{\CircuitSize}$ and the mixing matrices are fixed in this model. 
Thus, for a sequence of matchings $\mixMatSeq{t}$ the global divergence is 
$\GlobalDivergence(\mixMatSeq{t}) \coloneqq \max_{k\in [n]}\sqrt{\sum_{\tau=1}^t \norm*{\mixMatTT{\tau}{t}_{k,\cdot} - 1/n}_2^2}$.
The next theorem provides an upper bound on the discrepancy for this model.
Note that the following theorem holds for arbitrary graphs, while \cref{thm:main_sync_random} only holds for $d$-regular graphs.

\begin{theorem}\label{thm:main_sync_circuit}
Let $G$ be an arbitrary graph and $\LoadVecT{t}$ be the state of process $\SyncProc{\BCDistr(G)}{1}{m}$ at time $t$ with $\discr(\LoadVecT{0})\eqqcolon K$.
    For all $t \in \N$ with $t \ge  \frac{\CircuitSize}{\SpectralGap{(\RoundMat)}}\cdot\left(\ln(K\cdot n) \right)$
    it holds w.h.p.\ and in expectation
    \[\discr(\LoadVecT{t})=\Oh\left(\log (n)+ \sqrt{ m/n}\cdot\GlobalDivergence(\mixMatSeq{t})\cdot \sqrt{\log (n)}   \right).\]
\end{theorem}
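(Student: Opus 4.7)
My plan is to mirror the structure of the proof of \cref{thm:main_sync_random}: decompose $\LoadVecT{t} = \InitialContribVecT{t} + \DynamicContribVecT{t} + \RoundingContribVecT{t}$ as in \cref{eq:load:vector:equality}, apply sub-additivity of the discrepancy, and bound each of the three summands separately before combining them via a union bound. Two of the three bounds are essentially ``for free'' in this model, while the initial contribution requires a dedicated spectral argument based on the round matrix $\RoundMat$.

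For the rounding contribution, \cref{lem:rounding:errors:are:small} applies verbatim to arbitrary graphs and matching distributions, so with $\BalancingSpeed=1$ we immediately obtain $\discr(\RoundingContribVecT{t}) = \Oh(\sqrt{\log n})$ w.h.p.\ and in expectation. For the dynamic contribution, the matching sequence $\mixMatSeq{t}$ is deterministic in the balancing-circuit model, so \cref{lem:mixing_well_means_balancing_well} applies to each fixed node $k$; choosing $\gamma=\Theta(1)$ and taking a union bound over $k\in[n]$ yields
\[\max_{k\in[n]}\abs*{\NodeDynamicContribT{k}{t} - t\cdot m/n}
= \Oh\!\left(\log(n) + \sqrt{\log(n)\cdot m/n}\cdot \GlobalDivergence(\mixMatSeq{t})\right)\]
with probability $1-n^{-\Omega(1)}$, and \cref{obs:disc:in_terms_of_one_viation} then gives the same asymptotic bound on $\discr(\DynamicContribVecT{t})$.

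The main work is to prove a memorylessness statement: once $t \ge \CircuitSize\ln(K\cdot n)/\SpectralGap(\RoundMat)$, we have $\discr(\InitialContribVecT{t}) \le 1$. Writing $t = s\CircuitSize + r$ with $0\le r < \CircuitSize$, the product $\mixMatTT{1}{t}$ factors as $\mixMatTT{1}{r}\cdot \RoundMat^{s}$. Each individual matching matrix is symmetric and doubly stochastic and hence non-expansive on $\vec{1}^{\perp}$, so the $r$-step tail does not increase the $\ell_2$-distance from the mean. On $\vec{1}^{\perp}$, $\RoundMat$ contracts by a factor of $1-\SpectralGap(\RoundMat)$ per period. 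Setting $\bar{x} = \vec{1}^\top\LoadVecT{0}/n$ and using $\|\LoadVecT{0}-\bar{x}\vec{1}\|_2 \le \sqrt{n}\cdot K$,
\[\|\mixMatTT{1}{t}\LoadVecT{0} - \bar{x}\vec{1}\|_\infty \le \|\mixMatTT{1}{t}\LoadVecT{0} - \bar{x}\vec{1}\|_2 \le (1-\SpectralGap(\RoundMat))^{s}\cdot \sqrt{n}\cdot K,\]
and our choice of $t$ makes $s \ge \ln(nK)/\SpectralGap(\RoundMat)$, which drives the right-hand side below $1/2$. Twice the $\ell_\infty$ bound then caps $\discr(\InitialContribVecT{t})$ by $1$.

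The main obstacle is the contraction step: the product $\RoundMat$ of symmetric factors is itself typically \emph{not} symmetric, so the spectral gap used in the statement is not automatically an $\ell_2$-contraction rate. I would either argue via the second-largest singular value of $\RoundMat$ (and check that the paper's $\SpectralGap(\RoundMat)$ can replace it at the cost of a constant, or an additional $\log n$ factor that is absorbed by the threshold $t_0$), or invoke the standard telescoping argument from the load-balancing literature to conclude the per-period contraction on $\vec{1}^{\perp}$. Once the three bounds are in hand, summing them gives $\discr(\LoadVecT{t}) = \Oh\!\left(\log n + \sqrt{m/n}\cdot \GlobalDivergence(\mixMatSeq{t})\cdot\sqrt{\log n}\right)$ w.h.p., and the corresponding expectation bound follows from the tail bounds exactly as in the proof of \cref{thm:main_sync_random}.
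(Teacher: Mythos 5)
Your proposal is correct and follows essentially the same route as the paper: the same three-way decomposition with \cref{lem:rounding:errors:are:small}, \cref{lem:mixing_well_means_balancing_well} plus \cref{obs:disc:in_terms_of_one_viation} for the rounding and dynamic parts, and a circuit-specific memorylessness bound replacing \cref{lem:initial:load:vanishes}. The only step you leave open---the per-period contraction despite $\RoundMat$ not being symmetric---is exactly what the paper's \cref{lem:erro_bound} settles by working with the node potential $\NodePotential$ and citing Lemma~2 of \cite{DBLP:conf/spaa/GhoshMS96}, which gives $\NodePotential(\mixMatTT{1}{t}\cdot\vec{x}) \le (1-\SpectralGap(\RoundMat))^{2\lfloor t\rfloor/\CircuitSize}\cdot\NodePotential(\vec{x})$ directly, i.e.\ the ``standard telescoping argument'' you gesture at.
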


\begin{proof}
The proof follows the same line as the proof \cref{thm:main_sync_random}, which is proved via \cref{lem:initial:load:vanishes}, 
\cref{lem:disc:dyn}, and \cref{lem:rounding:errors:are:small} bounding 
$\InitialContribVecT{t}, \DynamicContribVecT{t}$, and $\RoundingContribVecT{t}$, respectively.
\Cref{lem:initial:load:vanishes} is replaced by \cref{lem:erro_bound} below. \Cref{lem:initial:load:vanishes} can also be applied to the balancing circuit model since it only requires that the subgraph used for balancing is a matching. 

It remains to replace \cref{lem:rounding:errors:are:small}. 
Since the matching matrices are fixed this time the proof is much simpler. 
The proof of \cref{lem:mixing_well_means_balancing_well} carries to over to this model giving us a bound on $\abs{\NodeDynamicContribT{k}{t}-tm/n}$ for $k\in[n]$  with probability at least $1-2\cdot n^{-\gamma}$.
Applying the union bound over all nodes $k\in[n]$, together with  \cref{obs:disc:in_terms_of_one_viation} (stating that  $\discr(\DynamicContribVecT{t}) \leq 2 \cdot \max_{k\in[n]}\abs{\NodeDynamicContribT{k}{t} - t\cdot m/n} $), gives a bound on $\discr(\DynamicContribVecT{t})$ which holds with probability at least $1-2\cdot n^{\gamma+1}$.
\end{proof}

\begin{lemma}[Memorylessness Property] \label{lem:erro_bound}

For all $t\in \N$ with $t \ge {\CircuitSize}/{\SpectralGap{(\RoundMat)}}\cdot \left(\ln(K\cdot n)\right)$ it holds that $\discr(\InitialContribVecT{t})\le 2$.
    \end{lemma}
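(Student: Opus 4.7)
The plan is to show that multiplication by $\mixMatTT{1}{t}$ contracts the deviation of $\LoadVecT{0}$ from its mean at a rate governed by $\SpectralGap(\RoundMat)$. First I would reduce to an $\ell_2$-norm bound. Let $\overline{X} = \frac{1}{n}\sum_i X_i(0)$ and $\vec{y}_0 = \LoadVecT{0} - \overline{X}\vec{1}$, so $\vec{y}_0 \perp \vec{1}$ and $\|\vec{y}_0\|_2 \le K\sqrt{n}$ since $|y_0(i)| \le K$ for all $i$. Each matching matrix is doubly stochastic, hence so is $\mixMatTT{1}{t}$, and thus $\mixMatTT{1}{t}\vec{1} = \vec{1}$. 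It follows that $\InitialContribVecT{t} - \overline{X}\vec{1} = \mixMatTT{1}{t}\vec{y}_0$, and since $\discr(\vec{z} + c\vec{1}) \le 2\|\vec{z}\|_\infty \le 2\|\vec{z}\|_2$, it suffices to prove $\|\mixMatTT{1}{t}\vec{y}_0\|_2 \le 1$.

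Next I would exploit the periodicity of the matching sequence. Writing $t = k\CircuitSize + r$ with $0 \le r < \CircuitSize$, the periodicity gives $\mixMatTT{1}{t} = \mixMatTT{k\CircuitSize+1}{t}\cdot\RoundMat^k$, where the leading factor is a product of at most $\CircuitSize$ matching matrices. With $\BalancingSpeed = 1$, each matching matrix is a symmetric orthogonal projection (each $2\times 2$ block has eigenvalues $0$ and $1$), so it is a contraction on $\vec{1}^\perp$; hence so is the partial-period factor $\mixMatTT{k\CircuitSize+1}{t}$. Using the spectral gap of $\RoundMat$, we obtain $\|\RoundMat^k\vec{w}\|_2 \le (1-\SpectralGap(\RoundMat))^k\|\vec{w}\|_2$ for every $\vec{w}\perp\vec{1}$, so that
\[
\|\mixMatTT{1}{t}\vec{y}_0\|_2 \;\le\; (1-\SpectralGap(\RoundMat))^k \cdot K\sqrt{n} \;\le\; e^{-k\SpectralGap(\RoundMat)}\cdot K\sqrt{n}.
\]

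Finally, plugging in the assumption $t \ge \CircuitSize\ln(Kn)/\SpectralGap(\RoundMat)$ yields $k \ge \ln(Kn)/\SpectralGap(\RoundMat) - 1$, so the right-hand side above is at most $e^{\SpectralGap(\RoundMat)}/\sqrt{n} \le 1$ for $n$ sufficiently large. Combining with the reduction from the first paragraph gives $\discr(\InitialContribVecT{t}) \le 2\|\mixMatTT{1}{t}\vec{y}_0\|_2 \le 2$, as required.

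The main obstacle is justifying the contraction bound $\|\RoundMat^k \vec{w}\|_2 \le (1-\SpectralGap(\RoundMat))^k\|\vec{w}\|_2$ on $\vec{1}^\perp$, since $\RoundMat$ is a product of non-commuting symmetric projections and need not itself be symmetric. One way around this is to observe that $\RoundMat$ is a product of orthogonal projections all containing $\vec{1}$ in their range, which is a classical setting in which such a geometric contraction bound does hold; alternatively, one can work with the symmetric positive semidefinite matrix $\RoundMat^\top\RoundMat$, whose second-largest eigenvalue controls $\|\RoundMat\vec{w}\|_2^2$, and then iterate. The remaining computation is a straightforward exponential decay argument as sketched above.
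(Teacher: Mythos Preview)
Your proposal is correct and follows essentially the same route as the paper: both reduce to bounding the quadratic node potential $\NodePotential(\InitialContribVecT{t})=\norm{\mixMatTT{1}{t}\vec{y}_0}_2^2$ and then use a geometric contraction of the round matrix $\RoundMat$ on $\vec{1}^\perp$ to drive this below~$1$, which gives $\discr(\InitialContribVecT{t})\le 2$. The paper simply invokes an external lemma (Lemma~2 in~\cite{DBLP:conf/spaa/GhoshMS96}) for the bound $\NodePotential(\mixMatTT{1}{t}\vec{x})\le (1-\SpectralGap(\RoundMat))^{2\lfloor t/\CircuitSize\rfloor}\NodePotential(\vec{x})$; you instead unpack this by hand via the factorization $\mixMatTT{1}{t}=(\text{partial period})\cdot\RoundMat^k$, the observation that for $\BalancingSpeed=1$ each matching matrix is a symmetric projection (hence a contraction), and a spectral bound on $\RoundMat$. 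Your identification of the non-symmetry of $\RoundMat$ as the delicate point is apt, and your suggested fix via $\RoundMat^\top\RoundMat$ is precisely what underlies the cited lemma (and explains the exponent~$2$ in the paper's bound). The ``$n$ sufficiently large'' caveat in your final step is harmless; the paper's own proof absorbs the analogous boundary slack without comment.
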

    \begin{proof}

Since $\NodePotential(\vec{x}) \le K^2\cdot n$ it follows from Lemma 2 in \cite{DBLP:conf/spaa/GhoshMS96} that
   \begin{equation*}
       \NodePotential\left(\mixMatTT{1}{t} \cdot \vec{x}\right) \leq (1-\SpectralGap{(\cMixMat)})^{2\lfloor t \rfloor/\CircuitSize}\cdot \NodePotential(\vec{x}) \le (1-\SpectralGap{(\cMixMat)})^{2\lfloor t \rfloor/\CircuitSize}\cdot K^2\cdot n \le e^{-2 \lfloor t \rfloor \cdot \SpectralGap{(\cMixMat)}/\CircuitSize+ 2\ln(Kn)}.
   \end{equation*}
Setting $t\ge (\CircuitSize/\SpectralGap{(\RoundMat)})\cdot \left(\ln(Kn)\right)$ gives
$\NodePotential\left(\mixMatTT{1}{ t} \cdot \vec{x}\right) \leq 1$ which implies that $\discr(\InitialContribVecT{t}\le 2$.
    \end{proof}
Note that a similar statement was shown in~\cite{DBLP:conf/focs/RabaniSW98,DBLP:conf/focs/SauerwaldS12,DBLP:journals/jcss/BerenbrinkCFFS15}.

The next theorem provides a lower bound on the discrepancy for this model. The proof can be found in \cref{apx:analysis_balancing_circuit}.
\begin{theorem}\label{thm:main_sync:lower}
Let $G$ be an arbitrary graph and $\LoadVecT{t}$ be the state of process $\SyncProc{\BCDistr(G)}{1}{m}$ at time $t$. 
Then for all  $t\in \N$  and $m\ge 4n\cdot \log(n)/ \GlobalDivergence(\mixMatSeq{t})$ it holds with constant probability
\[ \discr(\LoadVecT{t})=\Omega\left(\sqrt{m/n}\cdot \GlobalDivergence(\mixMatSeq{t})\right).\]
\end{theorem}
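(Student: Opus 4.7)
The plan is to identify a node $k^*$ maximizing the node-wise divergence $\GlobalDivergence_{k^*}(\mixMatSeq{t})$, establish anti-concentration of its dynamic contribution via a Paley--Zygmund argument, and transfer the lower bound to the total discrepancy using the existing upper bounds on the initial and rounding contributions.

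Following the decomposition from the proof of \cref{lem:mixing_well_means_balancing_well}, the random variable $S \coloneqq \NodeDynamicContribT{k^*}{t} - tm/n$ is a sum of $tm$ independent, mean-zero random variables $\BallRoundContr{k^*}{\tau}{j} - 1/n$, each bounded by $1$ in absolute value, with total variance $\Var[S] = (m/n)\cdot \GlobalDivergence(\mixMatSeq{t})^2$. Since each $\mixMatTT{\tau}{t}$ is doubly stochastic, the coordinate average of $\DynamicContribVecT{t}$ equals $tm/n$, so $\max_k \NodeDynamicContribT{k}{t} \geq tm/n \geq \min_k \NodeDynamicContribT{k}{t}$ and hence $\discr(\DynamicContribVecT{t}) \geq |S|$.

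The core step is anti-concentration of $|S|$ at the standard-deviation scale. Apply the Paley--Zygmund inequality to $Z = S^2$:
\[
\Pr\bigl[S^2 \geq \tfrac{1}{2}\Var[S]\bigr] \geq \tfrac{1}{4} \cdot \frac{\Var[S]^2}{\E[S^4]}.
\]
A direct fourth-moment expansion over independent mean-zero summands yields $\E[S^4] = \sum_i \E[X_i^4] + 3\sum_{i \neq j} \E[X_i^2]\E[X_j^2] \leq \Var[S] + 3\Var[S]^2$, using $|X_i|\leq 1$ to bound $\E[X_i^4] \leq \E[X_i^2]$. Observe that $\GlobalDivergence(\mixMatSeq{t}) = \Omega(1)$: the single $\tau=t$ term in $\GlobalDivergence_{k^*}^2$ contributes at least $\tfrac12 - \tfrac1n$, since the $k^*$-row of any $\beta=1$ matching matrix differs from $\vec{1}/n$ by at least this much in squared $\ell_2$-distance. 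Combining with the hypothesis $m \geq 4n\log(n)/\GlobalDivergence(\mixMatSeq{t})$ gives $\Var[S] \geq 4\log(n)\cdot\GlobalDivergence = \Omega(\log n) \gg 1$, so $\E[S^4] \leq 4\Var[S]^2$, and Paley--Zygmund delivers $\Pr[|S| \geq \sqrt{m/n}\cdot\GlobalDivergence(\mixMatSeq{t})/\sqrt 2] \geq 1/16$.

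To conclude, apply the sub-additivity of discrepancy (\cref{obs:disc_subadditive}) to $\DynamicContribVecT{t} = \LoadVecT{t} - \InitialContribVecT{t} - \RoundingContribVecT{t}$, giving $\discr(\LoadVecT{t}) \geq \discr(\DynamicContribVecT{t}) - \discr(\InitialContribVecT{t}) - \discr(\RoundingContribVecT{t})$. By \cref{lem:erro_bound}, $\discr(\InitialContribVecT{t}) \leq 2$ (assuming an empty start, or $t$ past the memorylessness threshold), and by \cref{lem:rounding:errors:are:small} with $\beta = 1$, $\discr(\RoundingContribVecT{t}) = O(\sqrt{\log n})$ with probability $\geq 1 - O(1/n)$. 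With the constant in the hypothesis chosen sufficiently large, $\sqrt{m/n}\cdot\GlobalDivergence(\mixMatSeq{t})$ dominates these corrections by a constant factor, and a union bound over the $\tfrac{1}{16}$-probability anti-concentration event for $|S|$ and the $1-O(1/n)$-probability concentration event for $\RoundingContribVecT{t}$ gives $\discr(\LoadVecT{t}) = \Omega(\sqrt{m/n}\cdot\GlobalDivergence(\mixMatSeq{t}))$ with constant probability.

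The main obstacle is the anti-concentration step, since the standard Chernoff-type tools used elsewhere in the paper yield only upper tail bounds. Paley--Zygmund combined with a matching fourth-moment upper bound is the natural substitute, and the hypothesis $m \geq 4n\log(n)/\GlobalDivergence$ is calibrated precisely so that $\Var[S]$ is large enough to reduce the fourth-moment bound to $O(\Var[S]^2)$ (so Paley--Zygmund delivers constant probability) and so that the resulting deviation $\sqrt{m/n}\,\GlobalDivergence$ exceeds the $O(\sqrt{\log n})$ rounding error from \cref{lem:rounding:errors:are:small}.
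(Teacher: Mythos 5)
Your proposal is correct, and it reaches the same intermediate target as the paper (anti-concentration of $\NodeDynamicContribT{k}{t}-tm/n$ at the standard-deviation scale $\sqrt{m/n}\cdot\GlobalDivergence(\mixMatSeq{t})$ with probability about $1/16$, then subtracting the rounding and initial contributions), but via a genuinely different tool for the key step. The paper decomposes the deviation into the same independent, mean-zero, bounded summands and then invokes the Berry--Esseen theorem (\cref{thm:BerryEssen}): the third-moment ratio $\psi_0$ is $o(1)$ precisely because the hypothesis on $m$ forces the total variance $(m/n)\cdot\GlobalDivergence^2$ to be $\Omega(\log n)$, after which a numerical lower bound on the standard normal tail at $1$ yields probability at least $1/16$. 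You instead apply Paley--Zygmund to $S^2$ together with the exact fourth-moment expansion $\E[S^4]\le \Var[S]+3\Var[S]^2$, and you supply the missing ingredient that the paper never states explicitly, namely $\GlobalDivergence(\mixMatSeq{t})=\Omega(1)$ via the $\tau=t$ term of the divergence, so that $\Var[S]\ge 1$ and $\E[S^4]\le 4\Var[S]^2$. Your route is more elementary (no Berry--Esseen constant, no normal-tail numerics) at the cost of a $\sqrt{2}$ factor in the deviation scale, which is irrelevant inside the $\Omega(\cdot)$; the paper's route gives genuine distributional information but needs the same hypothesis on $m$ to control the error term. Your transfer to $\discr(\LoadVecT{t})$ via sub-additivity (\cref{obs:disc_subadditive}) is equivalent to the paper's pairing of the overloaded node $k$ with some underloaded node $w$, and your treatment of the corrections matches the paper's: both implicitly assume an empty start (or large $t$) to kill $\InitialContribVecT{t}$, and both are slightly cavalier about the constant comparing $\sqrt{m/n}\cdot\GlobalDivergence$ against the $\Oh(\sqrt{\log n})$ bound of \cref{lem:rounding:errors:are:small} exactly at the threshold $m=4n\log(n)/\GlobalDivergence$; your explicit remark that the hypothesis constant must be large enough is no weaker than what the paper does.
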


\section{Asynchronous Model}
\label{sec:asynchronous}
The following is our main theorem for the asynchronous model. 
The bounds provided by \cref{thm:main_async} for the asynchronous model differ from those in \cref{thm:main_sync_random} for the random matching model in two details.
First, the lower bound on the balancing time is larger by a factor of $n$. This is due to the fact that the asynchronous model balances across just one edge per round in contrast to $\Theta(n)$ edges in the random matching model.
Second, the upper bound on $\discr(\LoadVecT{t})$ is much simpler.
Note, however that setting $m=n$ in \cref{thm:main_sync_random} and further simplifying the result by using $\EdgeHittingTime / n = \Omega(1)$ (see also \cref{claim:edge_hitting_time_lower_bound} in the proof of \cref{lem:disc:dyn}) results in the same asymptotic bound as in \cref{thm:main_async}.

\begin{theorem}
\label{thm:main_async}
Let $G$ be a $d$-regular graph and 
define \((T(G) \coloneqq \min \Big\{\frac{\HittingTime}{n} \cdot \log(n), \sqrt{\frac{d}{\SpectralGap(\Laplacian(G))}},\discretionary{}{}{} \frac{1}{\SpectralGap(\Laplacian(G))} \Big\} \).
Let $\LoadVecT{t}$ be the state of process $\AsyncProc{\ADistr(G)}{\BalancingSpeed}$
at time $t$ with $\discr(\LoadVecT{0}) \eqqcolon K \ge 1$.
There exists a constant $c>0$ such that for all 
$t  \geq c \cdot n \cdot \log(K\cdot n) / (\SpectralGap(\Laplacian(G)) \cdot \BalancingSpeed)$
it holds w.h.p.\ and in expectation
\[ {\discr(\LoadVecT{t}) = \Oh\left(\log(n) \sqrt{\frac{\EdgeHittingTime}{n}} + \sqrt{\frac{\log(n)}{\BalancingSpeed} \cdot T(G)}\right).}\]
\end{theorem}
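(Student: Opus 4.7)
The plan is to mirror the proof of \cref{thm:main_sync_random}. Unrolling the recurrence \cref{eq:load_recurrence} as in \cref{eq:load:vector:equality} gives $\LoadVecT{t} = \InitialContribVecT{t} + \DynamicContribVecT{t} + \RoundingContribVecT{t}$, and by sub-additivity of the discrepancy (\cref{obs:disc_subadditive}) it suffices to bound the three summands individually. For $\discr(\InitialContribVecT{t})$ I would reprove the memorylessness argument of \cref{lem:initial:load:vanishes}. By \cref{obs:node_potential_change_exact} the expected one-step drop in the quadratic node potential is proportional to $\AutoExp{\EdgePotential_{\MixMatT{\tau}}(\vec{x})}$, which here equals $\EdgePotential_G(\vec{x})/|E(G)| = \Theta(\EdgePotential_G(\vec{x})/(nd))$ for a uniformly chosen edge, a factor $\Theta(n)$ smaller than in the random matching model. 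This is exactly the source of the extra factor $n$ in the theorem's lower bound on $t$. The rounding bound $\discr(\RoundingContribVecT{t}) = \Oh(\sqrt{\log(n)/\BalancingSpeed})$ follows verbatim from \cref{lem:rounding:errors:are:small}, which is model-agnostic.

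The main obstacle is $\discr(\DynamicContribVecT{t})$. Here \cref{lem:mixing_well_means_balancing_well} cannot be invoked verbatim, because in the asynchronous process the allocation $\AllocVecT{\tau} = \UnitVec{V_\tau}$ and the matching $\MixMatT{\tau}$ are correlated: $V_\tau$ must be an endpoint of the edge used by $\MixMatT{\tau}$. To decouple them I would absorb the matching step into the allocation. Setting $\vec{z}_\tau \coloneqq \MixMatT{\tau}\UnitVec{V_\tau}$ the recurrence becomes
\[\LoadVecT{\tau} = \MixMatT{\tau}\LoadVecT{\tau-1} + \vec{z}_\tau + \RoundingErrVecT{\tau}, \qquad\text{hence}\qquad \DynamicContribVecT{t} = \sum_{\tau=1}^{t}\MixMatTT{\tau+1}{t}\vec{z}_\tau.\]
The vector $\vec{z}_\tau$ depends only on the matching and allocation at time $\tau$ and is therefore independent of $\MixMatTT{\tau+1}{t}$; it has $\AutoExp{\vec{z}_\tau} = \vec{1}/n$ because $G$ is regular, and its entries lie in $[0,1]$. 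Conditioning on the matching sequence, each coordinate $(\DynamicContribVecT{t})_k$ becomes a sum of independent, bounded random variables, and an argument parallel to the proof of \cref{lem:mixing_well_means_balancing_well} (with effective parameter ``$m=1$'') yields concentration around $t/n$ in terms of a global-divergence-like quantity on the random sequence $(\MixMatTT{\tau+1}{t})_\tau$.

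To bound that global divergence I would verify an analogue of \cref{lem:rmdistr_is_good} for the uniform single-edge matching distribution. Using \cref{lem:edge_potential_bounds} together with $\AutoExp{\EdgePotential_{\MixMatT{\tau}}(\vec{x})} = \EdgePotential_G(\vec{x})/|E(G)|$ gives a drop function $g_G^{\text{async}}(x) = \Theta(g_G(x)/n)$, while a direct second-moment calculation combined with the identity $\ResistiveDistance{i}{j} = (t_{\text{hit}}(i{\to}j)+t_{\text{hit}}(j{\to}i))/(2|E(G)|)$ and $\MaxEdgeResistance \leq \EdgeHittingTime/|E(G)|$ yields $\sigma_G^{\text{async}\,2} = \Theta(\EdgeHittingTime)$. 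Substituting into \cref{lem:glob:div:bound:drift} gives $(\GlobalDivergence_k)^2 = \Oh(\EdgeHittingTime\log(n) + nT(G)/\BalancingSpeed)$; plugging this into the concentration bound together with $\EdgeHittingTime/n = \Omega(1)$ yields $\discr(\DynamicContribVecT{t}) = \Oh(\log(n)\sqrt{\EdgeHittingTime/n} + \sqrt{\log(n) T(G)/\BalancingSpeed})$. A union bound over the three contributions and the $n$ coordinates, followed by \cref{lem:tail_bound_to_expectation_bound} for the in-expectation statement, completes the proof. The technically most delicate step is calibrating the $(g,\sigma^2)$-goodness parameters of the single-edge distribution so that the factor-$n$ slowdown of the per-step contraction and the factor-$n$ increase in the per-step variance cancel precisely against the $\sqrt{m/n}$ factor with $m=1$ in the concentration step, reproducing exactly the simplified bound of the theorem.
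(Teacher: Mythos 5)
Your overall scaffolding matches the paper's proof: the same three-way decomposition, the observation that \cref{lem:rounding:errors:are:small} is model-agnostic, the factor-$n$ slowdown of the potential drop explaining the larger time bound, and your guessed goodness parameters ($g$ scaled down by $\Theta(n)$, $\sigma^2=\Theta(\EdgeHittingTime)$) agree with \cref{lem:asdistr_is_good}. Even your factorization $\MixMatTT{\tau}{t}\AllocVecT{\tau}=\MixMatTT{\tau+1}{t}\bigl(\MixMatT{\tau}\AllocVecT{\tau}\bigr)$ is exactly the paper's first step. However, the central concentration step has a genuine gap. You claim that, conditioning on the matching sequence, $\NodeDynamicContribT{k}{t}$ becomes a sum of independent bounded variables concentrating around $t/n$. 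Conditioned on the edges $\{I_\tau,J_\tau\}$, the only remaining randomness is which endpoint receives the item, and the conditional mean of $\MixMatTT{\tau+1}{t}_{k,\cdot}\,\vec z_\tau$ is $\MixMatTT{\tau+1}{t}_{k,\cdot}\cdot\tfrac{1}{2}(\UnitVec{I_\tau}+\UnitVec{J_\tau})$, \emph{not} $1/n$: the unconditional identity $\AutoExp{\vec z_\tau}=\vec 1/n$ is destroyed by the conditioning. So your two sources of randomness do not decouple; the edge choices themselves shift $\NodeDynamicContribT{k}{t}$ by an amount of the same order as the fluctuation you are trying to bound, and that shift is a function of the very matchings you froze. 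Conversely, without conditioning the summands are not independent, since $\MixMatTT{\tau+1}{t}$ is shared across different $\tau$. Hence a Chernoff-type argument "parallel to \cref{lem:mixing_well_means_balancing_well} with $m=1$" does not go through.

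The paper resolves exactly this point in \cref{lem:mixing_well_means_balancing_well_async}: it reveals the rounds backwards in time, shows that each increment $C_k(\tau)-1/n$ has conditional mean zero given the edges of rounds $\tau+1,\dots,t$ (using only that the endpoint distribution of a uniform edge in a regular graph is uniform), bounds the conditional variance of each increment by $\tfrac1n\norm{\MixMatTT{\tau+1}{t}_{k,\cdot}-\vec 1/n}_2^2$, and then applies a Freedman-type inequality for martingales with random quadratic characteristic (\cref{corr:martingale_quad_characteristic}), feeding in a high-probability bound on the global divergence to control that characteristic. This joint martingale treatment of allocation and edge randomness is the "stronger concentration inequality" the proof sketch alludes to, and it is the missing ingredient in your proposal; with it, the rest of your outline (goodness of $\ADistr(G)$, \cref{lem:glob:div:bound:drift}, union bound, \cref{lem:tail_bound_to_expectation_bound}) would be carried out essentially as you describe.
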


\begin{proof}[Proof Sketch of \cref{thm:main_async}]
The proof of the theorem follows along the same lines at the proof of \cref{thm:main_sync_random}. 
However, there are some major differences. 
Most importantly, the proof of \cref{lem:mixing_well_means_balancing_well} (giving a concentration bound on $\NodeDynamicContribT{k}{t}$ in terms of the global divergence of the sequence of matching matrices) can not be applied for $\AsyncProcName$. 
The proof heavily relies on the fact that the load allocation and the matching edges are chosen independently from each other, which is certainly not the case for  $\AsyncProcName$. 
Our new lemma (\cref{lem:mixing_well_means_balancing_well_async} in \cref{apx:asynchronous}) carefully analyses the dependency, and it uses a stronger concentration inequality.
In addition, we also have to re-calculate the function $g_G$ and $\sigma_G$ to show that the matching distribution used by $\ADistr$ is $(g_G, \sigma_G^2)$-good (see \cref{lem:asdistr_is_good} in \cref{apx:asynchronous}).
\end{proof}

\section{Drift Result} \label{sec:drift}

In our analysis we use the following tail bound for the sum of a non-increasing sequence of random variables with variable negative drift.
The proof uses established methods from drift analysis.
In particular, it relies one techniques found in the proof of the Variable Drift Theorem in \cite{DBLP:series/ncs/Lengler20}.
The full technical proof can be found in \cref{apx:drift_proof}.

\begin{theorem}[name=,restate=restateLemDrift,label=lem:drift]
    Let $(X(t))_{t\ge 0}$ be a non-increasing sequence of discrete random variables with $X(t)\in \R^+_0$ for all $t$ with fixed $X(0) = x_0$.
    Assume there exists an increasing function $h\colon \R^+_0 \to \R^+$ and a constant $\sigma > 0$ such that the following holds. For all $t \in \N$ and all $x > 0$ with $\AutoProb{X(t) = x} > 0$
    \begin{enumerate}
        \item \(\AutoExpCond{X(t+1)}{X(t) = x} \leq x - h(x),\)\label{cond:drift:1}
        \item \(\AutoVarCond{X(t+1)}{X(t) = x} \leq \sigma \cdot \left(\AutoExpCond{X(t+1)}{X(t) = x} - x\right)^2.\) \label{cond:drift:2}
    \end{enumerate}
   Then the following statements hold.
    \begin{enumerate}
        \item For all $\delta \in (0, 1)$ and any arbitrary but fixed $t$
        \[\BigAutoProb{\int_{X(t)}^{x_0} \frac{1}{h(\varphi)}\,\dvarphi \leq (1-\delta)t} \leq \exp\left(-\,\frac{\delta^2 t}{2(\sigma + 1)}\right).\]
        \item  For all $\delta \in (0, 1)$ and $p \in (0,1)$ we define $t_0 \coloneqq \frac{2(\sigma + 1)}{\delta^2} \left(-\log(p) + \log\left(\frac{2(\sigma + 1)}{\delta^2}\right)\right)$. Then \[\BigAutoProb{\sum_{t=t_0+1}^{\infty} X(t) \leq \frac{1}{1-\delta} \cdot \int_0^{x_0} \frac{\varphi}{h(\varphi)} \dvarphi} \geq 1 - p.\]
    \end{enumerate}
\end{theorem}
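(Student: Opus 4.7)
The proof has two parts.

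For Part~1, I introduce the $\mathcal{F}_t$-measurable potential
\[Y(t) \;\coloneqq\; \int_{X(t)}^{x_0} \frac{1}{h(\varphi)}\,\dvarphi,\]
which is non-decreasing with $Y(0)=0$, so that the event in the statement becomes $\{Y(t)\le(1-\delta)t\}$. Since $1/h$ is decreasing on $[X(t+1),X(t)]$, a Riemann-sum estimate gives $Y(t+1)-Y(t) \ge (X(t)-X(t+1))/h(X(t))$. The key linearization, however, is not this one but
\[\tilde{B}_s \;\coloneqq\; \frac{X(s)-X(s+1)}{\AutoExpCond{X(s)-X(s+1)}{\mathcal{F}_s}},\]
normalized by the \emph{actual} drift. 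By construction $\AutoExpCond{\tilde{B}_s}{\mathcal{F}_s}=1$, and Condition~2 yields $\AutoVarCond{\tilde{B}_s}{\mathcal{F}_s}\le\sigma$. Condition~1 forces $\AutoExpCond{X(s)-X(s+1)}{\mathcal{F}_s}\ge h(X(s))$, so $\tilde{B}_s \le (X(s)-X(s+1))/h(X(s))$ pointwise and therefore $\sum_{s<t}\tilde{B}_s \le Y(t)$.

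It therefore suffices to prove the Chernoff bound $\AutoProb{\sum_{s<t}\tilde{B}_s\le(1-\delta)t}\le\exp(-\delta^2 t/(2(\sigma+1)))$. Using $e^{-x}\le 1-x+x^2/2$ together with $\AutoExpCond{\tilde{B}_s^2}{\mathcal{F}_s}\le\sigma+1$, I obtain
\[\AutoExpCond{e^{-\lambda\tilde{B}_s}}{\mathcal{F}_s}\;\le\;\exp\!\Bigl(-\lambda+\tfrac{\lambda^2(\sigma+1)}{2}\Bigr).\]
Telescoping over $s$ via the tower property (i.e.\ checking that $e^{-\lambda\sum_{s<t}\tilde{B}_s}\cdot e^{t(\lambda-\lambda^2(\sigma+1)/2)}$ is a supermartingale with initial value~$1$), applying Markov's inequality, and optimizing $\lambda=\delta/(\sigma+1)$ yields the desired tail bound.

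For Part~2, I union-bound Part~1 over $t\in\{t_0+1,t_0+2,\dots\}$: the value of $t_0$ in the statement is precisely the smallest integer such that $\sum_{t>t_0}\exp(-\delta^2 t/(2(\sigma+1)))\le p$. On the resulting event of probability $\ge 1-p$, $Y(t)\ge(1-\delta)t$ holds for every $t>t_0$ simultaneously, which upon inverting the decreasing function $F(x)\coloneqq\int_x^{x_0}1/h(\varphi)\,\dvarphi$ gives $X(t)\le F^{-1}((1-\delta)t)$. A Riemann-sum comparison (using that $F^{-1}$ is decreasing) together with the change of variables $u=F(\varphi)$, $\mathrm{d}u=-\dvarphi/h(\varphi)$, then yields
\[\sum_{t>t_0} X(t)\;\le\;\frac{1}{1-\delta}\int_0^{\infty}\! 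F^{-1}(u)\,\mathrm{d}u \;=\; \frac{1}{1-\delta}\int_0^{x_0}\!\frac{\varphi}{h(\varphi)}\,\dvarphi.\]

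The main technical point is choosing the right linearization in Part~1. The naive $B_s:=(X(s)-X(s+1))/h(X(s))$ still gives $Y(t)\ge\sum_{s<t}B_s$, but its conditional mean $\mu_s$ can be much larger than~$1$, introducing a $\mu_s^2$ factor in the MGF bound that spoils a direct Chernoff optimization. Normalizing by the actual drift $\AutoExpCond{X(s)-X(s+1)}{\mathcal{F}_s}$ instead makes the mean exactly~$1$ and the variance at most~$\sigma$, while preserving the pointwise domination $\tilde{B}_s\le B_s$ by Condition~1; this is the step where Condition~2 (variance bounded in units of squared drift) is truly used. The remaining pieces — the drift identity, the calibration of $t_0$, the Riemann-sum bound, and the change of variables — are routine.
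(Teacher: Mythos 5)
Your proof is correct and follows essentially the same route as the paper: the key step --- normalizing each decrement by its actual conditional drift so that the increments have conditional mean exactly $1$ and conditional variance at most $\sigma$, and bounding $\int_{X(t)}^{x_0} \frac{1}{h(\varphi)}\,\dvarphi$ from below by the sum of these normalized increments --- is precisely the paper's construction of its auxiliary sequence $Y(t)$. The remaining differences are cosmetic: you derive the lower-tail bound by hand via $e^{-x}\le 1-x+x^2/2$ and an exponential supermartingale with the optimal choice $\lambda=\delta/(\sigma+1)$, where the paper invokes the martingale inequality of \cref{thm:chung66}, and in the second part you sum $F^{-1}((1-\delta)t)$ over $t>t_0$ and change variables, where the paper integrates the level hitting times $T(x)$ over $x\in[0,x_0]$ via Fubini --- the two computations (and the calibration of $t_0$ through the integral comparison $\sum_{t>t_0}e^{-\delta^2 t/(2(\sigma+1))}\le p$) are equivalent.
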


\section{Conclusions and Open Problems} \label{sec:conclusions}

In this paper we analyze discrete load balancing processes on graphs. 
As our main contribution we bound the discrepancy that arises in dynamic load balancing in three models, the random matching model, the balancing circuit model, and the asynchronous model.
Our results for the random matching model and the asynchronous model hold for $d$-regular graphs, while our analysis for the balancing circuit model applies to arbitrary graphs.

To the best of our knowledge our results constitute the first bounds for discrete, dynamic balancing processes on graphs.
Furthermore, our results improve the work by Alistarh et al.~\cite{DBLP:conf/icalp/AlistarhNS20} who prove that the expected discrepancy is bounded by $\sqrt{n}\log(n)$ in the (arguably simpler) continuous asynchronous process {\def\dcmhackI{^{\text{(cont)}}}$\AsyncProc{\ADistr(G)}{1}$}.
We improve their bound to $\sqrt{n \log(n)}$ and additionally show that it holds with high probability.
We conjecture that our results are tight up to polylogarithmic factors.
However, showing tight upper and lower bounds remains an open problem.

\paragraph{Results for Specific Graph Classes}

We show an overview of our bounds on the discrepancy for specific graph classes in \cref{table:disc:upperbound}. The corresponding results are formally derived in \cref{apx:hitting-time_spectral-gap} for the random matching model,  \cref{apx:bounds-specific-graphs-C} for the balancing circuit model, and  \cref{apx:async-bounds-graph-classes} for the asynchronous model.

\begin{table}[ht]
\caption {Asymptotic upper bounds on the discrepancy in specific graph classes.}
\label{table:disc:upperbound}
\def\hx#1{#1&}
\def\arraystretch{1.5}
\begin{tabularx}{\textwidth}{ Xccc }
\toprule
Graph & $\SyncProc{\RMDistr(G)}{1}{m}$ & $\SyncProc{\BCDistr(G)}{1}{m}$ &  $\AsyncProc{\ADistr(G)}{1}$\\[-1ex]
 & \cref{apx:hitting-time_spectral-gap} & \cref{apx:bounds-specific-graphs-C} & \cref{apx:async-bounds-graph-classes} \\

\midrule

\hx{$d$-regular graph\newline\small (const. $d$)}
$\log(n) + \sqrt{m\cdot\log(n)}$ 
&$\log(n) +\sqrt{m\cdot\log(n)}$
& $\sqrt{n\cdot\log(n)}$
\\

\hx{cycle $C_n$}
$\log(n) + \sqrt{m\cdot\log(n)}$ 
&$\log(n) +\sqrt{m\cdot\log(n)}$
& $\sqrt{n\cdot\log(n)}$
\\

\hx{2-D torus}
$\log(n) + \sqrt{m/n}\cdot \log^{3/2}(n)$ 
& $(1 + \sqrt{m/n})\cdot\log(n)$ 
& $\log^{3/2}(n)$\\

\hx{$r$-D torus\newline\small (const. $r \geq 3$)}
$(1 + \sqrt{m/n})\cdot\log(n)$ 
& $\log(n) + \sqrt{m/n\cdot\log(n)}$ 
& $\log(n)$ \\

\hx{hypercube}
$(1 + \sqrt{m/n})\cdot\log(n)$ 
& $(1 + \sqrt{m/n})\cdot\log(n)$ 
& $\log(n)$ \\

\bottomrule

\end{tabularx}
\end{table}

\paragraph{Open Problems}
We are confident that our results carry over to arbitrary graphs (as opposed to regular graphs), provided that there exists a lower bound on the probability $p_{min}$
with which an edge is used for balancing. 
However, to show bounds on the discrepancy one has to overcome fundamental problems such as the bias introduced by high-degree nodes. 
Another interesting open question is whether the results carry over to a model where the amount of load that may transmitted over an edge in each step is bounded by a constant. 
If only a single load item can be transferred per edge and step the problem is similar to the token distribution problem (see, for example, \cite{DBLP:journals/algorithmica/HeideOW96}).

Finally, we believe that one can also adapt our analysis to variant of a graphical balls-into-bins process.
The process works as follows.
In each step an edge $(i,j)$ is sampled uniformly at random. W.l.o.g.\ assume that the load of $i$ is smaller than the load of $j$ by an additive term~$\Delta$.
Then a biased coin is tossed showing heads with probability $p \coloneqq \min\{1, (1 + \AdaptiveParam \cdot \Delta) / 2\}$ and tails otherwise, where $\beta$ is a suitably chosen and non-constant parameter.
If the coin hits heads one item is allocated to $i$ and otherwise to $j$. 
A formal analysis of this allocation process (as well as of other, related balls-into-bins processes) is beyond the scope of our paper and remains an open problem.

\bibliography{references}

\appendix


\section{Auxiliary Results}

\subsection{Random Walks, Hitting Times, and Effective Resistance}
\label{apx:aux}

In this appendix we present for completeness fundamental definitions and relations concerning random walks, hitting times, and the effective resistance.
We start with a definition of the effective resistance of a network in \cref{def:effective_resistance}. For a motivation of the definition see \cite[Chapter 9]{LevinPeresbook}.
Further details and properties can also be found in~\cite{doyle84} and~\cite[Section 4]{lovasz1993random}.

\begin{definition}[Harmonic Functions and Effective Resistance]\label{def:effective_resistance}
Let $G$ be a graph and let $i,j \in [n]$ be nodes of the graph.
Then a \emph{harmonic function on $G$ with the poles $i$ and $j$}  (for unit edge weights) is a function $f : [n] \to \R$ such that
    for all $k \in [n] \setminus \{i,j\}$ we have
    $f(k) = \frac{1}{d(k)} \cdot \sum_{l \in N_G(k)} f(l)$,
    where $N_G(k)$ is the set of $k$'s neighbors in $G$.

Given a harmonic function $f$ on $G$ with the poles $i$ and $j$
    (with arbitrary boundary values $f(i) \neq f(j)$),
    the \emph{effective resistance} (or \emph{resistive distance} between $i$ and $j$ in $G$
    is given by \[\ResistiveDistance{i}{j} \coloneqq \frac{f(i) - f(j)}{\sum_{k \in N_G(i)} \abs{f(k) - f(i)}}.\]
Note that the value is not dependent on the boundary values of the harmonic function.
\end{definition}

Note that for boundary values $f(i)$ and $f(j)$ the harmonic function is unique \cite[Proposition 9.1]{LevinPeresbook}.

The following is a well-known property of effective resistances; it is a direct consequence of, e.g., Corollary 9.13 in \cite{LevinPeresbook}.
\begin{lemma}\label{lem:res_dist_leq_dist}
    Let $G$ be a graph, and write $\mathrm{d}(i,j)$ for the (standard) distance between $i$ and $j$ in $G$.
    Then $\ResistiveDistance{i}{j} \leq d(i,j)$.
\end{lemma}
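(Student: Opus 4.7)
The plan is to invoke Rayleigh's monotonicity principle together with the series law for resistors. Fix a shortest path $P = (i = v_0, v_1, \ldots, v_{d(i,j)} = j)$ in $G$, and let $G'$ be the spanning subgraph of $G$ whose edge set consists exactly of the $d(i,j)$ edges of $P$. Rayleigh's monotonicity says that deleting edges from a network can only increase the effective resistance between any two fixed vertices; hence the effective resistance between $i$ and $j$ in $G$ is at most the effective resistance between $i$ and $j$ in $G'$. But in $G'$ the only paths from $i$ to $j$ go along $P$, so $G'$ is (up to isolated vertices irrelevant for the computation) a series of $d(i,j)$ unit resistors, whose effective resistance is $d(i,j)$ by the standard series law. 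Combining these two facts gives $\ResistiveDistance{i}{j}\le d(i,j)$.

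An alternative route, more in line with the hint in the text, is to use the fact that effective resistance is a metric on $V(G)$ (Corollary 9.13 of \cite{LevinPeresbook}): applying the triangle inequality along the shortest path $P$ gives $\ResistiveDistance{i}{j} \le \sum_{k=0}^{d(i,j)-1} \ResistiveDistance{v_k}{v_{k+1}}$, and each summand is at most $1$, because collapsing $G$ down to the single edge $\{v_k, v_{k+1}\}$ yields resistance exactly $1$ and Rayleigh monotonicity then bounds the resistance in the full network by $1$. Summing the $d(i,j)$ contributions of at most $1$ yields the claimed inequality.

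Either route is routine and there is no real obstacle; the only thing to verify carefully is that Rayleigh monotonicity applies in the form needed (deletion of edges = setting their conductances to zero, which can only raise effective resistance), and that the harmonic function used in \cref{def:effective_resistance} gives the standard series-combination behaviour on a simple path. Both facts are classical and can be cited directly from \cite{LevinPeresbook} or \cite{doyle84}, so I would keep the written proof to one or two sentences referencing those results.
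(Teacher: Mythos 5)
Your proposal is correct and is essentially the paper's approach: the paper gives no written proof, stating the lemma as a direct consequence of Corollary 9.13 in \cite{LevinPeresbook}, and your argument (triangle inequality for the resistance metric plus Rayleigh monotonicity bounding each edge's effective resistance by $1$, or equivalently the series-law bound along a shortest path) is exactly the standard derivation that citation is meant to stand in for. Nothing is missing; either of your two routes would be fine to record in one sentence with the same references.
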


For a graph $G$, and nodes $i,j \in V(G)$,
    let $H(i,j)$ be the \emph{hitting time from $i$ to $j$}, i.e., the expected time for a random walk on $G$ starting at $i$ to reach $j$ for the first time.

\begin{theorem}[Theorem 4.1 (i) in~\cite{lovasz1993random}]
    \label{thm:commute_time_resistance_relation}
    Let $G$ be a graph.
    Then for any $i,j \in V(G)$,
        \[H(i,j) + H(j,i) = 2 \cdot \abs{E} \cdot \ResistiveDistance{i}{j}.\]
\end{theorem}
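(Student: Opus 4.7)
The plan is to invoke the classical electrical-network interpretation of random walks (essentially due to Chandra, Raghavan, Ruzzo, Smolensky, and Tiwari). View $G$ as a network in which every edge has unit resistance; the effective resistance $\ResistiveDistance{i}{j}$ is then the voltage drop between $i$ and $j$ when a unit current is forced from $i$ to $j$, which is consistent with \cref{def:effective_resistance}.

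Fix $j$ and consider the function $v(k) \coloneqq H(k,j)$, with the convention $v(j) = 0$. First-step analysis of the random walk gives, for $k \neq j$,
\[
v(k) = 1 + \frac{1}{d(k)} \sum_{l \sim k} v(l),
\]
equivalently $d(k)\,v(k) - \sum_{l \sim k} v(l) = d(k)$. Interpreting $v$ as a voltage assignment on the unit-resistance network, the left-hand side is exactly the net current flowing out of node $k$. Hence $v$ realizes the current distribution in which $d(k)$ amperes are injected at every node $k \neq j$; by Kirchhoff conservation, $\sum_{k \neq j} d(k) = 2\abs{E} - d(j)$ amperes are extracted at $j$. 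Analogously, $u(k) \coloneqq H(k,i)$ with $u(i) = 0$ realizes the current distribution injecting $d(k)$ at every $k \neq i$ and extracting $2\abs{E} - d(i)$ at $i$.

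By linearity of Kirchhoff's laws, $v - u$ is the voltage assignment corresponding to the superposed current distribution, whose net injection is $0$ at every $k \notin \{i,j\}$, $+2\abs{E}$ at $i$, and $-2\abs{E}$ at $j$ (a short arithmetic check at $i$ and $j$). Since forcing one unit of current from $i$ to $j$ produces a voltage drop of $\ResistiveDistance{i}{j}$, forcing $2\abs{E}$ units produces a drop of $2\abs{E}\cdot\ResistiveDistance{i}{j}$. Evaluating the endpoints using $v(i) = H(i,j)$, $v(j) = 0$, $u(i) = 0$, and $u(j) = H(j,i)$,
\[
(v-u)(i) - (v-u)(j) = H(i,j) + H(j,i) = 2\abs{E}\cdot\ResistiveDistance{i}{j},
\]
which is the desired identity.

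The only delicate point is bridging the probabilistic definition of effective resistance in \cref{def:effective_resistance} with the electrical interpretation above: one must verify that the hitting-time function $v$ is indeed the unique voltage response to the specified current injection, and that the ratio of voltage drop at the poles to current out of the source pole reproduces the quantity in \cref{def:effective_resistance}. Both statements follow from uniqueness of solutions to the discrete Poisson equation $\Laplacian f = I$ on a connected graph (with one value fixed), a standard consequence of the maximum principle that I would include for completeness.
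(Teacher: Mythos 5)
Your proof is correct: the first-step recurrence, the interpretation of $H(\cdot,j)$ as the voltage response to injecting $d(k)$ units of current at every $k \neq j$, and the superposition yielding a flow of $2\abs{E}$ units from $i$ to $j$ constitute exactly the classical commute-time argument. The paper itself gives no proof of this statement --- it imports it verbatim as Theorem 4.1(i) of \cite{lovasz1993random} --- and your argument is essentially the one in that cited source (going back to Chandra, Raghavan, Ruzzo, Smolensky, and Tiwari), including the correct observation that the maximum principle is what makes the paper's harmonic-function definition of $\ResistiveDistance{i}{j}$ (with absolute values in the denominator) agree with the voltage-drop-per-unit-current interpretation you use.
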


\begin{corollary}
    \label{cor:hitting_time_resistance_relation_simple}
    Let $G$ be a graph.
    Then for any $i,j \in V(G)$,
        \[\max\{H(i,j), H(j,i)\} \leq 2 \cdot \abs{E(G)} \cdot \ResistiveDistance{i}{j} \leq 2\cdot \max\{H(i,j), H(j,i)\}.\]
\end{corollary}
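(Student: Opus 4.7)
The statement follows almost immediately from Theorem \ref{thm:commute_time_resistance_relation}, which gives the identity $H(i,j) + H(j,i) = 2 \cdot |E(G)| \cdot \ResistiveDistance{i}{j}$. The plan is to observe that for any two non-negative reals $a, b \geq 0$ one has the elementary sandwich $\max\{a,b\} \leq a + b \leq 2 \max\{a,b\}$, and then specialize this to $a = H(i,j)$ and $b = H(j,i)$, both of which are clearly non-negative since hitting times are non-negative expected values.

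Concretely, I would first apply the upper half of the sandwich: $\max\{H(i,j), H(j,i)\} \leq H(i,j) + H(j,i) = 2 \cdot |E(G)| \cdot \ResistiveDistance{i}{j}$, which gives the left inequality of the corollary. Then I would apply the lower half: $2 \cdot |E(G)| \cdot \ResistiveDistance{i}{j} = H(i,j) + H(j,i) \leq 2 \max\{H(i,j), H(j,i)\}$, which gives the right inequality. Together these are precisely the two inequalities claimed.

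There is no real obstacle here; the only thing worth noting is the non-negativity of hitting times, which is automatic. The corollary is essentially a restatement of Theorem \ref{thm:commute_time_resistance_relation} in a one-sided form that is more convenient for bounding a single directional hitting time (as needed elsewhere in the paper) rather than the symmetric commute time.
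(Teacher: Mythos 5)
Your proposal is correct and takes the same route as the paper: both apply \cref{thm:commute_time_resistance_relation} together with the elementary sandwich $\max\{a,b\} \le a+b \le 2\max\{a,b\}$ for non-negative $a,b$, specialized to the two directional hitting times. Nothing is missing.
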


\begin{proof}
    For the first inequality,
        since one of $H(i,j)$ and $H(j,i)$ is at least the maximum of the two,
            we have, by \cref{thm:commute_time_resistance_relation}:
    \[\max\{H(i,j), H(j,i)\} \leq H(i,j) + H(j,i) = 2 \cdot \abs{E(G)} \cdot \ResistiveDistance{i}{j}.\]
    And for the second inequality,
        since both $H(i,j)$ and $H(j,i)$ are at most the maximum of the two,
            we have, again by \cref{thm:commute_time_resistance_relation}
    \[2 \cdot \abs{E(G)} \cdot \ResistiveDistance{i}{j} = H(i,j) + H(j,i) \leq 2 \cdot \max\{H(i,j), H(j,i)\},\]
    as claimed.
\end{proof}

\begin{theorem}[Dirichlet's principle, see Exercise 2.13 in \cite{lyons_peres_2017}; or Exercise 9.9 in \cite{LevinPeresbook}, referencing Theorem 6.1 in  \cite{liggett1985interacting}]
    \label{thm:dirichlet_principle}
    Let $u, v$ be distinct nodes of a graph $G$.
    Then
    \[\min_{\substack{\vec{a} \in \R^n \\ a_v = 1 \\ a_u = 0}} \EdgePotential_G(\vec{a})
        = \frac{1}{\ResistiveDistance{u}{v}}.\]
\end{theorem}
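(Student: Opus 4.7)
\textbf{Proof proposal for \cref{thm:dirichlet_principle}.} The plan is to show that the minimizer is precisely the harmonic function $f$ on $G$ with poles $u,v$ normalized so that $f(u)=0$ and $f(v)=1$, and then compute $\EdgePotential_G(f)$ using a discrete Green/summation-by-parts identity, matching it to $1/\ResistiveDistance{u}{v}$ through \cref{def:effective_resistance}.

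First, I would establish existence and characterization of the minimizer. The functional $\vec{a}\mapsto\EdgePotential_G(\vec{a})=\sum_{\{i,j\}\in E(G)}(a_i-a_j)^2$ is a nonnegative convex quadratic form. Assuming $G$ is connected (which we may, otherwise $\ResistiveDistance{u}{v}=\infty$ and both sides are $0$ with the convention $1/\infty=0$), the only vectors in its kernel are constants, and these violate the boundary constraints $a_u=0,a_v=1$. Hence the quadratic is strictly convex on the affine feasible set, so the minimum is attained at a unique point $\vec{a}^*$. Taking $\partial/\partial a_k$ at $\vec{a}^*$ for every internal $k\in [n]\setminus\{u,v\}$ gives
\[
0 \;=\; 2\sum_{l\in N_G(k)}(a^*_k - a^*_l)\quad\Longleftrightarrow\quad a^*_k \;=\; \frac{1}{d(k)}\sum_{l\in N_G(k)} a^*_l,
\]
which is exactly the harmonicity condition of \cref{def:effective_resistance}. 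Thus $\vec{a}^*=f$, the unique harmonic function with $f(u)=0,f(v)=1$.

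Next, I would evaluate $\EdgePotential_G(f)$ via the discrete Green identity. Writing $(Lf)_i \coloneqq \sum_{l\in N_G(i)}(f_i-f_l)$ and using that each unordered edge appears twice when summing over ordered incidences,
\[
2\,\EdgePotential_G(f)
 \;=\; \sum_{i\in[n]}\sum_{l\in N_G(i)}(f_i-f_l)^2
 \;=\; \sum_{i\in[n]} f_i\sum_{l\in N_G(i)}(f_i-f_l) \;-\; \sum_{i\in[n]}\sum_{l\in N_G(i)} f_l(f_i-f_l).
\]
By reindexing, the second double sum equals $-\sum_i f_i \sum_{l\in N_G(i)}(f_i-f_l)$, so both terms combine to $2\sum_i f_i (Lf)_i$. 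By harmonicity $(Lf)_i=0$ for $i\notin\{u,v\}$, and $f(u)=0$, leaving
\[
\EdgePotential_G(f) \;=\; f(v)\,(Lf)_v \;=\; \sum_{k\in N_G(v)}(f(v)-f(k)).
\]
By the maximum principle for harmonic functions we have $f(k)\in[0,1]$ for all $k$, so $|f(k)-f(v)|=f(v)-f(k)$ and the right-hand side above is exactly $\sum_{k\in N_G(v)}|f(k)-f(v)|$. Substituting into \cref{def:effective_resistance} with poles swapped (equivalently, setting the pole at $v$ and using that $\ResistiveDistance{u}{v}$ is independent of the chosen boundary values),
\[
\ResistiveDistance{u}{v} \;=\; \frac{f(v)-f(u)}{\sum_{k\in N_G(v)}|f(k)-f(v)|} \;=\; \frac{1}{\EdgePotential_G(f)},
\]
which yields $\min \EdgePotential_G = \EdgePotential_G(f) = 1/\ResistiveDistance{u}{v}$.

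The main obstacle I anticipate is purely bookkeeping: the definition of $\ResistiveDistance{u}{v}$ in \cref{def:effective_resistance} normalizes the ``flow'' at node $i$ (here $u$), not $v$, so I must either verify symmetry of the definition or rewrite the calculation using $(Lf)_u=-(Lf)_v$ (which itself follows from $\sum_i (Lf)_i=0$, a consequence of the fact that each edge contributes $+1$ and $-1$ to this sum). Once that symmetry is in hand, the identification is immediate. Everything else is routine calculus and a standard discrete integration-by-parts argument.
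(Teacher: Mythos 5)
Your proof is correct, but note that the paper does not prove \cref{thm:dirichlet_principle} at all: it is quoted as a known result with pointers to Lyons--Peres and Levin--Peres, and is then used as a black box (in the proof of \cref{lem:edge_potential_bounds}). What you have written is essentially the standard textbook derivation of Dirichlet's principle: strict convexity of the quadratic form $\EdgePotential_G$ on the affine constraint set (using connectivity so that the kernel consists of constants), the Euler--Lagrange/first-order conditions identifying the unique minimizer with the harmonic function $f$ having poles $u,v$ and boundary values $0,1$, and a discrete summation-by-parts identity giving $\EdgePotential_G(f)=\sum_i f_i\sum_{l\in N_G(i)}(f_i-f_l)=f(v)\sum_{k\in N_G(v)}(f(v)-f(k))$. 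All of these steps check out, including the reindexing step and the use of the maximum principle to drop the absolute values.

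Two small points worth tightening if this were to be written out in full. First, the bookkeeping issue you flag is real: \cref{def:effective_resistance} normalizes by the net flow at the \emph{first} pole, and as stated it is only sign-consistent when the boundary values satisfy $f(i)>f(j)$; your resolution via $\sum_i\sum_{l\in N_G(i)}(f_i-f_l)=0$, hence $\sum_{k\in N_G(u)}\lvert f(k)-f(u)\rvert=\sum_{k\in N_G(v)}\lvert f(k)-f(v)\rvert$ for the extremal boundary values, is exactly the right fix and also shows $\ResistiveDistance{u}{v}=\ResistiveDistance{v}{u}$. Second, you should say a word about why the minimum is attained (coercivity of a positive-definite quadratic restricted to the affine set, in finite dimension), and the disconnected case is better excluded by assumption than handled by a $1/\infty$ convention, since then no harmonic function with the prescribed poles normalizes the denominator in \cref{def:effective_resistance}. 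Neither point affects correctness; your argument buys a self-contained proof where the paper simply cites the literature.
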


\begin{theorem}[Corollary 3.3 in \cite{lovasz1993random}, applied to $d$-regular graphs]\label{thm:spectral_bound_on_commute_time}
    Let $G$ be an arbitrary graph on $n$ nodes.
    Then \[n \leq H(i,j) + H(j,i) \leq \frac{n}{\SpectralGap(\Laplacian(G))}.\]
\end{theorem}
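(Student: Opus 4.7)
The plan is to reduce both inequalities to bounds on the effective resistance $\ResistiveDistance{i}{j}$ via the commute time formula, which is exactly \cref{thm:commute_time_resistance_relation} quoted just above. Since the theorem is stated as ``applied to $d$-regular graphs'' and $\Laplacian(G)$ is only defined in the excerpt for regular $G$, I would work under the $d$-regular assumption, so that $2\abs{E(G)} = nd$ and the task becomes showing $1/d \leq \ResistiveDistance{i}{j} \leq 1/\SpectralGap(\Laplacian(G))$; the stated inequalities then follow from multiplying through by $nd$.

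For the \emph{upper bound}, I would exploit the spectral representation of the pseudoinverse of the unnormalized Laplacian $L = d \Laplacian(G)$. Fixing an orthonormal eigenbasis $v_0, \ldots, v_{n-1}$ of $\Laplacian(G)$ with eigenvalues $0 = \lambda_0 < \lambda_1 \leq \cdots \leq \lambda_{n-1}$, and using the standard identity $\ResistiveDistance{i}{j} = (\UnitVec{i} - \UnitVec{j})^\top L^{\dagger}(\UnitVec{i} - \UnitVec{j})$, I would expand
\[\ResistiveDistance{i}{j} = \frac{1}{d}\sum_{k \geq 1}\frac{\inp{\UnitVec{i} - \UnitVec{j}}{v_k}^2}{\lambda_k} \leq \frac{1}{d \cdot \SpectralGap(\Laplacian(G))}\sum_{k \geq 1}\inp{\UnitVec{i} - \UnitVec{j}}{v_k}^2.\]
Since the constant eigenvector $v_0 \propto \vec{1}$ is orthogonal to $\UnitVec{i} - \UnitVec{j}$, the remaining Parseval sum equals $\norm{\UnitVec{i} - \UnitVec{j}}_2^2 = 2$. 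Plugging this back and multiplying by $nd$ gives a commute-time upper bound of $2n/\SpectralGap(\Laplacian(G))$; closing the factor-of-two gap to the asserted constant would use the sharper symmetrization argument in \cite{lovasz1993random}, which I would simply cite.

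For the \emph{lower bound}, I would give a purely electric argument. Set up the electrical network on $G$ with unit resistance on each edge, inject one unit of current at $i$ and extract it at $j$, and let $V\colon V(G) \to \R$ be the resulting node potentials, normalized so that $V(j) = 0$. Then by definition $V(i) = \ResistiveDistance{i}{j}$, and by the maximum/minimum principle applied to the harmonic function $V$ (harmonic at every node other than $i$ and $j$) one has $0 \leq V(k) \leq V(i)$ for every $k$. Kirchhoff's current law at $i$ reads $\sum_{k \sim i}(V(i) - V(k)) = 1$, and bounding each of the $d$ summands by $V(i)$ yields $V(i) \geq 1/d$, hence $H(i,j) + H(j,i) = nd \cdot \ResistiveDistance{i}{j} \geq n$.

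The main obstacle is the constant-tightening in the upper bound: the direct spectral computation loses a factor of two, and recovering the stated constant requires a slightly more delicate averaging argument over pairs of vertices, which is carried out in \cite{lovasz1993random}. Since the theorem is explicitly quoted from that reference, my plan in a final write-up would be to give the commute-time reduction and the electric lower bound in full, and to cite the original for the sharpened spectral estimate rather than re-deriving it.
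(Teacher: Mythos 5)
The paper contains no proof of this statement at all---it is imported verbatim as Corollary~3.3 of \cite{lovasz1993random}---so there is no internal argument to compare yours against; what follows assesses your proof on its own terms. Your reduction through the commute-time identity $H(i,j)+H(j,i)=2\abs{E(G)}\cdot\ResistiveDistance{i}{j}$ (\cref{thm:commute_time_resistance_relation}) is the right move, your restriction to $d$-regular $G$ is consistent with how the theorem is stated and used, and your lower bound is complete: the electrical argument ($V(j)=0$, $0\le V(k)\le V(i)=\ResistiveDistance{i}{j}$ by the maximum principle, Kirchhoff at $i$ giving $1\le d\cdot V(i)$) yields $\ResistiveDistance{i}{j}\ge 1/d$ and hence $H(i,j)+H(j,i)\ge n$. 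Your Parseval computation $\ResistiveDistance{i}{j}\le 2/\bigl(d\cdot\SpectralGap(\Laplacian(G))\bigr)$ is also correct.

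The one genuine issue is the step you defer to the citation: the factor of two cannot be closed, because the constant-$1$ statement is false. For $G=K_n$ the commute time between any two nodes is $2(n-1)$ (effective resistance $2/n$ and $2\abs{E}=n(n-1)$), whereas $n/\SpectralGap(\Laplacian(K_n))=n\cdot\frac{n-1}{n}=n-1$. Lovász's Corollary~3.3 actually gives $\abs{E}\bigl(\tfrac{1}{d(i)}+\tfrac{1}{d(j)}\bigr)\le H(i,j)+H(j,i)\le \frac{2\abs{E}}{1-\lambda_2}\bigl(\tfrac{1}{d(i)}+\tfrac{1}{d(j)}\bigr)$, with $\lambda_2$ the second-largest eigenvalue of the transition matrix; since $1-\lambda_2=\SpectralGap(\Laplacian(G))$ for regular graphs, this is exactly $n\le H(i,j)+H(j,i)\le 2n/\SpectralGap(\Laplacian(G))$---precisely the bound your ``lossy'' computation already delivers. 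So do not hunt for a sharper symmetrization argument; instead note that the theorem as transcribed in the paper drops a factor $2$. This slip is harmless downstream, since the statement only enters through $\Oh(\cdot)$ bounds (e.g.\ $\EdgeHittingTime/n=\Oh(1/\SpectralGap(\Laplacian(G)))$ in the proof of \cref{lem:initial:load:vanishes}), and with the constant corrected to $2n/\SpectralGap(\Laplacian(G))$ your write-up is a complete and self-contained proof.
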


\subsection{Tail Bounds}
\label{apx:known-results-probability-theory}

The following lemma allows us to turn a high-probability bound into a bound on the expected value. 
We consider this result folklore.
For completeness we give a formal proof below.

\begin{lemma}\label{lem:tail_bound_to_expectation_bound}
    Let $X$ be a non-negative real random variable, and let $n \in \N$.
    Then if there are $c, C > 0$ such that for all $\gamma > 0$,
        \[\AutoProb{X \geq (\gamma + 1)C} \leq c n^{-\gamma},\]
    then
        \[\AutoExp{X} \leq C \cdot \left(1 + \frac{c}{\log(n)}\right).\]
\end{lemma}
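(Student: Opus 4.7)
The plan is to use the standard layer-cake representation $\AutoExp{X} = \int_0^\infty \AutoProb{X \geq t}\,\dt$, which is valid since $X \geq 0$, and split the integral at $t = C$ to apply the two natural bounds on the tail.

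For the lower part, on $t \in [0, C]$ I would simply use the trivial bound $\AutoProb{X \geq t} \leq 1$, contributing at most $C$ to the integral. For the upper part, on $t \in [C, \infty)$, I would substitute $t = (\gamma + 1)C$, so that $\dt = C\,\mathrm{d}\gamma$ and $\gamma$ ranges over $(0, \infty)$; the hypothesis then yields
\[
\int_C^\infty \AutoProb{X \geq t}\,\dt
= C \int_0^\infty \AutoProb{X \geq (\gamma+1)C}\,\mathrm{d}\gamma
\leq C \int_0^\infty c n^{-\gamma}\,\mathrm{d}\gamma
= \frac{C \cdot c}{\log n},
\]
where the last step uses $\int_0^\infty n^{-\gamma}\,\mathrm{d}\gamma = \int_0^\infty e^{-\gamma \log n}\,\mathrm{d}\gamma = 1/\log n$.

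Summing the two contributions gives $\AutoExp{X} \leq C + C \cdot c/\log n = C\,(1 + c/\log n)$, as required. There is no real obstacle here; the only minor care is to confirm that the layer-cake formula applies (it does since $X \geq 0$) and that the change of variables is handled correctly. The argument is completely routine, which matches the paper's description of the result as folklore.
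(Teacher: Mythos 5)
Your proof is correct and follows essentially the same route as the paper: the layer-cake identity, the trivial bound on $[0,C]$, and integration of the assumed tail bound on $[C,\infty)$ (the paper just rewrites the hypothesis as $\AutoProb{X \geq x} \leq c\,n^{-x/C+1}$ and integrates in $x$ rather than substituting $t=(\gamma+1)C$, which is the same computation). No gaps.
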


\begin{proof}
    Observe that when $x = (\gamma + 1)C$ we have $\gamma = \frac{x}{C} - 1$, so that for all $x \geq C$ we have
        \[\AutoProb{X \geq x} \leq c\cdot n^{-\frac{x}{C}+1}.\]
    Thus,
    \begin{align*}
    \AutoExp{X}
       &= \int_0^\infty \AutoProb{X \geq x}\,\dx
        = \int_0^C \AutoProb{X \geq x}\,\dx + \int_C^\infty \AutoProb{X \geq x}\,\dx
    \\ &\leq \int_0^C 1\,\dx + \int_C^\infty c \cdot n^{-\frac{x}{C}+1}\,\dx
        = C + \left[-\,\frac{c C n^{1 - \frac{x}{C}}}{\log(n)}\right]_{x=C}^\infty
        = C + \left[0 + \frac{cCn^{1-1}}{\log(n)}\right]
    \\ &= C\left(1 + \frac{c}{\log(n)}\right),
    \end{align*}
        as claimed.
\end{proof}

\begin{theorem}[Bhatia-Davis inequality \cite{DBLP:journals/tamm/BhatiaD00}]
    \label{thm:bhatia-davis}
    Let $X$ be a real random variable with $X \in [m, M]$.
    Then \(\AutoVar{X} \leq (M - \AutoExp{X})(\AutoExp{X} - m).\)
\end{theorem}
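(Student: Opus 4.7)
The plan is to use the standard one-line proof based on the non-negativity of $(M-X)(X-m)$. Since $m \leq X \leq M$ almost surely, both factors are non-negative, so their product is non-negative, and hence so is its expectation.

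First I would expand the product pointwise:
\[(M-X)(X-m) = -X^2 + (M+m)X - Mm.\]
Taking expectations and using linearity, the inequality $\E[(M-X)(X-m)] \geq 0$ becomes
\[\E[X^2] \leq (M+m)\,\E[X] - Mm.\]

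Next I would subtract $\E[X]^2$ from both sides to obtain the variance on the left. This gives
\[\Var[X] = \E[X^2] - \E[X]^2 \leq (M+m)\,\E[X] - Mm - \E[X]^2.\]
The final step is a direct algebraic identity: the right-hand side factors as $(M - \E[X])(\E[X] - m)$, which can be verified by expanding $(M - \E[X])(\E[X] - m) = M\E[X] - Mm - \E[X]^2 + m\E[X]$.

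There is no real obstacle here; the only subtlety is recognizing that the bound $(M-X)(X-m) \geq 0$ holds pointwise (hence in expectation) precisely because $X$ is supported on $[m,M]$. The argument makes no use of independence, continuity, or any further structural assumption on $X$.
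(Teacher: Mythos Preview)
Your proof is correct; this is exactly the standard one-line argument for the Bhatia--Davis inequality. The paper does not give its own proof of this statement at all---it simply cites it as a known result from \cite{DBLP:journals/tamm/BhatiaD00}---so there is nothing to compare against.
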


\begin{theorem}[Azuma--Hoeffding inequality Theorem 13.6 in \cite{DBLP:books/daglib/0012859}]
    \label{thm:azuma_hoeffding}
    Let $(X(t))_{t=0}^n$ be a martingale associated with the filter $(\mathcal{F}(t))_{t=0}^n$,
        where there exist non-negative sequences $(a_t)_{t=1}^n$, $(b_t)_{t=1}^n$ and $(\sigma_t)_{t=1}^n$
        such that for all $t \in [n]$,
             \[-b_t \leq X(t) - X(t-1) \leq a_t.\]
    Then for all $\varepsilon > 0$,
        \[\AutoProb{\abs*{X(n)- \AutoExp{X(n)}} \ge \varepsilon} \leq 2\exp\left(-\,\frac{2 \varepsilon^2}{\sum_{i=1}^n (a_t + b_t)^2}\right).\]
\end{theorem}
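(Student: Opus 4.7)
The plan is to use the standard exponential moment (Chernoff) method adapted to martingales, which is due to Azuma. Let $Y(t) \coloneqq X(t) - \AutoExp{X(t)}$. Since $(X(t))$ is a martingale, $\AutoExp{X(t)} = \AutoExp{X(0)}$ for every $t$, so $(Y(t))$ is also a martingale with respect to $(\mathcal{F}(t))$ and the differences $D_t \coloneqq Y(t) - Y(t-1) = X(t) - X(t-1)$ satisfy $\AutoExpCond{D_t}{\mathcal{F}(t-1)} = 0$ together with $-b_t \leq D_t \leq a_t$. Our target event can be rewritten as $\{\abs{Y(n) - Y(0)} \geq \varepsilon\}$ (assuming w.l.o.g.\ $X(0)$ is $\mathcal{F}(0)$-measurable so $Y(0) = 0$), that is, on $\{\abs{\sum_{t=1}^{n} D_t} \geq \varepsilon\}$.

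The first step is to apply Markov's inequality to $e^{\lambda \sum_t D_t}$ for an arbitrary $\lambda > 0$:
\[\BigAutoProb{\textstyle\sum_{t=1}^{n} D_t \geq \varepsilon} \leq e^{-\lambda \varepsilon} \cdot \BigAutoExp{e^{\lambda \sum_{t=1}^{n} D_t}}.\]
The second step is to bound the moment generating function of each $D_t$ conditionally on $\mathcal{F}(t-1)$ via Hoeffding's lemma: for any real random variable $Z$ with $\AutoExp{Z} = 0$ and $Z \in [-b, a]$ one has $\AutoExp{e^{\lambda Z}} \leq \exp(\lambda^2 (a+b)^2 / 8)$. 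The proof of this lemma uses the convexity of the exponential to write $e^{\lambda Z} \leq \tfrac{a-Z}{a+b} e^{-\lambda b} + \tfrac{Z+b}{a+b} e^{\lambda a}$, takes expectations, and then shows that the logarithm of the resulting function of $\lambda$ is bounded above by $\lambda^2 (a+b)^2 / 8$ via a standard Taylor-with-remainder argument on its second derivative. Applied to $D_t$ conditionally on $\mathcal{F}(t-1)$, this yields
\[\BigAutoExpCond{e^{\lambda D_t}}{\mathcal{F}(t-1)} \leq \exp\!\left(\lambda^2 (a_t + b_t)^2 / 8\right).\]

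The third step is to chain these conditional bounds through the tower property. Since $D_1, \ldots, D_{n-1}$ are $\mathcal{F}(n-1)$-measurable, conditioning on $\mathcal{F}(n-1)$ pulls $e^{\lambda \sum_{t<n} D_t}$ out of the expectation, giving
\[\BigAutoExp{e^{\lambda \sum_{t=1}^{n} D_t}} \leq \BigAutoExp{e^{\lambda \sum_{t=1}^{n-1} D_t}} \cdot \exp(\lambda^2 (a_n + b_n)^2 / 8).\]
Iterating this down to $t=1$ produces the product bound $\exp\!\left(\lambda^2 \sum_{t=1}^{n} (a_t + b_t)^2 / 8\right)$. Substituting back into Markov and optimizing by choosing $\lambda = 4 \varepsilon / \sum_{t=1}^{n} (a_t + b_t)^2$ yields the one-sided tail $\exp(-2\varepsilon^2 / \sum_{t=1}^{n} (a_t + b_t)^2)$.

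Finally, the lower tail follows by applying the same argument to the martingale $(-X(t))$, whose increments satisfy the symmetric bounds $-a_t \leq -D_t \leq b_t$, so $(a_t + b_t)$ remains unchanged. A union bound over the two tails introduces the factor of $2$ in the statement. The only genuinely non-trivial step is Hoeffding's lemma itself; the passage from it to the martingale tail bound is a routine combination of the tower property and the exponential moment method.
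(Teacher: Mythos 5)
Your proposal is the standard and correct proof of Azuma--Hoeffding: Hoeffding's lemma applied conditionally to each increment, chained via the tower property, Markov's inequality on the exponential moment with the optimal choice $\lambda = 4\varepsilon/\sum_t (a_t+b_t)^2$, and a union bound over the two tails. There is nothing in the paper to compare it against: the paper imports this statement by citation (Theorem 13.6 of the referenced textbook) and gives no proof, using it only as a black box in the proof of \cref{lem:rounding:errors:are:small}. One small imprecision worth fixing: $\mathcal{F}(0)$-measurability of $X(0)$ does not give $Y(0)=0$; you need $X(0)$ to be almost surely constant (equivalently, $\mathcal{F}(0)$ trivial), since otherwise $X(n)-\AutoExp{X(n)}$ differs from $\sum_{t=1}^n D_t$ by the nondegenerate term $X(0)-\AutoExp{X(0)}$ and the stated bound can fail. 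This is the setting in which the theorem is actually invoked in the paper (the partial sums start deterministically at $0$), so the fix is just to state that hypothesis explicitly rather than as a ``w.l.o.g.''
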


\begin{theorem}[Adapted from Theorem 6.6 in~\cite{DBLP:journals/im/ChungL06}]
    \label{thm:chung66}
    Let $(X(t))_{t=0}^n$ be a martingale associated with the filter $(\mathcal{F}(t))_{t=0}^n$,
        where there exist $(a_t)_{t=1}^n$ and $(\sigma_t)_{t=1}^n$
        such that for all $t \in [n]$,
        \begin{enumerate}
            \item $X(t) - X(t-1) \geq a_t$;
            \item $\AutoVarCond{X(t)}{\mathcal{F}(t-1)} \leq \sigma_t^2$.
        \end{enumerate}
    Then for all $\varepsilon > 0$,
        \[\AutoProb{X(n) \leq \AutoExp{X(n)} - \varepsilon} \leq \smash[t]{\exp\left(-\,\frac{\varepsilon^2}{2 \sum_{i=1}^n (a_t^2 + \sigma_t^2)}\right)}.\]
\end{theorem}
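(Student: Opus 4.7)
}
The plan is to apply the standard exponential-moment (Chernoff--Cram\'er) method for martingales in its one-sided variant. Since $(X(t))$ is a martingale, $\AutoExp{X(n)} = \AutoExp{X(0)} = X(0)$, and writing $Y_t \coloneqq X(t) - X(t-1)$ for the martingale differences and $S_n \coloneqq -\sum_{t=1}^n Y_t$, the event $\{X(n) \leq \AutoExp{X(n)} - \varepsilon\}$ coincides with $\{S_n \geq \varepsilon\}$. For any $\lambda > 0$ the exponential Markov inequality then gives
\[ \AutoProb{S_n \geq \varepsilon} \;\leq\; e^{-\lambda \varepsilon} \cdot \AutoExp{e^{\lambda S_n}}. \]

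Next I would factor the moment generating function via the tower property. Since $e^{\lambda S_{n-1}}$ is $\mathcal{F}(n-1)$-measurable,
\[ \AutoExp{e^{\lambda S_n}} = \BigAutoExp{e^{\lambda S_{n-1}} \cdot \AutoExpCond{e^{-\lambda Y_n}}{\mathcal{F}(n-1)}}, \]
and iterating this identity $n$ times reduces everything to the problem of uniformly bounding each conditional MGF $\AutoExpCond{e^{-\lambda Y_t}}{\mathcal{F}(t-1)}$ by a deterministic constant. Observe that the hypothesis $Y_t \geq a_t$ together with $\AutoExpCond{Y_t}{\mathcal{F}(t-1)} = 0$ forces $a_t \leq 0$, so setting $A_t \coloneqq -a_t \geq 0$, the conditional random variable $Z_t \coloneqq -Y_t$ satisfies $Z_t \leq A_t$, zero conditional mean, and conditional variance at most $\sigma_t^2$.

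The core analytic step is a one-variable MGF bound: for any real random variable $Z$ with $\AutoExp{Z} = 0$, $\AutoVar{Z} \leq v$, and $Z \leq A$ almost surely (with $A \geq 0$), I claim
\[ \AutoExp{e^{\lambda Z}} \;\leq\; \exp\!\left(\lambda^2 (A^2 + v)/2\right) \quad \text{for } \lambda > 0. \]
To establish this I would expand $e^{\lambda Z} = 1 + \lambda Z + \tfrac{(\lambda Z)^2}{2}\,g(\lambda Z)$, where $g(x) = 2(e^x - 1 - x)/x^2$ is increasing; the one-sided bound $Z \leq A$ gives $g(\lambda Z) \leq g(\lambda A)$, and taking expectation kills the linear term by $\AutoExp{Z} = 0$. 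Combining with $\AutoExp{Z^2} \leq v$ and $1 + y \leq e^y$, one obtains an upper bound of the form $\exp(\lambda^2 v \cdot g(\lambda A)/2)$; absorbing the deterministic factor $g(\lambda A)$ via the elementary estimate $v \cdot g(\lambda A) \leq v + A^2$ (valid for $\lambda$ in the range produced by the optimization below) yields the claimed combination $A^2 + v$. Applied conditionally with $Z = Z_t$, this gives $\AutoExpCond{e^{-\lambda Y_t}}{\mathcal{F}(t-1)} \leq \exp(\lambda^2 (a_t^2 + \sigma_t^2)/2)$.

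Substituting back and chaining through the tower product yields
\[ \AutoProb{S_n \geq \varepsilon} \;\leq\; \exp\!\left(-\lambda \varepsilon + \frac{\lambda^2}{2}\sum_{t=1}^n (a_t^2 + \sigma_t^2)\right)\!, \]
and the stated bound follows by optimizing over $\lambda$, i.e., choosing $\lambda = \varepsilon / \sum_{t=1}^n (a_t^2 + \sigma_t^2)$. I expect the main obstacle to be the MGF lemma: getting exactly the clean combination $A^2 + v$ in the exponent (rather than the tighter but less convenient Bennett form $v \cdot (e^{\lambda A} - 1 - \lambda A)/A^2$) forces one to either restrict to a range of $\lambda$ consistent with the optimizer, or to give a direct moment-by-moment argument exploiting both the one-sided bound and the variance simultaneously. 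Verifying compatibility between the range required by the MGF lemma and the optimal $\lambda$ is the step on which I would spend most of the technical effort, and it is essentially the adaptation of Chung--Lu's Theorem~6.6 to the present one-sided variance formulation referenced in the statement.
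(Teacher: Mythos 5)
The paper does not prove this statement at all: it is imported verbatim (with $M=0$) from Chung--Lu's martingale survey, so there is no internal proof to compare against. Your plan is essentially a self-contained re-derivation of that result by the exponential-moment method, and the skeleton is sound: pass to $Z_t=-(X(t)-X(t-1))\le -a_t=:A_t$ with zero conditional mean and conditional variance at most $\sigma_t^2$, bound each conditional MGF, chain via the tower property, and optimize $\lambda=\varepsilon/\sum_t(a_t^2+\sigma_t^2)$. (One small caveat you should state explicitly: writing $\AutoExp{X(n)}=X(0)$ assumes $X(0)$ is deterministic; the statement is only correct under that reading, which is how the paper uses it, with $X(0)=0$.)

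The one place your sketch would actually break is exactly the step you flag. The route ``Bennett form, then absorb $g(\lambda A)$ via $v\cdot g(\lambda A)\le v+A^2$'' needs $v\,(g(\lambda A)-1)\le A^2$, and at the optimizing $\lambda$ this can fail whenever $\sigma_t^2$ is large compared to $a_t^2$ (e.g.\ $A_t\equiv A$, $\sigma_t\equiv\sigma$ gives $\lambda^\ast A\le A^2/(A^2+\sigma^2)$, and the condition reduces to $\sigma^2(g(\lambda^\ast A)-1)\le A^2$, false for $\sigma^2\gg A^2$); also the fallback $\E e^{\lambda Z}\le e^{\lambda A}$ does not cover the remaining range. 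The lemma itself is nevertheless true for \emph{all} $\lambda>0$, and the clean fix is a pointwise quadratic majorization rather than Bennett: for $x\le c$ one has $e^{x-c^2/2}\le 1+x+x^2/2$, because $\phi(x)\coloneqq\log(1+x+x^2/2)-x$ is non-increasing ($\phi'(x)=-\tfrac{x^2/2}{1+x+x^2/2}\le 0$), so it suffices to check $x=c$, i.e.\ $\psi(c)\coloneqq\log(1+c+c^2/2)-c+c^2/2\ge 0$, which holds since $\psi(0)=0$ and $\psi'(c)=c\cdot\tfrac{1+c/2+c^2/2}{1+c+c^2/2}>0$. Applying this with $x=\lambda Z_t$, $c=\lambda A_t$ and taking conditional expectations yields
\[
\BigAutoExpCond{e^{\lambda Z_t}}{\mathcal{F}(t-1)}\;\le\; e^{\lambda^2 A_t^2/2}\left(1+\tfrac{\lambda^2\sigma_t^2}{2}\right)\;\le\;\exp\!\left(\tfrac{\lambda^2}{2}\left(a_t^2+\sigma_t^2\right)\right)
\]
for every $\lambda>0$, so no range restriction is needed and your optimization goes through unchanged. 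With that substitution your proposal is a complete and correct proof of the cited inequality.
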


\begin{theorem}[Adapted from Theorem 2.1 and combined with Remark 2.1 and Equation 18 in~\cite{fanHoeffdingInequalitySupermartingales2012}]
    \label{thm:supermartingale_quad_characteristic}
    Let $(X(t))_{t=0}^n$ be a supermartingale associated with the filter $(\mathcal{F}(t))_{t=0}^n$,
        where $X(t) - X(t-1) \leq 1$ for all $t \in [n]$.
    Let $\langle X \rangle$ be the quadratic characteristic of $X$, i.e., let
        \[\langle X \rangle_0 = 0,\quad \langle X \rangle_t = \sum_{\tau=1}^t \AutoExpCond{(X(\tau) - X(\tau-1))^2}{\mathcal{F}(\tau - 1)},\quad\forall t \in [n].\]
    Then, for any $\varepsilon \geq 0$ and $\sigma > 0$,
        \[\BigAutoProb{\exists t \in [n]: X(t) - X(0) \geq \frac{\varepsilon}{3} + v\sqrt{2\varepsilon}\wedge \langle X \rangle_t \leq \sigma^2} \leq e^{-\varepsilon}.\]
\end{theorem}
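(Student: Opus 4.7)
The plan is to prove this as a standard exponential supermartingale / optional-stopping argument and then invert the resulting Bennett-type tail to match the claimed closed form $\varepsilon/3+\sigma\sqrt{2\varepsilon}$. I read the $v$ in the bound as $\sigma$ (a typo). Throughout, write $Y_t \coloneqq X(t)-X(t-1)$, and note we have $Y_t\le 1$, $\AutoExpCond{Y_t}{\mathcal{F}(t-1)}\le 0$, and $\AutoExpCond{Y_t^2}{\mathcal{F}(t-1)}=\langle X\rangle_t-\langle X\rangle_{t-1}$.

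First, I would establish the one-step MGF bound. Since $g(y)\coloneqq(e^{\lambda y}-1-\lambda y)/y^2$ is increasing on $\R$ for $\lambda\ge 0$, the condition $Y_t\le 1$ gives $e^{\lambda Y_t}\le 1+\lambda Y_t+(e^\lambda-1-\lambda)Y_t^2$. Taking conditional expectations and using $\AutoExpCond{Y_t}{\mathcal{F}(t-1)}\le 0$ yields
\[
\AutoExpCond{e^{\lambda Y_t}}{\mathcal{F}(t-1)} \le \exp\bigl((e^\lambda-1-\lambda)\AutoExpCond{Y_t^2}{\mathcal{F}(t-1)}\bigr).
\]
Multiplying these telescopically shows that $M_\lambda(t)\coloneqq\exp\bigl(\lambda(X(t)-X(0))-(e^\lambda-1-\lambda)\langle X\rangle_t\bigr)$ is a supermartingale with $M_\lambda(0)=1$.

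Next, I would apply optional stopping. Set $y\coloneqq\varepsilon/3+\sigma\sqrt{2\varepsilon}$ and define the stopping time $\tau\coloneqq\inf\{t\in[n]:X(t)-X(0)\ge y\text{ and }\langle X\rangle_t\le \sigma^2\}$, with $\tau=n+1$ if no such $t$ exists. On $\{\tau\le n\}$ we have $M_\lambda(\tau)\ge \exp\bigl(\lambda y-(e^\lambda-1-\lambda)\sigma^2\bigr)$, and since $M_\lambda\ge 0$, the optional stopping theorem for bounded stopping times gives
\[
\Pr[\tau\le n]\cdot\exp\bigl(\lambda y-(e^\lambda-1-\lambda)\sigma^2\bigr)\le \AutoExp{M_\lambda(\tau\wedge n)}\le 1.
\]
Optimizing the exponent over $\lambda\ge 0$ by setting $\lambda=\log(1+y/\sigma^2)$ turns it into $-\sigma^2 h(y/\sigma^2)$, where $h(u)\coloneqq(1+u)\log(1+u)-u$. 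This is the Bennett-type bound
\[
\Pr\bigl[\exists t\in[n]:X(t)-X(0)\ge y,\ \langle X\rangle_t\le \sigma^2\bigr]\le \exp\bigl(-\sigma^2 h(y/\sigma^2)\bigr).
\]

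Finally, I need to invert this at $y=\varepsilon/3+\sigma\sqrt{2\varepsilon}$ to $e^{-\varepsilon}$. Setting $w\coloneqq\sqrt{2\varepsilon}/\sigma$, so $\varepsilon=\sigma^2 w^2/2$ and $y/\sigma^2=w+w^2/6$, this reduces to the purely numerical inequality
\[
h(w+w^2/6)\ge w^2/2\qquad\text{for all }w\ge 0.
\]
I would prove this by showing the Taylor expansion at $w=0$ gives $h(w+w^2/6)-w^2/2=w^4/72+O(w^5)$ (so the claim and its first two derivatives vanish at $0$ and the fourth-order coefficient is strictly positive), and then extending to all $w\ge 0$ by a monotonicity argument on $\psi(w)\coloneqq h(w+w^2/6)-w^2/2$ via $\psi'(w)=(1+w/3)\log(1+w+w^2/6)-w$, whose positivity on $\R_{>0}$ follows from a further derivative computation and the fact that $(1+w/3)\log(1+w+w^2/6)$ grows like $(w/3)\log(w^2/6)$ for large $w$. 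Combining this with the Bennett bound gives the desired $e^{-\varepsilon}$.

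The routine part is the exponential supermartingale construction and optional stopping; the main obstacle is verifying the sharp numerical inequality $h(w+w^2/6)\ge w^2/2$ uniformly in $w$, since the standard Bernstein weakening $h(u)\ge u^2/(2+2u/3)$ is strictly too loose and only gives the weaker constant $2\varepsilon/3$ in place of $\varepsilon/3$. This is exactly the refinement encoded in Equation~18 of \cite{fanHoeffdingInequalitySupermartingales2012}, and I would invoke it (or expand it out as above) rather than attempt a different, looser route.
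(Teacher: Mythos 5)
The paper itself gives no proof of \cref{thm:supermartingale_quad_characteristic}: it is imported (with $v$ being, as you guessed, a typo for $\sigma$) from Fan--Grama--Liu, so the only comparison to make is with the cited source, and your reconstruction follows essentially that standard route. Your argument is correct: the one-step bound $e^{\lambda y}\le 1+\lambda y+(e^{\lambda}-1-\lambda)y^{2}$ for $y\le 1$, $\lambda\ge 0$, makes $M_\lambda(t)=\exp\bigl(\lambda(X(t)-X(0))-(e^{\lambda}-1-\lambda)\langle X\rangle_t\bigr)$ a nonnegative supermartingale (using predictability of $\langle X\rangle_t$ and the supermartingale property), optional stopping at the first $t$ where both conditions hold correctly produces the maximal ``$\exists t$'' form with the constraint $\langle X\rangle_t\le\sigma^2$, the choice $\lambda=\log(1+y/\sigma^{2})$ gives the Bennett bound $\exp(-\sigma^{2}h(y/\sigma^{2}))$, and the inversion then reduces to $h(w+w^{2}/6)\ge w^{2}/2$, which is exactly the refinement behind Equation (18) of the cited paper; you are also right that the Bernstein weakening $h(u)\ge u^{2}/(2+2u/3)$ only yields $2\varepsilon/3$ in place of $\varepsilon/3$. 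The single loose point is your justification that $\psi(w)\coloneqq h(w+w^{2}/6)-w^{2}/2$ is nonnegative: the asymptotic remark that $(1+w/3)\log(1+w+w^{2}/6)$ grows like $(w/3)\log(w^{2}/6)$ does not by itself give $\psi'(w)\ge 0$ on all of $\R_{>0}$. A clean elementary fix: compute $\psi''(w)=\tfrac13\log(1+w+w^{2}/6)+\frac{(1+w/3)^{2}}{1+w+w^{2}/6}-1$ and apply $\log x\ge 1-1/x$ with $x=1+w+w^{2}/6$; the resulting lower bound is identically $0$ because $3(1+w/3)^{2}-1=2(1+w+w^{2}/6)$, so $\psi''\ge 0$, and since $\psi'(0)=\psi(0)=0$ this gives $\psi\ge 0$ for all $w\ge 0$. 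Alternatively, simply invoking Equation (18) as you propose is entirely in line with how the paper itself treats this statement.
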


\begin{corollary}
    \label{corr:martingale_quad_characteristic}
    Let $(X(t))_{t=0}^n$ be a martingale associated with the filter $(\mathcal{F}(t))_{t=0}^n$,
        where $\abs{X(t) - X(t-1)} \leq 1$ for all $t \in [n]$.
    Then with $\langle X \rangle$ as in \cref{thm:supermartingale_quad_characteristic},
        for any $\varepsilon \geq 0$ and $\sigma > 0$,
    \[\BigAutoProb{\abs{X(n) - X(0)} \geq \frac{\varepsilon}{3} + v\sqrt{2\varepsilon}} \leq 2(e^{-\varepsilon} + \AutoProb{\langle X \rangle_n > v^2}).\]
\end{corollary}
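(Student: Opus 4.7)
The plan is to derive the two-sided tail bound from Theorem~\ref{thm:supermartingale_quad_characteristic} by applying the latter to both $X$ and $-X$, then combining via a union bound. Because a martingale is in particular a supermartingale, and the statement of Theorem~\ref{thm:supermartingale_quad_characteristic} only requires the one-sided increment bound $X(t)-X(t-1) \le 1$, the hypothesis $|X(t)-X(t-1)| \le 1$ lets us invoke it for both $X$ and $-X$.

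First, I would apply Theorem~\ref{thm:supermartingale_quad_characteristic} directly to $X$, yielding
\[
\BigAutoProb{\exists t \in [n]:\; X(t)-X(0) \ge \tfrac{\varepsilon}{3} + v\sqrt{2\varepsilon} \;\wedge\; \langle X\rangle_t \le v^2} \le e^{-\varepsilon}.
\]
Dropping the existential quantifier gives in particular the event at time $n$, so
\[
\BigAutoProb{X(n)-X(0) \ge \tfrac{\varepsilon}{3} + v\sqrt{2\varepsilon},\; \langle X\rangle_n \le v^2} \le e^{-\varepsilon}.
\]
Splitting on whether $\langle X\rangle_n \le v^2$, we obtain
\[
\BigAutoProb{X(n)-X(0) \ge \tfrac{\varepsilon}{3} + v\sqrt{2\varepsilon}} \le e^{-\varepsilon} + \BigAutoProb{\langle X\rangle_n > v^2}.
\]

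Second, I would repeat the argument with $-X$ in place of $X$. The sequence $(-X(t))_{t=0}^n$ is a martingale (hence a supermartingale) with $-X(t)-(-X(t-1)) = X(t-1)-X(t) \le 1$ by the two-sided increment bound on $X$. Crucially, the quadratic characteristic is invariant under sign: $(-X(\tau)-(-X(\tau-1)))^2 = (X(\tau)-X(\tau-1))^2$, so $\langle -X\rangle_t = \langle X\rangle_t$ almost surely. Applying Theorem~\ref{thm:supermartingale_quad_characteristic} to $-X$ therefore yields
\[
\BigAutoProb{-(X(n)-X(0)) \ge \tfrac{\varepsilon}{3} + v\sqrt{2\varepsilon}} \le e^{-\varepsilon} + \BigAutoProb{\langle X\rangle_n > v^2}.
\]

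Third, I would conclude by a union bound over the two one-sided deviation events, since
\[
\{|X(n)-X(0)| \ge \tfrac{\varepsilon}{3}+v\sqrt{2\varepsilon}\} \subseteq \{X(n)-X(0) \ge \tfrac{\varepsilon}{3}+v\sqrt{2\varepsilon}\} \cup \{-(X(n)-X(0)) \ge \tfrac{\varepsilon}{3}+v\sqrt{2\varepsilon}\}.
\]
Adding the two bounds above gives exactly
\[
\BigAutoProb{|X(n)-X(0)| \ge \tfrac{\varepsilon}{3}+v\sqrt{2\varepsilon}} \le 2\bigl(e^{-\varepsilon} + \BigAutoProb{\langle X\rangle_n > v^2}\bigr),
\]
which is the claim. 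I do not expect any real obstacle here; the only point worth stating carefully is that the two-sided increment bound is what allows us to apply the one-sided supermartingale theorem to $-X$, and that $\langle -X\rangle = \langle X\rangle$ so the conditioning event in both applications is the same.
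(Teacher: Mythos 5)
Your proposal is correct and follows essentially the same route as the paper: apply the supermartingale tail bound (Theorem~\ref{thm:supermartingale_quad_characteristic}) to $X$, split off the event $\langle X\rangle_n > v^2$, repeat for the supermartingale $-X$, and combine with a union bound. Your write-up is, if anything, slightly more explicit than the paper's in noting that $\langle -X\rangle_t = \langle X\rangle_t$ and in stating the complement event $\AutoProb{\langle X\rangle_n > v^2}$ correctly.
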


\begin{proof}
    As $(X(t))_{t=0}^n$ is a martingale, it is also a supermartingale,
        and it fulfills the conditions of \cref{thm:supermartingale_quad_characteristic} by the assumptions of the claim.
    So way may use \cref{thm:supermartingale_quad_characteristic} to see that
    \begin{align*}&\BigAutoProb{X(n) - X(0) \geq \frac{\varepsilon}{3} + \sigma\sqrt{2\varepsilon} \wedge \langle X \rangle_n \leq \sigma^2}
    \\ &\quad\quad\quad\quad\quad\quad\leq \BigAutoProb{\exists t \in [n]: X(t) - X(0) \geq \frac{\varepsilon}{3} + \sigma\sqrt{2\varepsilon}\wedge \langle X \rangle_t \leq \sigma^2}
        \leq e^{-\varepsilon}.
    \end{align*}
    As $\AutoProb{A} \leq \AutoProb{(A \wedge B) \vee B} \leq \AutoProb{A \wedge B} + \AutoProb{B},$
        this implies that 
    \[\BigAutoProb{X(n) - X(0) \geq \frac{\varepsilon}{3} + \sigma\sqrt{2\varepsilon}}
        \leq e^{-\varepsilon} + \AutoProb{\langle X \rangle_n \leq \sigma^2}.\]

    The claim follows from applying the same argument to the supermartingale $(-X(t))_{t=0}^n$ and a union bound.
\end{proof}

 \begin{theorem}[Berry-Esseen Theorem~\cite{Berry90,esseen1942liapounoff} for Non-identical Random Variables]\label{thm:BerryEssen}
        Let $Y_1,Y_2,\cdots, Y_k$ be independently distributed with $\E[Y_i]=0$, $\E[Y_i^2]=\Var[Y_i]=\sigma_i^2$ and $\E[|Y_i|^3]=\rho_i<\infty$. If $F_k(x)$ is the distribution of $\frac{Y_1+Y_2+\cdots+Y_k}{\sqrt{\sigma_1^2+\sigma_2^2+\cdots+\sigma_k^2}} $ and $\Phi_N(x)$ is the standard normal distribution, then 
\[|F_k(x)-\Phi_N(x)|\le C_0\cdot \psi_0,\]
    where $\psi_0=\frac{\sum_{i=1}^k \rho_i}{\left(\sum_{i=1}^k \sigma_i^2\right)^{3/2}}$ and $C_0$ is a constant.
\end{theorem}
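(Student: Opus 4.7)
The plan is to follow Esseen's classical characteristic-function argument, which proceeds in three stages: a Fourier-analytic smoothing bound, a Taylor expansion of each summand's characteristic function, and a telescoping comparison against the Gaussian characteristic function followed by an optimal truncation.

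\textbf{Stage 1: Smoothing inequality.} First I would establish the Berry--Esseen smoothing inequality: for any distribution function $F$ and any absolutely continuous $G$ with $\|G'\|_\infty \leq m$, and every $T > 0$,
\[
\sup_x |F(x) - G(x)| \;\leq\; \frac{1}{\pi} \int_{-T}^{T} \frac{|\hat F(t) - \hat G(t)|}{|t|}\,dt \;+\; \frac{c\,m}{T}
\]
for an absolute constant $c$. This is proved by Fourier inversion applied to $F - G$ convolved with a Fejér-type kernel whose Fourier transform is supported in $[-T,T]$: the main term becomes the integral on the right, while the convolution error is controlled by the derivative bound on $G$. I would then apply this with $F = F_k$, $G = \Phi_N$, so that $m = 1/\sqrt{2\pi}$, reducing the problem to bounding $|\hat F_k(t) - e^{-t^2/2}|$ on $[-T,T]$.

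\textbf{Stage 2: Characteristic function expansion.} Write $s^2 = \sum_{j=1}^k \sigma_j^2$ and $\phi_j(u) = \E[e^{iuY_j}]$. Using $\E[Y_j] = 0$, $\E[Y_j^2] = \sigma_j^2$, and $\E[|Y_j|^3] = \rho_j$, a third-order Taylor expansion gives
\[
\phi_j(t/s) \;=\; 1 \;-\; \frac{t^2 \sigma_j^2}{2 s^2} \;+\; R_j(t), \qquad |R_j(t)| \;\leq\; \frac{\rho_j |t|^3}{6 s^3}.
\]
Lyapunov's inequality yields $\sigma_j^3 \leq \rho_j$, and thus $\sigma_j/s \leq \psi_0^{1/3}$. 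On the range $|t| \leq 1/(2\psi_0)$ one obtains both $|\phi_j(t/s)| \leq \exp(-t^2 \sigma_j^2/(3 s^2))$ (so that $|\hat F_k(t)| \leq e^{-t^2/3}$) and, by comparing the Taylor expansion of $\phi_j(t/s)$ with that of $e^{-t^2 \sigma_j^2/(2 s^2)}$, the estimate $|\phi_j(t/s) - e^{-t^2 \sigma_j^2/(2 s^2)}| \leq C \rho_j |t|^3 / s^3$.

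\textbf{Stage 3: Telescoping and truncation.} Combining the elementary product inequality $|\prod a_j - \prod b_j| \leq \sum_j |a_j - b_j|$ (valid for complex numbers of modulus at most one) with the Gaussian-type majorization of Stage~2 gives, on the same range,
\[
\left|\hat F_k(t) - e^{-t^2/2}\right| \;\leq\; C_1 \, \psi_0 \, |t|^3 \, e^{-t^2/3}.
\]
Choosing $T = 1/(C_2 \psi_0)$ for a suitable absolute constant $C_2$, the integral term in the smoothing inequality is bounded by $C_1 \psi_0 \int_{-T}^{T} t^2 e^{-t^2/3}\,dt \leq C_3 \psi_0$, while the boundary term $c m / T$ is also $O(\psi_0)$. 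Summing yields $\sup_x |F_k(x) - \Phi_N(x)| \leq C_0 \psi_0$.

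\textbf{Main obstacle.} The smoothing inequality is the real analytic heart of the argument and by far the most delicate step; it relies on sharp Fourier-inversion estimates for a carefully chosen kernel. A secondary subtlety is bookkeeping with constants so that the Taylor remainder is dominated by the second-order term throughout $|t| \leq T = \Theta(1/\psi_0)$; this uses Lyapunov's inequality in an essential way, since the natural range of validity $|R_j(t)| \ll t^2 \sigma_j^2 / s^2$ translates precisely into $|t| \psi_0 \lesssim 1$, matching the truncation scale and thereby producing the claimed order $\psi_0$ rather than any worse exponent.
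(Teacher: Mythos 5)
The paper does not prove this statement at all: \cref{thm:BerryEssen} is imported verbatim as a classical result, with citations to Berry and Esseen, and is only \emph{used} (in the proof of \cref{thm:main_sync:lower}). So there is no in-paper proof to compare against; your proposal is a reconstruction of the standard Esseen characteristic-function argument, and as a sketch it is essentially correct: the smoothing (Ess\'een) inequality, the third-order Taylor expansion with remainder $\rho_j|t|^3/(6s^3)$, the use of Lyapunov's inequality $\sigma_j^3\le\rho_j$ to control $\sigma_j/s$ by $\psi_0^{1/3}$, and the truncation at $T=\Theta(1/\psi_0)$ are exactly the classical ingredients and do yield the bound $C_0\psi_0$.

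One step is stated too loosely to close as written. The elementary inequality $\bigl|\prod_j a_j-\prod_j b_j\bigr|\le\sum_j|a_j-b_j|$ for numbers of modulus at most one gives only
\[
\Bigl|\hat F_k(t)-e^{-t^2/2}\Bigr|\;\le\;\sum_j\Bigl|\phi_j(t/s)-e^{-t^2\sigma_j^2/(2s^2)}\Bigr|\;\le\;C\,\psi_0\,|t|^3,
\]
without the factor $e^{-t^2/3}$; plugging that into the smoothing inequality over $|t|\le T=\Theta(1/\psi_0)$ gives an integral of order $\psi_0T^3\asymp\psi_0^{-2}$, not $O(\psi_0)$. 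To retain the Gaussian decay you need the refined telescoping
$\prod_j a_j-\prod_j b_j=\sum_j\bigl(\prod_{i<j}a_i\bigr)(a_j-b_j)\bigl(\prod_{i>j}b_i\bigr)$
together with the majorants $|\phi_i(t/s)|\le e^{-t^2\sigma_i^2/(3s^2)}$ and $e^{-t^2\sigma_i^2/(2s^2)}\le e^{-t^2\sigma_i^2/(3s^2)}$ on the partial products, which bounds each term by $C\rho_j|t|^3 s^{-3}e^{-t^2(s^2-\sigma_j^2)/(3s^2)}$ and, after controlling $t^2\sigma_j^2/s^2$ on the range $|t|\le T$ via Lyapunov, produces the stated $C_1\psi_0|t|^3e^{-t^2/3}$. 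With that correction your Stage 3 and the final truncation go through, and the argument matches the standard proof in the literature the paper cites.
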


\begin{theorem}[Theorem 3.4 of~\cite{DBLP:journals/im/ChungL06}, \cite{McDiarmid1998}]\label{thm:upper:bound:on:sum}
    let $X_i$ ($1\le i\le n$) be independent random variables satisfying $X_i\le\E[X_i]+M$, for $1\le i\le n$. We consider the sum $X=\sum_{i=1}^n X_i$ with expectation $\E[X]=\sum_{i=1}^{n}\E[X_i]$ and variance $\Var[X]=\sum_{i=1}^n \Var[X_i]$. Then we have
    \[
    \Pr\left[X\ge \E[X]+\lambda\right]\le \exp\left(-\frac{\lambda^2}{2\cdot(\Var[X]+M\lambda/3)}\right).
    \]
\end{theorem}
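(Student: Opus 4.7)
The plan is to follow the standard exponential-moment (Chernoff--Cram\'er) route that underlies all Bernstein-type inequalities. After centering via $Y_i \coloneqq X_i - \E[X_i]$ we obtain independent random variables with $\E[Y_i]=0$, $Y_i \le M$, and $\sum_i \Var[Y_i] = \Var[X]$; writing $Y \coloneqq \sum_i Y_i$, it suffices to show $\Pr[Y \ge \lambda] \le \exp(-\lambda^2 / (2(\Var[X] + M\lambda/3)))$. For any $h > 0$, Markov's inequality applied to $e^{hY}$ together with independence gives
\[
\Pr[Y \ge \lambda] \;\le\; e^{-h\lambda} \prod_{i=1}^n \E\!\left[e^{h Y_i}\right],
\]
so the entire proof reduces to bounding each factor $\E[e^{h Y_i}]$ and then choosing $h$ optimally.

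The analytical heart of the argument is the elementary inequality
\[
e^x \;\le\; 1 + x + \frac{x^2/2}{1 - x/3}, \qquad \text{valid for } x < 3,
\]
which I would verify by expanding $e^x - 1 - x = \sum_{k\ge 2} x^k/k!$ and using $k! \ge 2 \cdot 3^{k-2}$ to majorize the tail by the geometric series $(x^2/2)\sum_{k\ge 0}(x/3)^k$. Applying this bound with $x = hY_i$ under the assumption $hM < 3$ (so that $hY_i \le hM < 3$ almost surely) and taking expectations, the linear term vanishes because $\E[Y_i]=0$, leaving
\[
\E\!\left[e^{h Y_i}\right] \;\le\; 1 + \frac{h^2 \sigma_i^2/2}{1 - hM/3} \;\le\; \exp\!\left(\frac{h^2 \sigma_i^2/2}{1 - hM/3}\right),
\]
where $\sigma_i^2 = \Var[Y_i]$. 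Taking the product across $i$ yields
\[
\Pr[Y \ge \lambda] \;\le\; \exp\!\left(-h\lambda + \frac{h^2 \Var[X]/2}{1 - hM/3}\right).
\]

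It remains to optimize in $h$. The canonical choice is $h \coloneqq \lambda/(\Var[X] + M\lambda/3)$, which satisfies $hM/3 < 1$ automatically (so the Bernstein tail inequality is applied within its range of validity) and produces the pleasant identity $1 - hM/3 = \Var[X]/(\Var[X] + M\lambda/3)$. A direct substitution collapses the two terms in the exponent to $-\lambda^2/(\Var[X] + M\lambda/3) + \lambda^2/(2(\Var[X] + M\lambda/3)) = -\lambda^2/(2(\Var[X] + M\lambda/3))$, exactly as required. The main obstacle is really just the elementary inequality displayed above, since once it is in hand the remaining steps---Markov's inequality, independence, and the optimization of $h$---are mechanical; there is no subtlety in verifying that the chosen $h$ lies in the admissible range, because the derivation of the bound is monotone in $hM$ and we only need the final inequality to hold.
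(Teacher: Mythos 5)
The paper itself offers no proof of this statement -- it is imported verbatim as Theorem 3.4 of Chung--Lu (a Bernstein-type inequality) -- so there is no internal argument to compare against; your exponential-moment (Chernoff--Bernstein) derivation is exactly the standard proof of this bound, and the optimization $h=\lambda/(\Var[X]+M\lambda/3)$ does collapse the exponent to $-\lambda^2/\bigl(2(\Var[X]+M\lambda/3)\bigr)$ as you claim.

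One justification is incomplete as written. You verify the elementary inequality $e^x\le 1+x+\frac{x^2/2}{1-x/3}$ by comparing $\sum_{k\ge 2}x^k/k!$ termwise with a geometric series; that comparison is only valid for $0\le x<3$, whereas you apply the inequality at $x=hY_i$ with $Y_i=X_i-\E[X_i]$, and the theorem imposes no lower bound on $X_i$, so $hY_i$ can be arbitrarily negative. The inequality is in fact true for all $x<3$: for $x\le 0$ one can check that $\psi(x)\coloneqq(1-x/3)(1+x)+x^2/2-(1-x/3)e^x$ satisfies $\psi(0)=\psi'(0)=0$ and $\psi''(x)=\tfrac13\bigl(1-(1-x)e^x\bigr)\ge 0$, so $\psi\ge 0$ there. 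Alternatively, and more in line with the usual Bernstein proof, avoid the negative-argument case altogether by writing $e^{hY_i}=1+hY_i+(hY_i)^2 g(hY_i)$ with $g(x)=(e^x-1-x)/x^2$ increasing, so that $g(hY_i)\le g(hM)$ and the series bound is only needed at the nonnegative point $hM<3$. (Also note that your step from the pointwise bound to $\E[e^{hY_i}]\le 1+\frac{h^2\sigma_i^2/2}{1-hM/3}$ silently uses $1-hY_i/3\ge 1-hM/3>0$, which is fine since $Y_i\le M$, but deserves a word.) With either repair the proof is complete and correct.
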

    
\begin{theorem}[Theorem 4.1 of~\cite{DBLP:journals/im/ChungL06}]\label{thm:lower:on:sum}
    Let $X_i$ denote independent random variable satisfying $X_i\ge \E[X_i]-a_i-M$ for $0\le i\le n$. For $X=\sum_{i=1}^n X_i$ we have
\[
\Pr\left[X\le\E[X]-\lambda\right]\le \exp\left(-\frac{\lambda^2}{2\cdot \left(\Var[X]+\sum_{i=1}^n a_i^2 + M\lambda/3\right)} \right).
    \]
\end{theorem}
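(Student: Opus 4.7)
The plan rests on the potential transformation $g(x) \coloneqq \int_0^x \dvarphi/h(\varphi)$, defined on $\R^+_0$. Since $g'(x) = 1/h(x)$ is non-increasing (because $h$ is increasing), $g$ is strictly increasing and concave. Define $Y(t) \coloneqq g(X(t))$; because $X$ is non-increasing and $g$ is increasing, $Y$ is non-increasing, and the integral appearing in Part~1 equals exactly $Y(0) - Y(t) = \sum_{\tau=0}^{t-1} D_\tau$, where $D_\tau \coloneqq Y(\tau) - Y(\tau+1) \geq 0$. Part~1 is therefore a lower-tail concentration bound for the sum of these non-negative potential drops, while Part~2 will follow by cascading Part~1 over a sequence of time horizons.

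For Part~1, the first step is to show a per-step potential drift of at least $1$. By Jensen's inequality applied to the concave $g$, $\E[g(X(\tau+1)) \mid X(\tau) = x] \leq g(\E[X(\tau+1) \mid X(\tau) = x]) \leq g(x - h(x))$, and since $h$ is increasing, $g(x) - g(x - h(x)) = \int_{x - h(x)}^{x} \dvarphi/h(\varphi) \geq h(x)/h(x) = 1$. Next, using the concavity tangent-line bound at $x = X(\tau)$, I obtain the pointwise inequality $D_\tau \geq Z_\tau \coloneqq (X(\tau) - X(\tau+1))/h(X(\tau))$, for which $\hat{\alpha}_\tau \coloneqq \E[Z_\tau \mid \mathcal{F}_\tau] = (X(\tau)-\mu_\tau)/h(X(\tau)) \geq 1$ and Condition~2 translates directly to the relative variance bound $\Var[Z_\tau \mid \mathcal{F}_\tau] \leq \sigma \hat{\alpha}_\tau^2$. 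Combining these with the elementary inequality $e^{-x} \leq 1 - x + x^2/2$ for $x \geq 0$ yields $\E[e^{-\lambda Z_\tau} \mid \mathcal{F}_\tau] \leq \exp(-\lambda \hat{\alpha}_\tau + \tfrac{1}{2}\lambda^2(\sigma+1)\hat{\alpha}_\tau^2)$, and since $D_\tau \geq Z_\tau$ pointwise the same bound also upper-bounds $\E[e^{-\lambda D_\tau} \mid \mathcal{F}_\tau]$. The target tail bound $\exp(-\delta^2 t/(2(\sigma+1)))$ will then follow from the tower property, Markov's inequality applied to the event $\sum_\tau D_\tau \leq (1-\delta)t$, and optimization at $\lambda = \delta/(\sigma+1)$.

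The main obstacle is controlling the random term $\sum_\tau \hat{\alpha}_\tau^2$ in the iterated MGF, since $\hat{\alpha}_\tau$ is only lower-bounded by $1$. My plan is to sidestep this by working with the centered martingale $M_t \coloneqq \sum_{\tau < t}(\hat{\alpha}_\tau - Z_\tau)$: because $\hat{\alpha}_\tau \geq 1$ deterministically, the bad event $\{\sum D_\tau \leq (1-\delta)t\}$ (which implies $\{\sum Z_\tau \leq (1-\delta)t\}$) forces $M_t \geq \delta t$, so it suffices to upper-tail-bound $M_t$. Applying a Freedman/Chung--Lu-type inequality to $M_t$, with increments bounded from above by $\hat{\alpha}_{t-1}$ and predictable quadratic variation at most $\sigma\sum_\tau \hat{\alpha}_\tau^2$, together with the stopping-time trick of truncating at the first time $\sum \hat{\alpha}_\tau$ exceeds $t$ (on which the relative variance already gives $\sigma \sum \hat{\alpha}_\tau \leq \sigma t$) delivers the desired exponential bound with constant $2(\sigma+1)$.

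For Part~2, set $c \coloneqq \delta^2/(2(\sigma+1))$ and for each integer $k \geq 1$ define $v_k \coloneqq g^{-1}(\max\{0, g(x_0) - (1-\delta)(t_0 + k)\})$. By Part~1 applied at time $t_0 + k$, $\Pr[X(t_0+k) > v_k] \leq e^{-c(t_0+k)}$, so summing the resulting geometric series yields a union-bound failure probability of at most $e^{-ct_0}/(1 - e^{-c}) \leq e^{-ct_0}/c$, which equals $p$ under the stated choice $t_0 = c^{-1}(-\log p + \log c^{-1})$. On the good event $\sum_{t > t_0} X(t) \leq \sum_{k \geq 1} v_k$, and since $v_k$ is non-increasing in $k$ this is at most $\int_0^\infty g^{-1}(\max\{0, g(x_0) - (1-\delta)(t_0+k)\})\,\mathrm{d}k$. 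The change of variables $u = g(x_0) - (1-\delta)(t_0+k)$ followed by $u = g(v)$ (for which $\mathrm{d}u = \mathrm{d}v/h(v)$) transforms this integral into $\frac{1}{1-\delta}\int_0^{x_0} v/h(v)\,\mathrm{d}v$, completing the proof.
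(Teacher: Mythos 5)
Your proposal does not prove the statement under review. \cref{thm:lower:on:sum} is a Bernstein-type lower-tail inequality for a sum $X=\sum_{i=1}^n X_i$ of \emph{independent} random variables satisfying the one-sided bound $X_i \ge \E[X_i]-a_i-M$; in the paper it is imported verbatim from Chung and Lu (their Theorem 4.1) and used as a black box, so there is no in-paper proof to compare against beyond the citation. What you have written is instead a proof sketch of the paper's drift theorem (\cref{lem:drift}, proved in \cref{apx:drift_proof}): the potential $g(x)=\int_0^x \dvarphi/h(\varphi)$, the drift and relative-variance conditions on $h$ and $\sigma$, the tail bound $\exp(-\delta^2 t/(2(\sigma+1)))$, and the final integral $\frac{1}{1-\delta}\int_0^{x_0}\varphi/h(\varphi)\,\dvarphi$ all belong to that other result. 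None of the quantities appearing in \cref{thm:lower:on:sum} --- the shifts $a_i$, the constant $M$, $\Var[X]$, the deviation $\lambda$ --- occur anywhere in your argument, and the claimed bound $\exp\bigl(-\lambda^2/(2(\Var[X]+\sum_{i=1}^n a_i^2+M\lambda/3))\bigr)$ is never derived, so as a proof of this statement the proposal is vacuous rather than merely incomplete.

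A genuine proof would run the exponential-moment method on the lower tail: by independence, $\E[e^{-s(X-\E[X])}]=\prod_{i=1}^n\E[e^{-s(X_i-\E[X_i])}]$, and each factor is controlled using the one-sided bound $X_i-\E[X_i]\ge -(a_i+M)$ together with $\Var[X_i]$, keeping the per-coordinate shifts $a_i$ separate --- this is precisely where the additional $\sum_{i=1}^n a_i^2$ term in the denominator comes from, and it is why one cannot simply apply the upper-tail bound \cref{thm:upper:bound:on:sum} to the variables $-X_i$ with the crude uniform constant $M+\max_i a_i$ --- and then optimizes over $s>0$, as in Chung and Lu's proof of their Theorem 4.1. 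If your intent was to prove the drift result, that argument should be evaluated against \cref{apx:drift_proof}, not against \cref{thm:lower:on:sum}.
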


\section{Omitted Proofs from \cref{sec:analysis:random:matching}}
\label{apx:omitted-proofs-3}

In this appendix we present the omitted proofs from \cref{sec:analysis:random:matching}.
We first formally prove that the discrepancy is sub-additive.

\begin{observation}\label{obs:disc_subadditive}
    For two vectors $\vec{x}, \vec{y} \in \R^n$,
        \[\disc(\vec{x} + \vec{y}) \leq \disc(\vec{x}) + \disc(\vec{y}).\]
\end{observation}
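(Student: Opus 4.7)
The plan is to unfold the definition $\disc(\vec{z}) = \max_{i \in [n]} z_i - \min_{j \in [n]} z_j$ applied to $\vec{z} = \vec{x} + \vec{y}$, and then exploit the standard sub/super-additivity of $\max$ and $\min$ over sums of real sequences. Concretely, for any two sequences $(a_i)_{i \in [n]}$ and $(b_i)_{i \in [n]}$ one has $\max_i (a_i + b_i) \le \max_i a_i + \max_i b_i$ and, symmetrically, $\min_i (a_i + b_i) \ge \min_i a_i + \min_i b_i$. Both facts are immediate: for the first, pick the index $i^\star$ attaining the left-hand maximum and bound $a_{i^\star} \le \max_i a_i$ and $b_{i^\star} \le \max_i b_i$ separately; the second follows by applying the first to $(-a_i)$ and $(-b_i)$.

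Combining the two inequalities, I would then write
\[
\disc(\vec{x} + \vec{y}) = \max_i (x_i + y_i) - \min_j (x_j + y_j) \le \bigl(\max_i x_i + \max_i y_i\bigr) - \bigl(\min_j x_j + \min_j y_j\bigr),
\]
and simply regroup the four terms as $(\max_i x_i - \min_j x_j) + (\max_i y_i - \min_j y_j) = \disc(\vec{x}) + \disc(\vec{y})$, which is the claimed inequality.

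There is no real obstacle here; the entire argument is one line of manipulation of $\max$ and $\min$. The only thing to be slightly careful about is to use distinct dummy indices for the $\max$ and $\min$ (as in the definition of $\disc$) so that the regrouping step is visibly clean, and to note that the inequality does not require any assumption on the signs or magnitudes of the entries of $\vec{x}$ and $\vec{y}$.
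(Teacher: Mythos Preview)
Your proposal is correct and essentially identical to the paper's proof: both unfold the definition of $\disc$, use sub-additivity of $\max$ (the paper handles the $\min$ by rewriting $-\min_i z_i = \max_i(-z_i)$ and applying the $\max$ inequality again, while you state the $\min$ inequality directly), and then regroup the four terms.
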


\begin{proof}
    For any $\vec{a}, \vec{b} \in \R^n$,
        \[\max_{i \in [n]} (a_i + b_i) \leq \max_{i \in [n]} a_i + \max_{i \in [n]} b_i,\]
        and thus
    \begin{align*}
    \disc(\vec{x} + \vec{y})
       &= \max_{i \in [n]} (x_i + y_i) - \min_{i \in [n]} (x_i + y_i)
        = \max_{i \in [n]} (x_i + y_i) + \max_{i \in [n]} ((-x_i) + (-y_i))
    \\ &\leq \max_{i \in [n]} x_i + \max_{i \in [n]} y_i + \max_{i \in [n]} (-x_i) + \max_{i \in [n]} (-y_i)
    \\ &= \left(\max_{i \in [n]} x_i - \min_{i \in [n]} x_i\right) + \left(\max_{i \in [n]} y_i - \min_{i \in [n]} y_i\right)
    \\ &= \disc(\vec{x}) + \disc(\vec{y}),
    \end{align*}
        as claimed.
\end{proof}

\subsection{Proof of \cref{lem:initial:load:vanishes}}

\restateInitialLoadVanishes*

\label{proof:lem:initial:load:vanishes}
\begin{proof}
 To bound $\discr(\InitialContribVecT{t})$,
        we use the following claim:
    \begin{claim*}
        If $t \geq t_0(0)$, then
        \(\AutoExp{\NodePotential(\InitialContribVecT{t})} \leq 1/4,\)
        and if $t \geq t_0(\gamma)$, then
        \(\AutoProb{\NodePotential(\InitialContribVecT{t}) \leq \frac{1}{4}} \geq 1 - n^{-\gamma}.\)
    \end{claim*}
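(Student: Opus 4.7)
The plan is to prove the two stated bounds on $\NodePotential(\InitialContribVecT{t})$ by first establishing a one-step multiplicative contraction of the node potential under a random matching from $\RMDistr(G)$, and then iterating this contraction over the $t$ independent matching matrices composing $\InitialContribVecT{t} = \MixMatTT{1}{t} \cdot \LoadVecT{0}$. The lemma's final discrepancy bound $\discr(\InitialContribVecT{t}) \leq 1$ then follows immediately from the elementary inequality $\discr(\vec{x}) \leq 2\sqrt{\NodePotential(\vec{x})}$ applied to the bound $\NodePotential(\InitialContribVecT{t}) \leq 1/4$ supplied by the Claim.

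For the one-step contraction I would start from \cref{obs:node_potential_change_exact}, which gives
\[\NodePotential(\vec{x}) - \NodePotential(\MixMatBeta{\BalancingSpeed} \vec{x}) = \frac{1 - (1-\BalancingSpeed)^2}{2} \cdot \EdgePotential_{E(\MixMatBeta{\BalancingSpeed})}(\vec{x}) \geq \frac{\BalancingSpeed}{2} \cdot \EdgePotential_{E(\MixMatBeta{\BalancingSpeed})}(\vec{x}),\]
where the inequality uses $1-(1-\BalancingSpeed)^2 = \BalancingSpeed(2-\BalancingSpeed) \geq \BalancingSpeed$ for $\BalancingSpeed \in (0,1]$. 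Taking expectation over $\MixMatBeta{\BalancingSpeed} \sim \RMDistr(G)$, the construction of $\RMDistr(G)$ from \cite{DBLP:journals/jcss/GhoshM96} ensures that each edge of $G$ is included in the sampled matching with probability $\Theta(1/d)$, so $\E[\EdgePotential_{E(\MixMatBeta{\BalancingSpeed})}(\vec{x})] = \Omega(\EdgePotential_G(\vec{x})/d)$. Combining this with the spectral bound $\EdgePotential_G(\vec{x}) \geq d \cdot \SpectralGap(\Laplacian(G)) \cdot \NodePotential(\vec{x})$ from \cref{lem:edge_potential_bounds} yields a constant $c_1 > 0$ such that
\[\AutoExpCond{\NodePotential(\MixMatBeta{\BalancingSpeed} \vec{x})}{\vec{x}} \leq \bigl(1 - c_1 \BalancingSpeed \SpectralGap(\Laplacian(G))\bigr) \NodePotential(\vec{x}).\]

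Since the matching matrices $\MixMatT{1}, \dots, \MixMatT{t}$ are i.i.d., iterating the contraction through the tower property of conditional expectation gives
\[\AutoExp{\NodePotential(\InitialContribVecT{t})} \leq \bigl(1 - c_1 \BalancingSpeed \SpectralGap(\Laplacian(G))\bigr)^t \cdot \NodePotential(\LoadVecT{0}) \leq e^{-c_1 \BalancingSpeed \SpectralGap(\Laplacian(G)) \cdot t} \cdot K^2 n,\]
where I use $\NodePotential(\LoadVecT{0}) \leq n K^2$, which follows from $\discr(\LoadVecT{0}) = K$. Choosing the constant $c$ in $t_0(\gamma)$ large enough relative to $c_1$, $t \geq t_0(0) = c \log(Kn)/(\BalancingSpeed \SpectralGap(\Laplacian(G)))$ drives the right-hand side below $1/4$, which proves the expectation part of the Claim. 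For the high-probability part, Markov's inequality gives
\[\AutoProb{\NodePotential(\InitialContribVecT{t}) > 1/4} \leq 4 \cdot e^{-c_1 \BalancingSpeed \SpectralGap(\Laplacian(G)) \cdot t} \cdot K^2 n,\]
and for $t \geq t_0(\gamma) = c \max\{\gamma \log n, \log(Kn)\}/(\BalancingSpeed \SpectralGap(\Laplacian(G)))$ with $c$ sufficiently large, the right-hand side is at most $n^{-\gamma}$, as required.

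The main obstacle, beyond routine algebra, is the one-step contraction step: one has to combine the explicit edge-inclusion probability $\Theta(1/d)$ guaranteed by the sampling procedure of $\RMDistr(G)$ with the Laplacian spectral bound on $\EdgePotential_G$ to obtain a clean multiplicative factor $(1 - \Omega(\BalancingSpeed \SpectralGap(\Laplacian(G))))$. Once this single-step bound is in place, the iteration and the two tail estimates are standard.
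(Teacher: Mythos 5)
Your proposal is correct, but it takes a genuinely different (and more elementary) route than the paper. The paper proves the claim by feeding $\NodePotential(\InitialContribVecT{t})$ into its drift theorem (\cref{lem:drift}): it verifies both a drift condition $h(x) = \BalancingSpeed\,\SpectralGap(\Laplacian(G))\, x/16$ and a relative-variance condition with $\sigma = 128\,\EdgeHittingTime/n + 16$ (via both parts of \cref{prop:node_potential_change_statistics} and the $\BalancingSpeed$-rescaling inequalities), and then extracts the tail bound from the drift concentration statement. You instead stop after the expected one-step contraction $\AutoExpCond{\NodePotential(\MixMatBeta{\BalancingSpeed}\vec{x})}{\vec{x}} \leq \left(1 - \tfrac{\BalancingSpeed\SpectralGap(\Laplacian(G))}{16}\right)\NodePotential(\vec{x})$ (which is exactly the first part of \cref{prop:node_potential_change_statistics} combined with \cref{lem:edge_potential_bounds}(1) and the edge-inclusion probability $\geq 1/(8d)$ of $\RMDistr(G)$), iterate it by the tower property using the independence of the matching matrices, and finish with Markov's inequality. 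This works because the expected potential decays geometrically and $t_0(\gamma)$ scales linearly in $\gamma$, so driving the expectation below $n^{-\gamma}/4$ costs only a constant-factor adjustment of $c$; it buys you a shorter argument that needs no variance bound, no edge-hitting-time estimate, and no drift machinery, in the spirit of the classical memorylessness proofs of Rabani--Sinclair--Wanka and Sauerwald--Sun. What the paper's route buys is uniformity: the same drift theorem and the same $(g,\sigma^2)$-goodness statistics are reused for the global-divergence bound, whereas your Markov argument is specific to the linear-drift case. One small presentational caveat: your iteration uses that the matrices are i.i.d.\ and independent of $\InitialContribVecT{t-1}$, which holds in the random matching model where this claim lives, so no gap arises.
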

    First, note that $\max_{i \in [n]} \abs{x_i - \overline{x}} \leq \sqrt{\NodePotential(\vec{x})}$ by definition of $\NodePotential$.
    Hence, $\discr(\vec{x}) \leq 2\sqrt{\NodePotential(\vec{x})}$.
    By the claim, if $t \geq t_0(\gamma)$, then $\NodePotential(\InitialContribVecT{t}) \leq 1/4$ with probability at least $1 - n^{-\gamma}$,
        and hence $\discr(\InitialContribVecT{t}) \leq 2\sqrt{\NodePotential(\InitialContribVecT{t})} \leq 2\sqrt{1/4} = 1$.
    Also by the claim,
        if $t \geq t_0(0)$, then $\AutoExp{\InitialContribVecT{t}} \leq 1/4$,
        and then by Jensen's inequality,
        \[\AutoExp{\discr(\InitialContribVecT{t})} \leq \BigAutoExp{2 \sqrt{\NodePotential(\InitialContribVecT{t})}}
            \leq 2 \sqrt{\AutoExp{\NodePotential(\InitialContribVecT{t})}} \leq 2 \sqrt{\frac{1}{4}} = 1.\]

    \begin{claimproof}[Proof of the claim]
    We aim to use the first statement of \cref{lem:drift} on $\NodePotential(\InitialContribVecT{t})$ and therefore need to check its preconditions.
    By the definition of $\InitialContribVecT{t}$, for all $t \geq 1$,
        \[\InitialContribVecT{t} = \MixMatTT{1}{t} \cdot \LoadVecT{0} = \MixMatBeta{\BalancingSpeed}(t) \cdot \MixMatTT{1}{t-1} \cdot \LoadVecT{0} = \MixMatBeta{\BalancingSpeed}(t) \cdot \InitialContribVecT{t-1}.\]
    Entirely analogous to the calculations in the proof of \cref{lem:glob:div:bound:drift} (\cref{eq:seven,eq:eight}),
        we have, writing $\vec{V} = \InitialContribVecT{t-1}$ (so that $\InitialContribVecT{t} = \MixMatBeta{\BalancingSpeed} \cdot \vec{V}$),
    \begin{align}
            \AutoExp{\NodePotential(\vec{V}) - \NodePotential(\MixMatBeta{\BalancingSpeed}(t) \cdot \vec{V})}
                &\geq \BalancingSpeed \cdot \AutoExp{\NodePotential(\vec{V}) - \NodePotential(\MixMatBeta{1}(t) \cdot \vec{V})}, \label{eq:seven_prime}
        \intertext{and}  \AutoVar{\NodePotential(\vec{V}) - \NodePotential(\MixMatBeta{\BalancingSpeed}(t) \cdot \vec{V})}
                &\leq 4 \BalancingSpeed^2 \cdot \AutoVar{\NodePotential(\vec{V}) - \NodePotential(\MixMatBeta{1}(t) \cdot \vec{V})}, \notag
    \end{align}
        and from the latter it immediately follows that for all $\varphi$
    \begin{align*}
    \AutoVarCond{\NodePotential(\InitialContribVecT{t})}{\NodePotential(\vec{V}) = \varphi}
       &= \AutoVarCond{\NodePotential(\MixMatBeta{\BalancingSpeed}(t) \cdot \vec{V})}{\NodePotential(\vec{V}) = \varphi}
    \\ &= \AutoVarCond{\varphi - \NodePotential(\MixMatBeta{\BalancingSpeed}(t) \cdot \vec{V})}{\NodePotential(\vec{V}) = \varphi}
    \\ &\leq 4 \BalancingSpeed^2 \cdot \AutoVarCond{\varphi - \NodePotential(\MixMatBeta{1}(t) \cdot \vec{V})}{\NodePotential(\vec{V}) = \varphi},
    \\ &= 4 \BalancingSpeed^2 \cdot \AutoVarCond{\NodePotential(\MixMatBeta{1}(t) \cdot \vec{V})}{\NodePotential(\vec{V}) = \varphi}.
    \end{align*}
    Combining the first statement of \cref{lem:edge_potential_bounds} and the first statement of \cref{prop:node_potential_change_statistics} gives us, for all $\vec{x} \in \R^n$,
    \[\NodePotential(\vec{x}) - \AutoExp{\NodePotential(\MixMatBeta{1}(t) \cdot \vec{x})} \geq \frac{\SpectralGap(\Laplacian(G))}{16} \cdot \NodePotential(\vec{x}),\]
        so that, for all $\varphi$,
    \[\AutoExpCond{\NodePotential(\InitialContribVecT{t})}{\NodePotential(\vec{V}) = \varphi}
        = \AutoExpCond{\NodePotential(\MixMatBeta{\BalancingSpeed}(t) \cdot \vec{V})}{\NodePotential(\vec{V}) = \varphi}
        \leq \varphi - \BalancingSpeed \cdot \frac{\SpectralGap(\Laplacian(G))}{16} \cdot \varphi.\]

    By the second statement of \cref{prop:node_potential_change_statistics}, for all $\vec{x} \in \R^n$:
    \[\BigAutoVar{\NodePotential(\MixMatBeta{1}(t) \cdot \vec{x})}
            \leq (32 \cdot (\EdgeHittingTime / n) + 4) \cdot\left(\NodePotential(\vec{x}) - \BigAutoExp{\NodePotential(\MixMatBeta{1}(t) \cdot \vec{x})}\right)^2.\]
    And so,
    \begin{align*}\MoveEqLeft
    \BigAutoVarCond{\NodePotential(\InitialContribVecT{t})}{\NodePotential(\vec{V}) = \varphi}
    \\ &\leq 4 \BalancingSpeed^2 \cdot \AutoVarCond{\NodePotential(\MixMatBeta{1}(t) \cdot \vec{V})}{\NodePotential(\vec{V}) = \varphi}
    \\ &\leq 4 \BalancingSpeed^2 \cdot \left(32 \cdot \frac{\EdgeHittingTime}{n} + 4\right) \cdot\left(\varphi - \BigAutoExpCond{\NodePotential(\MixMatBeta{1}(t) \cdot \vec{V})}{\NodePotential(\vec{V}) = \varphi}\right)^2
    \\ &= \left(128 \cdot \frac{\EdgeHittingTime}{n} + 16\right) \cdot \left(\BalancingSpeed \cdot \BigAutoExpCond{\varphi - \NodePotential\left(\MixMatBeta{1}(t) \cdot \vec{V}\right)}{\NodePotential(\vec{V}) = \varphi}\right)^2
    \\ &\overset{(\ref{eq:seven_prime})}{\leq} \left(128 \cdot \frac{\EdgeHittingTime}{n} + 16\right) \cdot \left(\BigAutoExpCond{\varphi - \NodePotential\left(\MixMatBeta{\BalancingSpeed}(t) \cdot \vec{V}\right)}{\NodePotential(\vec{V}) = \varphi}\right)^2
    \\ &= \left(128 \cdot \frac{\EdgeHittingTime}{n} + 16\right) \cdot \left( \BigAutoExpCond{\NodePotential\left(\MixMatBeta{\BalancingSpeed}(t) \cdot \vec{V}\right)}{\NodePotential(\vec{V}) = \varphi} -  \varphi\right)^2.
    \end{align*}

    So we can now apply \cref{lem:drift} with
        \[h(x) \coloneqq \BalancingSpeed \cdot \frac{\SpectralGap(\Laplacian(G))}{16} \cdot x;\quad\quad
        \sigma \coloneqq 128 \cdot \frac{\EdgeHittingTime}{n} + 16.\]
    With these values and $\delta = 1/2$, the first statement of \cref{lem:drift} gives us 
   \[\BigAutoProb{\int_{\NodePotential(\InitialContribVecT{t})}^{\NodePotential(\InitialContribVecT{0})} \frac{1}{h(\varphi)}\,\dvarphi \leq t/2} \leq \exp\left(-\,\frac{t}{8(\sigma + 1)}\right).\]
    The integral evaluates to
    \begin{align*}
        \int_{\NodePotential(\InitialContribVecT{t})}^{\NodePotential(\InitialContribVecT{0})} \frac{1}{h(\varphi)}\,\dvarphi
           &= \frac{16}{\BalancingSpeed \SpectralGap(\Laplacian(G))} \cdot \int_{\NodePotential(\InitialContribVecT{t})}^{\NodePotential(\InitialContribVecT{0})} \frac{1}{\varphi}\,\dvarphi
            = \log\left(\frac{\NodePotential(\InitialContribVecT{0})}{\NodePotential(\InitialContribVecT{t})}\right) \cdot \frac{16}{\BalancingSpeed \cdot \SpectralGap(\Laplacian(G))}.
    \end{align*}
    This is at least $t/2$ if and only if
        \[\NodePotential(\InitialContribVecT{t}) \leq \NodePotential(\InitialContribVecT{0}) \cdot \exp\left(-\,\frac{\BalancingSpeed \cdot \SpectralGap(\Laplacian(G))}{32} \cdot t\right),\]
        which follows after rearranging the initial inequality and exponentiation.
    So
        \begin{equation}\label{eq:initial_contrib_raw_tail_bound}\BigAutoProb{\NodePotential(\InitialContribVecT{t}) \leq \NodePotential(\InitialContribVecT{0}) \cdot \exp\left(-\,\frac{\BalancingSpeed \cdot \SpectralGap(\Laplacian(G))}{32} \cdot t\right)} \geq 1 - \exp\left(-\,\frac{t}{8(\sigma + 1)}\right).\end{equation}

    Now, let $K \coloneqq \discr(\InitialContribVecT{0}) = \discr(\LoadVecT{0})$.
    Then in particular, $\NodePotential(\InitialContribVecT{0}) \leq n \cdot K^2$,
        so that $\log(\NodePotential(\InitialContribVecT{0})) \leq 2\log(K \cdot n)$.
    Furthermore, it is the case that $0.5 \leq \EdgeHittingTime/n \leq 1 / \SpectralGap(\Laplacian(G))$ (by \cref{thm:spectral_bound_on_commute_time}) and that $\BalancingSpeed \in (0,1]$.

    Therefore, there is a sufficiently large constant $c > 0$
        such that if $t \geq t_0(\gamma) = c \cdot \max\{\gamma \log(n), \log(K \cdot n)\} / (\BalancingSpeed \cdot \SpectralGap(\Laplacian(G)))$,
        then \[t \geq \frac{\BalancingSpeed \cdot \SpectralGap(\Laplacian(G))}{32} \cdot \log(8 \cdot \NodePotential(\InitialContribVecT{0})),\]
        as well as
        \begin{align*}
        t  &\geq \max\{\gamma \log(n), \log(\NodePotential(\InitialContribVecT{0}))\} \cdot 8 \cdot \left(128 \cdot \frac{\EdgeHittingTime}{n} + 33\right)
        \\ &= \max\{\gamma \log(n), \log(\NodePotential(\InitialContribVecT{0}))\} \cdot 8(\sigma + 1).
        \end{align*}
    From $t \geq \frac{\BalancingSpeed \cdot \SpectralGap(\Laplacian(G))}{32} \cdot \log(8 \cdot \NodePotential(\InitialContribVecT{0}))$, it follows that
        \begin{equation*}
        \NodePotential(\InitialContribVecT{0}) \cdot \exp\left(-\,\frac{\BalancingSpeed \cdot \SpectralGap(\Laplacian(G))}{32} \cdot t\right) \leq \frac{1}{8}.
        \end{equation*}
    From $t \geq \max\{\gamma \log(n), \log(\NodePotential(\InitialContribVecT{0}))\} \cdot 8(\sigma + 1)$, it follows that
        \[\exp\left(-\,\frac{t}{8(\sigma + 1)}\right) \leq \min\left\{n^{-\gamma}, \frac{1}{8 \cdot \NodePotential(\InitialContribVecT{t})}\right\}.\]
    And so, for $t \geq t_0(\gamma)$, \cref{eq:initial_contrib_raw_tail_bound} entails
        \[\BigAutoProb{\NodePotential(\InitialContribVecT{t}) \leq \frac{1}{8}} \geq 1 - n^{-\gamma},\]
        which is the remaining claim for the high-probability statement.

    For the remaining claim (i.e., the statement concerning the expectation),
        note that for $t \geq t_0(0),$ the calculations above and \cref{eq:initial_contrib_raw_tail_bound} entail that
        \[\BigAutoProb{\NodePotential(\InitialContribVecT{t}) \leq \frac{1}{8}} \geq 1 - \frac{1}{8 \cdot \NodePotential(\InitialContribVecT{0})}.\]
    Hence, as $\NodePotential(\InitialContribVecT{\tau}) \leq \NodePotential(\InitialContribVecT{0})$ for all $\tau \in \N$,
        we have, for all $t \geq t_0(0)$,
        \begin{align*}
        \BigAutoExp{\NodePotential(\InitialContribVecT{t})}
           &\leq \frac{1}{8} \cdot \BigAutoProb{\NodePotential(\InitialContribVecT{t}) \leq \frac{1}{8}}
               + \NodePotential(\InitialContribVecT{0}) \cdot \BigAutoProb{\NodePotential(\InitialContribVecT{t}) > \frac{1}{8}}
        \\ &\leq \frac{1}{8} + \NodePotential(\InitialContribVecT{0}) \cdot \frac{1}{8 \cdot \NodePotential(\InitialContribVecT{0})}
            = \frac{1}{8} + \frac{1}{8} = \frac{1}{4},
        \end{align*}
        as claimed.
    \end{claimproof}
This concludes the proof of the lemma.
\end{proof}

\subsection{Proof of \cref{lem:rounding:errors:are:small}}

\restateRoundingErrorsAreSmall*

\label{p:lem:rounding:errors:are:small}
The proof is similar to the proof of \cite[Theorem 3.4]{DBLP:conf/focs/SauerwaldS12}.
\begin{proof}
    We show the concentration bound on $\discr(\RoundingContribVecT{t})$
        by proving concentration bounds on the absolute values $\abs{\NodeRoundingContribT{k}{t}}$ for each $k \in [n]$ and then applying a union bound over all $k$.
    To show the concentration bound on $\NodeRoundingContribT{k}{t}$ holds for any fixed sequence of matchings $\mixMatSeq{t} = (\mixMatBeta{\BalancingSpeed}(\tau))_{\tau=1}^t$;
        this implies a concentration bound on a random sequence of matchings
        by the law of total probability.

    So we fix $\mixMatSeq{t}$.
    Recall that
    \[\RoundingContribVecT{t} = \sum_{\tau=1}^t \mixMatTT{\tau+1}{t}_{k,\cdot} \cdot \RoundingErrVecT{\tau},\]
        where $\RoundingErrVecT{\tau} = (\RoundingErrT{k}{t})_{k \in [n]}$ is the vector of additive rounding errors incurred in round $\tau$:
        it is the difference between the load vector step~$t$,
        and what the load vector \emph{would} be after step $t$ if the balancing in this step were idealized.
    This additive rounding error stems from the constraint that only whole items can be transferred
        across the edges $\{i,j\}$ of the matching at time $\tau$.
    From the description of the protocol,
        it is immediate that the rounding errors at matched nodes sum to $0$,
        so that $\RoundingErrT{i}{\tau} = -\RoundingErrT{j}{\tau}$
            for all edges $\{i,j\} \in E(\mixMatT{\tau})$ matched in round $\tau$.
    Thus,
    \begin{align*}
        \NodeRoundingContribT{k}{t}
           &= \sum_{\tau=1}^t \mixMatTT{\tau+1}{t}_{k,\cdot} \cdot \RoundingErrVecT{\tau}
            = \sum_{\tau=1}^t \mixMatTT{\tau+1}{t}_{k,\cdot} \cdot \sum_{\{i,j\} \in E(\mixMatT{\tau})} (\RoundingErrT{i}{\tau} + \RoundingErrT{j}{\tau})
        \\ &= \sum_{\tau=1}^t \sum_{\{i,j\} \in E(\mixMatT{\tau})} \left(\mixMatTT{\tau+1}{t}_{k,i} - \mixMatTT{\tau+1}{t}_{k,j}\right) \cdot \RoundingErrT{i}{\tau}.
    \end{align*}
    We will derive the claimed tail bound on $\NodeRoundingContribT{k}{t}$
        by applying the Azuma-Hoeffding inequality (\cref{thm:azuma_hoeffding})
        to a sequence of partial sums as follows.
    We sequence the rounding actions with $\tau$ increasing
        and arbitrarily within rounds.
    If $i$ is the representative node of the $k$th edge in round $\tau$ (with $k \in [\lfloor n/2 \rfloor]$ and $\tau \in [t]$),
        for $l = (\tau - 1) \cdot \lfloor n/2 \rfloor + k$ let us write
        \[Y_l = \left(\mixMatTT{\tau+1}{t}_{k,i} - \mixMatTT{\tau+1}{t}_{k,j}\right) \cdot \RoundingErrT{i}{\tau},\]
        and let $Y_l = 0$ if there are fewer than $k$ edges are in the matching in round $\tau$.
    Se sequence of partial sums is then
        \(S_l \coloneqq \sum_{a \in [l]} Y_l\),
        which we consider with respect to the filtration $(\FilterT{l})_{l=0}^{t \cdot \lfloor n/2 \rfloor}$ in which $\FilterT{l-1}$ completely determines the state right before the rounding action corresponding to the term $Y_l$.
    Note that $S_{t \cdot \lfloor n/2 \rfloor} = \NodeRoundingContribT{k}{t}$.
    To apply \cref{thm:azuma_hoeffding},
        it is enough to show that the conditional expectation of the difference between successive terms is zero,
        and that we can bound the differences between terms.
    
    To check these preconditions,
        let us write $F_l$ for the fractional value of the load at node $i$ before the rounding action (i.e., the fractional value of the load $i$ if balancing were idealized and no rounding was necessary).
    Then the load will be rounded up with probability $F_l$, resulting in a positive rounding error of $\RoundingErrT{i}{\tau} = 1 - F_l$,
        or rounded down with probability $1 - F_l$, resulting in a negative rounding error of $\RoundingErrT{i}{\tau} = -F_l$.
    Hence,
        \[\AutoExpCond{\RoundingErrT{i}{\tau}}{\FilterT{l-1}} = F_l \cdot (1 - F_l) + (1 - F_l) \cdot (-F_l) = 0,\]
        so that, as required,
        \[\AutoExpCond{Y_l}{\FilterT{l-1}}
            = \BigAutoExpCond{\left(\mixMatTT{\tau+1}{t}_{k,i} - \mixMatTT{\tau+1}{t}_{k,j}\right) \cdot \RoundingErrT{i}{\tau}}{\FilterT{l-1}}
            = 0.\]
    From this description,
        it is also clear that writing $\delta_{i,j}(\tau) \coloneqq \mixMatTT{\tau+1}{t}_{k,i} - \mixMatTT{\tau+1}{t}_{k,j}$,
        the term $Y_l$ is bounded from above by $a_l \coloneqq \delta_{i,j}(\tau) (1 - \FractionalT{i}{\tau})$,
        and from below by $-b_l \coloneqq - \delta_{i,j}(\tau) \FractionalT{i}{\tau}$,
        so that $a_l + b_l = \delta_{i,j}(\tau)$.

    So we may apply \cref{thm:azuma_hoeffding};
        to use it we require (an upper bound on) the value of the sum $\sum_{l=1}^{\tau \cdot \lfloor n/2 \rfloor} (a_l + b_l)^2$,
        which we bound by applying \cref{obs:node_potential_change_exact} and collapsing the ensuing telescoping sum (analogously to the proof of Theorem 3.2 in~\cite{DBLP:conf/focs/SauerwaldS12}):
    \begin{align*}
        \sum_{l=1}^{\tau \cdot \lfloor n/2 \rfloor} (a_l &+ b_l)^2
            = \sum_{\tau=1}^t \underbrace{\sum_{\{i,j\} \in E(\mixMatT{\tau})} \left(\mixMatTT{\tau+1}{t}_{k,i} - \mixMatTT{\tau+1}{t}_{k,j}\right)^2}_{=\EdgePotential_{E(\mixMatT{\tau})}}
            \\& = \sum_{\tau=1}^t \frac{2}{1-(1-\BalancingSpeed)^2}\left(\NodePotential\left(\mixMatTT{\tau+1}{t}_{k,\cdot}\right) - \NodePotential\left(\mixMatTT{\tau+1}{t}_{k,\cdot} \cdot \mixMatT{\tau}\right)\right)    
    \\ & \overset{(a)}{\leq} \sum_{\tau=1}^t \frac{2}{\BalancingSpeed} \left(\NodePotential\left(\mixMatTT{\tau+1}{t}_{k,\cdot}\right) - \NodePotential\left(\mixMatTT{\tau+1}{t}_{k,\cdot} \cdot \mixMatT{\tau}\right)\right)
        \\ &= \frac{2}{\BalancingSpeed} \cdot \sum_{\tau=1}^t \left(\NodePotential\left(\mixMatTT{\tau+1}{t}_{k,\cdot}\right) - \NodePotential\left(\mixMatTT{\tau}{t}_{k,\cdot}\right)\right)
            = \frac{2}{\BalancingSpeed} \cdot \left(\NodePotential\left(\mixMatTT{t+1}{t}_{k,\cdot}\right) - \NodePotential\left(\mixMatTT{1}{t}_{k,\cdot}\right)\right)
        \\ &= \frac{2}{\BalancingSpeed} \cdot \left(\NodePotential\left(\mathbf{I}_{k,\cdot}\right) - \NodePotential\left(\mixMatTT{1}{t}_{k,\cdot}\right)\right)
            \leq \frac{2}{\BalancingSpeed} \cdot (1 - 0) = \frac{2}{\BalancingSpeed},
    \end{align*}
    where $(a)$ follows from the fact that $\BalancingSpeed\in(0,1]$ and therefore, $1-(1-\BalancingSpeed)^2\ge \BalancingSpeed$.
    So by \cref{thm:azuma_hoeffding} (with $\varepsilon = \sqrt{(\gamma+1) \log(n) / \BalancingSpeed}$ and $\E[\NodeRoundingContribT{k}{t}]=0$) we have
        \[\BigAutoProb{\abs{\NodeRoundingContribT{k}{t}} \geq \sqrt{\frac{(\gamma+1) \log(n)}{\BalancingSpeed}}} \leq 2 \exp\left(-\,\frac{2 \varepsilon^2}{2 / \BalancingSpeed}\right)
        \leq 2 \exp(- (\gamma+1) \log(n))
        = 2 n^{-\gamma-1}.\]
    Since $\discr(\RoundingContribVecT{t}) = \max_{k\in[n]}\NodeRoundingContribT{k}{t} - \min_{k\in[n]}\NodeRoundingContribT{k}{t}$, applying a union bound over all nodes $k \in [n]$ we see that 
       \[\BigAutoProb{\discr(\RoundingContribVecT{t})\ge 2\cdot \sqrt{\frac{(\gamma+1) \log(n)}{\BalancingSpeed}}} \le 2n^{-\gamma},
       \]
       which is the claimed concentration bound. 
       
    To show the bound on $\AutoExp{\discr(\RoundingContribVecT{t}}$,
        we apply \cref{lem:tail_bound_to_expectation_bound} with $X=\discr(\RoundingContribVecT{t})$, $c=2$ and $C=2\sqrt{ \log(n)/{\BalancingSpeed}}$ to see that,
       \[
       \AutoExp{\discr(\RoundingContribVecT{t})} \le 2\sqrt{ \frac{\log(n)}{{\BalancingSpeed}}}\cdot \left(1+\frac{2}{\log(n)}\right) = \Oh\left(\sqrt{\frac{\log(n)}{\BalancingSpeed}}\right). \qedhere
       \]
\end{proof}

\subsection{Omitted Proofs from \cref{sec:bound:contribution:dynamcally:allocated:balls}}
\label{sec:Omitted:Proofs:31}

\restateObsPotentialRelation*

\begin{proof}
    We assume w.l.o.g.\ that the entries of $\vec{x}$ sum to $0$,
        meaning that $\overline{x} = 0$, so that $\NodePotential(\vec{x}) = \sum_{i \in [n]} x_i^2$.
    As loads only change at matched nodes,
        let us investigate the potential change  at two matched nodes $i$ and $j$,
        where w.l.o.g.\ $x_i \geq x_j$.
    The amount of load transferred from $i$ to $j$ under idealized balancing (without rounding) is $(x_i - x_j) \cdot \BalancingSpeed / 2$.
    So with
        \[a \coloneqq \frac{x_i + x_j}{2},\quad b \coloneqq \frac{x_i - x_j}{2},\quad c \coloneqq (1 - \BalancingSpeed) \cdot \frac{x_i - x_j}{2},\]
        the loads before balancing are $x_i = a+b$ and $x_j = a-b$,
        and the loads after idealized balancing are $x_i' = a+c$ and $x_j' = a-c$.
    So the change of the potential contributions at $i$ and $v$ is
        \[(a+b)^2 + (a-b)^2 - ((a+c)^2 + (a-c)^2) = 2(a^2 + b^2) - 2(a^2 + c^2) = 2(b^2 - c^2),\]
        where we used $(x+y)^2 + (x-y)^2 = (x^2 + 2xy + y^2) + (x^2 - 2xy + y^2) = 2x^2 + 2y^2$.
    Now, \[2 (b^2 - c^2) = 2 (1^2 - (1 - \BalancingSpeed)^2) \left(\frac{x_i - x_j}{2}\right)^2 = \frac{1 - (1-\BalancingSpeed)^2}{2} (x_i - x_j)^2.\]
    Summing this over all edges in the matching gives, as claimed,
        \[\NodePotential(\vec{x}) - \NodePotential(\MixMatBeta{\BalancingSpeed} \cdot \vec{x}) = \frac{1 - (1-\BalancingSpeed)^2}{2} \sum_{\{i,j\} \in E(\MixMatBeta{\BalancingSpeed})} (x_i - x_j)^2
            = \frac{1 - (1-\BalancingSpeed)^2}{2} \cdot \EdgePotential_{E(\MixMatBeta{\BalancingSpeed})}(\vec{x}). \qedhere \]
\end{proof}

\restateLemGlobalDivergence*
\begin{proof}
    First recall that
        \[\left(\GlobalDivergence_k(\MixMatSeq{t})\right)^2 = \sum_{\tau=1}^{t} \norm*{\MixMatTT{\tau}{t}_{k,\cdot} - \frac{\vec{1}}{n}}_2^2.\]
    As the mixing matrices are doubly stochastic, each row is a stochastic vector $\vec{x}$.
    By definition of the node potential $\NodePotential$ we know
        \[\norm*{\MixMatTT{\tau}{t}_{k,\cdot} - \frac{\vec{1}}{n}}_2^2
            = \sum_{w=1}^n \left(\MixMatTT{\tau}{t}_{k,w} - \frac{1}{n}\right)^2
            = \NodePotential\left(\MixMatTT{\tau}{t}_{k,\cdot}\right)\]
    and hence \[\left(\GlobalDivergence_k(\MixMatSeq{t})\right)^2
        = \sum_{\tau=1}^t \NodePotential(\MixMatTT{\tau}{t}_{k,\cdot}).\]
To bound this sum we will apply the second statement of \cref{lem:drift} to the sequence of values $\NodePotential(\MixMatTT{\tau}{t})$ for $\tau = t, \dots, 1$. 
Since the matching matrices $\MixMatBeta{\BalancingSpeed}(1)\ldots, \MixMatBeta{\BalancingSpeed}(t)$ are symmetric we get
    \[\NodePotential(\MixMatTT{\tau}{t}_{k,\cdot})
        = \NodePotential\left(\MixMatTT{\tau+1}{t}_{k,\cdot} \cdot \MixMatBeta{\BalancingSpeed}(\tau)\right)
        = \NodePotential\left(\MixMatBeta{\BalancingSpeed}(\tau) \cdot \MixMatTT{\tau+1}{t}_{k,\cdot}\right).\]
By \cref{obs:node_potential_change_exact} with
$S=E(\MixMatBeta{\BalancingSpeed}(\tau))$ defined as the edges of $\MixMatBeta{\BalancingSpeed}(\tau)$ we get
\begin{equation}\label{eqn:potential_change_exact_beta}
\NodePotential(\MixMatTT{\tau+1}{t}_{k,\cdot}) - \NodePotential(\MixMatTT{\tau}{t}_{k,\cdot})
    = \frac{1 - (1 - \BalancingSpeed)^2}{2} \cdot \EdgePotential_{S}(\MixMatTT{\tau+1}{t}_{k,\cdot}) \geq 0.
\end{equation}
This shows that   $\NodePotential(\MixMatTT{\tau}{t}_{k,\cdot}) \leq \NodePotential(\MixMatTT{\tau+1}{t}_{k,\cdot})$ for all $\tau$.
Expressing \cref{eqn:potential_change_exact_beta} with Balancing Parameter $1$ and, for the ease of presentation, setting $\vec{V} \coloneqq \MixMatTT{\tau+1}{t}_{k,\cdot}$ gives us
        \begin{equation*}
         \NodePotential(\MixMatTT{\tau+1}{t}_{k,\cdot})-
         \NodePotential(\MixMatTT{\tau}{t}_{k,\cdot})=
          \NodePotential(\vec{V}) - \NodePotential(\MixMatBeta{\BalancingSpeed}(\tau) \cdot \vec{V})
            = (1 - (1-\BalancingSpeed)^2) \cdot \left(\NodePotential(\vec{V}) - \NodePotential(\MixMatBeta{1}(\tau) \cdot \vec{V})\right).
        \end{equation*}
    Since $\BalancingSpeed \leq 1 - (1 - \BalancingSpeed)^2 \leq 2 \BalancingSpeed$ for $\BalancingSpeed \in (0, 1]$ we get
        \begin{align}
            \AutoExp{\NodePotential(\vec{V}) - \NodePotential(\MixMatBeta{\BalancingSpeed}(\tau) \cdot \vec{V})}
                &\geq \BalancingSpeed \cdot \AutoExp{\NodePotential(\vec{V}) - \NodePotential(\MixMatBeta{1}(\tau) \cdot \vec{V})}, \label{eq:seven}
        \\  \AutoVar{\NodePotential(\vec{V}) - \NodePotential(\MixMatBeta{\BalancingSpeed}(\tau) \cdot \vec{V})}
                &\leq 4 \BalancingSpeed^2 \cdot \AutoVar{\NodePotential(\vec{V}) - \NodePotential(\MixMatBeta{1}(\tau) \cdot \vec{V})}. \label{eq:eight}
        \end{align}
   As $\MatchDistr(G)$ is $(g, \sigma^2)$-good,
     for any stochastic vector $\vec{v} \in \R^n$
        we have
   \(\AutoExp{\NodePotential(\vec{v}) - \NodePotential(\MixMatBeta{1}(\tau)  \cdot \vec{v})} \geq g(\NodePotential(\vec{v})).\)
    Combining this with \cref{eq:seven} gives
   \[\AutoExp{\NodePotential(\vec{v}) - \NodePotential(\MixMatBeta{\BalancingSpeed}(\tau) \cdot \vec{v})} \ge \beta\cdot g(\NodePotential(\vec{v})).\]
    And thus,
\begin{align*}
        \BigAutoExpCond{\NodePotential(\MixMatTT{\tau}{t}_{k,\cdot})}{\NodePotential(\vec{V})=\varphi}
           &= \BigAutoExpCond{\NodePotential(\MixMatBeta{\BalancingSpeed}(\tau) \cdot \vec{V})}{\NodePotential(\vec{V})=\varphi} 
            \leq \varphi - \BalancingSpeed \cdot g\left(\varphi\right).
    \end{align*}
    Similarly, as $\MatchDistr(G)$ is $(g, \sigma^2)$-good,
        for any stochastic vector $\vec{v} \in \R^n$
        we have  \item \(\AutoVar{\NodePotential(\MixMatBeta{1} \cdot \vec{v})} \leq (\sigma^2 - 1) \cdot \left(\NodePotential(\vec{v}) -  \AutoExp{\NodePotential(\MixMatBeta{1} \cdot \vec{v})}\right)^2.\)
    Combining this with \cref{eq:eight} gives us
        \[\AutoVar{\NodePotential(\MixMatBeta{\BalancingSpeed} \cdot \vec{v})}
            \leq 4\BalancingSpeed^2 (\sigma^2 - 1) \left(\NodePotential(\vec{v}) -  \AutoExp{\NodePotential(\MixMatBeta{1} \cdot \vec{v})}\right)^2,\]
        and thus
    \begin{align*}
    \BigAutoVarCond{\NodePotential(\MixMatTT{\tau}{t}_{k,\cdot})}{\NodePotential( \vec{V}) = \varphi}
       &= \BigAutoVarCond{\NodePotential(\MixMatBeta{\BalancingSpeed}(\tau) \cdot \vec{V})}{\NodePotential(\vec{V}) = \varphi}
    \\ &\leq 4 \BalancingSpeed^2 \cdot (\sigma^2 - 1) \cdot \left(\varphi - \BigAutoExpCond{\NodePotential\left(\MixMatBeta{1}(\tau) \cdot \vec{V}\right)}{\NodePotential(\vec{V}) = \varphi}\right)^2
    \\ &= 4 (\sigma^2 - 1) \cdot \left(\BalancingSpeed \cdot \BigAutoExpCond{\varphi - \NodePotential\left(\MixMatBeta{1}(\tau) \cdot \vec{V}\right)}{\NodePotential(\vec{V}) = \varphi}\right)^2
    \\ &\overset{(\ref{eq:seven})}{\leq} 4 (\sigma^2 - 1) \cdot \left(\BigAutoExpCond{\varphi - \NodePotential\left(\MixMatBeta{\BalancingSpeed}(\tau) \cdot \vec{V}\right)}{\NodePotential(\vec{V}) = \varphi}\right)^2
    \\ &= 4 (\sigma^2 - 1) \cdot \left( \BigAutoExpCond{\NodePotential\left(\MixMatBeta{\BalancingSpeed}(\tau) \cdot \vec{V}\right)}{\NodePotential(\vec{V}) = \varphi} -  \varphi\right)^2.
    \end{align*}
    
    We apply the second statement of \cref{lem:drift} with $p=n^{-\gamma}$, $\delta = 0.5$, and $h(x) \coloneqq \BalancingSpeed \cdot g(x)$, which is an increasing function as $g$ is increasing by the definition of $(g,\sigma^2)$-good, and get
    \[\BigAutoProb{\sum_{\tau=1}^{t-t_0} \NodePotential(\MixMatTT{\tau}{t}_{k,\cdot}) \leq 2 \cdot \int_0^1 \frac{x}{\BalancingSpeed \cdot g(x)}\,\dx}
        \geq 1 - n^{-\gamma},\]
        where $t_0 = 8 \sigma^2 (\gamma\log(n) + \log(8 \sigma^2))$.
From this follows that with probability at least $1-n^{-\gamma}$
\begin{align*}
    \left(\GlobalDivergence_k(\MixMatSeq{t})\right)^2 =  \sum_{\tau=1}^{t-t_0} \NodePotential(\MixMatTT{\tau}{t}_{k,\cdot}) + \sum_{\tau=t-t_0+1}^t \NodePotential(\MixMatTT{\tau}{t}_{k,\cdot}) \overset{(a)}{\le} 2 \cdot \int_0^1 \frac{x}{\BalancingSpeed \cdot g(x)}\,\dx + t_0,
\end{align*}
where $(a)$ follows from the fact that 
$\NodePotential(\MixMat_{k,\cdot}) <1$ for $k$-th row of any stochastic matrix $\MixMat$. 
The lemma follows applying the definition of $t_0$.
\end{proof}

\restateLemNodePotentialChangeStatistics*
\begin{proof}
    By \cref{obs:node_potential_change_exact},
        we have
        \[\NodePotential(\vec{x}) - \NodePotential(\MixMatBeta{1} \cdot \vec{x})
            = \frac{1 - (1-1)^2}{2} \cdot \EdgePotential_{E(\MixMatBeta{1})}(\vec{x}).\]
    Rearranging this lower bound
        into \[\NodePotential(\MixMatBeta{1} \cdot \vec{x}) = \NodePotential(\vec{x}) - \frac{1}{2} \cdot \EdgePotential_{E(\MixMatBeta{1})},\]
        and expanding the definition of $\EdgePotential_{E(\MixMatBeta{1})}$ we have by linearity of expectation
    \begin{align*}
    \AutoExp{\NodePotential(\MixMatBeta{1} \cdot \vec{x})}
       &= \BigAutoExp{\NodePotential(\vec{x}) - \frac{1}{2} \cdot \sum_{\{i,j\} \in E(\MixMatBeta{1})} (x_i - x_j)^2}
    \\ &= \NodePotential(\vec{x}) - \frac{1}{2} \cdot \sum_{\{i,j\} \in E(G)} \AutoExp{\1_{\{i,j\} \in E(\MixMatBeta{1})} \cdot (x_i - x_j)^2}
    \\ &= \NodePotential(\vec{x}) - \frac{1}{2} \cdot \sum_{\{i,j\} \in E(G)} \AutoProb{\{i,j\} \in E(\MixMatBeta{1})} \cdot (x_i - x_j)^2
    \\ &\leq \NodePotential(\vec{x}) - \frac{1}{2} \cdot \sum_{\{i,j\} \in E(G)} 
\frac{1}{8d} \cdot (x_i - x_j)^2
    \\ &= \NodePotential(\vec{x}) - \frac{1}{16d} \cdot \EdgePotential_G(\vec{x}),
    \end{align*}
        where the inequality used that, for $\MixMatBeta{1} \sim \RMDistr(G)$ and all edges $e \in E(G)$, it holds that $\AutoProb{e \in E(\MixMatBeta{1})} \geq 1/(8d)$ \cite[Lemma 2]{DBLP:journals/jcss/GhoshM96}.
It finishes the proof of the first statement.

\medskip

For the second statement observe that by \cref{obs:node_potential_change_exact} we have 
    \[\NodePotential(\vec{x}) - \NodePotential(\MixMatBeta{1} \cdot \vec{x})
            = \frac{1}{2} \cdot \EdgePotential_{E(\MixMatBeta{1})}(\vec{x})\]
Then, as $\NodePotential(\vec{x})$ is constant for a given $\vec{x}$,
    \begin{equation}\label{eqn:var_change_leq_beta_sq_var_edgepot}
        \AutoVar{\NodePotential(\MixMatBeta{1} \cdot \vec{x})} = \AutoVar{\NodePotential(\vec{x}) - \NodePotential(\MixMatBeta{1} \cdot \vec{x})}
            = \BigAutoVar{\frac{1}{2} \cdot \EdgePotential_{\MixMatBeta{1}}(\vec{x})}
            = \frac{1}{4} \AutoVar{\EdgePotential_{\MixMatBeta{1}}(\vec{x})}.\end{equation}

Recall that the matching distribution $\RMDistr(G)$ is obtained as follows.
First, generate a random edge set $S$ as follows.
For each $e \in E(G)$, $e\in S$ with probability $\pmax \coloneqq \AutoProb{e \in S} = 1/(4d) - 1/(64d^2) \leq 1/(4d)$, independently of all other edges.
Then, some edges of $S$ are deleted to create a proper matching, resulting in  $E(\MixMatBeta{1}) \subseteq S$.
Hence 
\[
0 \leq \EdgePotential_{E(\MixMatBeta{1})}(\vec{x}) = \sum_{\{i,j\} \in E(\MixMatBeta{1})} (x_i - x_j)^2 \leq \sum_{\{i,j\} \in S} (x_i - x_j)^2 = \EdgePotential_S(\vec{x}),
\]
and
\begin{equation}\label{eqn:var_pot_matching_leq_var_pot_s_etc}
\AutoVar{\EdgePotential_{E(\MixMatBeta{1})}(\vec{x})} \leq \AutoExp{(\EdgePotential_{E(\MixMatBeta{1})}(\vec{x}))^2}
\leq \AutoExp{(\EdgePotential_S(\vec{x}))^2}
= \AutoVar{\EdgePotential_S(\vec{x})} + (\AutoExp{\EdgePotential_S(\vec{x})})^2.
\end{equation}
Observe that $\EdgePotential_S(\vec{x})$ can be expressed as $\EdgePotential_S(\vec{x})=\sum_{\{i,j\} \in E(G)} (x_i - x_j)^2 \1_{\{i,j\} \in S}$ with $\AutoProb{\1_{\{i,j\} \in S} = 1}=\pmax$.
Thus,
\begin{align*}
\AutoExp{\EdgePotential_S(\vec{x})}
   &= \sum_{\{i,j\} \in E} (x_i - x_j)^2 \cdot \AutoExp{\1_{\{i,j\} \in S}}
    = \pmax \cdot \sum_{\{i,j\} \in E} (x_i - x_j)^2
    = \pmax \cdot \EdgePotential_G(\vec{x}); \\
\AutoVar{\EdgePotential_S(\vec{x})}
   &= \sum_{\{i,j\} \in E} (x_i - x_j)^4 \cdot \AutoVar{\1_{\{i,j\} \in S}}
    = \sum_{\{i,j\} \in E} (x_i - x_j)^4 \cdot \pmax(1-\pmax) 
\\ &= \pmax(1-\pmax) \cdot \sum_{\{i,j\} \in E} (x_i - x_j)^4
\\ &\leq \pmax \cdot \sum_{\{i,j\} \in E} (x_i - x_j)^2 \cdot \max_{\{k,l\} \in E} (x_k - x_l)^2
\\ &\leq \pmax \cdot \EdgePotential_G(\vec{x}) \cdot \max_{\{k,l\} \in E} (x_k - x_l)^2.
\end{align*}
By using \cref{lem:edge_potential_bounds}(3) and then \cref{claim:hitting_time_resistance_relation}(1) we get that
\[
\max_{\{k,l\} \in E} (x_i - x_j)^2
\leq \MaxEdgeResistance \cdot \EdgePotential_G(\vec{x})
\leq \frac{\EdgeHittingTime}{\abs{E}} \cdot \EdgePotential_G(\vec{x}).
\]
Hence,
\begin{align}
\AutoVar{\EdgePotential_{E(\MixMatBeta{1})}(\vec{x})}
&\overset{{\!\!\!\!(\ref{eqn:var_pot_matching_leq_var_pot_s_etc})\!\!\!\!}}{\leq} \AutoVar{\EdgePotential_S(\vec{x})} + (\AutoExp{\EdgePotential_S(\vec{x})})^2
\notag\\&\leq \pmax \cdot \EdgePotential_G(\vec{x}) \cdot \max_{\{k,l\} \in E} (x_k - x_l)^2 + (\pmax \cdot \EdgePotential_G(\vec{x}))^2
\notag\\&\leq \pmax \cdot \EdgePotential_G(\vec{x}) \cdot \frac{\EdgeHittingTime}{\abs{E}} \cdot \EdgePotential_G(\vec{x}) + \pmax^2 \cdot (\EdgePotential_G(\vec{x}))^2
\notag\\ &\leq \frac{1}{4d} \cdot \frac{\EdgeHittingTime}{dn/2} \cdot (\EdgePotential_G(\vec{x}))^2 + \frac{1}{16d^2} \cdot (\EdgePotential_G(\vec{x}))^2
\notag\\ &= \frac{1}{2d^2} \cdot \left(\frac{\EdgeHittingTime}{n} + \frac{1}{8}\right) \cdot \EdgePotential_G(\vec{x})^2. \label{eqn:var_edgepot_leq_pmax_etc}
\end{align}
Applying the first statement of this lemma we get

\begin{equation}\label{eqn:edgepot_bound_nodepot_dev_from_mean}
\EdgePotential_G(\vec{x})
\leq 16d \cdot (\NodePotential(\vec{x}) - \AutoExp{\NodePotential(\MixMatBeta{1} \cdot \vec{x})}).
\end{equation}
Putting everything together the second statement follows from
\begin{align*}
\AutoVar{\NodePotential(\MixMatBeta{1} \cdot \vec{x})}
&\overset{(\ref{eqn:var_change_leq_beta_sq_var_edgepot})}{\leq} \frac{1}{4} \cdot \AutoVar{\EdgePotential_{\MixMatBeta{1}}(\vec{x})}
\overset{(\ref{eqn:var_edgepot_leq_pmax_etc})}{\leq} \frac{1}{4} \cdot \frac{1}{2d^2} \cdot \left(\frac{\EdgeHittingTime}{s} + \frac{1}{8}\right) \cdot (\EdgePotential_G(\vec{x}))^2
\\ &\overset{(\ref{eqn:edgepot_bound_nodepot_dev_from_mean})}{\leq} \frac{1}{8d^2} \cdot \left(\frac{\EdgeHittingTime}{n} + \frac{1}{8}\right) \cdot \left(16d \cdot (\NodePotential(\vec{x}) - \AutoExp{\NodePotential(\MixMatBeta{1} \cdot \vec{x})}\right)^2
\\ &= 32 \cdot \left(\frac{\EdgeHittingTime}{n} + \frac{1}{8}\right) \cdot \left(\NodePotential(\vec{x}) - \AutoExp{\NodePotential(\MixMatBeta{1} \cdot \vec{x})}\right)^2
\\ &\leq \left(32 \cdot \frac{\EdgeHittingTime}{n} + 4\right) \cdot \left(\NodePotential(\vec{x}) - \AutoExp{\NodePotential(\MixMatBeta{1} \cdot\vec{x})}\right)^2 \qedhere
\end{align*}
\end{proof}

\restateEdgePotentialBounds*

\begin{proof}
First note that for all $\vec{x} \in \R^n$, $a, b \in \R$, and $S \subseteq E(G)$,
\begin{equation}\label{obs:edge_potential_quadratic}\begin{aligned}
\EdgePotential_S(a \cdot \vec{x} + b)
   &= \sum_{\{i, j\} \in S} ((a \cdot x_i + b) - (a \cdot x_j + b))^2
    = \sum_{\{i, j\} \in S} (a \cdot x_i + b - a \cdot x_j - b)^2
\\ &= \sum_{\{i, j\} \in S} a^2 (x_i - x_j)^2
    = a^2 \EdgePotential(\vec{x}).
\end{aligned}\end{equation}

The proof of the first part is similar to that of Theorem 2.6 in~\cite{DBLP:conf/focs/SauerwaldS12}.
First, see that
    \begin{align*}
        \EdgePotential_G(\vec{x})
           &= \sum_{\{i,j\} \in E(G)} (x_i - x_j)^2
            = \sum_{\{i,j\} \in E(G)} (x_i^2 - 2x_i x_j + x_j^2)
        \\ &= \sum_{i \in [n]} d \cdot x_i^2 - \sum_{i,j \in [n]} \AdjacencyMat_{i,j} x_i x_j
            = d \cdot \inp{\vec{x}}{\vec{x}} - \sum_{i \in [n]} x_i \left(\sum_{j \in [n]} \AdjacencyMat_{i,j} x_j\right)
        \\ &= d \cdot \inp{\vec{x}}{\IdentityMat \vec{x}} - \inp{\vec{x}}{\AdjacencyMat \vec{x}}
            = d \cdot \inp{\vec{x}}{(\IdentityMat - \AdjacencyMat / d) \vec{x}}
            = d \cdot \inp{\vec{x}}{\Laplacian(G) \vec{x}}.
\end{align*}
As $\EdgePotential_G(\vec{x} - b) = \EdgePotential_G(\vec{x})$ by \cref{obs:edge_potential_quadratic},
    we may assume w.l.o.g.\ that $\inp{\vec{x}}{\vec{1}} = 0$ by subtracting $b \coloneqq \inp{\vec{x}}{\vec{1}} / n$ from every coordinate of $\vec{x}$.
For such a vector we have $\NodePotential(\vec{x}) = \norm{x}^2_2 = \inp{\vec{x}}{\vec{x}}$, and 
\begin{align*}
    \EdgePotential_G(\vec{x})
       &= d \inp{\vec{x}}{\Laplacian(G) \vec{x}}
        = d \cdot \frac{\inp{\vec{x}}{\Laplacian(G) \vec{x}}}{\inp{\vec{x}}{\vec{x}}} \cdot \NodePotential(\vec{x})
      \geq d \cdot \NodePotential(\vec{x}) \cdot \min_{\substack{\vec{a} \in \R^n \setminus \{\vec{0}\} \\ \inp{\vec{a}}{\vec{1}} = 0}} \frac{\inp{\vec{a}}{\Laplacian(G) \vec{a}}}{\inp{\vec{a}}{\vec{a}}}
      \\&  = d \cdot \SpectralGap(\Laplacian(G)) \cdot \NodePotential(\vec{x}),
\end{align*}
    where the final equality is due to the min-max theorem and the fact that the smallest eigenvalue of $\Laplacian(G)$ is $0$,
        with its associated eigenvector being $\vec{1}$.

For the second part,
    let $i, j \in [n]$ be two distinct nodes of the graph with $x_i \neq x_j$.
Then
    \begin{equation}\label{eqn:general_resistance_bound}
    \EdgePotential_G(\vec{x})
        = (x_i - x_j)^2 \cdot \EdgePotential_G\left(\frac{\vec{x} - x_j}{x_i - x_j}\right)
        \geq (x_i - x_j)^2 \cdot \min_{\substack{\vec{a} \in \R^n \\ a_i = 1 \\ a_j = 0}} \EdgePotential_G(\vec{a})
        = \frac{(x_i - x_j)^2}{\ResistiveDistance{i}{j}},
    \end{equation}
    where the first equality uses \cref{obs:edge_potential_quadratic},
        the central inequality holds because the argument of $\EdgePotential_G$ is a vector $\vec{a} \in \R^n$ with $a_i = 1$ and $a_j = 0$,
        and the final equality is by Dirichlet's principle (\cref{thm:dirichlet_principle}).
Note that the bound also holds when $x_i = x_j$.

Given \cref{eqn:general_resistance_bound},
    we now show that $\EdgePotential_G(\vec{x})$ is larger than the first, resp. second, term inside the maximum of the second part's statement.
For the first term,
    we choose $i$ and $j$ such that $x_i - x_j = \discr(\vec{x})$,
    and recall that $\ResistiveDistance{i}{j} \leq \ResistiveDiameter$ for all $i,j \in [n]$.
Then, \cref{eqn:general_resistance_bound} states that
    \(\EdgePotential_G(\vec{x}) \geq \discr(\vec{x})^2 / \ResistiveDiameter,\)
and it remains to bound $\discr(\vec{x})$ from below by $\NodePotential(\vec{x})$.
To that end, as the vector $\vec{x}$ is stochastic by assumption,
    the sum over all its entries is 1,
    and there is at least one $k \in [n]$ with $x_k \leq 1/n$.
Hence, $\discr(\vec{x}) \geq \max_{k \in [n]} (x_k - 1/n)$, and so
\begin{align*}\discr(\vec{x}) &\geq \discr(\vec{x}) \cdot \underbrace{\sum_{k \in [n]} x_k}_{= 1}
                  \geq \sum_{k \in [n]} \underbrace{\vphantom{\sum_{k \in [n]}}\left(x_k - \frac{1}{n}\right)}_{\leq \discr(\vec{x})} x_k - \underbrace{\frac{1}{n} \cdot \sum_{k \in [n]} \left(x_k - \frac{1}{n}\right)}_{=\frac{1}{n} \cdot 0 = 0} = \sum_{k \in [n]} \left(x_k - \frac{1}{n}\right)^2 \\ & = \NodePotential(\vec{x}),\end{align*}
        as needed to complete the bound for the first term.

For the second term,
    we choose $i$ and $j$ such that $x_i = \max_{k \in [n]} x_k$, $x_j \leq x_i - 2/3 \cdot \discr(\vec{x})$
    with the distance $D$ between $i$ and $j$ being minimal.
As $x_i \geq \discr(\vec{x})$,
    each of the entries of $\vec{x}$ for the $D-1$ non-terminal nodes on a shortest path between $i$ and $j$
        is at least $\discr(\vec{x}) / 3$.
As $\vec{x}$ is stochastic by assumption, the sum of all loads is at most $1$, and we have
    \[\discr(\vec{x}) + (D - 1) \cdot \frac{\discr(\vec{x})}{3} = \frac{D+2}{3} \cdot \discr(\vec{x}) \leq 1,\]
    which implies
    \(D \leq 3 / \discr(\vec{x}).\)
Since $\ResistiveDistance{i}{j}$ is bounded by the standard distance between $i$ and $j$ (see \cref{lem:res_dist_leq_dist}),
    and $x_i - x_j \geq 2/3 \cdot \discr(\vec{x}),$
    we thus have, by \cref{eqn:general_resistance_bound},
    \[\EdgePotential_G(\vec{x})
        \geq \frac{(x_j - x_i)^2}{\ResistiveDistance{i}{j}}
        \geq \frac{(2/3 \cdot \discr(\vec{x}))^2}{3 / \discr(\vec{x})}
        = \frac{4 \cdot \discr(\vec{x})^3}{27}
        \geq \frac{4 \cdot \NodePotential(\vec{x})}{27},\]
    where the final inequality uses $\discr(\vec{x}) \geq \NodePotential(\vec{x})$ as shown above.

For the third statement we first rearrange \cref{eqn:general_resistance_bound} to see that, for all $i \neq j$,
    \[(x_i - x_j)^2 \leq \EdgePotential_G(\vec{x}) \cdot \ResistiveDistance{i}{j}.\]
Taking the maximum over all $\{i, j\} \in E(G)$ on both sides gives us
    \[\max_{\{i, j\} \in E(G)} (x_j - x_i)^2 \leq \EdgePotential_G(\vec{x}) \cdot \max_{\{i, j\} \in E(G)} \ResistiveDistance{i}{j}
        = \EdgePotential_G(\vec{x}) \cdot \MaxEdgeResistance,\]
    as claimed, where the final equality is by definition of $\MaxEdgeResistance$.
\end{proof}

The following lemma is well-known, we state it for completeness. 
It relates the hitting time of a graph $G$ to its resistive diameter and the edge hitting time of $G$ to the $\MaxEdgeResistance$.
\begin{lemma}
[label=claim:hitting_time_resistance_relation,restate=restateHittingTimeResistanceRelation]
    For any graph $G = (V, E)$
    \begin{enumerate}
        \item $\MaxEdgeResistance \cdot \abs{E} \leq \EdgeHittingTime \leq 2 \cdot \MaxEdgeResistance \cdot \abs{E}$,
                and
        \item $\ResistiveDiameter \cdot \abs{E} \leq \HittingTime \leq 2 \cdot \ResistiveDiameter \cdot \abs{E}$.
    \end{enumerate}
\end{lemma}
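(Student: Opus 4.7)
The plan is to derive both statements as immediate consequences of \cref{cor:hitting_time_resistance_relation_simple}, which already provides the pointwise sandwich inequality
\[\max\{H(i,j), H(j,i)\} \;\leq\; 2 \cdot \abs{E(G)} \cdot \ResistiveDistance{i}{j} \;\leq\; 2\cdot \max\{H(i,j), H(j,i)\}\]
for every pair of distinct nodes $i, j$. From here, the two statements of the lemma reduce to taking a maximum over a suitable subset of pairs on all three sides.

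For the first statement, I would restrict attention to adjacent pairs $\{i,j\} \in E(G)$. By definition $\EdgeHittingTime = \max_{\{i,j\} \in E(G)} \max\{H(i,j), H(j,i)\}$ and $\MaxEdgeResistance = \max_{\{i,j\} \in E(G)} \ResistiveDistance{i}{j}$. Taking the max over $\{i,j\} \in E(G)$ in the left inequality of the corollary yields $\EdgeHittingTime \leq 2\abs{E(G)} \cdot \MaxEdgeResistance$; dividing the right inequality by $2$ and then taking the max over the same set yields $\abs{E(G)} \cdot \MaxEdgeResistance \leq \EdgeHittingTime$.

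For the second statement, the argument is identical except the maximum is taken over all (not necessarily adjacent) pairs of distinct nodes, so that $\HittingTime = \max_{i \neq j} \max\{H(i,j), H(j,i)\}$ and $\ResistiveDiameter = \max_{i \neq j} \ResistiveDistance{i}{j}$; the same two applications of the corollary give $\ResistiveDiameter \cdot \abs{E(G)} \leq \HittingTime \leq 2 \cdot \ResistiveDiameter \cdot \abs{E(G)}$.

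There is essentially no obstacle: all the work sits in the commute-time identity $H(i,j) + H(j,i) = 2\abs{E(G)} \cdot \ResistiveDistance{i}{j}$ of \cref{thm:commute_time_resistance_relation}, which has already been used to establish \cref{cor:hitting_time_resistance_relation_simple}. The only subtle point worth a brief sentence is that the maxima on the left and right of the corollary's inequality are taken over the same index set, so one may move the maximum inside without loss.
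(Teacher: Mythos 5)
Your proposal is correct and follows essentially the same route as the paper: both deduce the lemma from \cref{cor:hitting_time_resistance_relation_simple} (i.e., from the commute-time identity), the only cosmetic difference being that you take maxima of the pointwise inequality over the relevant set of pairs, while the paper instantiates the corollary at the extremal pairs achieving $\MaxEdgeResistance$, $\EdgeHittingTime$, $\ResistiveDiameter$, and $\HittingTime$.
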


\begin{proof}
Recall that \[\EdgeHittingTime \coloneqq \max_{i,j \in V, \{i,j\} \in E} H(i,j),\]
and that \[\MaxEdgeResistance \coloneqq \max_{i,j \in V, \{i,j\} \in E} \ResistiveDistance{i}{j}.\]
For the first inequality, let $i,j \in V$ be adjacent nodes for which $\ResistiveDistance{i}{j} = \MaxEdgeResistance$.
Then, by \cref{cor:hitting_time_resistance_relation_simple},
\[2 \cdot \abs{E} \cdot \MaxEdgeResistance
\leq 2 \cdot \abs{E} \cdot \ResistiveDistance{i}{j}
\leq 2 \cdot \max{H(i,j), H(j,i)}
\leq 2 \cdot \EdgeHittingTime,\]
which becomes the first inequality after dividing by 2 on both sides.
For the second inequality, let $i,j \in V$ be adjacent nodes for which $\EdgeHittingTime = H(i,j)$.
Then, again by \cref{cor:hitting_time_resistance_relation_simple},
\[\EdgeHittingTime = H(i,j) \leq 2 \cdot \abs{E} \cdot \ResistiveDistance{i}{j}
\leq 2 \cdot \abs{E} \cdot \MaxEdgeResistance.\]

The second statement is entirely analogous, except that the $i,j \in V$ are no longer required to be adjacent, and that they are chosen such that $\ResistiveDistance{i}{j} = \ResistiveDiameter$ for the first inequality, or, for the second inequality, that $H(i,j) = \HittingTime$.
\end{proof}

\subsection{Omitted Details from the Proof of \cref{lem:disc:dyn}}

\label{apx:proof_discr_dyn}

\begin{proof}[Proof of \cref{claim:integral_bound}]
First, expanding the definition of $g_G(x)$, pulling out constant factors, and simplifying fractions
    results in
\[\int_0^1 \frac{x}{g_G(x)}\,\dx =
    16d \cdot \int_0^1 \min\left\{\frac{1}{d \cdot \SpectralGap(\Laplacian(G))}, \frac{\ResistiveDiameter}{x}, \frac{27}{4x^2}\right\}\,\dx,\]
    and we write $f_1(x)$, $f_2(x)$, and $f_3(x)$ for the first, second, and third argument of the minimum.
For $x \geq 0$, the indefinite integrals of these functions are
\begin{align*}
    \int f_1(x)\,\dx = \frac{x}{d \cdot \SpectralGap(\Laplacian(G))};\quad
    \int f_2(x)\,\dx  = \ResistiveDiameter \cdot \log(x);\quad
    \int f_3(x)\,\dx  = -\,\frac{27}{4} x^{-1}.
\end{align*}

First, we show that $\int_0^1 x / g_G(x)\,\dx = \Oh(1 / \SpectralGap(\Laplacian(G)))$:
As \(\min\{f_1(x), f_2(x), f_3(x)\} \leq f_1(x)\),
    we bound the integral in question as
\begin{align*}
    \int_0^1 \frac{x}{g_G(x)}\,\dx
        \leq 16d \cdot \int_0^1 \frac{1}{d \cdot \SpectralGap(\Laplacian(G))}\,\dx
        = 16d \cdot \frac{1}{d \cdot \SpectralGap(\Laplacian(G))}
        = \Oh\left(\frac{1}{\SpectralGap(\Laplacian(G))}\right).
\end{align*}

Next, we show that $\int_0^1 x / g_G(x)\,\dx = \Oh(\sqrt{d / \SpectralGap(\Laplacian(G))})$:
Let $x_{1,3} \coloneqq \sqrt{\frac{27}{4} d \SpectralGap(\Laplacian(G))}$
        be the $x$ such that $f_1(x) = f_3(x)$.
If $x_{1,3} \leq 1$,
    then
\begin{align*}
    \int_0^1 \frac{x}{g_G(x)}\,\dx
       &\leq 16d \cdot \left(\int_0^{x_{1,3}} f_1(x)\,\dx
    + \int_{x_{1,3}}^1 f_3(x)\,\dx\right)
    \\ &= 16d \cdot \left(\frac{x_{1,3}}{d \cdot \SpectralGap(\Laplacian(G))} + \frac{27}{4} \cdot \left(-1 + x_{1,3}^{-1}\right)\right)
    \\ &= 16d \cdot \left(\sqrt{\frac{27}{4 \cdot d \cdot \SpectralGap(\Laplacian(G))}} + \sqrt{\frac{27}{4 \cdot d \cdot \SpectralGap(\Laplacian(G))}} - \frac{27}{4}\right)
        = \Oh\left(\sqrt{\frac{d}{\SpectralGap(\Laplacian(G))}} \right).
\end{align*}
But if $x_{1,3} > 1$,
    the same bound also holds:
    we showed above that the integral in question is bounded by $\Oh(1/\SpectralGap(\Laplacian(G)))$,
    so that if $x_{1,3} > 1$,
        we have an upper bound of
        \[\int_0^1 \frac{x}{g_G(x)}\,\dx
            = \Oh\left(\frac{1}{\SpectralGap(\Laplacian(G))}\right)
            = \Oh\left(\frac{x_{1,3}}{\SpectralGap(\Laplacian(G))}\right)
            = \Oh\left(\sqrt{\frac{d}{\SpectralGap(\Laplacian(G))}}\right).\]

Last, we show that $\int_0^1 x / g_G(x)\,\dx = \Oh(\HittingTime / n \cdot \log(n))$:
Let $x_{1,2} \coloneqq d \cdot \SpectralGap(\Laplacian(G)) \cdot \ResistiveDiameter$
    be the $x$ such that $f_1(x) = f_2(x)$.
If $x_{1,2} \leq 1$, then
\begin{align*}
    \int_0^1 \frac{x}{g_G(x)}\,\dx
       &\leq 16d \cdot \left(\int_0^{x_{1,2}} f_1(x)\,\dx
    + \int_{x_{1,2}}^1 f_2(x)\,\dx\right)
    \\ &= 16d \cdot \left(\frac{x_{1,2}}{d \cdot \SpectralGap(\Laplacian(G))} + \ResistiveDiameter \cdot (\log(1) - \log(x_{1,2}))\right)
    \\ &= 16d \cdot \left(\ResistiveDiameter + \ResistiveDiameter \cdot \log\left(
    \frac{1}{d \cdot \SpectralGap(\Laplacian(G)) \cdot \ResistiveDiameter}\right)\right)
    \\ &= \Oh\left(d \cdot \ResistiveDiameter \cdot \log\left(
    \frac{1}{d \cdot \SpectralGap(\Laplacian(G)) \cdot \ResistiveDiameter}\right)\right)
    \\ &= \Oh\left(\frac{\HittingTime}{n} \cdot \log\left(\frac{1}{\SpectralGap(\Laplacian(G))} \cdot \frac{n}{\HittingTime}\right)\right)
        = \Oh\left(\frac{\HittingTime}{n} \log(n)\right),
\end{align*}
    where the penultimate bound uses the fact that $\ResistiveDiameter \cdot \abs{E(G)} = \ResistiveDiameter \cdot dn/2 \leq \HittingTime$  (\cref{claim:hitting_time_resistance_relation}),
    and the final bound uses the fact that the inverse spectral gap of the normalized Laplacian $1 / \SpectralGap(\Laplacian(G))$ is bounded from above by $\Oh(n^3)$ (cf.\ \cite{DBLP:journals/aam/AksoyCTT18}),
    and that $\HittingTime \geq 1$,
    so that the argument of the logarithm is polynomial in $n$.

Otherwise, if $x_{1,2} > 1$,
    the same bound also holds:
    we show above that the integral is bounded by $\Oh(1/\SpectralGap(\Laplacian(G)))$,
    so that if $x_{1,2} > 1$
        we have an upper bound of
        \[\int_0^1 \frac{x}{g_G(x)}\,\dx
            = \Oh\left(\frac{1}{\SpectralGap(\Laplacian(G))}\right)
            = \Oh\left(\frac{x_{1,2}}{\SpectralGap(\Laplacian(G))}\right)
            = \Oh\left(d \cdot \ResistiveDiameter\right)
            = \Oh\left(\frac{\HittingTime}{n} \cdot \log(n)\right).\]

Combining the three bounds,
    we have, as claimed,
    \begin{equation*}\int_0^1 \frac{x}{g_G(x)}\,\dx = \Oh\left(\min \left\{\frac{1}{\SpectralGap(\Laplacian(G))}, \sqrt{\frac{d}{\SpectralGap(\Laplacian(G))}}, \frac{\HittingTime}{n} \cdot \log(n) \right\}\right) = \Oh(T(G)).\qedhere\end{equation*}
\end{proof}

\begin{proof}[Proof of \cref{claim:edge_hitting_time_lower_bound}]
    By the first inequality of Corollary 3.3 in~\cite{lovasz1993random} it holds
        for any nodes $i,j \in V(G)$ that
        \[H(i,j) + H(j,i) \geq \abs{E(G)} \cdot \left(\frac{1}{d(i)} + \frac{1}{d(j)}\right).\]
    As $G$ is regular we have $d(i) = d(j) = d$ and $\abs{E(G)} = dn/2$, and since the statement holds in particular for any pair of nodes that is adjacent
        this entails
        \[2 \EdgeHittingTime \geq \frac{dn}{2} \cdot \frac{2}{d} = n,\]
and the claim follows.
\end{proof}

\subsection{Bounds for Specific Graph Classes}
\label{apx:hitting-time_spectral-gap}

In this appendix  we show bounds on the discrepancy for specific graph classes. Note that we assume that initially the system is empty.

\begin{corollary}
Let $\LoadVecT{t}$ be the state of process $\SyncProc{\RMDistr(G)}{\BalancingSpeed}{m}$
where $\LoadVecT{0} = \vec{0}$.
For an arbitrary $t$ it holds w.h.p.\ and in expectation
\begin{itemize}
    \item $\discr(\LoadVecT{t}) = \Oh(\sqrt{m} \log(n))$ for any regular graph.
    \item $\discr(\LoadVecT{t}) = \Oh(\log(n) + \sqrt{m\log(n)})$ for cycles and constant-degree regular graphs.
    \item $\discr(\LoadVecT{t}) = \Oh(\log(n) + \sqrt{m/n} \cdot \log^{3/2}(n))$ for the two-dimensional torus graphs.
    \item $\discr(\LoadVecT{t}) = \Oh((1 + \sqrt{m/n}) \cdot \log(n))$ for torus graphs with $\ge 3$ dimensions,  the hypercube, and all $d$-regular graphs with $d \geq \lfloor n/2 \rfloor$.
\end{itemize}
\end{corollary}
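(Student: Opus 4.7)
The plan is to instantiate \cref{thm:main_sync_random} with $\BalancingSpeed = 1$ for each graph class listed in the corollary. The theorem's bound specializes to
\[
\discr(\LoadVecT{t}) = O\!\left(\log n \cdot \Big(1 + \sqrt{(m/n)\cdot(\EdgeHittingTime / n)}\Big) + \sqrt{\log n \cdot (m/n) \cdot T(G)}\right),
\]
with $T(G) = \min\{(\HittingTime/n)\log n,\ \sqrt{d/\SpectralGap(\Laplacian(G))},\ 1/\SpectralGap(\Laplacian(G))\}$. Each item then reduces to bounding $\EdgeHittingTime/n$ and $T(G)$ for the specific graph class and simplifying the resulting expression.

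For an arbitrary $d$-regular graph I would use the commute-time identity together with the trivial bound $R_{\text{edge}} \leq 1$ (resistance across a direct edge) to get $\EdgeHittingTime \leq dn \leq n^2$, and the classical resistance estimate $R(i,j) = O(n/d)$ for $d$-regular graphs to conclude $\HittingTime = O(n^2)$, so that $T(G) = O(n\log n)$; substitution yields $O(\sqrt{m}\log n)$. For the cycle and constant-degree regular graphs, $\EdgeHittingTime \leq dn = O(n)$ directly, and $T(G) = O(n)$ via $\sqrt{d/\SpectralGap}$ (using $\SpectralGap = \Omega(1/n^2)$), giving $O(\log n + \sqrt{m\log n})$. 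For the 2D torus, the standard facts $\SpectralGap(\Laplacian(G)) = \Theta(1/n)$, $\HittingTime = \Theta(n\log n)$, and edge resistance $\Theta(1)$ give $\EdgeHittingTime = \Theta(n)$ and $T(G) = O(\log^2 n)$ (through the $(\HittingTime/n)\log n$ term), recovering $O(\log n + \sqrt{m/n}\log^{3/2} n)$. For $r$-D tori with $r \geq 3$, the hypercube, and dense graphs with $d \geq \lfloor n/2 \rfloor$, a resistance diameter of $O(1)$ (respectively $O(1/n)$ in the dense case) forces $\HittingTime, \EdgeHittingTime = O(n)$ and $T(G) = O(\log n)$, yielding $O((1+\sqrt{m/n})\log n)$.

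The hard part will be collecting the spectral and resistance facts for each class cleanly. For the hypercube I would invoke Foster's theorem together with the transitive automorphism group to conclude that every edge has resistance $\Theta(1/\log n)$, which (perhaps counterintuitively) makes $\EdgeHittingTime = \Theta(n)$ despite $|E| = \Theta(n\log n)$; for the tori I would use the explicit Laplacian eigenvalues from separability; and for the dense case a Cheeger-type inequality delivers $\SpectralGap(\Laplacian(G)) = \Omega(1)$. The in-expectation statements follow identically from the expectation bound already provided by \cref{thm:main_sync_random} with the same asymptotic constants.
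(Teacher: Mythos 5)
Your overall route is exactly the paper's: the corollary is obtained by specializing \cref{thm:main_sync_random} (with $\BalancingSpeed$ constant) and supplying per-class bounds on $\EdgeHittingTime$ and $T(G)$, which the paper packages as \cref{lem:bound-on-T(G)} and \cref{lem:bound-edge-hitting-time}; most of your supporting facts ($\EdgeHittingTime \le dn$ from $\MaxEdgeResistance \le 1$, $\HittingTime = \Oh(n^2)$ for regular graphs, the two-dimensional torus estimates, Foster's theorem plus transitivity for edge hitting times in place of the paper's use of Lov\'asz's identity $\E[H(i,J)] = 2\abs{E}/d - 1$) match or are interchangeable with the ones used there. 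Two of your justifications, however, do not hold as stated, even though the conclusions do. First, for the hypercube a resistive diameter of $\Oh(1)$ only yields $\HittingTime \le 2\ResistiveDiameter\cdot\abs{E(G)} = \Oh(n\log n)$ (since $\abs{E(G)} = \Theta(n\log n)$), hence only $T(G) = \Oh(\log^2 n)$ through the hitting-time term, which is too weak for the claimed $\Oh((1+\sqrt{m/n})\log n)$; you need the sharper bound $\ResistiveDiameter = \Oh(1/\log n)$ (the paper's choice, via Chandra et al.) or the spectral gap $\SpectralGap(\Laplacian(G)) = \Theta(1/\log n)$, both standard and both fitting your ``explicit eigenvalues'' remark. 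Second, the claim that a Cheeger-type argument gives $\SpectralGap(\Laplacian(G)) = \Omega(1)$ for every $d$-regular graph with $d \ge \lfloor n/2 \rfloor$ is false: take two disjoint cliques on $n/2$ vertices, delete a perfect matching inside each, and join the halves by two perfect matchings; this graph is $n/2$-regular but the cut between the halves has conductance $\Theta(1/n)$, so the normalized-Laplacian gap is $\Oh(1/n)$. This second slip is harmless for the result, because the resistance bound $\ResistiveDiameter = \Oh(1/d)$ that you also invoke (and that the paper uses) already gives $\HittingTime = \Oh(n)$ and hence $T(G) = \Oh(\log n)$ in the dense case; with these two repairs your derivation coincides with the paper's.
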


To show the above corollary we require bounds on $T(G)$ (\cref{lem:bound-on-T(G)}) and bounds on $\EdgeHittingTime$ (\cref{lem:bound-edge-hitting-time}). Then the corollary immediately follows from 
\cref{thm:main_sync_random}.

In the following lemma we provide some bounds on $T(G)$ for several specific graph classes.
\begin{lemma}
\label{lem:bound-on-T(G)}
    Assume $G$ is a graph with $n$ nodes.
    \begin{itemize}
        \item For constant-degree regular graphs $G$ we have $T(G) = \Oh(n)$.
        \item For a two-dimensional $k \times k$ toroidal mesh  $G$ we have $T(G) = \Oh(\log^2(n))$.
        \item For a $r$-dimensional $k \times \cdots \times k$ toroidal mesh  (with $r \geq 3$) we have $T(G) = \Oh(\log(n))$.
        \item For a $r$-dimensional hypercube $G$ we have $T(G) = \Oh(\log(n))$.
        \item For a $d$-regular graph $G$ with $d \ge \lfloor \frac{n}{2} \rfloor$ we have $T(G) = \Oh(\log(n))$.
        \item For an arbitrary $d$-regular graph $G$ we have $T(G) = \Oh(n \log(n))$.
    \end{itemize}
\end{lemma}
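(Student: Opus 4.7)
The plan is to verify each bullet separately by invoking appropriate bounds on the hitting time $\HittingTime$ and the spectral gap $\SpectralGap(\Laplacian(G))$, then taking whichever of the three arguments inside the $\min$ defining $T(G)$ gives the stated estimate. For \emph{constant-degree regular graphs} (and the cycle $C_n$ in particular), Cheeger's inequality combined with the trivial lower bound of $1/(dn)$ on the conductance yields $\SpectralGap(\Laplacian(G)) \geq \Omega(1/(d^2 n^2))$; since $d = \Oh(1)$, this gives $\sqrt{d/\SpectralGap(\Laplacian(G))} = \Oh(n)$. For the \emph{two-dimensional torus} on $n = k^2$ vertices, the classical hitting-time estimate $\HittingTime = \Theta(n \log n)$ makes the first term $\HittingTime \log(n)/n = \Oh(\log^2 n)$ dominate.

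For \emph{$r$-dimensional tori with $r \geq 3$} and the \emph{hypercube}, the standard estimate $\HittingTime = \Theta(n)$ gives $\HittingTime \log(n)/n = \Oh(\log n)$. For \emph{dense $d$-regular graphs with $d \geq \lfloor n/2 \rfloor$}, the spectral-based terms can be as large as $\Omega(n)$; for instance, the dumbbell graph formed by two cliques of size $n/2$ joined by a perfect matching has $\SpectralGap(\Laplacian(G)) = \Theta(1/n)$. I would therefore work through the first term, and the key step is to establish $\ResistiveDiameter = \Oh(1/d)$ uniformly over such graphs: for vertex pairs with many common neighbors this follows from combining parallel length-$2$ paths, while for pairs with few common neighbors (as for matched pairs in the dumbbell) one routes current through multiple intermediate vertices, with an explicit Kirchhoff computation on the dumbbell yielding $\ResistiveDistance{u}{u^*} = \Oh(1/n)$ as a template. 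Combined with the commute-time identity $H(u,v) + H(v,u) = 2\abs{E(G)} \cdot \ResistiveDistance{u}{v} = dn \cdot \ResistiveDistance{u}{v}$, this gives $\HittingTime = \Oh(n)$, hence $T(G) = \Oh(\log n)$. Finally, for an \emph{arbitrary $d$-regular graph}, the classical bound $\HittingTime = \Oh(n^2)$ (the cycle being essentially extremal among regular graphs) yields $T(G) \leq \HittingTime \log(n)/n = \Oh(n \log n)$.

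The hardest part will be the dense case. Neither of the two spectral-gap-based terms in $T(G)$ is small enough, since the spectral gap can be as small as $\Theta(1/n)$, and the bound $\ResistiveDiameter = \Oh(1/d)$ has to be proved carefully for vertex pairs with few common neighbors; the $n/2$ parallel three-step paths through the matching in the dumbbell illustrate the kind of global routing argument that will be required in general.
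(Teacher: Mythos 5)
Your proposal is correct and follows essentially the same route as the paper: for each class you pick the appropriate term of the minimum defining $T(G)$, using the $\sqrt{d/\SpectralGap(\Laplacian(G))}$ term for constant-degree regular graphs and the $(\HittingTime/n)\cdot\log(n)$ term for the remaining classes, with the $\Oh(n^2)$ hitting-time bound for regular graphs in the last bullet. The differences are mostly cosmetic: you obtain the constant-degree bound from Cheeger's inequality plus the trivial conductance lower bound where the paper invokes the Landau--Odlyzko bound $1/\SpectralGap(\Laplacian(G)) = \Oh(n\cdot d\cdot(\mathrm{diam}(G)+1))$, and for the tori and the hypercube you cite classical hitting-time estimates ($\Theta(n\log n)$ resp.\ $\Theta(n)$) directly, where the paper derives them from the resistive-diameter bounds $\ResistiveDiameter = \Oh(\log n)$ resp.\ $\Oh(1/r)$ of Chandra et al.\ together with $\HittingTime = \Theta(\abs{E(G)}\cdot\ResistiveDiameter)$; these are interchangeable. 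The one place where your write-up falls short of a complete proof is the dense case $d \ge \lfloor n/2\rfloor$: you correctly identify that both spectral terms can be $\Omega(n)$ (your two-cliques-plus-matching example does have gap $\Theta(1/n)$) and that the needed ingredient is $\ResistiveDiameter = \Oh(1/d)$, but your plan to prove this via a dichotomy (many common neighbors giving parallel length-2 paths, versus a dumbbell-style Kirchhoff routing) is only a template -- the ``few common neighbors'' case needs an argument valid for \emph{every} such graph, not just the dumbbell, and spelling out such a flow/routing argument with per-edge congestion $\Oh(1/n)$ is genuinely the hard part. Since this statement is exactly Theorem 3.3 of Chandra, Raghavan, Ruzzo, Smolensky, and Tiwari, which the paper simply cites, the cleanest fix is to cite it rather than reprove it.
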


\begin{proof}
    Recall that $T(G) = \min\left\{1/\SpectralGap(\Laplacian(G)), \sqrt{d/\SpectralGap(\Laplacian(G))}, (\HittingTime / n) \cdot \log(n)\right\}$,
        and that
        $\HittingTime \leq 2 \cdot \ResistiveDiameter \cdot \abs{E}$ (\cref{claim:hitting_time_resistance_relation}),
        so that $\HittingTime / n = \Oh(d \cdot \ResistiveDiameter)$.

    For $d$-regular graphs with $d$ being constant,
        $1 / \SpectralGap(\Laplacian(G)) = \Oh(n \cdot d \cdot (\mathrm{diam}(G) + 1))$ by \cite{LANDAU19815}, where $\mathrm{diam}(G)$ diameter of $G$.
    As $\mathrm{diam}(G) \leq n$ and $d$ is constant,
        $1 / \SpectralGap(\Laplacian(G)) = \Oh(n^2)$,
        so that $T(G) = \Oh(\sqrt{d/\SpectralGap(\Laplacian(G))}) = \Oh(n)$.

    For the two-dimensional $k \times k$ toroidal mesh,
        $d \leq 4$ and $\ResistiveDiameter = \Oh(\log(n))$ by \cite[Theorem 6.1]{DBLP:journals/cc/ChandraRRST97},
        so that $T(G) = \Oh((\HittingTime / n) \cdot \log(n)) = \Oh(\log^2(n))$.

    For a $r$-dimensional $k \times \cdots k$ toroidal mesh with $r \geq 3$, as well as the $r$-dimensional hypercube,
        $d \leq 2r$ and
        $\ResistiveDiameter = \Oh(r^{-1})$ by \cite[Theorem 6.1]{DBLP:journals/cc/ChandraRRST97},
        so that $T(G) = \Oh((\HittingTime / n) \cdot \log(n)) = \Oh((d \cdot \ResistiveDiameter) \log(n)) = \Oh(r \cdot r^{-1} \cdot \log(n)) = \Oh(\log(n))$.

    For a $d$-regular graph $G$ with $d \geq \lfloor \frac{n}{2} \rfloor$,
        $\ResistiveDiameter = \Oh(d^{-1})$ by \cite[Theorem 3.3]{DBLP:journals/cc/ChandraRRST97},
        so that $T(G) = \Oh((\HittingTime / n) \cdot \log(n)) = \Oh((d \cdot \ResistiveDiameter) \log(n)) = \Oh(d \cdot d^{-1} \cdot \log(n)) = \Oh(\log(n))$.

    For general $d$-regular graphs $G$,
        $\HittingTime\le 3n^2-nd$ by \cite[Proposition 10.16]{LevinPeresbook},
        so that $T(G) = \Oh((\HittingTime / n) \cdot \log(n)) = \Oh((n^2 / n) \log(n)) = \Oh(n \log(n))$.
\end{proof}

To bound $\EdgeHittingTime$ for many specific graph classes
    we use the following.
\begin{theorem}[Theorem 2.10 of \cite{lovasz1993random}, citing \cite{Keilson1979}]\label{thm:expected_hitting_time_to_neighbors}
    Let $G$ be a graph and $i\in [n]$ be one of its nodes.
    Then if $J \in [n]$ is chosen uniformly at random from the neighbors of $i$ in $G$, $\AutoExp{H(i,J)}=2\abs{E}/d(i) -1$, where $d(i)$ is the degree of $i$ in $G$.
\end{theorem}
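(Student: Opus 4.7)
The plan is to prove the identity via a first-return decomposition for the simple random walk on $G$, relying on the classical correspondence between mean return times and the stationary distribution. Specifically, I would first recall that the simple random walk on $G$ is reversible with stationary distribution $\pi_v = d(v)/(2\abs{E})$, so the expected first-return time from $i$ to itself equals $1/\pi_i = 2\abs{E}/d(i)$.

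Next I would decompose this return time according to the first step. Conditioned on starting at $i$, the walk takes exactly one step to a uniformly chosen neighbor, whose distribution is precisely that of $J$. By the strong Markov property, the remainder of the trajectory — from the moment right after that first step until the walk first revisits $i$ — has expected length $\AutoExp{H(i,J)}$. Hence the expected return time to $i$ equals $1 + \AutoExp{H(i,J)}$, and equating this with $2\abs{E}/d(i)$ and rearranging gives the claimed identity.

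The only non-routine ingredient is the mean-return-time formula $1/\pi_i$, which is a standard consequence of Kac's lemma (equivalently, of the renewal theorem applied to excursions from $i$); I would simply cite it, e.g., Proposition 1.14 of \cite{LevinPeresbook}. There is no real obstacle here: the first-step decomposition is immediate from the definition of the simple random walk, and averaging over the random neighbor $J$ commutes with the expectation by linearity. The only thing to keep straight is the bookkeeping — the first step contributes $+1$ and the remainder contributes $\AutoExp{H(i,J)}$ — so that the final identity is off by exactly $-1$ from the plain return time $2\abs{E}/d(i)$.
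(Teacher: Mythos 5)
The paper itself gives no proof of this statement—it is quoted verbatim from Lov\'asz's survey (citing Keilson)—so there is no internal argument to compare against; your route via Kac's formula plus a first-step decomposition is the standard textbook proof of the underlying fact. The ingredients you invoke are all correct: $\pi_v = d(v)/(2\abs{E})$, the mean return time $1/\pi_i = 2\abs{E}/d(i)$, and the $+1$ for the first step.

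There is, however, a genuine directional slip at the one step where direction matters. After the first step the walk sits at the random neighbor $J$, and the expected length of the remaining trajectory until it first revisits $i$ is, in the paper's own notation, $H(J,i)$ (the time to go \emph{from} $J$ \emph{to} $i$), not $H(i,J)$. Your decomposition therefore proves $\AutoExp{H(J,i)} = 2\abs{E}/d(i) - 1$, the expected hitting time from a uniformly random neighbor \emph{to} $i$, which is the actual content of Lov\'asz's Theorem 2.10. With the paper's definition of $H(i,j)$ as the expected time for a walk started at $i$ to reach $j$, the identity $\AutoExp{H(i,J)} = 2\abs{E}/d(i) - 1$ as printed is false in general: on the path with vertices $1,2,3$ and edges $\{1,2\},\{2,3\}$, taking $i=1$ gives $\AutoExp{H(1,J)} = H(1,2) = 1$, whereas $2\abs{E}/d(1) - 1 = 3 = H(2,1)$. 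So the theorem as stated has its arguments effectively swapped relative to the classical fact, and your write-up inherits exactly that swap at the crucial identification; a correct proof should conclude with $\AutoExp{H(J,i)}$. The discrepancy is harmless for the paper's only use of the result (\cref{lem:bound-edge-hitting-time}), where it is applied to arc-transitive graphs and hence $H(i,j) = H(j,i)$ for adjacent $i,j$, but your proof should state the conclusion in the correct direction.
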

This gives us the following bounds.
\begin{lemma}{\label{lem:bound-edge-hitting-time}}
    Assume $G$ is a graph with $n$ nodes.\begin{itemize}
        \item For $G$ being a toroidal mesh (including cycles and hypercubes),
        or being a $d$-regular graph with $d \geq \lfloor n/2 \rfloor$,
        we have $\EdgeHittingTime = \Oh(n)$
        \item For an arbitrary $d$-regular graph $G$ we have
        $\EdgeHittingTime \leq dn$.
    \end{itemize}
\end{lemma}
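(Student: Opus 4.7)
The plan is to bound $\EdgeHittingTime$ via the resistance/commute-time identity from \cref{claim:hitting_time_resistance_relation}, which states $\EdgeHittingTime \le 2\MaxEdgeResistance \cdot \abs{E(G)}$. Since $\abs{E(G)} = dn/2$ for $d$-regular $G$, both bullets reduce to estimating $\MaxEdgeResistance$, the maximum effective resistance between adjacent nodes.

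For the second bullet, I would observe that for any edge $\{i,j\}$ the direct edge itself provides a resistive path of unit length between $i$ and $j$, and by Rayleigh monotonicity any additional edges can only decrease effective resistance. Hence $\ResistiveDistance{i}{j} \le 1$ for every edge, so $\MaxEdgeResistance \le 1$, and the identity immediately yields $\EdgeHittingTime \le 2 \cdot 1 \cdot dn/2 = dn$.

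For the first bullet I would split into cases. For cycles and constant-dimensional toroidal meshes, the degree is a constant so $\abs{E(G)} = O(n)$, and the trivial bound $\MaxEdgeResistance \le 1$ already gives $\EdgeHittingTime = O(n)$. For hypercubes (and $r$-dimensional tori with degree growing in $n$) I would invoke \cite[Theorem 6.1]{DBLP:journals/cc/ChandraRRST97}, which gives $\ResistiveDiameter = O(1/d)$; since $\MaxEdgeResistance \le \ResistiveDiameter$, this leads to $\EdgeHittingTime \le 2 \cdot O(1/d) \cdot dn/2 = O(n)$. For $d$-regular graphs with $d \ge \lfloor n/2 \rfloor$, the same calculation using \cite[Theorem 3.3]{DBLP:journals/cc/ChandraRRST97} (which likewise gives $\ResistiveDiameter = O(1/d)$) yields $\EdgeHittingTime = O(n)$.

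The main subtlety I anticipate is the two-dimensional torus: there the Chandra et al. bound $\ResistiveDiameter = \Theta(\log n)$ is not tight enough to give $\MaxEdgeResistance = O(1/d)$, so the generic ``use Chandra et al. on the diameter'' recipe fails. The rescue is that in this regime the degree is constant, so the trivial bound $\MaxEdgeResistance \le 1$ is already good enough. A uniform alternative would be to invoke \cref{thm:expected_hitting_time_to_neighbors} together with the edge-transitivity of cycles, square tori, and hypercubes to conclude $H(i,j) = 2\abs{E(G)}/d - 1 = n-1$ directly for every edge; however this relies on a full symmetry assumption and would not cover the dense regular case, so the case analysis via effective resistance seems the cleanest route overall.
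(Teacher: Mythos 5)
Your argument is correct, and for the dense case ($d \geq \lfloor n/2 \rfloor$) and the arbitrary $d$-regular bound it coincides with the paper's proof: the paper likewise combines \cref{claim:hitting_time_resistance_relation} with $\ResistiveDiameter = \Oh(1/d)$ from Chandra et al.\ for dense graphs, and with $\MaxEdgeResistance \leq 1$ (via \cref{lem:res_dist_leq_dist}, which is the same fact you get from Rayleigh monotonicity) for general regular graphs, obtaining exactly $\EdgeHittingTime \leq dn$. Where you diverge is the toroidal-mesh bullet: the paper's proof is precisely the ``uniform alternative'' you sketch and then set aside, namely that toroidal meshes are arc-transitive, so $H(i,j)$ is the same for every oriented edge, and \cref{thm:expected_hitting_time_to_neighbors} then gives the exact value $\EdgeHittingTime = 2\abs{E}/d - 1 = n-1$ in one stroke for cycles, all tori, and hypercubes alike. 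Your route instead splits on the degree: constant-degree tori (including the two-dimensional one, where $\ResistiveDiameter = \Theta(\log n)$ would indeed be too weak) are handled by the trivial bound $\MaxEdgeResistance \leq 1$ with $\abs{E} = \Oh(n)$, while hypercubes and higher-dimensional tori use $\MaxEdgeResistance \leq \ResistiveDiameter = \Oh(1/d)$ from Chandra et al.\ Theorem 6.1 — the same citation the paper itself uses elsewhere, so this is sound. The trade-off is that the symmetry argument yields an exact constant ($n-1$) uniformly but needs transitivity, whereas your resistance argument is more robust (no automorphism assumption, and it is the only one that extends to the dense non-symmetric case) at the cost of a case analysis and slightly larger constants; either is a complete proof of the stated $\Oh(n)$ bound.
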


\begin{proof}
Recall that $\EdgeHittingTime \coloneqq \max_{i,j \in V, \{i,j\} \in E} H(i,j)$.
Toroidal meshes are \emph{symmetric} or \emph{arc-transitive} graphs:
    for every two ordered pairs of adjacent nodes $(i_1,j_1)$ and $(i_2,j_2)$ there is a graph automorphism $f$ such that $f(i_1)=i_2$ and $f(j_1)=j_2$.
Hence, for every such two ordered pairs, $H(i_1, i_1) = H(i_2, j_2)$,
    and thus $\EdgeHittingTime = H(i,j)$ for any pair of adjacent nodes $i,j$.
So applying \cref{thm:expected_hitting_time_to_neighbors}
    shows that $\EdgeHittingTime = 2\abs{E}/d -1$.
As $\abs{E} = dn/2$ for $d$-regular graphs, $\EdgeHittingTime = 2(dn/2)/d - 1 = n - 1 = \Oh(n),$
    as claimed.

For dense graphs
    we bound $\EdgeHittingTime$ as $\EdgeHittingTime\le \HittingTime\le 2 \cdot \ResistiveDiameter \cdot \abs{E}$ (see \cref{claim:hitting_time_resistance_relation}).
    As $\ResistiveDiameter = \Oh(1/d)$ by \cite[Theorem 3.3]{DBLP:journals/cc/ChandraRRST97}, we get
        since $\abs{E} = dn/2$ that $\EdgeHittingTime = \Oh(dn / d) = \Oh(n)$.

For arbitrary $d$-regular graphs,
    $\EdgeHittingTime \leq 2 \cdot\MaxEdgeResistance \cdot \abs{E}$
        by the first statement of \cref{claim:hitting_time_resistance_relation}.
As $\abs{E} = dn/2$ for a $d$-regular graph,
    and as $\MaxEdgeResistance \leq 1$ (by definition of $\MaxEdgeResistance$ and \cref{lem:res_dist_leq_dist}),
    we thus have $\EdgeHittingTime \leq 2 \cdot 1 \cdot dn/2 = dn$.
\end{proof}

\section{Balancing Circuit Model}\label{apx:analysis_balancing_circuit}

In this appendix we prove \cref{thm:main_sync:lower}. 
The proof is similar to Theorem 1.2 in \cite{DBLP:conf/icalp/CaiS17}.

\begin{proof}[Proof of \cref{thm:main_sync:lower}]

First we show a lower bound on $\NodeDynamicContribT{k}{t}$.
     The idea is to decompose $\NodeDynamicContribT{k}{t}$ into sum of independent $Y_\ell$ random variable which have expected value zero. It then remains to show that $\sum_{\ell} \E\left[\left|Y_\ell^3\right|\right]$ is properly bounded. It allows us to apply a concentration inequality to the sum. To do so, we define several intermediate random variables similar to the proof of \cref{lem:mixing_well_means_balancing_well}.
     
Fix round $t$ and consider node $k\in [n]$ such that $\GlobalDivergence_k(\mixMatSeq{t})=\GlobalDivergence(\mixMatSeq{t})$. Recall that,
\[
 \NodeDynamicContribT{k}{t}
        =\sum_{\tau=1}^t\sum_{w\in[n]} \mixMatTT{\tau}{t}_{k,w}\cdot \NodeAllocT{w}{\tau}.
\]

We define indicator random variables $\BallsRoundNode{\tau}{j}{w}$ for $\tau\in[t]$, $j\in[m]$ and $w\in [n]$ as follows. 
\[\BallsRoundNode{\tau}{j}{w}\coloneqq 
    \left\{ \begin{array}{cc}
       1,  & \mbox{ if $j$-th load item of step $\tau$ goes to node $w$}, \\
       0,  &  \mbox{otherwise.}
    \end{array}
     \right.
     \]
Note that for fixed $j$ and $\tau$, $\sum_{w\in [n]} \BallsRoundNode{\tau}{j}{w} =1$ and $\Pr\left[\BallsRoundNode{\tau}{j}{w}=1\right]=1/n$.
Recall that $\NodeAllocT{w}{\tau}$ can be expressed as $\sum_{j\in [m]} \BallsRoundNode{\tau}{j}{w}$.
It then follows that
\[
    \NodeDynamicContribT{k}{t}=\sum_{\tau=1}^t \sum_{j\in[m] }\sum_{w\in [n]} \left(  \mixMatTT{\tau}{t}_{k,w}\cdot \BallsRoundNode{\tau}{j}{w}\right).
\]
We define the derivative from the average for $\NodeDynamicContribT{k}{t}$ as 
\[\widetilde{D}_k(t)\coloneqq \sum_{\tau =1}^t \sum_{k \in [m] }\underbrace{\sum_{w\in [n]} \left(  \mixMatTT{\tau}{t}_{k,w}\cdot\BallsRoundNode{\tau}{j}{w}-\frac{1}{n^2}\right)}_{\BallRoundContr{k}{\tau}{j}}.\]

It immediately follows that
$\widetilde{D}_k(t)=\NodeDynamicContribT{k}{t}- t\cdot m/n.$
We call 
\[\BallRoundContr{k}{\tau}{j} \coloneqq \sum_{w\in[n]}\left( \mixMatTT{\tau}{t}_{k,w}\cdot \BallsRoundNode{\tau}{j}{w}-1/n^2\right)\]
the contribution of the $j$-th load item (of step $\tau$) to $\widetilde{D}_k(t)$. 
For a fixed $\tau$ and $j$, from the linearity of expectation, it follows that
\begin{equation*}
    \E\left[\BallRoundContr{k}{\tau}{j}\right] = \sum_{w\in [n]} \E\left[ \mixMatTT{\tau}{t}_{k,w}\cdot \BallsRoundNode{\tau}{j}{w}- \frac{1}{n^2} \right]= \left(\sum_{w\in [n]}  \mixMatTT{\tau}{t}_{k,w}\cdot\frac{1}{n} \right)-\frac{1}{n} = 0,
\end{equation*}
where the last inequality follows since $\mixMatTT{\tau}{t}$ is a doubly stochastic matrix. 

Here for $\ell=(\tau-1)\cdot m + j$ such that $\tau\in[t]$ and $j\in[m]$ we define $Y_{\ell} \coloneqq\BallRoundContr{k}{\tau}{j}$ and it follows $\widetilde{D}_k(t)=\sum_{\ell=1}^{t\cdot m} Y_{\ell}$.
Note that $Y_{\ell}$'s are independent. We want to apply the Berry-Esseen Theorem~\cite{Berry90,esseen1942liapounoff}  (see \cref{thm:BerryEssen} in \cref{apx:known-results-probability-theory}). To do so, we need to compute $\Var[Y_{\ell}]$ and $\E[|Y_{\ell}|^3]$.
Then we get {\dense\begin{align*}
    &\Var\left[Y_{\ell}\right] =\BigAutoExp{{\Big( \BallRoundContr{k}{\tau}{j}-\underbrace{\E\left[\BallRoundContr{k}{\tau}{j}\right]}_{=0}\Big)^2}}= \BigAutoExp{\left(\sum_{w\in [n]}\left(\mixMatTT{\tau}{t}_{k,w}\cdot \BallsRoundNode{\tau}{j}{w}- \frac{1}{n^2}\right)\right)^2}
    \\&  =\BigAutoExp{\left(\left(\sum_{w\in[n]}\mixMatTT{\tau}{t}_{k,w}\cdot \BallsRoundNode{\tau}{j}{w}\right)-\frac{1}{n}\right)^2}
     =\frac{1}{n}  \sum_{w' \in [n]} \left(\mixMatTT{\tau}{t}_{k,w'} - \frac{1}{n}\right)^2
    = \frac{1}{n} \cdot \norm*{\mixMatTT{\tau}{t}_{k,\cdot} - \frac{\vec{1}}{n}}_2^2,
\end{align*}}
where in the second last equality we used the fact that for each $\tau$ and each $j$ exactly one of the $\BallsRoundNode{\tau}{j}{w}$ is one and all others are zero,
    and that each of the $n$ possible cases has uniform probability. Similarly we have
\begin{align*}
   \BigAutoExp{|Y_{\ell}|^3}&= \BigAutoExp{\left| \BallRoundContr{k}{\tau}{j}\right|^3}
    = \E\left[\left|\sum_{w\in [n]}\left(\mixMatTT{\tau}{t}_{k,w}\cdot \BallsRoundNode{\tau}{j}{w}- \frac{1}{n^2}\right)\right|^3\right] 
    \\&
    \overset{(a)}{=} \!\!\!\sum_{w'\in [n]}\!\!\! \BigAutoExpCond{\left|\sum_{w\in [n]}\left(\mixMatTT{\tau}{t}_{k,w}\cdot \BallsRoundNode{\tau}{j}{w} - \frac{1}{n^2}\right)\right|^3}{\BallsRoundNode{\tau}{j}{w'}=1}\cdot \Pr\left[\BallsRoundNode{\tau}{j}{w'}=1\right]
    \\& = 
    \frac{1}{n} \cdot \sum_{w'\in [n]}\left| \mixMatTT{\tau}{t}_{k,w'}-\frac{1}{n}\right|^3 \overset{(b)}{\le} \frac{1}{n} \cdot \sum_{w'\in [n]}\left( \mixMatTT{\tau}{t}_{k,w'}-\frac{1}{n}\right)^2 \le \frac{1}{n} \cdot \norm*{\mixMatTT{\tau}{t}_{k,\cdot} - \frac{\vec{1}}{n}}_2^2,
\end{align*}
where $(a)$ follows form the law to total expectation, $(b)$ from the fact that for any $w'\in [n]$, $|\mixMatTT{\tau}{t}_{k,w'}-1/n| <1$.

Recall that $\|\mixMatTT{\tau}{t}_{k,\cdot}-\frac{\vec{1}}{n}\|_2^2=\NodePotential{(\mixMatTT{\tau}{t}_{k,\cdot})}$.
By defining $F_{t\cdot m}(x)$ as the distribution of $\widetilde{D}_k(t)/\sqrt{\sum_{\ell=1}^{t\cdot m} \Var[Y_{\ell}] }$,
from \cref{thm:BerryEssen} it follows that,
\[
    \left|F_{t\cdot m}(x)-\Phi_N(x)\right| \le C_0\cdot \frac{\sum_{\ell=1}^{t\cdot m} \BigAutoExp{|Y_{\ell}|^3}}{\left(\sum_{\ell=1}^{t\cdot m} \Var[Y_{\ell}]\right)^{3/2}} \le C_0 \cdot \frac{\frac{m}{n}\cdot \sum_{\tau=1}^t \NodePotential{(\mixMatTT{\tau}{t}_{k,\cdot})}}{\left(\frac{m}{n}\cdot \sum_{\tau = 1}^t \NodePotential{(\mixMatTT{\tau}{t}_{k,\cdot})}\right)^{3/2}} =o(1),
\]
		    in which the last inequality follows from the assumption, $m\ge 4n\log(n)/\sum_{\tau=1}^t\NodePotential{(\mixMatTT{\tau}{t}_{k,\cdot})}$, and $C_0$ is some constant. Note that $\Phi_N(x)$ is the standard normal distribution.
		    Therefore it holds that,
		    \[
		    F_{t\cdot m}(x) \ge \Phi_N(x) - o(1) \ge \frac{1}{\sqrt{\pi}(x+\sqrt{x^2+2})e^{x^2}} - o(1)
		    \]
		    where the last inequality follows from [\cite{abramowitz+stegun}, Formula 7.1.13] which states 
		    \[
		    \frac{1}{\sqrt{\pi}(x+\sqrt{x^2+2})e^{x^2}} \le \Phi_N(x) \le \frac{1}{\sqrt{\pi}(x+\sqrt{x^2+4/\pi})e^{x^2}}.
		    \]
		    Hence with $x=1$ we have
		    \[
		    F_{t\cdot m}(1) \ge \frac{1}{\sqrt{\pi}(1+\sqrt{3})e} - o(1) \ge \frac{1}{16}.
		    \]
		    Therefore by replacing the definition of $F_{t\cdot m}(1)$ we get that
		    \[
		    \Pr \left[\frac{\widetilde{D}_k(t)}{\sqrt{\frac{m}{n}\cdot\sum_{\tau=1}^t \NodePotential{(\mixMatTT{\tau}{t}_{k,\cdot})}}} \ge 1\right]=\Pr\left[\widetilde{D}_k(t) \ge \sqrt{\frac{m}{n}\sum_{\tau=1}^t \NodePotential{(\mixMatTT{\tau}{t}_{k,\cdot})}}\right]\ge \frac{1}{16}.
		    \]
		    Recall that $\widetilde{D}_k(t)=\NodeDynamicContribT{k}{t}-\E\left[\NodeDynamicContribT{k}{t}\right]$, then it follows that
		    \[
		    \Pr\left[\NodeDynamicContribT{k}{t} \ge \E\left[\NodeDynamicContribT{k}{t}\right] + \sqrt{\frac{m}{n}\cdot \sum_{\tau=1}^t \NodePotential{(\mixMatTT{\tau}{t}_{k,\cdot})}}\right]\ge \frac{1}{16}.
		    \]

      Moreover, when node $k$ receives more than expectation from the allocated load items, there is (at least) one node  $w$ receiving less than expectation. Hence, 
		    \[
		    \Pr\left[\NodeDynamicContribT{k}{t} - \NodeDynamicContribT{w}{t} \ge \sqrt{\frac{m}{n}\cdot \sum_{\tau=1}^t \NodePotential{(\mixMatTT{\tau}{t}_{k,\cdot})}} \right] \ge \frac{1}{16}\cdot 1.
		    \]

		 Since $\LoadVecT{0}=\Vec{0}$, then $ \NodeInitialContribT{k}{t}= \NodeInitialContribT{w}{t}=0$. From \cref{lem:rounding:errors:are:small} it follows that $\abs{\NodeRoundingContribT{k}{t}-\NodeRoundingContribT{w}{t}} \le \sqrt{\log n}$ with probability $1-o(1)$. Since $m\ge 4n\cdot\log(n)/ \sum_{\tau=1}^t \NodePotential{(\mixMatTT{\tau}{t}_{k,\cdot})}$ and $\NodeLoadT{k}{t}=\NodeInitialContribT{k}{t}+\NodeDynamicContribT{k}{t} +\NodeRoundingContribT{k}{t}$, then it follows
   \[
		    \Pr\left[\NodeLoadT{k}{t} - \NodeLoadT{w}{t}\ge \frac{1}{2}\cdot\sqrt{\frac{m}{n}\cdot \sum_{\tau=1}^t \NodePotential{(\mixMatTT{\tau}{t}_{k,\cdot})}} \right] \ge \frac{1}{16}\cdot (1-o(1))\ge \frac{1}{17}.
		    \]
      
\end{proof}
\cref{thm:main_sync:lower} states that for a sequence of matchings $\mixMatSeq{t}$ as long as $m\ge 4n\cdot \log n/\GlobalDivergence_k(\mixMatSeq{t})$, then the load derivation of node $k$ from the expectation at round $t$ normalized by its standard deviation follows a standard normal distributed variable.

\subsection{Bounds for Specific Graph Classes}
\label{apx:bounds-specific-graphs-C}

In the following we drive some bounds on the discrepancy for specific graph classes. Note that we assume that initially the system is empty. The first corollary gives some upper bounds and the second one lower bounds. 
\Cref{cor:main_sync_circuit} and \cref{cor:main_sync:lower} are summarized in \cref{table:disc:upperbound} (in \cref{sec:conclusions}) and \cref{table:disc:lowerbound} (below), respectively.

\begin{corollary}\label{cor:main_sync_circuit}
\let\left\relax
\let\right\relax
Let $\LoadVecT{t}$ be the state of process $\SyncProc{\BCDistr(G)}{1}{m}$ at time $t$ with $\LoadVecT{0} = \vec{0}$ and assume $G$ has $n$ nodes.
For an arbitrary $t$ it holds w.h.p.\ and in expectation
\begin{itemize}
\item \(\discr(\LoadVecT{t}) =\Oh\left(\log(n)+\sqrt{(\CircuitSize\cdot m)/(n\cdot \SpectralGap{(\cMixMat}))}\cdot\sqrt{\log(n)}\right)\) for arbitrary graphs with round matrix $\RoundMat$.
\item \(\discr(\LoadVecT{t}) = \Oh\left(\log(n)+\sqrt{ m}\cdot \sqrt{\log (n)}\right)\) for cycle and regular graphs with constant $\CircuitSize$.
\item \(\discr(\LoadVecT{t}) =\Oh\left((1+\sqrt{{m}/{n}})\cdot \log(n)\right)\) for the two-dimensional torus or hypercube graphs. 
\item  \(\discr(\LoadVecT{t}) =\Oh\left(\log(n)+\sqrt{{ m}/{n}}\cdot \sqrt{\log(n)}\right)\) for constant three or more-dimensional torus.
\end{itemize}
\end{corollary}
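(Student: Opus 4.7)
By \cref{thm:main_sync_circuit}, it suffices to bound the global divergence $\GlobalDivergence(\mixMatSeq{t}) = \max_k \sqrt{\sum_{\tau=1}^t \NodePotential(\mixMatTT{\tau}{t}_{k,\cdot})}$ for each graph class and then substitute into the formula $\discr(\LoadVecT{t}) = \Oh(\log n + \sqrt{m/n}\cdot\GlobalDivergence(\mixMatSeq{t})\cdot\sqrt{\log n})$.

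For the first bullet I would use spectral contraction of the round matrix. Every matching matrix is doubly stochastic and does not increase the $\ell_2$-distance from $\vec{1}/n$, while each full period of $\CircuitSize$ matchings applies one copy of $\RoundMat$ and shrinks this distance by at least a factor $(1-\SpectralGap(\RoundMat))$. Hence $\NodePotential(\mixMatTT{\tau}{t}_{k,\cdot}) \leq (1-\SpectralGap(\RoundMat))^{2\lfloor(t-\tau)/\CircuitSize\rfloor}$, and grouping the sum by $j = \lfloor(t-\tau)/\CircuitSize\rfloor$ (each $j$ arising for at most $\CircuitSize$ values of $\tau$) turns it into a geometric series summing to $\Oh(\CircuitSize/\SpectralGap(\RoundMat))$. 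Substituting into \cref{thm:main_sync_circuit} gives the first bullet.

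For the remaining bullets the generic spectral bound is not tight, and I would exploit the algebraic structure of $\RoundMat$. In each of these classes $\RoundMat$ is circulant or a tensor product of circulants and is therefore simultaneously diagonalized by the corresponding Fourier/Walsh basis. A direct calculation gives $\NodePotential(\RoundMat^s \UnitVec{k}) = \frac{1}{n}\sum_{j \neq 0} |\lambda_j(\RoundMat)|^{2s}$, independent of $k$. Summing over $s$, plus the $\Oh(\CircuitSize)$ boundary terms from partial periods with $t-\tau < \CircuitSize$, yields $\GlobalDivergence(\mixMatSeq{t})^2 = \Oh\bigl(\CircuitSize + \frac{\CircuitSize}{n}\sum_{j \neq 0} \frac{1}{1-|\lambda_j(\RoundMat)|^2}\bigr)$, and the remaining work is to estimate this spectral sum separately for each class.
\begin{itemize}
\item \emph{Cycle $C_n$:} one computes $\RoundMat$ as a circulant with first row $(\tfrac14,\tfrac14,0,\ldots,0,\tfrac14,\tfrac14)$; the Fourier eigenvalues satisfy $|\lambda_j(\RoundMat)|^2 = \tfrac12\cos^2(2\pi j/n)(1+\cos(2\pi j/n))$, so $1-|\lambda_j|^2 = \Theta(j^2/n^2)$ for small $j$ and the sum evaluates to $\Theta(n^2)$, giving $\GlobalDivergence^2 = \Oh(n)$. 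For an arbitrary constant-degree regular graph with $\CircuitSize = \Oh(1)$ the same $\Oh(n)$ bound follows from $\GlobalDivergence^2 = \Oh(\CircuitSize\cdot \HittingTime/n)$ (via reversibilisation $\RoundMat\RoundMat^T$ in place of $\RoundMat$) together with $\HittingTime = \Oh(n^2)$ on any connected regular graph.
\item \emph{Two-dimensional torus:} $\RoundMat$ factors as a tensor product of two one-dimensional cycle-type operators, so $1-|\lambda_{(j_1,j_2)}|^2 = \Theta((j_1^2+j_2^2)/n)$; the resulting two-dimensional harmonic sum $\sum_{(j_1,j_2)\neq 0} n/(j_1^2+j_2^2)$ evaluates to $\Theta(n\log n)$, giving $\GlobalDivergence^2 = \Oh(\log n)$.
\item \emph{$r$-dimensional torus with constant $r \geq 3$:} the analogous $r$-dimensional sum is $\Theta(n)$ since the associated integral $\int_0^{n^{1/r}} \rho^{r-3}\,\mathrm{d}\rho$ converges in a volume-dominated manner for $r \geq 3$; hence $\GlobalDivergence^2 = \Oh(1)$.
\item \emph{Hypercube:} with one matching $M_i$ per dimension, every Walsh character $\chi_S$ with $S \neq \emptyset$ is annihilated by $M_i$ for any $i \in S$; thus $\RoundMat = \vec{1}\vec{1}^T/n$ after a single period, so $\NodePotential(\mixMatTT{\tau}{t}_{k,\cdot}) = 0$ whenever $t-\tau \geq \CircuitSize$, and $\GlobalDivergence^2 \leq \CircuitSize = \log n$.
\end{itemize}

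Substituting these estimates on $\GlobalDivergence$ into \cref{thm:main_sync_circuit} produces the four claimed bounds. The main obstacle I expect is the spectral analysis for the tori: $\RoundMat$ is circulant but \emph{not} symmetric, so one must diagonalize via the complex Fourier basis and carefully track both eigenvalue magnitudes and the boundary contributions from partial periods. The hypercube case is short because $\RoundMat$ collapses to rank one in one period, and the first bullet only needs the generic spectral-gap contraction.
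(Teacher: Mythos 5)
Your overall strategy is the paper's: bound the global divergence $\GlobalDivergence(\mixMatSeq{t})$ class by class and plug it into \cref{thm:main_sync_circuit}. Your first bullet (per-period contraction giving $\Oh(\CircuitSize/\SpectralGap(\RoundMat))$) matches the paper's general bound, and your hypercube argument is correct and in fact cleaner than the paper's: since every dimension matching fixes or annihilates each Walsh character, the matchings commute and $\RoundMat=\vec{1}\vec{1}^T/n$ after one period, whereas the paper invokes the $2^{-\tau}$ decay from \cite{DBLP:conf/icalp/CaiS17}. For the cycle and the tori, however, the paper does not diagonalize anything: it imports the decay bounds $\NodePotential(\cMixMatTT{1}{\tau}_{k,\cdot})=\Oh(\tau^{-1/2})$ (any round matrix), $\Oh(\tau^{-1})$ (2-D torus) and $\Oh(\tau^{-(1+\epsilon)})$ ($r$-D torus) from Cai--Sauerwald, truncates the sum at $n^2$, and controls the tail with the known spectral gaps of the round matrices.

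Here is the genuine gap in your route for those classes: the premise that $\RoundMat$ is circulant (or a tensor product of circulants) and hence Fourier-diagonalized is false for the odd--even scheme. With $\BalancingSpeed=1$ the cycle's round matrix is $\mixMatT{2}\cdot\mixMatT{1}$, a product of two non-commuting symmetric projections; its rows come in identical pairs (e.g.\ rows $2$ and $3$ both equal $(1/4,1/4,1/4,1/4,0,\ldots,0)$), so it is not shift-invariant and not normal, and each $\mixMatT{i}$ only preserves the two-dimensional spans of the Fourier vectors $v_j,v_{j+n/2}$ rather than the individual $v_j$. Consequently the identity $\NodePotential(\RoundMat^s\UnitVec{k})=\frac{1}{n}\sum_{j\neq 0}\abs{\lambda_j(\RoundMat)}^{2s}$ is unjustified (for non-normal matrices eigenvalue magnitudes do not control $2$-norms of powers without tracking the eigenbasis conditioning), and the 2-D and $r$-D torus cases inherit the problem through their tensor factors. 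Your final numbers happen to agree with the truth, but to make this rigorous you would need to work with the $2\times 2$ Fourier blocks or the singular values of $\RoundMat^s$ (equivalently $\RoundMat^T\RoundMat$), or simply cite the Cai--Sauerwald potential-decay bounds as the paper does. In addition, the claim $\GlobalDivergence^2=\Oh(\CircuitSize\cdot\HittingTime/n)$ for arbitrary regular graphs with constant $\CircuitSize$ ``via reversibilisation'' is asserted without proof; the paper's hitting-time machinery is developed only for the random-matching model, so this part of the second bullet is unsupported as written and needs an actual argument (the paper uses the topology-independent $\Oh(\tau^{-1/2})$ decay plus a spectral-gap tail).
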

\begin{proof}
The bounds follow from a straight-forward combination of the upper bounds on the local divergence from \cref{lem:bound-on-global-diverg} with \cref{thm:main_sync_circuit}.
\end{proof}

\begin{corollary}\label{cor:main_sync:lower}
Let $\LoadVecT{t}$ be the state of process $\SyncProc{\BCDistr(G)}{1}{m}$ at time $t$ with $\LoadVecT{0} = \vec{0}$.
It holds with constant probability that
\begin{itemize}   
\item $\discr(\LoadVecT{t}) = \Omega\left(\sqrt{m}\right)$,  for cycle, constant $d$-regular graphs, $t= \Omega(n^2)$ and $m\ge 4\log(n)$. 
\item  $\discr(\LoadVecT{t}) = \Omega\left(\sqrt{\frac{m}{n}\cdot \log(n)}\right)$ for two-dimensional torus, $t = \Omega(n)$, and $m\ge 4n$. 
\item $\discr(\LoadVecT{t}) = \Omega\left(\sqrt{\frac{m}{n}}\right)$, for constant $r\ge 3$-dimensional torus, hypercube graphs, $t \in \mathbb{N}$, and $m\ge 4n\cdot \log(n)$. 
 \end{itemize}
\end{corollary}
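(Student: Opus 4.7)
The plan is to specialize Theorem \ref{thm:main_sync:lower} to each listed graph class. Recall that this theorem delivers $\discr(\LoadVecT{t}) = \Omega(\sqrt{m/n}\cdot \GlobalDivergence(\mixMatSeq{t}))$ with constant probability, provided $m$ is large enough relative to $n\log(n)/\GlobalDivergence(\mixMatSeq{t})^2$. The three statements of the corollary correspond to lower bounds of $\sqrt{n}$, $\sqrt{\log n}$, and $1$ on $\GlobalDivergence(\mixMatSeq{t})$ under the respective assumptions on $t$; the stated conditions on $m$ will then exactly match the hypothesis of Theorem \ref{thm:main_sync:lower}.

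The common task reduces to bounding
\[\GlobalDivergence(\mixMatSeq{t})^2 = \max_{k\in[n]} \sum_{\tau=1}^t \NodePotential(\mixMatTT{\tau}{t}_{k,\cdot})\]
from below. Here $\mixMatTT{\tau}{t}_{k,\cdot}$ is the probability distribution obtained by pushing a unit mass placed at node $k$ through the last $t-\tau+1$ balancing steps of the circuit (viewed as a lazy random walk whose transition kernels are the matching matrices). Because $\mixMatTT{\tau}{t}_{k,\cdot}$ is a stochastic vector, we have $\NodePotential(\mixMatTT{\tau}{t}_{k,\cdot}) = \|\mixMatTT{\tau}{t}_{k,\cdot} - \mathbf{1}/n\|_2^2$, and this quantity is controlled by how concentrated the walker's mass remains after $t-\tau+1$ steps.

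For the cycle, the walk spreads over roughly $\sqrt{s}$ nodes in $s$ steps, so $\NodePotential(\mixMatTT{\tau}{t}_{k,\cdot}) = \Omega(1/\sqrt{t-\tau+1})$ for all $\tau$ with $t-\tau$ below the mixing time $\Theta(n^2)$; summing in the regime $t = \Omega(n^2)$ yields $\GlobalDivergence^2 = \Omega(n)$. For the two-dimensional torus the walk's support is of order $s$ after $s$ steps, giving $\NodePotential(\mixMatTT{\tau}{t}_{k,\cdot}) = \Omega(1/(t-\tau+1))$ until mixing at time $\Theta(n)$, so the harmonic sum evaluates to $\Omega(\log n)$. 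For three-or-more-dimensional tori and the hypercube, mixing completes within $O(\log n)$ steps, but the walker's mass remains concentrated on a small neighborhood of $k$ over the first few steps, and that $O(1)$-length prefix of the sum already contributes $\Omega(1)$.

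Plugging these lower bounds into Theorem \ref{thm:main_sync:lower} yields the three stated discrepancy lower bounds, and the corresponding assumption on $m$ is exactly what is needed to invoke the theorem (for instance, $\GlobalDivergence^2 \geq \Omega(n)$ turns $m \geq 4n\log(n)/\GlobalDivergence^2$ into $m \geq \Omega(\log n)$). The main obstacle is the sharp pointwise lower bound on $\NodePotential(\mixMatTT{\tau}{t}_{k,\cdot})$; the cleanest route is probably to compare the balancing circuit with a continuous-time lazy random walk on the graph, whose transition semigroup admits the standard Gaussian on-diagonal lower bounds (of the form $p_s(k,k) = \Omega(s^{-r/2})$ on the $r$-dimensional torus, and analogous estimates on the hypercube), and to absorb the comparison constant into the $\Omega$-notation.
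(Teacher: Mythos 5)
Your proposal is correct and, at the level of the corollary, coincides with the paper's argument: the paper also obtains each bullet by plugging per-graph lower bounds on the global divergence, namely $\GlobalDivergence(\mixMatSeq{t})^2=\Omega(n)$ for the cycle with $t=\Omega(n^2)$, $\Omega(\log n)$ for the 2-D torus with $t=\Omega(n)$, and $\Omega(1)$ for $r\ge 3$-dimensional tori and the hypercube, into \cref{thm:main_sync:lower}, with the stated conditions on $m$ arising exactly as you describe (note that, as you correctly inferred from its proof, the theorem's hypothesis is really $m\ge 4n\log(n)/\GlobalDivergence(\mixMatSeq{t})^2$, despite the statement's typo). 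The only difference is how the divergence lower bounds are justified: the paper does not re-derive them inside the corollary but invokes \cref{lem:bound-on-global-diverg}, whose lower-bound parts rest on the pointwise estimates $\NodePotential(\cMixMatTT{1}{\tau}_{k,\cdot})=\Theta(\tau^{-1/2})$ (cycle) and $\Theta(\tau^{-1})$ (2-D torus) taken from \cite{DBLP:conf/icalp/CaiS17}, plus the observation that the $\Omega(1)$ bound is trivial (e.g.\ the single term $\tau=t$ already contributes $2(1/2-1/n)^2$, which is your ``prefix'' argument in its simplest form). Your heat-kernel/local-CLT comparison would prove the same estimates from scratch; that buys self-containedness but is left at sketch level in your write-up, whereas citing the existing potential estimates makes the corollary a two-line consequence. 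One caveat to be aware of (which applies equally to the paper's own formulation): your spreading argument, like the cited estimates, is genuinely a cycle argument; it does not extend to arbitrary constant-degree regular graphs (for a constant-degree expander the row potentials decay geometrically and $\GlobalDivergence^2=\Theta(1)$), so the first bullet should be read as covering cycle-like circuits rather than all constant-degree regular graphs.
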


\begin{proof}
The bounds follow from a straight-forward combination of the bounds on the local divergence from \cref{lem:bound-on-global-diverg} to \cref{thm:main_sync:lower}.
\end{proof}

\begin{table}[t]
\caption {Asymptotic lower bounds on the discrepancy in specific graph classes.}
\label{table:disc:lowerbound}
\def\hx#1{#1&}
\def\arraystretch{1.5}
\centering
\begin{tabularx}{0.5\textwidth}{ Xc }
\toprule
Graph
& $\SyncProc{\BCDistr(G)}{1}{m}$
\\[-1ex]
& \cref{cor:main_sync:lower} \\

\midrule
 
\hx{$d$-regular graph\newline\small (const. $d$)}
$\sqrt{m}$ \\

\hx{cycle $C_n$}
$\sqrt{m}$ \\

\hx{2-D torus}
$\sqrt{(m/n)\cdot\log(n)}$ \\

\hx{$r$-D torus\newline\small (const. $r \geq 3$)}
$\sqrt{m/n}$ \\

\hx{hypercube}
$\sqrt{m/n}$ \\
\bottomrule

\end{tabularx}
\end{table}

The two corollaries above show that our bounds are almost tight for cycle graphs, constant $d$-regular graphs, $r$-dimensional torus graphs with constant $r$ and hypercube graphs. For instance, consider a cycle constructed by Odd-Even scheme and assume $m\ge \log(n)$. \cref{cor:main_sync_circuit} states that the discrepancy is, w.h.p., $\Oh(\sqrt{m \cdot \log n})$ while \cref{cor:main_sync:lower} implies that, with constant probability, the discrepancy is $\Omega(\sqrt{m})$.

\medskip

We now compute the global divergence for following concrete graphs and circuits:
For cycles of even length,
    we consider the ``Odd-Even'' scheme in which
        the first matching $\mixMatT{1}$ consists of all edges $\{j, (j+1)\pmod{n}\}$ for any odd $j \in [n]$,
        and the second matching $\mixMatT{2}$ consists of all edges $\{j,(j+1)\pmod{n}\}$ for any even $j \in [n]$.
More generally,
    for $r$-dimensional torus with node set $[n^{1/r}]^r$,
    the balancing circuit consists of $2r$ matchings in total,
        two matchings for each dimension $i$, analogously to  the cycle.
For the hypercube, the canonical choice is the dimension exchange circuit
    consisting of $\log_2(n)$ matchings,
    where nodes $u$ and $v$ are matched in $\mixMatT{i}$ if and only if
    their binary representations differ in bit $i$ only (see, e.g.,~\cite{DBLP:conf/icalp/CaiS17}).

 Recall that $\NodePotential{(\mixMatTT{\tau}{t}_{k,\cdot} )}=\norm{\mixMatTT{\tau}{t}_{k,\cdot}-\frac{\vec{1}}{n}}_2^2$ and 
 $\cMixMat \coloneqq \mixMatTT{1}{\CircuitSize}$. 
The next lemma is about the global divergence of some specific graphs for the distribution $\BCDistr(G)$.
\begin{lemma}[Global Divergence]\label{lem:bound-on-global-diverg}
   Let $G$ be a graph and consider $\BCDistr(G)$ constructed by Odd-Even scheme such that it produces the round matrix $\RoundMat$. 
            \begin{enumerate}
                \item For each $t\in N$ it holds $(\GlobalDivergence(\MixMatSeq{t}))^2=\Oh\left(\CircuitSize/  \SpectralGap(\RoundMat)\right)$.
                \item For a constant $\CircuitSize$ and each $t\in \N$ it holds $(\GlobalDivergence(\MixMatSeq{t}))^2=
                \Oh\left(n\right)$. It also holds for any $t=\Omega(n^2)$, $(\GlobalDivergence(\MixMatSeq{t}))^2=\Omega(n)$.
                \item For two-dimensional torus $G$ and for each $t\in \N$ it holds $(\GlobalDivergence(\MixMatSeq{t}))^2=\Oh\left(\log(n)\right)$. It also holds for any $t = \Omega(n)$, $(\GlobalDivergence(\MixMatSeq{t}))^2=\Omega(\log n)$.
                \item  For constant $r\ge 3$-dimensional torus $G$ and each $t\in \N$ it holds $(\GlobalDivergence(\MixMatSeq{t}))^2
                =\Oh\left(r\right)$. It also holds for any $t\in \N$, $(\GlobalDivergence(\MixMatSeq{t}))^2= \Omega(1)$.
                \item For hypercube graphs $G$ and each $t\in \N$ it holds $(\GlobalDivergence(\MixMatSeq{t}))^2=\Oh\left(\log(n)\right)$. It also holds for any $t$, $(\GlobalDivergence(\MixMatSeq{t}))^2= \Omega(1)$.
            \end{enumerate}
\end{lemma}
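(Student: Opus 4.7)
The plan is to reduce the sum defining $\GlobalDivergence(\mixMatSeq{t})^2$ to a sum over iterated powers of the round matrix acting on a standard basis vector, and then invoke graph-specific spectral information to estimate that sum.

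First I would observe that, because each matching matrix is symmetric, the transposed row $\mixMatTT{\tau}{t}_{k,\cdot}$ equals $\mixMatT{\tau}\cdots\mixMatT{t}\vec{e}_k$. By \cref{obs:node_potential_change_exact} applying any further matching matrix only decreases $\NodePotential$, so $\NodePotential(\mixMatTT{\tau}{t}_{k,\cdot})$ is non-increasing as $\tau$ decreases. Grouping the $t$ summands into blocks of $\CircuitSize$ consecutive rounds (one full period of the circuit), and using that a full period's worth of matching matrices composes to $\RoundMat$ (or $\RoundMat^T$, depending on indexing), this gives
\[
\sum_{\tau=1}^{t}\NodePotential\!\bigl(\mixMatTT{\tau}{t}_{k,\cdot}\bigr)
\;\leq\;\CircuitSize\cdot\sum_{j=0}^{\infty}\NodePotential\!\bigl((\RoundMat^{T})^{j}\vec{e}_{k}\bigr).
\]

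For part~(1) I would simply apply the contraction bound
$\NodePotential(\RoundMat^{j}\vec{x})\leq(1-\SpectralGap(\RoundMat))^{2j}\NodePotential(\vec{x})$
from Lemma~2 of \cite{DBLP:conf/spaa/GhoshMS96} (already invoked in the proof of \cref{lem:erro_bound}); it also holds for $\RoundMat^{T}$ since $\RoundMat^{T}$ shares the spectrum of $\RoundMat$. With $\NodePotential(\vec{e}_{k})\leq 1$, the geometric series sums to
$\Oh\bigl(1/(1-(1-\SpectralGap(\RoundMat))^{2})\bigr)=\Oh(1/\SpectralGap(\RoundMat))$,
giving the claimed $\Oh(\CircuitSize/\SpectralGap(\RoundMat))$.

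For parts (2)--(5) a rougher spectral bound is insufficient — for instance, on the cycle $\SpectralGap(\RoundMat)=\Theta(1/n^{2})$ would give only $\Oh(n^{2})$ while the target is $\Oh(n)$. I would instead analyze $\sum_{j\geq 0}\NodePotential(\RoundMat^{j}\vec{e}_{k})$ using graph-specific Fourier/character information. On the $r$-dimensional torus and on the hypercube the matching matrices composing $\RoundMat$ are simultaneously diagonalized (exactly in the hypercube case; after a suitable block-diagonalization on the tori) by the characters of $(\Z/L\Z)^{r}$ respectively $\Z_{2}^{d}$; the eigenvalues of $\RoundMat$ on each mode are explicit products of $\cos$ factors (tori) or $\pm 1/2$-type factors (hypercube). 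Consequently $\NodePotential(\RoundMat^{j}\vec{e}_{k})$ equals, up to constants, the squared $L_{2}$-distance to uniform of a $j$-step lazy walk on the corresponding Cayley graph, which is $\Theta(\min\{j^{-r/2},1/n\})$ on the $r$-torus and $\Theta\bigl(\prod_{i=1}^{d}(1-c_{i}/d)^{2j}\bigr)$ on the hypercube for appropriate bit-bias pattern; summing these over $j$ produces $\Oh(n)$ for the cycle, $\Oh(\log n)$ for the 2-D torus and the hypercube, and $\Oh(r)$ for $r\geq 3$. The general constant-$\CircuitSize$ bound $\Oh(n)$ follows by expanding $\vec{e}_{k}$ in any eigenbasis of $\RoundMat^{T}\RoundMat$, giving at most $n-1$ geometric series each summing to $\Oh(|\langle\vec{e}_{k},v_{\ell}\rangle|^{2}/(1-\sigma_{\ell}^{2}))$; Parseval bounds the total contribution by $\Oh(n)$.

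The matching lower bounds come from the same Fourier analysis: by keeping only the slowest non-trivial modes and summing their contributions up to a time $t$ large enough for those modes to have ``registered'' (namely $t=\Omega(n^{2})$ for the cycle, $t=\Omega(n)$ for the 2-D torus, and any $t$ in the remaining cases), one recovers $\Omega(n)$, $\Omega(\log n)$, and $\Omega(1)$, respectively. The main technical obstacle will be that $\RoundMat=\mixMatT{\CircuitSize}\cdots\mixMatT{1}$ is generally not symmetric, since symmetric matching matrices do not commute; I would handle this either by working with $\RoundMat^{T}\RoundMat$ and singular values (for the general constant-$\CircuitSize$ upper bound) or by using the explicit (block-)Fourier simultaneous diagonalization available for the specific vertex-transitive graphs in parts (2)--(5).
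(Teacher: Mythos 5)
Your reduction and your part (1) are exactly the paper's: the paper also rewrites $(\GlobalDivergence_k(\mixMatSeq{t}))^2=\sum_{\tau\le t}\NodePotential(\mixMatTT{\tau}{t}_{k,\cdot})$, uses monotonicity of $\NodePotential$ under further matchings to bound this by $\CircuitSize\cdot\sum_{\tau\ge 1}\NodePotential(\cMixMatTT{1}{\tau}_{k,\cdot})$, and for the general bound sums the geometric series $(1-\SpectralGap(\RoundMat))^{2\tau}$ from Lemma~2 of \cite{DBLP:conf/spaa/GhoshMS96}. Where you diverge is parts (2)--(5): the paper does no Fourier analysis at all; it imports the per-round decay of $\NodePotential(\cMixMatTT{1}{\tau}_{k,\cdot})$ from \cite{DBLP:conf/icalp/CaiS17} ($\Oh(1/\sqrt{\tau})$ in general, $\Oh(1/\tau)$ for the 2-D torus, $\Oh(\tau^{-(1+\epsilon)})$ for $r=2(1+\epsilon)\ge 3$, $\Oh(2^{-\tau})$ for the hypercube), truncates at the mixing time, handles the tail with the spectral contraction as in \cref{lem:erro_bound}, takes the matching $\Theta$-estimates from the same reference for the cycle/2-D torus lower bounds, and settles the remaining lower bounds with the trivial $\Omega(1)$.

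Your plan to rederive these rates spectrally has concrete gaps as written. First, the claimed decay $\Theta(\min\{j^{-r/2},1/n\})$ is wrong past the mixing time (the squared distance to uniform decays exponentially to $0$, not to $1/n$), and summing your stated bound over all $j$ diverges, since infinitely many terms have size $1/n$; you need either the exponential tail (this is precisely the paper's spectral-gap tail step) or to sum the per-mode geometric series before summing over modes. Second, your hypercube description is off: for the dimension-exchange circuit each matching is $\tfrac12(\IdentityMat+P_i)$ with $P_i$ the bit-$i$ flip, so the round matrix annihilates every nontrivial character; the $\Oh(\log n)$ comes from the $2^{-\tau}$ decay \emph{within} one period, not from eigenvalues of powers of $\RoundMat$ of the form $\prod_i(1-c_i/d)^{2j}$. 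Third, the general constant-$\CircuitSize$ bound does not follow from Parseval alone: Parseval only gives $\sum_\ell\abs{c_\ell}^2\le 1$, while individual factors $1/(1-\sigma_\ell^2)$ can be $\Theta(n^2)$ (already for the cycle), so $\sum_\ell\abs{c_\ell}^2/(1-\sigma_\ell^2)=\Oh(n)$ needs either explicit delocalization plus quadratic eigenvalue spacing (available for the cycle via Fourier, not for general constant-degree graphs) or the universal $\Oh(1/\sqrt{\tau})$ decay of \cite{DBLP:conf/icalp/CaiS17} combined with $1/\SpectralGap(\RoundMat)=\Oh(n^2)$, which is the route the paper takes. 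Finally, the block-diagonalization you defer for the tori (the odd--even matchings are only invariant under shifts by two) is genuine work; carried out correctly it amounts to re-proving the estimates the paper simply cites, which is legitimate but should then replace, not hand-wave, the $\Theta(\min\{\cdot\})$ and Parseval steps.
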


\begin{proof}
  Recall that the sequence of matching matrices $\mixMatSeq{t}$ has global divergence $\GlobalDivergence(\mixMatSeq{t})$, if
        \[\forall k \in [n], \sum_{\tau=1}^{t} \norm*{\mixMatTT{\tau}{t}_{k,\cdot} - \frac{\vec{1}}{n}}_2^2 \leq \left(\GlobalDivergence(\mixMatSeq{t})\right)^2.\]
Since the matchings are fixed we have $\left(\GlobalDivergence(\mixMatSeq{t})\right)^2=\max_{w\in [n]} \sum_{\tau=1}^{t}\|\mixMatTT{\tau}{t}_{w,\cdot}-\frac{\vec{1}}{n}\|_2^2$.
Consider a node $k\in [n]$ such that $\GlobalDivergence_k(\mixMatSeq{t}) = \GlobalDivergence(\mixMatSeq{t})$.
We have seen that
\begin{equation*}
\left(\GlobalDivergence_k(\mixMatSeq{t})\right)^2=\sum_{\tau=1}^{t}\norm*{\mixMatTT{\tau}{t}_{k,\cdot}-\frac{\vec{1}}{n}}_2^2=\sum_{\tau=1}^{t} \NodePotential{(\mixMatTT{\tau}{t}_{k,\cdot} )}.
\end{equation*}
Since $\NodePotential{(\cMixMatTT{1}{\tau}_{k,\cdot} )}$ is non increasing in $\tau \in \N$ and $\cMixMat \coloneqq \mixMatTT{1}{\CircuitSize}$, then
\[
\left(\GlobalDivergence_k(\mixMatSeq{t})\right)^2\le \sum_{\tau=1}^{\infty} \NodePotential{(\mixMatTT{1}{\tau}_{k,\cdot})}
\le \CircuitSize\cdot \sum_{\tau=1}^{\infty}\NodePotential{(\cMixMatTT{1}{\tau}_{k,\cdot})}.
\]
Hence, to bound $\left(\GlobalDivergence(\mixMatSeq{t})\right)^2$, it is enough to bound $\CircuitSize\cdot\sum_{\tau=1}^{\infty}\NodePotential{(\cMixMatTT{1}{\tau}_{k,\cdot})}$.

\subparagraph*{General case:} Here we get,
\begin{align*} 
     \CircuitSize \cdot  \sum_{\tau=1}^{\infty}\NodePotential{(\cMixMatTT{1}{\tau}_{k,\cdot})} \overset{(a)}{\le}  \CircuitSize\cdot\left( \sum_{\tau=0}^{\infty} (1-\SpectralGap{(\cMixMat)})^{2\tau}\right) \le \CircuitSize\cdot\left( \sum_{\tau=0}^{\infty} (1-\SpectralGap{(\cMixMat)})^{\tau}\right) =\Oh\left(  \frac{\CircuitSize}{\SpectralGap{(\cMixMat)}}\right),
\end{align*}
where $(a)$ follows from \cite[Lemma 2]{DBLP:conf/spaa/GhoshMS96}.
Note that $\NodePotential{(\cMixMatTT{1}{1}_{k,\cdot})}\le 1$.

\subparagraph*{Cycles:} Recall that in cycle $\CircuitSize=2$. It holds that
\begin{align*}
       &\CircuitSize\cdot  \sum_{\tau=1}^{\infty} \NodePotential{(\cMixMatTT{1}{\tau}_{k,\cdot})} 
       = \CircuitSize\cdot \left( \sum_{\tau=1}^{n^2}\NodePotential{(\cMixMatTT{1}{\tau}_{k,\cdot})} + \sum_{\tau=n^2+1}^{\infty}\NodePotential{(\cMixMatTT{1}{\tau}_{k,\cdot})} \right)
    \\&
     \overset{(b)}{\le} 
     \CircuitSize\cdot\left(  \sum_{\tau=1 }^{n^2} \Oh(\frac{1}{\sqrt{\tau}})+\sum_{\tau=n^2+1}^{\infty}\NodePotential{(\cMixMatTT{1}{\tau}_{k,\cdot})}  \right) 
     \\& \overset{(c)}{\le} 
     \CircuitSize\cdot\left(\Oh(\sqrt{n^2}) + \NodePotential{(\cMixMatTT{1}{n^2}_{k,\cdot})}\cdot \sum_{\tau=1}^{\infty} (1-\SpectralGap{(\cMixMat)})^{2\tau}\right)
     \\& 
     \overset{(d)}{\le} 
     \CircuitSize\cdot\left(\Oh(\sqrt{n^2}) + \Oh(\frac{1}{n})\cdot \sum_{\tau=1}^{\infty} (1-\SpectralGap{(\cMixMat)})^{2\tau}\right) \\
     &= \CircuitSize\cdot\Oh\left(n+ \frac{1}{n\cdot \SpectralGap{(\cMixMat)}}\right) 
     \overset{(e)}{\le} 
     \Oh(\CircuitSize\cdot n)=\Oh(2n),
\end{align*}
where $(b)$ and $(d)$ follows \cite{DBLP:conf/icalp/CaiS17}, $(c)$ from \cite[Lemma 2]{DBLP:conf/spaa/GhoshMS96}.
To see $(e)$, consider that the spectral gap of the round matrix corresponding to a cycle is $\Theta(1/n^2)$ \cite{DBLP:conf/focs/RabaniSW98}. Moreover, for $t= c n^2$ with some constant $c$, it follows from \cite{DBLP:conf/icalp/CaiS17} that
		    \[\sum_{\tau=1}^{t} \NodePotential{(\mixMatTT{\tau}{t}_{k,\cdot})}= c_1\cdot \sum_{\tau=1}^{t} \NodePotential{(\cMixMatTT{\tau}{t/2}_{k,\cdot})} =  \sum_{\tau=0}^{t/2-1} \Theta(\frac{1}{\sqrt{t/2-\tau}})= \Theta(\sqrt{t/2}) = \Omega(n),\]
      for $c_1\in[1,2]$.

\subparagraph*{Two-dimensional torus:}
Note that in $r$-dimensional torus graphs $\CircuitSize=2r=4$, and the spectral gap of the round matrix corresponding to a $r$-dimensional torus is $\Theta(1/n^{2/r})$ \cite{DBLP:conf/focs/RabaniSW98}. 
Hence,
\begin{align*}
\CircuitSize\cdot  \sum_{\tau=1}^{\infty} & \NodePotential{(\cMixMatTT{1}{\tau}_{k,\cdot})}  =  \CircuitSize\cdot \left( \sum_{\tau=1}^{n^2} \NodePotential{(\cMixMatTT{1}{\tau}_{k,\cdot})}+ \sum_{\tau=n^2+1}^{\infty}\NodePotential{(\cMixMatTT{1}{\tau}_{k,\cdot})} \right)
\\&
\overset{(f)}{\le} 
\CircuitSize\cdot\left( \sum_{\tau=1}^{n^2} \Oh(\frac{1}{\tau})+\sum_{\tau=n^2+1}^{\infty}\NodePotential{(\cMixMatTT{1}{\tau}_{k,\cdot})}  \right) \\
& \overset{(g)}{\le} 
\CircuitSize\cdot\left( \sum_{\tau=1}^{n^2} \Oh(\frac{1}{\tau})+ \NodePotential{(\cMixMatTT{1}{n^2}_{k,\cdot})}\cdot \sum_{\tau=1}^{\infty} (1-\SpectralGap{(\cMixMat)})^{2\tau}  \right)
\\& 
\overset{(h)}{\le} 
\CircuitSize\cdot\left(\Oh(\log(n))+ \Oh(\frac{1}{n^2})\cdot \sum_{\tau=1}^{\infty} (1-\SpectralGap{(\cMixMat)})^{2\tau}\right) 
\\ &=
\Oh\left(4\cdot\log(n)+\frac{ 4\cdot n^2}{n^2}\right)
=
\Oh\left(4\log(n)\right),
\end{align*}
where $(f)$ and $(h)$ follow from \cite{DBLP:conf/icalp/CaiS17}, $(g)$ from \cite[Lemma 2]{DBLP:conf/spaa/GhoshMS96}. Moreover, for $t= cn$ with some constant $c$, it follows from \cite{DBLP:conf/icalp/CaiS17} that
		    \[\sum_{\tau=1}^{t} \NodePotential{(\mixMatTT{\tau}{t}_{k,\cdot})}= c_1\cdot \sum_{\tau=1}^{t} \NodePotential{(\cMixMatTT{\tau}{t/4}_{k,\cdot})} =  \sum_{\tau=0}^{t/4-1} \Theta(\frac{1}{t/4-\tau})= \Theta(\log(t/4)) = \Omega(\log(n)),\]
      for $c_1\in[1,4]$. 

\subparagraph*{Constant three or more-dimensional torus:}
Let us assume $r=2(1+\epsilon)$ for some $\epsilon>0$ then 
\begin{align*}
\CircuitSize\cdot  \sum_{\tau=1}^{\infty} & \NodePotential{(\cMixMatTT{1}{\tau}_{k,\cdot})} \overset{(i)}{\le} \CircuitSize\cdot \sum_{\tau=1}^{\infty} \tau^{-(1+\epsilon)} \le \CircuitSize\cdot\left( 1 + \int_{1}^{\infty}\!\! x^{-(1+\epsilon)} \,dx \right) \le \CircuitSize\cdot( 1+ 1/\epsilon) =\Oh(2r),
\end{align*}
where $(i)$ follows form \cite{DBLP:conf/icalp/CaiS17}.

\subparagraph*{Hypercubes:} 
Similarly, it holds that
\begin{align*}
       \CircuitSize\cdot  \sum_{\tau=1}^{\infty} &\NodePotential{(\cMixMatTT{1}{\tau}_{k,\cdot})} \overset{(j)}{\le} \CircuitSize\cdot \left(\sum_{\tau=1}^{\infty} 2^{-\tau}\right) \le 2\cdot \CircuitSize=\Oh(2\log(n)),
\end{align*}
where $(j)$ follows from \cite{DBLP:conf/icalp/CaiS17}. Recall that in hypercube $\CircuitSize\le \log(n)$.

The lower bound of $1$ is trivial.
    \end{proof}

\section{Asynchronous Model}
\label{apx:asynchronous}

The following is the equivalent of \cref{lem:mixing_well_means_balancing_well} for the process $\AsyncProcName$:
\begin{lemma}\label{lem:mixing_well_means_balancing_well_async}
    Let $G$ be a regular graph, and let $t \in \N$.
    Then in $\AsyncProc{\ADistr(G)}{\BalancingSpeed}$,
        for all $k \in [n]$, $\gamma>0$,
        and for $\hat{\GlobalDivergence}_k > 0$ such that $\AutoProb{\GlobalDivergence_k(\MixMatSeq{t}) + 1 > \hat{\GlobalDivergence}_k} \leq n^{-\gamma}$,
        we have
        \begin{align*}
        \BigAutoProb{\abs*{\NodeDynamicContribT{k}{t} - t \cdot \frac{m}{n}} \geq \frac{\gamma\log(n)}{3} + \sqrt{\frac{\gamma\log(n)}{n}} \cdot \hat{\GlobalDivergence}_k}
            \leq 4 n^{-\gamma}.
    \end{align*}
\end{lemma}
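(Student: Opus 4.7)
The plan is to set up a Doob martingale with good concentration properties and apply the Freedman-type inequality \cref{corr:martingale_quad_characteristic}. The main technical difficulty (relative to \cref{lem:mixing_well_means_balancing_well}) is that in $\AsyncProcName$ the matching $\MixMatT{\tau}$ and the allocation $\AllocVecT{\tau}$ are coupled, since both are determined by the same randomly-chosen edge, so one cannot condition on the matching sequence and then apply a plain independence-based bound.

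First, I would decompose $\NodeDynamicContribT{k}{t}=\sum_{\tau=1}^{t}C_\tau$, where $C_\tau:=\MixMatTT{\tau}{t}_{k,B_\tau}$ and $B_\tau$ denotes the node that receives a ball at step $\tau$, and abbreviate $\vec{V}_\tau:=(\MixMatTT{\tau+1}{t})_{k,\cdot}$ with the convention $\vec{V}_t=\vec{\mathrm{e}}_k$. Let $\mathcal{G}_\sigma$ be the reverse-time filtration that reveals the actions (matching and ball) at times $t,t-1,\ldots,t-\sigma+1$, and set $M_\sigma:=\E[\NodeDynamicContribT{k}{t}\mid\mathcal{G}_\sigma]$. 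The key preliminary observation is that $\E[C_{\tau'}\mid\vec{V}_{\tau'}]=1/n$ for \emph{every} $\tau'$: conditional on $\vec{V}_{\tau'}$ and on the chosen edge $\{i,j\}$, the ball lands on $i$ or $j$ uniformly, yielding conditional mean $(V_{\tau'}(i)+V_{\tau'}(j))/2$; averaging over a uniform edge of a $d$-regular graph gives $\frac{1}{dn}\sum_v d\cdot V_{\tau'}(v)=1/n$ by double stochasticity. Consequently the Doob increments collapse to $M_\sigma-M_{\sigma-1}=C_{t-\sigma+1}-1/n$, which are bounded by $1$ in absolute value.

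Next, to bound the quadratic characteristic I would compute $\Var(C_\tau\mid\vec{V}_\tau)$ by law of total variance, splitting into an inter-matching component (the variance of $(V_\tau(i)+V_\tau(j))/2$ over a uniform edge) and an intra-matching component (a Bernoulli variance $(1-\BalancingSpeed)^2(V_\tau(i)-V_\tau(j))^2/4$ over the two endpoints). The first component simplifies through the $d$-regular identity
\[
\sum_{\{i,j\}\in E(G)}(V_\tau(i)+V_\tau(j))^2 = 2d\sum_v V_\tau(v)^2 - \EdgePotential_G(\vec{V}_\tau),
\]
and combining the two components cancels most of the $\EdgePotential_G$ contribution, leaving the clean bound $\Var(C_\tau\mid\vec{V}_\tau)\leq\NodePotential(\vec{V}_\tau)/n$. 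Summing, and using the index shift together with the endpoint term $\NodePotential(\vec{\mathrm{e}}_k)\leq 1$, yields
\[
\langle M\rangle_t \leq \frac{1}{n}\sum_{\tau=1}^{t}\NodePotential(\vec{V}_\tau) \leq \frac{\GlobalDivergence_k(\MixMatSeq{t})^2+1}{n} \leq \frac{(\GlobalDivergence_k(\MixMatSeq{t})+1)^2}{n},
\]
which explains the ``$+1$'' appearing in the hypothesis on $\hat{\GlobalDivergence}_k$.

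Finally, I would invoke \cref{corr:martingale_quad_characteristic} with $\varepsilon=\gamma\log(n)$ and $v^2=\hat{\GlobalDivergence}_k^2/n$ (the $\sqrt{2}$ from the corollary gets absorbed into the constant). The hypothesis $\AutoProb{\GlobalDivergence_k(\MixMatSeq{t})+1>\hat{\GlobalDivergence}_k}\leq n^{-\gamma}$ directly forces $\AutoProb{\langle M\rangle_t>v^2}\leq n^{-\gamma}$, and the corollary then yields the target probability bound $4n^{-\gamma}$. The principal obstacle is the variance computation: one must verify the exact cancellation that simultaneously exploits the $d$-regularity and the doubly-stochastic property, and carefully track the index shift between the reverse-time sum $\sum_\tau\NodePotential(\vec{V}_\tau)$ and the forward-time global divergence $\GlobalDivergence_k(\MixMatSeq{t})$.
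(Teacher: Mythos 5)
Your proposal is correct and takes essentially the same route as the paper's proof: the same per-step decomposition into contributions $C_\tau\in[0,1]$, the same reverse-time martingale (the paper writes the partial sums $\sum_{\tau=t-l+1}^{t}(C_k(\tau)-1/n)$ directly rather than as a Doob martingale, but the increments are identical), the same conditional mean $1/n$, the same quadratic-characteristic bound $\frac{1}{n}\NodePotential(\vec{V}_\tau)$ with the same index-shift argument producing the ``$+1$'', and the same invocation of \cref{corr:martingale_quad_characteristic} with $\varepsilon=\gamma\log(n)$ and $v^2=\hat{\GlobalDivergence}_k^2/n$ giving $4n^{-\gamma}$. The only cosmetic differences are that you derive the per-step variance via an explicit law-of-total-variance cancellation while the paper uses the cruder bound $\Var[aA+bB]\le(a+b)^2\Var[A]$ for the identically distributed endpoint terms (both yield the same bound), and the $\sqrt{2}$ slack you flag is present in the paper's own proof as well.
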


\begin{proof}
    Let $\AllocVecT{\tau}$ be the vector of allocated loads in round $\tau$ and recall that we have
    \[\DynamicContribVecT{t} = \sum_{\tau=1}^t \MixMatTT{\tau}{t} \cdot \AllocVecT{\tau},\quad\textup{so that}\quad\NodeDynamicContribT{k}{t} = \sum_{\tau = 1}^t \MixMatTT{\tau}{t}_{k,\cdot} \cdot \AllocVecT{\tau}.\]
    Using $\MixMatTT{\tau}{t} = \MixMatTT{\tau+1}{t} \cdot \MixMatT{\tau}$,
        we can express the $k$th coordinate of $\DynamicContribVecT{t}$ as
        \[\NodeDynamicContribT{k}{t} = \sum_{\tau=1}^t C_k(\tau),\quad\textup{where}\ C_k(\tau) \coloneqq \MixMatTT{\tau}{t}_{k,\cdot} \cdot \AllocVecT{\tau} = \MixMatTT{\tau+1}{t}_{k,\cdot} \cdot \left(\MixMatT{\tau} \cdot \AllocVecT{\tau}\right)\]
        is the contribution of the load item allocated in round $\tau$ to $D_k(t)$.
    Note that in the second factorization of the $C_k(\tau)$,
        the two factors are independent as they concern disjoint rounds.

    Now consider the sequence $(Y(l))_{l=0}^t$ of partial sums
        $Y(l) = \sum_{\tau=t-l+1}^t (C_k(\tau) - 1/n)$
        with respect to the natural filtration $\Filter = (\FilterT{l})_{l=0}^t$ on the sequence of edges $(I(t-l), J(t-l))$.
    In particular, we have \[Y(0) = 0,\quad Y(l) - Y(l-1) = C_k(t-l) - 1/n,\quad\textup{and }Y(t) = D_k(t) - t/n,\]
        and $\FilterT{l}$ determines all edges used in rounds $t-l+1$ up to round $t$.
    To apply the martingale tail inequality \cref{corr:martingale_quad_characteristic} to $(Y(l))_{l=0}^t$,
        we need to check that $\AutoExpCond{Y(l) - Y(l-1)}{\FilterT{l-1}} = 0$
        and that $\abs{Y(l) - Y(l-1)} \leq 1$.

    For the first condition,
        note that both $\MixMatTT{\tau}{t}_{k,\cdot}$ and $\AllocVecT{\tau}$ are stochastic vectors (for the latter, this is because exactly one load item is allocated in each round in the asynchronous model).
    Thus, their inner product $C_k(\tau)$ has a value in the interval $[0;1],$
        so that $\abs{Y(l) - Y(l-1)} = \abs{C_k(t-l) - 1/n} \leq 1 - 1/n \leq 1$,
            as required.

    For the second condition, note that
    \begin{align*}
        \AutoExpCond{Y(l) - Y(l-1)}{\FilterT{l-1}}
           &= \BigAutoExpCond{C_k(t - l)}{((I(r), J(r)))_{r=t-l+1}^t} - 1/n,
    \end{align*}
        so that it is enough to show that the expected value of the $C_k(\tau)$ is $1/n$
        when conditioned on the matching choices in rounds $\tau+1$ to $t$.
    The bound given by \cref{corr:martingale_quad_characteristic} also involves the quantity 
    \begin{align*}
    \langle Y \rangle_{t}
        \coloneqq \sum_{l=1}^{t}\AutoExpCond{(Y(l) - Y(l-1))^2}{\FilterT{l-1}} 
        = \sum_{l=1}^t \AutoExpCond{(C_k(t-l) - 1/n)^2}{\FilterT{l-1}},
    \end{align*}
        so we will investigate $C_k(\tau)$ more thoroughly than would be required to compute only its conditional expectation.

    To this end, let us first make the dependence between $\MixMatT{\tau}$ and $\AllocVecT{\tau}$ more explicit.
    Let $(I(\tau), J(\tau))$ be the random orientation of the random edge selected in round $\tau$, so that the load item in round $\tau$ is allocated to $I(\tau)$, and then the load is balanced across the edge $\{I(\tau), J(\tau)\}$.
    Then
    \[(\MixMatT{\tau} \cdot \AllocVecT{\tau})_i = \begin{cases}
        1 - \BalancingSpeed / 2,\quad&\textup{if $i = I(\tau)$},\\
        \BalancingSpeed / 2,\quad&\textup{if $i = J(\tau)$},\\ 
        0,\quad&\textup{otherwise.}
    \end{cases}\]
    Using this, we may see that
    \begin{equation}\label{eqn:ckt_decomposition_async}\begin{aligned}
    C_k(\tau)
       &= \MixMatTT{\tau+1}{t}_{k,\cdot} \cdot \left(\MixMatT{\tau} \cdot \AllocVecT{\tau}\right)
        = \sum_{i \in [n]} \MixMatTT{\tau+1}{t}_{k,i} \cdot \left(\left(1 - \frac{\BalancingSpeed}{2}\right) \cdot \1_{i = I(\tau)} + \frac{\BalancingSpeed}{2} \cdot \1_{i=J(\tau)}\right)
    \\ &= \left(1 - \frac{\BalancingSpeed}{2}\right) \cdot \left(\sum_{i \in [n]} \MixMatTT{\tau+1}{t}_{k,i} \1_{i=I(\tau)}\right)
        +           \frac{\BalancingSpeed}{2}        \cdot \left(\sum_{i \in [n]} \MixMatTT{\tau+1}{t}_{k,i} \1_{i=J(\tau)}\right).
    \end{aligned}\end{equation}
    Now $\ADistr(G)$ is the uniform distribution over the edges of $G$,
        and the node to which load is allocated is a uniformly random endpoint of the chosen edge.
    Thus, $(I(\tau), J(\tau))$ is distributed uniformly over the oriented edges $\bigcup_{\{i,j\} \in E(G)} \{(i,j), (j,i)\}$.
    Since $G$ is $d$-regular,
        there are $2 \cdot \abs{E(G)} = 2 \cdot (dn / 2) = dn$ such oriented edges.
    Hence, for all $i \in [n]$,
        \begin{align*}\AutoProb{I(\tau) = i}
           &= \sum_{j \in [n]} \AutoProb{(I(\tau), J(\tau)) = (i,j)}
            = \sum_{j \in [n]} \frac{1}{dn} \cdot \1_{\{i,j\} \in E(G)}
        \\ &= \frac{1}{dn} \cdot \abs*{\{j \in [n] \mid \{i, j\} \in E(G)\}}
            = \frac{1}{dn} \cdot d
            = \frac{1}{n}.
        \end{align*}
    By an entirely analogous calculation, $\AutoProb{J(\tau) = i} = 1/n$ holds as well.
    So $I(\tau)$ and $J(\tau)$ are identically distributed (but not necessarily independent).
    Because of this,
        the two sums over $i \in [n]$ on the right-hand side of \cref{eqn:ckt_decomposition_async} are also identically distributed.
    
    We can now compute the conditional expectation of $C_k(\tau)$.
        Using \cref{eqn:ckt_decomposition_async} and linearity of expectation we see that
    \begin{equation*}\begin{aligned}
        \BigAutoExpCond{C_k(\tau)}{((I(l), J(l)))_{l=\tau+1}^t}
           &= \left(\left(1 - \frac{\BalancingSpeed}{2}\right) + \frac{\BalancingSpeed}{2}\right)
            \cdot \BigAutoExpCond{\sum_{i \in [n]} \MixMatTT{\tau+1}{t}_{k,i} \1_{i=I(\tau)}}{\MixMatTT{\tau+1}{t}}
        \\ &= 1 \cdot \sum_{i \in [n]} \AutoProb{I(\tau) = i} \cdot \MixMatTT{\tau+1}{t}_{k,i}
            = \frac{1}{n} \cdot \sum_{i \in [n]} \MixMatTT{\tau+1}{t}_{k,i} = \frac{1}{n}.
    \end{aligned}\end{equation*}
    So $\AutoExpCond{Y(l) - Y(l-1)}{\FilterT{l-1}} = 1/n - 1/n = 0,$
        as required for applying \cref{corr:martingale_quad_characteristic}.

    So all preconditions of \cref{corr:martingale_quad_characteristic} hold.
    Applying it with $\varepsilon = \gamma\log(n)$ and $\sigma = \hat{\GlobalDivergence}_k / \sqrt{n}$ yields
    \[\BigAutoProb{\abs{Y(t) - Y(0)} \geq \frac{\gamma \log(n)}{3} + \sqrt{2\gamma\log(n) / n} \cdot \hat{\GlobalDivergence}_k} \leq 2(n^{-\gamma} + \AutoProb{\langle Y \rangle_t > \hat{\GlobalDivergence}_k^2 / n}).\]
    We will now show that $\langle Y \rangle_t \leq 1/n \cdot (\GlobalDivergence_k(\MixMatSeq{t}) + 1)^2$,
        which finishes the proof after noting that then,
        \[\AutoProb{\langle Y \rangle_t > \hat{\GlobalDivergence}_k^2 / n}
            \leq \BigAutoProb{1/n \cdot (\GlobalDivergence_k(\MixMatSeq{t}) + 1)^2 > \hat{\GlobalDivergence}_k^2 / n}
            = \AutoProb{\GlobalDivergence_k(\MixMatSeq{t}) + 1 > \hat{\GlobalDivergence}_k}
            \leq n^{-\gamma},\]
        with the last inequality using the condition on $\hat{\GlobalDivergence}_k$ in the statement.

    So to bound $\langle Y \rangle_{t}$, recall that
        \[\langle Y \rangle_{t} \coloneqq \sum_{l=1}^{t}\AutoExpCond{(Y(l) - Y(l-1))^2}{\FilterT{l-1}} = \sum_{l=1}^{t}\AutoVarCond{Y(l) - Y(l-1)}{\FilterT{l-1}},\]
        with the latter equality using the fact the expected value of $(Y(l)-Y(l-1))$ conditioned on $\FilterT{l-1}$ is $0$.
    And since $Y(l) - Y(l-1) = C_k(t-l) - 1/n$ and $1/n$ is a constant,
    \[\langle Y \rangle_{t} = \sum_{l=1}^t \AutoVarCond{Y(l) - Y(l-1)}{\FilterT{l-1}}
           = \sum_{l=1}^t \BigAutoVarCond{C_k(t - l)}{((I(r), J(r)))_{r=t-l+1}^t}.\]
    By \cref{eqn:ckt_decomposition_async},
        and as for two identically distributed random variables $A$ and $B$, and $a, b \in \R^+$,
        we have $\AutoVar{aA + bB} = a^2\AutoVar{A} + 2ab\AutoCov{A, B} + b^2\AutoVar{B}
            \leq (a^2 + 2ab + b^2)\AutoVar{A} = (a + b)^2 \AutoVar{A}$:
    \begin{equation*}\begin{aligned}
        \BigAutoVarCond{C_k(\tau)}{((I(l), J(l)))_{l=\tau+1}^t}
           &\leq \left(1 - \frac{\BalancingSpeed}{2} + \frac{\BalancingSpeed}{2}\right)^2
              \cdot \BigAutoVarCond{\sum_{i \in [n]} \MixMatTT{\tau+1}{t}_{k,i} \1_{i=I(\tau)}}{\MixMatTT{\tau+1}{t}}
        \\ &= 1^2 \cdot \sum_{i \in [n]} \AutoProb{I(\tau) = i} \cdot \left(\MixMatTT{\tau+1}{t}_{k,i} - \frac{1}{n}\right)^2
        \\ &= \frac{1}{n} \cdot \norm*{\MixMatTT{\tau+1}{t}_{k,\cdot} - \frac{\vec{1}}{n}}_2^2.
    \end{aligned}\end{equation*}
    And hence we may bound $\langle Y \rangle_{t}$ from above using the global divergence:
        \begin{equation*}\begin{aligned}
        \langle Y \rangle_{t}
           &=\sum_{\tau=1}^{t} \BigAutoVarCond{C_k(\tau)}{((I(l), J(l)))_{l=\tau+1}^t}
            \leq \frac{1}{n} \cdot \sum_{\tau=1}^t \norm*{\MixMatTT{\tau+1}{t}_{k,\cdot} - \frac{\vec{1}}{n}}_2^2
        \\ &= \frac{1}{n} \left(\left(\GlobalDivergence_k(\MixMatSeq{t})\right)^2 - \norm*{\MixMatTT{1}{t}_{k,\cdot} - \frac{\vec{1}}{n}}_2^2 + \norm*{\MixMatTT{t+1}{t}_{k,\cdot} - \frac{\vec{1}}{n}}_2^2\right)
        \\ &\leq \frac{1}{n} \cdot \left(\left(\GlobalDivergence_k(\MixMatSeq{t})\right)^2 + 1\right)
            \leq \frac{1}{n} \cdot \left(\left(\GlobalDivergence_k(\MixMatSeq{t})\right) + 1\right)^2,
        \end{aligned}\end{equation*}
        which is all that remained to be shown.
\end{proof}

The next result is the analogue of \cref{lem:rmdistr_is_good}:
\begin{lemma}\label{lem:asdistr_is_good}
    Assume $G$ is an arbitrary $d$-regular graph.
    Then $\ADistr(G)$ is $(g_G, \sigma_G^2)$-good,
        where
        \[g_G(x) \coloneqq \frac{1}{dn} \cdot \max\left\{
            d \cdot \SpectralGap(\Laplacian(G)) \cdot x,
            \frac{1}{\ResistiveDiameter} \cdot x^2,
            \frac{4}{27} \cdot x^3\right\};\quad \sigma^2 = 2 \cdot \EdgeHittingTime.\]
\end{lemma}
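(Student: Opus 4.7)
The plan is to mirror the structure of the proof of \cref{lem:rmdistr_is_good}, but replace the random matching analysis with the much simpler single-edge analysis of $\ADistr(G)$. The only genuine change is the probability with which a given edge $e \in E(G)$ participates in the balancing: since $\ADistr(G)$ selects a single edge uniformly at random from the $dn/2$ edges of the $d$-regular graph $G$, each edge is used with probability exactly $2/(dn)$, in contrast to the $\Theta(1/d)$ lower bound obtained for $\RMDistr(G)$ in \cite{DBLP:journals/jcss/GhoshM96}.

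For condition (1) of \cref{def:goodness}, I would combine \cref{obs:node_potential_change_exact} (with $\BalancingSpeed = 1$) and linearity of expectation to obtain
\[\NodePotential(\vec{x}) - \BigAutoExp{\NodePotential(\MixMatBeta{1} \cdot \vec{x})}
    = \frac{1}{2}\sum_{\{i,j\} \in E(G)} \AutoProb{\{i,j\} \in E(\MixMatBeta{1})}\cdot (x_i - x_j)^2
    = \frac{1}{dn}\cdot \EdgePotential_G(\vec{x}).\]
Applying the first two statements of \cref{lem:edge_potential_bounds} to bound $\EdgePotential_G(\vec{x})$ from below by each of the three terms in the maximum defining $g_G$ then gives condition (1).

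For condition (2), I would note that since $E(\MixMatBeta{1})$ consists of the single sampled edge $\{I, J\}$, the edge potential $\EdgePotential_{E(\MixMatBeta{1})}(\vec{x}) = (x_I - x_J)^2$, so by \cref{obs:node_potential_change_exact}
\[\AutoVar{\NodePotential(\MixMatBeta{1} \cdot \vec{x})}
    = \tfrac{1}{4}\AutoVar{(x_I - x_J)^2}
    \leq \tfrac{1}{4}\BigAutoExp{(x_I - x_J)^4}
    = \frac{1}{2dn}\sum_{\{i,j\} \in E(G)} (x_i - x_j)^4.\]
Bounding one factor of $(x_i - x_j)^2$ by $\max_{\{k,l\}\in E(G)}(x_k - x_l)^2$, and then applying the third statement of \cref{lem:edge_potential_bounds} together with the first statement of \cref{claim:hitting_time_resistance_relation} (which gives $\MaxEdgeResistance \leq 2\EdgeHittingTime/(dn)$), yields
\[\AutoVar{\NodePotential(\MixMatBeta{1} \cdot \vec{x})}
    \leq \frac{\EdgeHittingTime}{d^2 n^2}\cdot \EdgePotential_G(\vec{x})^2.\]
Substituting the identity $\EdgePotential_G(\vec{x}) = dn\cdot(\NodePotential(\vec{x}) - \AutoExp{\NodePotential(\MixMatBeta{1}\cdot \vec{x})})$ derived in condition (1) converts this into
\[\AutoVar{\NodePotential(\MixMatBeta{1} \cdot \vec{x})}
    \leq \EdgeHittingTime\cdot\left(\NodePotential(\vec{x}) - \BigAutoExp{\NodePotential(\MixMatBeta{1}\cdot \vec{x})}\right)^2,\]
and since $\EdgeHittingTime \geq 1$ on any connected graph, the choice $\sigma_G^2 = 2\EdgeHittingTime$ satisfies $\sigma_G^2 - 1 \geq \EdgeHittingTime$, giving condition (2).

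Almost nothing here is a real obstacle, as the arguments are direct adaptations of machinery already established in \cref{sec:bound:contribution:dynamcally:allocated:balls}; the only mild subtlety is verifying that $\EdgeHittingTime \geq 1$ (so that $\sigma_G^2 > 1$ as required by \cref{def:goodness}), which follows immediately since the expected hitting time between adjacent distinct nodes is at least one step.
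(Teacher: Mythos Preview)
Your proposal is correct and follows essentially the same approach as the paper. The only difference is in the variance bound for condition~(2): the paper applies the Bhatia--Davis inequality to $\tfrac{1}{2}\EdgePotential_{\MixMatBeta{1}}(\vec{x}) \in [0,\MaxEdgeResistance\cdot\EdgePotential_G(\vec{x})]$ to obtain the coefficient $2\EdgeHittingTime-1$ directly, whereas you use $\Var \leq \E[\,\cdot\,]^2$ and the fourth-moment bound (as in the proof of \cref{prop:node_potential_change_statistics}) to get the slightly sharper coefficient $\EdgeHittingTime$ before the final step.
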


The proof of \cref{lem:asdistr_is_good} is analogous to that of \cref{lem:rmdistr_is_good},
    except that we use \cref{prop:node_potential_change_statistics_async} stated below instead of \cref{prop:node_potential_change_statistics}.
\begin{lemma}\label{prop:node_potential_change_statistics_async}
    Let $G$ be a $d$-regular graph,
        let $\MixMat^1 \sim \ADistr(G)$,
        and let $\vec{x} \in \R^n$,
    Then
    \begin{enumerate}
    \item \(\NodePotential(\vec{x}) - \BigAutoExp{\NodePotential(\MixMatBeta{1} \cdot \vec{x})} = \frac{1}{dn} \cdot \EdgePotential_G(\vec{x}).\)
    \item \(\BigAutoVar{\NodePotential(\MixMatBeta{1} \cdot \vec{x})}
                \leq (2 \cdot \EdgeHittingTime - 1) \cdot\left(\NodePotential(\vec{x}) - \BigAutoExp{\NodePotential(\MixMatBeta{1} \cdot \vec{x})}\right)^2.\)
    \end{enumerate}
\end{lemma}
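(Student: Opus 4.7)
The plan is to exploit the fact that in the asynchronous model only a single edge $\{I,J\}$ is activated per step, which makes the load update after a round precisely $\NodePotential(\MixMatBeta{1}\cdot\vec{x}) = \NodePotential(\vec{x}) - \tfrac{1}{2}(x_I - x_J)^2$ by Observation~\ref{obs:node_potential_change_exact} applied with $\BalancingSpeed=1$ and the matching $\{\{I,J\}\}$. Since $G$ is $d$-regular, the distribution $\ADistr(G)$ is uniform over $E(G)$, so each edge is picked with probability $2/(dn)$.

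For the first statement, I would simply take expectation over the single random edge:
\[
\NodePotential(\vec{x}) - \BigAutoExp{\NodePotential(\MixMatBeta{1}\cdot\vec{x})}
= \frac{1}{2}\sum_{\{i,j\}\in E(G)} \frac{2}{dn} (x_i - x_j)^2
= \frac{1}{dn}\cdot \EdgePotential_G(\vec{x}).
\]
Note that this is an exact equality, not merely a bound, because only one edge contributes per step.

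For the second statement, set $Y = (x_I - x_J)^2$, so that $\NodePotential(\MixMatBeta{1}\cdot\vec{x}) = \NodePotential(\vec{x}) - Y/2$ and therefore $\AutoVar{\NodePotential(\MixMatBeta{1}\cdot\vec{x})} = \tfrac{1}{4}\AutoVar{Y}$. I would bound $\AutoVar{Y} \le \AutoExp{Y^2}$, which expands to $\tfrac{2}{dn}\sum_{\{i,j\}\in E(G)}(x_i-x_j)^4$; this is at most $\tfrac{2}{dn}\cdot \max_{\{i,j\}\in E(G)}(x_i-x_j)^2 \cdot \EdgePotential_G(\vec{x})$. Applying part (3) of Lemma~\ref{lem:edge_potential_bounds} to bound the max edge-squared-difference by $\MaxEdgeResistance\cdot\EdgePotential_G(\vec{x})$ gives $\AutoVar{Y}\le \tfrac{2\MaxEdgeResistance}{dn}\EdgePotential_G(\vec{x})^2$ and hence $\AutoVar{\NodePotential(\MixMatBeta{1}\cdot\vec{x})} \le \tfrac{\MaxEdgeResistance}{2dn}\EdgePotential_G(\vec{x})^2$.

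To close the loop, I divide by $(\NodePotential(\vec{x}) - \AutoExp{\NodePotential(\MixMatBeta{1}\cdot\vec{x})})^2 = \EdgePotential_G(\vec{x})^2/(dn)^2$ (using part (1)) to get a factor of $\tfrac{1}{2}\MaxEdgeResistance\cdot dn$. By Lemma~\ref{claim:hitting_time_resistance_relation} we have $\MaxEdgeResistance\cdot|E(G)| \le \EdgeHittingTime$, i.e., $\MaxEdgeResistance\cdot dn/2 \le \EdgeHittingTime$, so the ratio is at most $\EdgeHittingTime$. The only mildly delicate point is going from $\EdgeHittingTime$ to $2\EdgeHittingTime-1$: since adjacent distinct nodes satisfy $H(i,j)+H(j,i)\ge 2$, the commute-time identity yields $\MaxEdgeResistance\cdot dn\ge 2$, hence $\EdgeHittingTime\ge 1$ and $\EdgeHittingTime \le 2\EdgeHittingTime - 1$. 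The main (though still minor) obstacle is this last sanity check; the rest is a direct computation driven by the single-edge structure of $\ADistr(G)$.
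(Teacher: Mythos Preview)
Your proof is correct. Part~(1) is identical to the paper's argument. For part~(2) you take a genuinely different route: the paper applies the Bhatia--Davis inequality $\AutoVar{X}\le (M-\mu)(\mu-m)$ to $X=\tfrac{1}{2}\EdgePotential_{\MixMatBeta{1}}(\vec{x})$ with $m=0$ and $M\le \MaxEdgeResistance\cdot\EdgePotential_G(\vec{x})$, which produces the factor $(\MaxEdgeResistance\cdot dn-1)\le 2\EdgeHittingTime-1$ directly and so yields the ``$-1$'' for free. You instead use the cruder second-moment bound $\AutoVar{Y}\le\AutoExp{Y^2}$, arrive at the sharper ratio $\tfrac{1}{2}\MaxEdgeResistance\cdot dn\le\EdgeHittingTime$, and then need the separate observation $\EdgeHittingTime\ge 1$ to match the stated form. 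Your approach is more elementary (no Bhatia--Davis) and in fact gives a slightly stronger intermediate constant; the paper's approach is cleaner in that the ``$-1$'' drops out of the inequality itself. Incidentally, your justification of $\EdgeHittingTime\ge 1$ is more work than needed: any hitting time between distinct vertices is trivially at least~$1$.
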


\begin{proof}
    For the first statement,
        we use \cref{obs:node_potential_change_exact}
        as well as the fact that $\ADistr(G)$ is the uniform distribution over the edges of $G$
        to see that, as claimed.
    \begin{align*}
    \NodePotential(\vec{x}) - \BigAutoExp{\NodePotential(\MixMatBeta{1} \cdot \vec{x})}
       &= \BigAutoExp{\NodePotential(\vec{x}) - \NodePotential(\MixMatBeta{1} \cdot \vec{x})}
        =  \BigAutoExp{\frac{1}{2} \cdot\EdgePotential_{\MixMat^1}(\vec{x})}
    \\ &= \frac{1}{2} \cdot \sum_{\{i,j\} \in E(G)} \frac{1}{\abs{E}} \cdot (x_i - x_j)^2
        = \frac{1}{2} \cdot \frac{1}{dn/2} \cdot \EdgePotential_G(\vec{x})
        = \frac{1}{dn} \cdot \EdgePotential_G(\vec{x}).
    \end{align*}

    For the second statement
        we first observe that $\NodePotential(\vec{x})$ is constant and by \cref{obs:node_potential_change_exact} we have
        \[\BigAutoVar{\NodePotential(\MixMatBeta{1} \cdot \vec{x})}
            = \BigAutoVar{\NodePotential(\vec{x}) - \NodePotential(\MixMatBeta{1} \cdot \vec{x})}
            = \BigAutoVar{\frac{1}{2} \cdot \EdgePotential_{\MixMatBeta{1}}(\vec{x})}.\]
    We bound this variance using the Bhatia-Davis inequality (see \cref{thm:bhatia-davis} in \cref{apx:known-results-probability-theory}).
    It states that, for a random variable $X$ taking values in $[m, M]$, and with $\mu \coloneqq \AutoExp{X}$, it is the case that \(\AutoVar{X} \leq (M - \mu)(\mu - m).\)
    Now from the definition of $\EdgePotential$,
        it is immediate that $\EdgePotential_{\MixMatBeta{1}}(\vec{x}) \geq 0$.
    For the upper bound on $\EdgePotential_{\MixMatBeta{1}}(\vec{x})$,
        recall that the matchings $\MixMatBeta{1} \sim \ADistr$ consist of just one edge,
        and so $\EdgePotential_{\MixMatBeta{1}} \leq \max_{\{i,j\} \in E(G)} (x_i - x_j)^2$.
    The latter is bounded from above by the third statement of \cref{lem:edge_potential_bounds},
        yielding
        \[\EdgePotential_{\MixMatBeta{1}}(\vec{x}) \leq \max_{\{i,j\} \in E(G)} (x_i - x_j)^2 \leq \MaxEdgeResistance \cdot \EdgePotential_G(\vec{x}).\]
    And so, by the Bhatia-Davis inequality (\cref{thm:bhatia-davis}),
        \begin{align*}
        \BigAutoVar{\frac{1}{2} \cdot \EdgePotential_{\MixMatBeta{1}}(\vec{x})}
           &\leq \left(\MaxEdgeResistance \cdot \EdgePotential_G(\vec{x}) - \frac{1}{dn} \cdot \EdgePotential_G(\vec{x})\right) \cdot \frac{1}{dn} \cdot \EdgePotential_G(\vec{x}),
        \\ &= \left(\MaxEdgeResistance \cdot dn - 1\right) \cdot \left(\frac{1}{dn} \cdot \EdgePotential_G(\vec{x})\right)^2
        \\ &\leq 2 \cdot \EdgeHittingTime \cdot \left(\NodePotential(\vec{x}) - \AutoExp{\NodePotential(\MixMatBeta{1} \cdot \vec{x}})\right)^2,
        \end{align*}
        where the last inequality used the fact that $\MaxEdgeResistance \cdot dn = 2 \cdot \MaxEdgeResistance \cdot \abs{E} \leq 2 \cdot \EdgeHittingTime$ by \cref{claim:hitting_time_resistance_relation}.
    \end{proof}

\subsection{Bounds for Specific Graph Classes}
\label{apx:async-bounds-graph-classes}

Again as in \cref{apx:hitting-time_spectral-gap} we consider specific graph classes and use the bounds on $T(G)$ and on the hitting time from \cref{apx:hitting-time_spectral-gap}.
When applied to \cref{thm:main_async} we get the following results w.h.p.\ and in expectation.

\begin{corollary}
    Let $\LoadVecT{t}$ be the state of process $\SyncProc{\RMDistr(G)}{\BalancingSpeed}{m}$
    where $\LoadVecT{0} = \vec{0}$.
    For an arbitrary $t$ it holds w.h.p.\ and in expectation
\begin{itemize}
    \item $\discr(\LoadVecT{t}) = \Oh(\sqrt{n} \log(n))$ for any regular graph. 
    \item $\discr(\LoadVecT{t}) = \Oh(\sqrt{n\log(n)})$ for cycle and constant-degree regular graphs.
    \item $\discr(\LoadVecT{t}) = \Oh(\log^{3/2}(n))$ for the two-dimensional torus graph.
    \item $\discr(\LoadVecT{t}) = \Oh(\log(n))$ for $r$-dimensional torus graphs with $r \ge 3$ dimensions, for the hypercube, and for all $d$-regular graphs with $d \geq \lfloor n/2 \rfloor$.
\end{itemize}
\end{corollary}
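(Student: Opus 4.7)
The plan is to reduce the corollary to routine substitution into \cref{thm:main_sync_random}, which bounds $\discr(\LoadVecT{t})$ in terms of the two graph-dependent quantities $\EdgeHittingTime/n$ and $T(G)$. Because $\LoadVecT{0}=\vec{0}$, the initial discrepancy vanishes and the memoryless warm-up window collapses to $\Oh(\log(n)/(\SpectralGap(\Laplacian(G))\BalancingSpeed))$ (the $\log(K\cdot n)$ factor becomes $\log n$), so the master bound is valid at any $t$ past that trivial threshold. I shall read the displayed estimates in the natural scale $m/n=\Theta(1)$ and $\BalancingSpeed=\Theta(1)$, the regime in which the $m/n$ factors in \cref{thm:main_sync_random} become constants and the stated $m$-free bounds are the intended specialization.

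First I would tabulate $\EdgeHittingTime/n$ and $T(G)$ for each class, citing only the auxiliary lemmas already proved in \cref{apx:hitting-time_spectral-gap}. From \cref{lem:bound-edge-hitting-time}: $\EdgeHittingTime/n=\Oh(1)$ for cycles, all toroidal meshes, the hypercube, and every $d$-regular graph with $d\ge\lfloor n/2\rfloor$, and $\EdgeHittingTime/n\le d$ for an arbitrary $d$-regular graph. From \cref{lem:bound-on-T(G)}: $T(G)=\Oh(n)$ for cycles and constant-degree regular graphs, $T(G)=\Oh(\log^2 n)$ for the 2-D torus, $T(G)=\Oh(\log n)$ for $r$-D tori with $r\ge 3$, the hypercube and dense regular graphs, and $T(G)=\Oh(n\log n)$ in full generality.

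Substituting these estimates into
\[
\discr(\LoadVecT{t})=\Oh\!\left(\log(n)\left(1+\sqrt{\EdgeHittingTime/n}\right)+\sqrt{\log(n)\cdot T(G)}\right)
\]
gives the four claimed bounds almost immediately. For a general regular graph the first summand is $\Oh(\sqrt{d}\log n)=\Oh(\sqrt{n}\log n)$ and the second is $\Oh(\sqrt{n\log^2 n})=\Oh(\sqrt{n}\log n)$. For cycles and constant-degree regular graphs the first term is $\Oh(\log n)$ and the second $\Oh(\sqrt{n\log n})$. For the 2-D torus we get $\Oh(\log n)+\Oh(\sqrt{\log^3 n})=\Oh(\log^{3/2}n)$. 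For the remaining high-connectivity classes both summands are $\Oh(\log n)$. The high-probability and in-expectation guarantees are inherited verbatim from \cref{thm:main_sync_random}.

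The main obstacle is purely bookkeeping: in each row I need to identify which summand dominates and, in the general regular case, use $\EdgeHittingTime\le dn$ together with $\sqrt{d}\le\sqrt{n}$ to absorb the dependence on $d$ into a clean $\sqrt{n}\log n$. No new probabilistic, spectral, or graph-theoretic machinery is needed beyond the lemmas already established in \cref{apx:hitting-time_spectral-gap}; the corollary is effectively a dictionary lookup against \cref{lem:bound-on-T(G)} and \cref{lem:bound-edge-hitting-time}.
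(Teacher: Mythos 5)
Your calculation is correct and the mechanics are the same as the paper's: tabulate $T(G)$ and $\EdgeHittingTime/n$ per graph class via \cref{lem:bound-on-T(G)} and \cref{lem:bound-edge-hitting-time} and substitute into a master theorem. The one real divergence is which master theorem. This corollary sits in \cref{apx:async-bounds-graph-classes} and is the asynchronous column of the results table; the process name $\SyncProc{\RMDistr(G)}{\BalancingSpeed}{m}$ in its statement is evidently a slip for $\AsyncProc{\ADistr(G)}{\BalancingSpeed}$, and the paper obtains the bounds by plugging the same two lemmas into \cref{thm:main_async}. You instead plug into \cref{thm:main_sync_random} under the added reading $m=\Theta(n)$, $\BalancingSpeed=\Theta(1)$. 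Numerically this is harmless --- the paper itself observes (just before \cref{thm:main_async}, using \cref{claim:edge_hitting_time_lower_bound}) that setting $m=n$ in \cref{thm:main_sync_random} reproduces the asymptotics of \cref{thm:main_async} --- but be aware that your reading imports a hypothesis ($m=\Theta(n)$) that is not in the statement, whereas under the intended asynchronous reading no condition on $m$ is needed since exactly one item arrives per step. Your row-by-row arithmetic (bounding the first summand by $\sqrt{d}\log n\le\sqrt{n}\log n$ in general, and letting the $\sqrt{\log(n)\,T(G)}$ term dominate for cycles, the $2$-D torus, etc.) checks out against both theorems.

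One small gap: the corollary asserts the bound \emph{for arbitrary} $t$, but you only argue it for $t$ beyond a warm-up threshold $\Oh(\log(n)/(\SpectralGap(\Laplacian(G))\BalancingSpeed))$ obtained by setting $K=1$ in \cref{lem:initial:load:vanishes}. The right justification, which the paper states explicitly after introducing \cref{thm:main_sync_random}, is that with $\LoadVecT{0}=\vec{0}$ the initial-load term satisfies $\InitialContribVecT{t}=\MixMatTT{1}{t}\cdot\vec{0}=\vec{0}$, hence $\discr(\InitialContribVecT{t})=0$ identically, and the proof of the master theorem then needs no lower bound on $t$ at all. With that substitution your argument covers all $t$ as claimed.
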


\section{Proof of the Drift Result}\label{apx:drift_proof}

In this appendix we give the full proof of our drift result from \cref{sec:drift}. 
We restate it for convenience.

\restateLemDrift*

\begin{proof}
    Throughout this proof we write \[f(x) \coloneqq \int_x^{x_0} \frac{1}{h(\varphi)} \dvarphi.\]

    We start by proving the first statement.
    Let $a, b \in \R^+$ with $a \leq b \leq x_0$ be two arbitrary numbers.
    Since $h$ is increasing we have $h(a) \leq h(b)$ and
        $1 / h(a) \geq 1 / h(b)$.
    Hence,
    \[f(a) - f(b)
        = \int_a^{x_0} \frac{1}{h(\varphi)} \dvarphi - \int_b^{x_0} \frac{1}{h(\varphi)} \dvarphi
        = \int_a^b \frac{1}{h(\varphi)} \dvarphi
        \geq \int_a^b \frac{1}{h(b)} \dvarphi = \frac{b - a}{h(b)}.\]
    From condition \ref{cond:drift:1} of the theorem it follows that $\AutoExpCond{X(t+1)}{X(t) = b} \leq b - h(b)$ and consequently $h(b) \leq b-\AutoExpCond{X(t+1)}{X(t) = b}$ giving us
    with $X(t)=b$
    \begin{equation}\label{eqn:new_drift_delta_f_bound}f(X(t+1)) - f(b) \geq \frac{X(t+1) - b}{\AutoExpCond{X(t+1) - b}{X(t) = b}}.\end{equation}

    We introduce a new sequence of random variables for which we will derive a lower tail bound, defined as 
        $(Y(t))_{t \in \N}$ given by $Y(0) \coloneqq 0$ and
        \[Y(t+1) \coloneqq Y(t) + \frac{X(t+1) - X(t)}{\E[X(t+1)-X(t)]}.\]
    Comparing this with \cref{eqn:new_drift_delta_f_bound} we see that regardless of the value of $X(t)$ it holds that \[f(X(t+1)) - f(X(t)) \ge \frac{X(t+1) - X(t)}{\E[X(t+1)-X(t)]}= Y(t+1)-Y(t).\]
By induction over $t$, and since $f(x_0) = \int_{x_0}^{x_0} (1 / h(\varphi))\,\dvarphi = 0$ and $Y(0) = 0$,
    we have for all $t$ \[f(X(t)) = f(X(t)) - f(x_0) \geq Y(t) - Y(0) = Y(t).\]
   From the definition of $(Y_t)_{t\ge 0}$ it follows assuming $X(t)=x$ that 
    \[\AutoExpCond{Y(t+1)-Y(t)}{X(t) = x}
        = E\left[\frac{X(t+1) - x}{\E[X(t+1)-x]}\right]
        = 1.\]
Then, from the law of total expectation we get that
\begin{align*}
\AutoExp{Y(t+1)-Y(t)} &= \sum_{x}\AutoExpCond{Y(t+1)-Y(t)}{X(t)=x}\cdot \Pr[X(t)=x]\\&= \sum_{x}1\cdot \Pr[X(t)=x] =1.
\end{align*} 
Since $Y(0)=0$ it immediately follows that $\E[Y(t)]=t$.
Furthermore, we may bound the variance of the change of $Y$ given $X(t) = x$ by
\begin{align*}
\AutoVarCond{Y(t+1)-Y(t)}{X(t) = x}
   &= 
   \BigAutoVar{\frac{X(t+1) -x}{\E[X(t+1)-x]}}
  =\frac{\BigAutoVar{X(t+1) - x}}{(\AutoExp{X(t+1) - x})^2}
\\ & \overset{(a)}{\leq} \frac{\sigma \cdot \left(\AutoExp{X(t+1)} - x\right)^2}{\left(\AutoExp{X(t+1) - x}\right)^2}
    = \sigma,
\end{align*}
where $(a)$ follows from Condition \ref{cond:drift:2} of the theorem.
The sequence $(Y(t) - \AutoExp{Y(t)})_{t\ge 0}$ is a martingale and hence fulfills the preconditions of \cref{thm:chung66} (Theorem 6.6 from \cite{DBLP:journals/im/ChungL06}) with $a_t \coloneqq 1$ and $\sigma^2_t \coloneqq \sigma$.
Note that $\E[Y(t)-\E[Y(t)]]=0$. Hence,  we obtain
        \[\BigAutoProb{Y(t)-\E[Y(t)] \leq 0 - \varepsilon} \leq \exp\left(-\,\frac{\varepsilon^2}{2 t (\sigma + 1)}\right).\]
    Recalling that $f(X(t)) \geq Y(t)$ and $\AutoExp{Y(t)} = t$
        and setting $\varepsilon = \delta t$ for some $\delta \in (0, 1)$
        we arrive at the first statement of the theorem;
        \[\AutoProb{f(X(t)) \leq (1 - \delta) t}
            \leq \exp\left(-\,\frac{\delta^2 t}{2 (\sigma + 1)}\right).\]

Next we prove the second statement and bound $\sum_{t=t_0+1}^{\infty} X(t)$.
Let $T(x) \coloneqq \min\{t \in \N \mid X(t) \leq x\}$ be a hitting time for the event that $X(t) \leq x$.
Using $\1_{ x < X(t)}$ as the indicator variable (which is one if $x < X(t)$ and zero otherwise) we can write $X(t)=\int_0^{x_0} \1_{X(t) > x}\,\dx$
because $x_0$ is fixed and $X(t)$ is non-increasing in $t$ resulting in  $X(t)\in[0,x_0]$.
As a consequence it holds that
\begin{align*}
\sum_{t=t_0+1}^{\infty} X(t)
   &= \sum_{t=t_0+1}^{\infty} \int_0^{x_0} \1_{X(t) > x}\,\dx
    = \int_0^{x_0} \left(\sum_{t=t_0+1}^{\infty} \1_{X(t) > x} \right)\,\dx
\\ &= \int_0^{x_0} \left(\sum_{t=t_0+1}^{\infty} \1_{t < T(x)}\right)\,\dx
    = \int_0^{x_0} \left(\sum_{t=t_0+1}^{T(x)-1} 1 \right)\,\dx
\\ &= \int_0^{x_0} \max\{0, T(x) - (t_0 + 1)\}\,\dx .
\end{align*}

We now proceed to bound the $T(x)$.
Using the first statement with a union bound over all $t > t_0 \coloneqq \frac{2(\sigma + 1)}{\delta^2} \cdot \left(-\log(p) + \log\left(\frac{2(\sigma + 1)}{\delta^2}\right)\right)$ gives us
\begin{align*}
\BigAutoProb{\bigvee_{t=t_0+1}^\infty f(X(t)) \leq (1 - \delta) t}
   &\leq \sum_{t=t_0+1}^\infty \exp\left(-\,\frac{\delta^2 t}{2 (\sigma + 1)}\right)
    \leq \int_{t_0}^\infty \exp\left(-\,\frac{\delta^2 t}{2 (\sigma + 1)}\right)\dt
\\ &=    \frac{2(\sigma + 1)}{\delta^2} \cdot \exp\left(-\,\frac{\delta^2 t_0}{2 (\sigma + 1)}\right)
    \eqqcolon p.
\end{align*}
As a consequence,
\[\BigAutoProb{\bigwedge_{t=t_0 + 1}^\infty t \leq \frac{f(X(t))}{1 - \delta}} \geq 1-p,\]
and 
\begin{equation}\label{eqn:prob_upper_bound_all_t}\BigAutoProb{\bigwedge_{t \in \N_0} t \leq \max\left\{t_0, \frac{f(X(t))}{1 - \delta}\right\}} \geq 1-p.\end{equation}

Recalling that $T(x) \coloneqq \min\{t \in \N \mid X(t) \leq x\}$    \cref{eqn:prob_upper_bound_all_t} implies that
\[\BigAutoProb{\bigwedge_{x < x_0} T(x) - 1 \leq \max\left\{t_0, \frac{f(X(T(x) - 1))}{1-\delta}\right\}} \geq 1 - p,\]
since $X(T(x) - 1) > x$ by the definition of $T(x)$ and $f$ is non-increasing it holds that $f(X(T(x) - 1)) \leq f(x)$.
It follows that 
\begin{equation*}
\BigAutoProb{\bigwedge_{x\le x_0} T(x) - 1 \leq \max\left\{t_0, \frac{f(x)}{1 - \delta}\right\}} \geq 1-p.\end{equation*}
        
As a consequence we get that with probability at least $1-p$ 
\[ \int_0^{x_0} \max\{0, T(x) - (t_0 + 1)\}\,\dx \leq \int_0^{x_0} \max\left\{0, \max\left\{t_0, \frac{f(x)}{1-\delta} \right\} + 1 - (t_0 + 1)\right\}\,\dx\]
Finally, we find that         
    \begin{align*}
    \MoveEqLeft \int_0^{x_0} \max\left\{0, \max\left\{t_0, \frac{f(x)}{1-\delta} \right\} + 1 - (t_0 + 1)\right\}\,\dx
    \\ &= \int_0^{x_0} \max\left\{0, \frac{f(x)}{1-\delta} - t_0\right\}\,\dx
        \leq \frac{1}{1-\delta} \int_0^{x_0} f(x)\,\dx
    \\ &= \frac{1}{1-\delta} \int_0^{x_0} \int_x^{x_0} \frac{1}{h(\varphi)}\,\dvarphi\,\dx
        = \frac{1}{1-\delta} \int_0^{x_0} \int_0^{x_0} \frac{\1_{\varphi \geq x}}{h(\varphi)} \,\dvarphi\,\dx
    \\ &= \frac{1}{1-\delta} \int_0^{x_0} \frac{1}{h(\varphi)} \int_0^{x_0} \1_{x \leq \varphi}\,\dx\,\dvarphi
        = \frac{1}{1-\delta} \cdot \int_0^{x_0} \frac{1}{h(\varphi)}  \cdot \varphi\,\dvarphi.
    \end{align*}
    
Putting everything together we see with probability at least $1-p$ that
\[
 \sum_{t=t_0+1}^{\infty} X(t) \le \frac{1}{1-\delta} \cdot \int_0^{x_0} \frac{\varphi}{h(\varphi)} \cdot \,\dvarphi. \qedhere
\]
\end{proof}

\end{document}